\documentclass{snkart}

\usepackage{microtype}
\usepackage{stmaryrd}
\usepackage{mathrsfs}
\usepackage{tikz}
\usepackage{thm-restate}

\geometry{margin=1.5in,a4paper}
\date{29 September 2022}

\newtheorem{observation}[theorem]{Observation}

\numberwithin{equation}{section}

\makeatletter

\let\rel\mathbf         \let\gr\mathbb          \let\clo\mathscr               \let\tup\mathbf         \let\cox\mathsf         \let\top\mathcal              

\let\@psilon\epsilon
\let\epsilon\varepsilon
\let\varepsilon\@psilon
\let\eps\epsilon

\let\Union\bigcup
\newcommand{\defeq}{:=}

\newcommand{\ZZ}{\mathbb{Z}}
\newcommand{\NN}{\mathbb{N}}

\newcommand{\RR}{\mathbb{R}}

\DeclareMathOperator{\CSP}{CSP}
\DeclareMathOperator{\PCSP}{PCSP}
\DeclareMathOperator{\Pol}{Pol}
\DeclareMathOperator{\Inv}{Inv}
\DeclareMathOperator{\Aut}{Aut}
\DeclareMathOperator{\Hom}{\text{\normalfont \sffamily Hom}}

\newcommand{\pHom}{\Hom^{\text{prod}}}

\newcommand\Sphere{\mathcal S}
\newcommand\Torus{\mathcal T}

\newcommand\Bip[1]{\Hom(\rel K_2,\mathord{#1})}
\newcommand\gBip[1]{\geom{\Hom(\rel K_2,\mathord{#1})}}
\newcommand\gHom[1]{\geom{\Hom(#1)}}

\newcommand{\equivariant}{$\mathbb Z_2$}
\newcommand{\fg}{\pi_1}
\newcommand{\Fg}[1]{\fg(\mathord{#1})}

\newcommand{\geom}[1]{\lvert \mathord{#1}\rvert}
\newcommand{\size}[1]{\lvert \mathord{#1}\rvert}
\newcommand{\floor}[1]{\lfloor \mathord{#1}\rfloor}

\newcommand{\NP}{\textsf{\upshape NP}}
\newcommand{\Ptime}{\textsf{\upshape P}}

\newcommand{\yes}{{\scshape yes}}
\newcommand{\no}{{\scshape no}}

\DeclareMathOperator{\sub}{sub}
\DeclareMathOperator{\sym}{sym}

\newcommand\B[1]{{\binom{#1}{\lfloor #1/2 \rfloor}}}

\DeclareMathOperator{\id}{id}

\newcommand{\gr@}{\text{\normalfont\sffamily Gr}}
\newcommand{\dgr@}{\text{\normalfont\sffamily Dgr}}
\newcommand{\graf}{\gr@}
\newcommand{\dgraf}{\dgr@}
\newcommand{\dgra}{\dgr@_{\infty}}

\makeatother

\hypersetup{pdftitle  = {Topology and adjunction in promise constraint satisfaction},
  pdfauthor = {A. Krokhin, J. Opršal, M. Wrochna, and S. Živný}}

\begin{document}

\title{Topology and adjunction in promise constraint satisfaction}

\author{Andrei Krokhin}
\address{Department of Computer Science, Durham University, Durham, UK}
\email{andrei.krokhin@durham.ac.uk}

\author{Jakub Opršal}
\address{Institute of Science and Technology Austria, Klosterneuburg, Austria}
\email{jakub.oprsal@ist.ac.at}

\author{Marcin Wrochna}
\address{Faculty of Mathematics, Informatics, and Mechanics, University of Warsaw, Warsaw, Poland}
\email{m.wrochna@mimuw.edu.pl}

\author{Stanislav Živný}
\address{Department of Computer Science, University of Oxford, Oxford, UK}
\email{standa.zivny@cs.ox.ac.uk}

\thanks{\copyright\ licensed under Creative Commons License CC-BY 4.0.\\ \indent Preliminary versions of parts of this paper appeared in the proceedings of \emph{60th Annual IEEE Symposium on Foundations of Computer Science} \cite{KO19} and \emph{Fourteenth Annual ACM-SIAM Symposium on Discrete Algorithms (SODA20)} \cite{WZ20}.\\
\indent Andrei Krokhin and Jakub Opr\v{s}al were supported by the UK EPSRC grant EP/R034516/1.
Jakub Opršal has received funding from the European Union's Horizon 2020 research and innovation programme under the Marie Skłodowska-Curie Grant Agreement No 101034413.
Stanislav \v{Z}ivn\'y was supported by a Royal Society University Research Fellowship.
This project has received funding from the European Research Council (ERC) under the European Union's Horizon 2020 research and innovation programme (grant agreement No 714532). The paper reflects only the authors' views and not the views of the ERC or the European Commission. The European Union is not liable for any use that may be made of the information contained therein.}

\begin{abstract}
  The approximate graph colouring problem, whose complexity is unresolved in most cases, concerns finding a $c$-colouring of a graph that is promised to be $k$-colourable, where $c\geq k$. This problem naturally generalises to promise graph homomorphism problems and further to promise constraint satisfaction problems. The complexity of these problems has recently been studied through an algebraic approach. In this paper, we introduce two new techniques to analyse the complexity of promise CSPs: one is based on topology and the other on adjunction. We apply these techniques, together with the previously introduced algebraic approach, to obtain new unconditional NP-hardness results for a significant class of approximate graph colouring and promise graph homomorphism problems.
\end{abstract}

\keywords{Graph colouring, Approximation algorithms, Algebraic topology, Problems, reductions and completeness}

\maketitle

\section{Introduction}

  In this paper we investigate the complexity of finding an approximate solution to fully satisfiable instances of constraint satisfaction problems.
  For example, for the classical problem of $k$-colouring a~graph, one~natural approximation version is the \emph{approximate graph colouring} problem: The goal is to find a~$c$-colouring of a~given $k$-colourable graph, where $c\ge k\ge 3$.
  There is a~huge gap in our understanding of the complexity of this problem. For $k=3$, the best known efficient algorithm uses roughly $c=O(n^{0.199})$ colours where $n$ is the number of vertices of the graph \cite{KT17}. It has been long conjectured the problem is \NP-hard for any fixed constants $c\ge k\ge 3$, but, say for $k=3$, the state-of-the-art has only recently been improved from $c=4$ \cite{KLS00,GK04} to $c=5$ \cite{BKO19,BBKO19}.

  Graph colouring problems naturally generalise to graph homomorphism problems
  and further to constraint satisfaction problems (CSPs). In a~graph
  homomorphism problem, one is given two graphs and needs to decide whether
  there is a~homomorphism (an~edge-preserving map) from the first graph to the
  second \cite{HN04}. The CSP is a generalisation that uses arbitrary relational
  structures in place of graphs. One particularly important case that attracted
  much attention is when the second graph/structure is fixed; this is the
  so-called non-uniform CSP~\cite{BKW17,FV98}. This is also the only case we consider in this article. For graph homomorphisms, this is known as the $\rel H$-\emph{colouring} problem: decide whether a~given graph has a~homomorphism to a~fixed graph $\rel H$ \cite{HN04}. The \Ptime\ vs.\ \NP-complete dichotomy of $\rel H$-colouring given in \cite{HN90} was one of the base cases that supported the Feder-Vardi dichotomy conjecture for CSPs \cite{FV98}.
  The study of the complexity of the CSP and the complete resolution
  of the CSP dichotomy conjecture \cite{Bul17,Zhu17,Zhu20} was greatly influenced by
  the algebraic approach~\cite{Bulatov05:classifying} (see survey \cite{BKW17}). This approach has also made important contributions to the study of approximability of CSPs (e.g.\ \cite{BK16}).

  Brakensiek and Guruswami \cite{BG16-graph,BG18-structure} suggested that perhaps  progress on approximate graph colouring and similar open problems can be made by looking at a~broader picture, by extending it to \emph{promise graph homomorphism} and further to the \emph{promise constraint satisfaction problem} (PCSP).
  Promise graph homomorphism is an approximation version of the graph homomorphism problem in the following sense: in $\PCSP(\rel H,\rel G)$, we fix (not one but) two graphs $\rel H$ and $\rel G$ such that there is a~homomorphism from $\rel H$ to $\rel G$ (we write $\rel H\rightarrow \rel G$ to denote this). The goal is then to find a~$\rel G$-colouring for a~given graph when an $\rel H$-colouring is guaranteed to exist (but not given as part of input). The \emph{promise} is that the input graph is always $\rel H$-colourable and hence $\rel G$-colourable as well. The PCSP is a~natural generalisation of this to arbitrary relational structures, or in other words, a~generalisation of the decision CSP to the promise setting. Brakensiek and Guruswami proposed a conjecture that $\PCSP(\rel H,\rel G)$ is NP-hard for all non-bipartite loopless graphs $\rel H$ and $\rel G$ such that $\rel H\rightarrow\rel G$. This would generalise the approximate graph colouring conjecture and greatly extend the Hell-Ne\v{s}et\v{r}il dichotomy for $\rel H$-colouring~\cite{HN90}.

  Given the huge success of the algebraic approach to the CSP, it is natural to investigate what it can do for PCSPs. This investigation was started by Austrin, H{\aa}stad, and Guruswami \cite{AGH17}, with an application to a~promise version of SAT. It was further developed by Brakensiek, Guruswami and others \cite{BG16-graph,BG18-structure,BG19,BG20,BGWZ20,BWZ19} and applied to a~range of problems, including versions of approximate graph and hypergraph colouring. A~recent paper \cite{BKO19,BBKO19} describes a~general abstract algebraic theory for PCSPs, which shows, in particular, how algebraic properties precisely capture the power of gadget reductions in PCSPs. However, the algebraic theory of PCSPs is still very young and much remains to be done both in further developing it and in applying it to specific problems. We note that the aforementioned \NP-hardness of 5-colouring a~given 3-colourable graph was proved in \cite{BKO19,BBKO19} by applying this abstract theory.

  The gist of the algebraic theory is that the complexity of $\PCSP(\rel H,\rel G)$ depends only on (certain properties of) \emph{polymorphisms}, which are multi-variable functions that can be defined as homomorphisms from direct powers $\rel H^n$ into $\rel G$.
  However, the analysis of polymorphisms is in general a highly non-trivial task, and powerful tools are needed to conduct it. For resolving the CSP dichotomy conjecture, the structural theory of finite universal algebras provided such a tool. However, it is not clear how much this theory can be applied to the promise setting. In this paper, we show that algebraic topology gives a very useful tool to analyse polymorphisms and pinpoint the complexity of PCSPs. We do this by explaining how general PCSPs are naturally equipped with a topological structure, called homomorphism complexes, and how polymorphisms of a given PCSP can be understood through the continuous maps they induce.
  Homomorphism complexes (as well as several related constructions) have been actively studied in topological combinatorics~\cite{Koz08-book,Mat03}, though mainly to give obstructions to the existence of homomorphisms and mostly for the case of graphs.
  However, methods of algebraic topology can also be used to obtain important information about polymorphisms: for example, to identify ``influential'' variables.
  We demonstrate how this new methodology can be applied to resolve a significant part of the Brakensiek-Guruswami conjecture.

  We also show that the simple notion of adjunction, which is a certain form of homomorphism duality, provides a powerful tool to reason about reductions between PCSPs. We observe that adjunctions always give rise to reductions between PCSPs. Moreover, we prove that many reductions between PCSPs work because of the presence of adjunction. This includes, in particular, all gadget reductions (that are captured by the algebraic approach) and all reductions satisfying very mild technical conditions. We demonstrate how adjunction can be applied by significantly improving the state-of-the-art in approximate graph colouring --- via reductions that provably cannot be explained via the algebraic approach from~\cite{BBKO19}.

\subsection*{Related work}

The notion of PCSP was coined in \cite{AGH17}, though one of the main examples of problems of this form, approximate graph colouring, has been around for a~long time~\cite{GJ76}. The complexity landscape of PCSPs (beyond CSPs) is largely unknown, even for the Boolean case (see~\cite{BG18-structure,Ficak19:icalp}) or for graph colouring and homomorphisms.

Most notable examples of PCSPs studied before are related to graph and hypergraph colouring. We already mentioned some results concerning colouring 3-colourable graphs with a~constant number of colours.  Without additional complexity-theoretic assumptions, the strongest known \NP-hardness results for  colouring $k$-colourable graphs are as follows.  For any $k\ge 3$, it is \NP-hard to colour a~given $k$-colourable graph with $2k-1$ colours \cite{BKO19,BBKO19}. For large enough $k$,  it is \NP-hard to colour a~given $k$-colourable graph with $2^{\Omega(k^{1/3})}$ colours \cite{Hua13}.  The only earlier result about promise graph homomorphisms (with $\rel H\ne \rel G$) that involves more than approximate graph colouring is the \NP-hardness of 3-colouring for graphs that admit a~homomorphism to $\rel C_5$, the five-element cycle \cite{BBKO19}.

Under stronger assumptions (Khot's 2-to-1 Conjecture~\cite{Khot02stoc} for
$k\geq 4$ and its non-standard variant for $k=3$), Dinur, Mossel, and Regev
showed that finding a $c$-colouring of a $k$-colourable graph is NP-hard for all
constants $c\geq k\geq 3$~\cite{DMR09}.  It was shown in~\cite{GS19} that the above result for $k = 2d$ still holds if one assumes the $d$-to-1 Conjecture of Khot~\cite{Khot02stoc} for any fixed $d\ge 2$ instead of the 2-to-1 Conjecture (which is the strongest in the family of $d$-to-1 conjectures).  A variant of Khot's 2-to-1 Conjecture with imperfect completeness has recently been proved~\cite{DinurKKMS18,KhotMS18}, which implies hardness for approximate colouring variants for the weaker promise that most but not all of the graph is guaranteed to be $k$-colourable.

A~colouring of a~hypergraph is an assignment of colours to its vertices that
leaves no edge monochromatic. It is known that, for any constants $c\geq k\geq 2$, it is \NP-hard to find a $c$-colouring of a~given 3-uniform $k$-colourable hypergraph \cite{DRS05}.
  Further variants of approximate hypergraph colouring, e.g.\ relating to strong or rainbow colourings, were studied in \cite{ABP20,BG16-graph,BG17,GL18,GS20:rainbow}, but most complexity classifications related to them are still open in full generality.

Some results are also known for colourings with a super-constant number of colours.  For graphs, conditional hardness can be found in~\cite{Dinur10:approx}, and for hypergraphs, \NP-hardness results were obtained in~\cite{ABP19,Bha18}.

An accessible exposition of the algebraic approach to the CSP can be found in \cite{BKW17}, where many ideas and results leading to (but not including) the resolution \cite{Bul17,Zhu17,Zhu20} of the Feder-Vardi conjecture are presented. The volume \cite{KZ17} contains surveys concerning many aspects of the complexity and approximability of CSPs.

The first link between the algebraic approach and PCSPs was found by Austrin, H\aa{}stad, and Guruswami \cite{AGH17},  where they studied a promise version of $(2k+1)$-SAT called $(2+\epsilon)$-SAT. They use a~notion of \emph{polymorphism} (which is the central concept in the algebraic theory of CSP) suitable for PCSPs.
In \cite{BG18-structure}, it was shown that the complexity of a~PCSP is fully determined by its polymorphisms --- in the sense that two PCSPs with the same set of polymorphisms have the same complexity. They also use polymorphisms to prove several hardness and tractability results.
The algebraic theory of PCSP was lifted to an abstract level in \cite{BKO19,BBKO19}, where it was shown that abstract properties of polymorphisms determine the complexity of PCSP.

The topological methods that we develop in this paper originate in topological combinatorics, specifically in Lovász's celebrated proof~\cite{Lov78} that gives a tight lower bound on the chromatic number of Kneser graphs.
We refer to \cite{Mat03} for an approachable introduction, and to \cite{Koz08-book} for an in-depth technical reference.
The modern view of this method is to assign a topological space to a graph in such a way that combinatorial properties of the graph (e.g.\ the chromatic number) are influenced by topological properties of the resulting space (e.g.\ topological connectivity).
An intermediate step in the construction of the topological space is to assign a certain abstract simplicial complex to a~graph (we introduce these below). In our proof, we use so-called \emph{homomorphism complexes} that give a simplicial structure on the set of homomorphisms between two graphs (or other structures). We remark that restricting those complexes to vertices and edges (so called 1-skeletons) gives graphs of homomorphisms which have been used in CSP research before (see, e.g., \cite{BBDL19,LLT07}).

We remark that three earlier results on the complexity of approximate hypergraph colouring \cite{ABP20,Bha18,DRS05} were based on results from topological combinatorics using the Borsuk-Ulam theorem or similar~\cite{Lov78,Mat03}. Their use of topology seems different from ours, and it remains to be seen whether they are all occurrences of a common pattern.

Topological methods and adjunction (including some specific cases that we use in this paper) have also been actively used in research around Hedetniemi's conjecture about the chromatic number of graph products~\cite{FoniokT17,Matsushita17,Tardif08:survey,TardifW18,Wrochna17,Wrochna17b} (recently disproved by Shitov~\cite{Shitov19}).  A few ideas in this paper are inspired by this line of research.  A survey on adjunction and graph homomorphisms can be found in~\cite{FoniokT13} (see also~\cite{FoniokT15}), which also discusses several specific constructions that we use in this paper.

\subsection*{Our contributions}

We first describe our methodological contributions related to topology and adjunction and then specific applications to promise graph homomorphism and approximate graph colouring.
For simplicity, we will present the general theory for the case of graphs, which is what our applications are about;
nevertheless, it generalises immediately to arbitrary relational structures.
We do not assume that the reader is familiar with topological combinatorics or algebraic topology and provide the necessary definitions and explanations here and in later sections.

It will be clear to the reader familiar with category theory that much of what we do in this paper can be naturally expressed in category-theoretic language. However, we avoid using this language, for the benefit of the readers less familiar with category theory.

\subsubsection*{Topological analysis of polymorphisms}

As we mentioned before, the complexity of a problem $\PCSP(\rel H,\rel G)$ is completely determined by certain abstract properties of polymorphisms from $\rel H$ to $\rel G$. Our first contribution is the introduction of topology as a tool to analyse polymorphisms.
The basis for such analysis is the fact that the set of all homomorphisms from a graph $\rel H$ to another graph $\rel G$ can be made into an abstract simplicial complex denoted by $\Hom(\rel H,\rel G)$.

An \emph{abstract simplicial complex} $\cox K$ is a downwards closed family of non-empty subsets of a vertex set $V(\cox K)$ --- subsets in the family are called \emph{faces} (or \emph{simplices}), their elements are \emph{vertices}.
A~simplical complex describes a topological space: the \emph{geometric realisation} of $\cox K$, denoted $\geom{\cox K}$, is the subspace of $\RR^{V(\cox K)}$ obtained by identifying vertices with affinely independent points and, for each face, adding to the space the convex hull of the vertices in the face.
Thus a pair $\{u,v\} \in \cox K$ becomes an edge, a triple (i.e., 3-element face) becomes a filled triangle, a quadruple becomes a filled tetrahedron, and so on.
The resulting space can be analysed by using algebraic topology and the algebraic structures (groups, rings) that it associates with a topological space.

The vertex set of the complex $\Hom(\rel H,\rel G)$ is the set of all homomorphisms from $\rel H$ to $\rel G$. A~non-empty set $\{h_1,\ldots,h_\ell\}$ of such homomorphisms is a face if every function $h \colon V(H) \to V(G)$ satisfying $h(v) \in \{h_1(v),\dots,h_\ell(v)\}$ for all $v$ is a homomorphism $\rel H \to \rel G$.
For example, if two homomorphisms $h_1, h_2$ differ at only one vertex $v \in V(H)$, then they are connected by a line in $\geom{\Hom(\rel H,\rel G)}$.
Note the definition generalises in a straightforward way from graphs to arbitrary relational structures.

There are several ways to use this notion for analysis of polymorphisms. One is to directly use the topological structure of $\geom{\Hom(\rel H^n,\rel G)}$ --- for example, by looking at various connectivity properties in this space and asking when polymorphisms (as points in this space) belong to the same component.
Another one, and this is what we use in the paper, goes as follows. Any (say, $n$-ary) polymorphism $f$ from $\rel H$ to $\rel G$, i.e., a homomorphism from $\rel H^n$ to $\rel G$,  induces in a natural way a continuous map $\tilde{f}$ from the space $\geom{\Hom(\rel K_2,\rel H^n)}$ to $\geom{\Hom(\rel K_2,\rel G)}$, where $\rel K_2$ is the two-element clique. One can then obtain information about $f$ from algebraic invariants of~$\tilde{f}$.

As an important example, suppose that $\rel H, \rel G$ are (possibly different) odd cycles.
It is well known and not hard to check that $\geom{\Hom(\rel K_2,\rel H)}$ is topologically equivalent to the circle $\Sphere^1$ (we do this later in Example~\ref{ex:cycle}) and $\geom{\Hom(\rel K_2,\rel H^n)}$ to the $n$-torus $\Torus^n = \Sphere^1 \times \dots \times \Sphere^1$. A~homomorphism $f$ from $\rel H$ to $\rel G$ induces a continuous map $\tilde{f}$ from $\Sphere^1$ to $\Sphere^1$, and the main algebraic invariant of such a map is its \emph{degree}, or winding number, which is an integer that intuitively measures how many times $\tilde{f}$ wraps the domain circle around the range circle (and in which direction).
The degree of $\tilde{f}$ will be bounded because it arises from a discrete map $f$.
Similarly, when analysing a homomorphism $f$ from $\rel H^n$ to $\rel G$, we study $\tilde{f}$, which is now a continuous map from $\Torus^n$  to $\Sphere^1$.
Each of the $n$ variables of $\tilde{f}$ corresponds to a circle in $\Torus^n$ and thus to a degree of $\tilde{f}$ restricted to that circle.
We show that the number of variables whose degrees are non-zero is bounded, again because $\tilde{f}$ arises from a discrete function $f$. In this way, we obtain that each polymorphism $f$
has a bounded number of coordinates (independent of $n$) that are ``important'', or
``influential'', and we can then use this information, together with the previously developed algebraic theory~\cite{BBKO19}, to show that $\PCSP(\rel H,\rel G)$ is \NP-hard.

\subsubsection*{Adjunction}

We use symbols $\Lambda,\Gamma$ for functions from the class of all (finite) graphs to itself.  It will be convenient to write $\Lambda \rel H$ instead of $\Lambda(\rel H)$ for the image of $\rel H$ under $\Lambda$. The definitions and general properties again extend to all relational structures. Adjunction is a certain duality property between functions, best introduced with a concrete example.
\begin{example}
\label{ex:walk-power}
For a graph $\rel G$ and an odd integer $k$ one can consider the following functions: $\Lambda_k \rel G$ is defined to be the graph obtained by subdividing each edge of $\rel G$ into a path of $k$ edges, and $\Gamma_k \rel G$ is the graph obtained by taking the $k$-th power of the adjacency matrix (with zeroes on the diagonal; equivalently, the vertex set remains unchanged and two vertices are adjacent if and only if there is a walk of length exactly $k$ in $\rel G$). For example, $\Gamma_3 \rel G$ has loops if $\rel G$ has triangles.
\end{example}

Two functions $\Lambda, \Gamma$ are called \emph{adjoint}~if
\[
  \Lambda \rel H \to \rel G \text{ if and only if } \rel H \to \Gamma \rel G
\]
for all graphs $\rel G,\rel H$.  In this case $\Lambda$ is also called the \emph{left adjoint to $\Gamma$}, and $\Gamma$ is \emph{the right adjoint to~$\Lambda$}.  For example, it is well known and easy to check that $\Lambda_k,\Gamma_k$ are adjoint, for any fixed odd~$k$~\cite{FoniokT13}. Adjoint functions are always \emph{monotone} with respect to the homomorphism preorder, i.e., $\rel H \to \rel G$ implies both $\Lambda\rel H \to \Lambda \rel G$ and $\Gamma\rel H \to \Gamma \rel G$ (see Lemma~\ref{lem:adj-technical}).

Adjoint functions give us a way to reduce one PCSP to another.  Indeed, consider any function~$\Lambda$.  We can always attempt to use it as a reduction between \emph{some} PCSPs: if an instance graph $\rel I$ is guaranteed to be $\rel H$-colourable, then $\Lambda \rel I$ is guaranteed to be $\Lambda \rel H$-colourable if $\Lambda$ is monotone.  On the other hand if we find $\Lambda \rel I$ to be $\rel G$-colourable, this may imply that $\rel I$ is $\rel X$-colourable for some graph $\rel X$.  In such a case $\Lambda$ would be a reduction from $\PCSP(\rel H,\rel X)$ to $\PCSP(\Lambda \rel H, \rel G)$.  What is the best possible $\rel X$?  It is a graph $\rel X$ such that for any instance $\rel I$, $\Lambda \rel I \to \rel G$ holds if and only if $\rel I \to \rel X$.  If such an $\rel X$ exists, it is essentially unique (since we just defined what homomorphisms $\rel X$ admits).  The function that assigns to a graph $\rel G$ this best possible $\rel X$ is exactly the right adjoint to $\Lambda$.  In this way, adjoints help us identify the best possible reduction a function gives, even though the proof that the reduction works might not need to mention the right adjoint.

\subsubsection*{Applications}
  Our applications of the above methodologies aim towards resolving the Bra\-ken\-siek-Guruswami conjecture mentioned earlier:

  \begin{conjecture}[Brakensiek and Guruswami~\cite{BG18-structure}]
    \label{conj:main} \label{conj:bg}
    Let $\rel H$ and $\rel G$ be any non-bipartite loopless graphs with $\rel H\to \rel G$. Then $\PCSP(\rel H,\rel G)$ is \NP-hard.
  \end{conjecture}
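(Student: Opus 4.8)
The plan is to route the conjecture through the algebraic theory of \cite{BBKO19} and then attack the resulting problem about polymorphisms with the two new tools, homomorphism complexes and adjunction. By \cite{BBKO19}, $\PCSP(\rel H,\rel G)$ is \NP-hard as soon as the minion $\Pol(\rel H,\rel G)$ of all homomorphisms $\rel H^n\to\rel G$ (with their minors) is ``hard''; concretely, it suffices to exhibit a constant $d=d(\rel H,\rel G)$ such that every polymorphism has at most $d$ influential coordinates and this assignment of influential coordinates is itself compatible with taking minors, since such data yields a minion homomorphism to a bounded-arity relative of the projection minion, which is known to force \NP-hardness. So the whole conjecture reduces to proving a uniform, arity-independent bound on the number of influential coordinates of polymorphisms from $\rel H$ to $\rel G$.

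For that bound I would use the topological machinery. Assign to each graph $\rel F$ the space $\geom{\Hom(\rel K_2,\rel F)}$; it carries a free $\mathbb{Z}_2$-action and is connected exactly when $\rel F$ is non-bipartite. A polymorphism $f\colon\rel H^n\to\rel G$ induces a continuous $\mathbb{Z}_2$-map $\tilde f$ from $\geom{\Hom(\rel K_2,\rel H^n)}$, a product of $n$ copies of $\geom{\Hom(\rel K_2,\rel H)}$, to $\geom{\Hom(\rel K_2,\rel G)}$. When $\rel H$ and $\rel G$ are odd cycles these spaces are circles and the domain is the torus $\Torus^n$, and the first homology of $\tilde f$ records a degree (winding number) along each of the $n$ coordinate circles $\Sphere^1$; because $\tilde f$ arises from a discrete, locally bounded map $f$, each such degree is bounded, and --- this is the heart of the matter --- only boundedly many of them can be nonzero, as a homomorphism $\rel H^n\to\rel G$ cannot wind nontrivially around too many independent circles at once. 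The coordinates of nonzero degree are exactly the influential ones, minors act on degrees by the evident summation, so both requirements of the previous paragraph are met and $\PCSP(\rel C_{2k+1},\rel C_{2\ell+1})$-type problems become \NP-hard. The general ambition is to replace ``winding number'' by an appropriate algebraic-topological invariant of $\tilde f$, evaluated on a carefully chosen family of cycles or spheres sitting inside $\geom{\Hom(\rel K_2,\rel H^n)}$.

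To reach graphs for which $\geom{\Hom(\rel K_2,\rel H)}$ is not a circle, I would bring in adjunction. For an adjoint pair $\Lambda,\Gamma$ (such as the subdivision/walk-power pair $\Lambda_k,\Gamma_k$ of Example~\ref{ex:walk-power}, or others from \cite{FoniokT13}), monotonicity of $\Lambda$ together with the adjunction identity makes $\Lambda$ a correct reduction from $\PCSP(\rel H,\Gamma\rel G)$ to $\PCSP(\Lambda\rel H,\rel G)$; choosing $\Lambda,\Gamma$ so that the intermediate template is one of the topologically tractable ones --- e.g.\ subdividing to impose large odd girth, or comparing against a fixed odd cycle $\rel C_{2r+1}$ --- transports \NP-hardness back to the original $\rel H,\rel G$. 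Iterating such adjunctions, and interleaving them with the topological step above, should settle the conjecture for a substantial class of templates and in particular yield new approximate-graph-colouring hardness results.

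The step I expect to be the real obstacle is making the topology work for templates of high ``topological complexity''. The full conjecture contains $\rel H=\rel G=\rel K_t$ with $t\ge 4$, i.e.\ ordinary approximate graph colouring, and there $\geom{\Hom(\rel K_2,\rel K_t)}$ is homotopy equivalent to the sphere $\Sphere^{t-2}$, not to a circle: it is simply connected, so $\fg$ and the winding-number argument vanish, and it is far from clear which higher invariant of $\tilde f\colon(\Sphere^{t-2})^n\to\Sphere^{t-2}$ could play the role of degree and still be forced to be supported on few coordinates. Nor is there any known adjunction reducing $\rel K_t$-colouring to a cycle problem. Consequently the method as outlined only delivers the uniform influence bound --- and hence \NP-hardness --- under a restriction bounding the topological complexity of $\rel H$ and $\rel G$ (odd cycles, and whatever can be reduced to them by adjunction); the general conjecture, most conspicuously $c\ge k\ge 4$ approximate colouring, stays beyond these techniques.
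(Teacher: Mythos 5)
What you have written is not a proof of the stated conjecture, and you say so yourself in your final paragraph; to be clear, the paper does not prove it either. Conjecture~\ref{conj:bg} is stated as an open conjecture, and the paper's actual results are exactly the partial ones your outline converges on: hardness of $\PCSP(\rel H,\rel G)$ whenever $\gBip{\rel G}$ admits a \equivariant-map to $\Sphere^1$ (Theorem~\ref{thm:main-s1}, covering all odd cycles as targets, circular cliques $\rel K_{p/q}$ with $2<p/q<4$, and square-free targets), plus adjunction-based reductions on the clique side (Theorems~\ref{thm:conditional} and~\ref{thm:asymp}). Your diagnosis of the obstruction is also the paper's: for $\rel K_t$ with $t\ge 4$ the relevant space is $\Sphere^{t-2}$, the fundamental group is trivial, and no replacement invariant for \equivariant-maps $\Torus^n\to\Sphere^m$, $m\ge 2$, is known. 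So as a reconstruction of the paper's program your proposal is accurate; as a proof of the conjecture it is, by your own admission, incomplete, and the conjecture remains open.

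Two technical points in your sketch deserve sharpening, since they are where the actual partial proof does real work. First, the claim that ``only boundedly many degrees can be nonzero'' is not a purely topological fact about maps $\Torus^n\to\Sphere^1$; it follows from the discreteness of the minion. The paper's Lemma~\ref{lem:bound} bounds $\sum_i|c_i|$ by identifying all positive-degree coordinates and all non-positive-degree ones, reducing to the finitely many \emph{binary} polymorphisms; this requires that the degree assignment genuinely preserves minors (Lemmas~\ref{lem:mu_1}--\ref{lem:nu}, where minors are only preserved up to \equivariant-homotopy at the space level and one must check $\pi_1$ collapses this). Second, bounded essential arity alone does not suffice for Theorem~\ref{thm:bounded-arity}: you must also exclude the constant (zero) linear function, i.e.\ show some $c_i\ne 0$. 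This is where the \equivariant-structure is indispensable --- the unique unary minor is a \equivariant-map $\Sphere^1\to\Sphere^1$, whose degree is odd by the Borsuk--Ulam-type Lemma~\ref{lem:odd}, forcing $\sum_i c_i$ odd and hence nonzero. Your outline never invokes the $\ZZ_2$-equivariance for this purpose, and without it the argument would admit the constant map and prove nothing.
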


We remark that the Hell-Ne\v{s}et\v{r}il theorem~\cite{HN90} confirms Conjecture~\ref{conj:main} for the case $\rel H = \rel G$.
We also remark that Conjecture~\ref{conj:bg} covers all graphs: As discussed in
Section~\ref{sec:preliminaries}, if either $\rel H$ or $\rel G$ is bipartite or
contains a loop then $\PCSP(\rel H, \rel G)$ can be easily solved in polynomial time.

  It is not hard to see that the conjecture is equivalent to the statement that
  $\PCSP(\rel C_k,\rel K_c)$ is \NP-hard for all $k\geq 3$ odd and $c \geq 3$,
  where $\rel C_k$ is a cycle on $k$ vertices and $K_c$ is a clique on $c$
  vertices. This is because we have a chain of homomorphisms
  \begin{equation}\label{eq:chain}
    \dots \to \rel C_k \to \dots \to \rel C_5 \to \rel C_3 =
    \rel K_3 \to \rel K_4 \to \dots \to \rel K_c \to \dots
  \end{equation}
  and, for each $(\rel H, \rel G)$ with a homomorphism $\rel H \to \rel G$, the problem $\PCSP(\rel H, \rel G)$ admits a trivial reduction from $\PCSP(\rel C_k, \rel K_c)$, where the promise is strengthened by requiring the input graph to be $\rel C_k$-colourable, for an odd cycle $\rel C_k$ in $\rel H$, and the goal is weakened to that of finding a $\rel K_c$-colouring, where $c$ is the chromatic number of $\rel G$ (so we have $\rel C_k \to \rel H$ and $\rel G \to \rel K_c$).

  The chain~\eqref{eq:chain} has a natural middle point $\rel K_3$.  The right half corresponds to the classical approximate graph colouring: find a $c$-colouring of a 3-colourable graph. Our applications make progress on the right half and show hardness for all of the left half.

  For the left half, we use the topological analysis of polymorphisms, as described above, to confirm Conjecture~\ref{conj:bg} for $\rel G = \rel K_3$:

  \begin{theorem}\label{thm:K3}
    $\PCSP(\rel H,\rel K_3)$ is \NP-hard for every non-bipartite 3-colourable $\rel H$.
  \end{theorem}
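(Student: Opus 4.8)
The plan is to reduce Theorem~\ref{thm:K3} to the case of odd cycles and then to carry out the topological analysis of polymorphisms sketched above. First I would exploit non-bipartiteness of $\rel H$: it contains an odd cycle, say of length $k$, so $\rel C_k \to \rel H$, and since $\rel H$ is $3$-colourable also $\rel H \to \rel K_3$. Taking $\rel G = \rel K_3$ (so $\chi(\rel G) = 3$) in the trivial reduction described right after~\eqref{eq:chain} gives a reduction from $\PCSP(\rel C_k, \rel K_3)$ to $\PCSP(\rel H, \rel K_3)$. Hence it suffices to prove that $\PCSP(\rel C_k, \rel K_3)$ is \NP-hard for every odd $k \ge 3$; for $k = 3$ this is just the \NP-hardness of $3$-colouring, and the argument below treats all odd $k$ uniformly.

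Fix an odd $k \ge 3$. The crucial point is that the target $\rel K_3 = \rel C_3$ is itself an odd cycle, so $\geom{\Hom(\rel K_2, \rel K_3)}$ is topologically a circle $\Sphere^1$ (Example~\ref{ex:cycle}) while $\geom{\Hom(\rel K_2, \rel C_k^n)}$ is (homotopy equivalent to) the torus $\Torus^n = (\Sphere^1)^n$. Any $n$-ary polymorphism $f \in \Pol(\rel C_k, \rel K_3)$, i.e.\ a homomorphism $\rel C_k^n \to \rel K_3$, therefore induces a continuous map $\tilde f \colon \Torus^n \to \Sphere^1$. For $i = 1, \dots, n$ let $d_i(f) \in \ZZ$ be the degree of $\tilde f$ restricted to the $i$-th circle factor of $\Torus^n$, and put
\[
  I(f) \defeq \{\, i \in \{1,\dots,n\} : d_i(f) \neq 0 \,\}.
\]
Two properties drive the proof. (a) $I(f) \neq \emptyset$: restricting $\tilde f$ to the diagonal circle of $\Torus^n$ is the map induced by the homomorphism $x \mapsto f(x,\dots,x) \colon \rel C_k \to \rel K_3$, whose degree is, up to sign, the winding number of a closed walk of odd length $k$ in the triangle $\rel K_3$, hence odd and in particular non-zero; since on first homology this degree equals $\sum_i d_i(f)$, some $d_i(f)$ is non-zero. (b) $|I(f)| \le C$ for a constant $C = C(k)$ independent of $n$ --- this is the ``bounded number of influential variables'' phenomenon, and it is the step that genuinely uses that $\tilde f$ comes from a discrete map.

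It remains to feed (a) and (b) into the algebraic theory. Taking the minor $f^\pi$ of $f$ along $\pi \colon \{1,\dots,n\} \to \{1,\dots,m\}$ precomposes $\tilde f$ with the standard torus map induced by $\pi$, so $d_j(f^\pi) = \sum_{\pi(i) = j} d_i(f)$; in particular $d_j(f^\pi) = 0$ whenever $j \notin \pi(I(f))$, so $I(f^\pi) \subseteq \pi(I(f))$, while $I(f^\pi) \neq \emptyset$ by~(a) applied to $f^\pi$. Thus $f \mapsto I(f)$ is a non-empty, bounded, minor-compatible selection of coordinates on $\Pol(\rel C_k, \rel K_3)$; equivalently, $\Pol(\rel C_k, \rel K_3)$ has no cyclic polymorphism of arity $p > C$, since for a cyclic operation the cyclic symmetry forces all $d_i(f)$ equal (apply the above with $\pi$ the cyclic shift), hence non-zero by~(a), hence $|I(f)| = p$. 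By the algebraic hardness criterion of~\cite{BBKO19}, this makes $\PCSP(\rel C_k, \rel K_3)$ \NP-hard, and with the reduction of the first paragraph we conclude that $\PCSP(\rel H, \rel K_3)$ is \NP-hard.

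The main obstacle is property~(b): showing that a genuine homomorphism $\rel C_k^n \to \rel K_3$ cannot wind non-trivially around more than boundedly many of the $n$ torus directions at once. I expect this to require a careful combinatorial description of the complex $\Hom(\rel K_2, \rel C_k^n)$ and of how a homomorphism to $\rel K_3$ acts on it --- for instance, encoding homomorphisms to $\rel K_3$ by $\ZZ_3$-valued ``height'' functions along walks and bounding their total variation, and then arguing that many non-zero slice-degrees would force such a height function to be too wild to come from a homomorphism of the finite graph $\rel C_k^n$. Everything else --- the reduction to cycles, property~(a), the minor calculus, and the appeal to~\cite{BBKO19} --- is routine relative to this.
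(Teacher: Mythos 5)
Your overall strategy is the same as the paper's: pass to odd cycles, convert a polymorphism $\rel C_k^n\to\rel K_3$ into a map $\Torus^n\to\Sphere^1$, read off a tuple of winding numbers $d_i(f)$, show the sum is odd and the number of non-zero entries is bounded, and feed this into the hardness criterion of~\cite{BBKO19}. The problem is that you leave the load-bearing step --- property (b), the uniform bound $|I(f)|\le C$ --- unproved, and the route you sketch for it (a combinatorial analysis of $\Hom(\rel K_2,\rel C_k^n)$ and ``total variation'' of height functions) is not how it goes and would be substantially harder than necessary. In fact you already have everything needed: you derived the minor identity $d_j(f^\pi)=\sum_{\pi(i)=j}d_i(f)$, and the paper's Lemma~\ref{lem:bound} deduces the bound from this identity alone plus finiteness of the minion's binary part. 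Concretely: let $N$ be the maximum of $|d_1(g)|+|d_2(g)|$ over the finitely many \emph{binary} polymorphisms $g\in\Pol(\rel C_k,\rel K_3)^{(2)}$; given an $n$-ary $f$, take $\pi\colon[n]\to\{1,2\}$ sending the coordinates with $d_i(f)>0$ to $1$ and the rest to $2$, and set $g=f^\pi$. Then $d_1(g)=\sum_{d_i>0}d_i(f)$ and $d_2(g)=\sum_{d_i<0}d_i(f)$, so $\sum_i|d_i(f)|=d_1(g)-d_2(g)\le N$. No further analysis of the graphs is needed; the discreteness of $f$ enters only through the finiteness of $\Pol(\rel C_k,\rel K_3)^{(2)}$. (A second, fixable, gap: the identity $d_j(f^\pi)=\sum_{\pi(i)=j}d_i(f)$ is not literal equality of maps --- the linear extension to geometric realisations only preserves minors up to \equivariant-homotopy, and one must pass to $\fg(\Sphere^1)$, where homotopic maps induce equal homomorphisms, before the degrees become honestly minor-compatible. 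The paper handles this in Lemmas~\ref{lem:mu_2}--\ref{lem:nu}.)

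Two smaller points. First, your closing ``equivalently, $\Pol(\rel C_k,\rel K_3)$ has no cyclic polymorphism of arity $p>C$'' is not a usable hardness criterion here: absence of cyclic polymorphisms implies \NP-hardness for CSPs via the dichotomy theorem, but no such implication is known for PCSPs. What does work is exactly your other formulation: $f\mapsto\sum_i d_i(f)x_i$ is a minion homomorphism into the minion $\clo Z_{\le N}$ of integer linear maps with odd coefficient sum and $\sum_i|c_i|\le N$, which has bounded essential arity and no constants, so Theorem~\ref{thm:bounded-arity} applies. Second, note that the paper derives oddness of the diagonal degree from the fact that any \equivariant-map $\Sphere^1\to\Sphere^1$ has odd degree (Lemma~\ref{lem:odd}); your walk-parity argument for $\rel K_3$ is correct but the equivariance argument is what generalises to arbitrary targets admitting a \equivariant-map to $\Sphere^1$, which is how the paper gets Theorem~\ref{thm:main-s1} and its corollaries beyond $\rel K_3$.
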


  Equivalently, $\PCSP(\rel C_k,\rel K_3)$ is \NP-hard for all odd $k \geq 3$.  We in fact prove a more general result which covers other graphs $\rel G$ with similar topological properties to $\rel K_3$ --- namely that $\geom{\Hom(\rel K_2,\rel G)}$ maps to the circle $\Sphere^1$ via a~\emph{\equivariant-map} (see Definition \ref{def:z2-map}).

  \begin{restatable}{theorem}{maintheoremtopology}
    \label{thm:main-s1}
    Let $\rel H,\rel G$ be non-bipartite loopless graphs such that $\rel H \to \rel G$, and there is a \equivariant-map from $\gBip{\rel G}$ to $\Sphere^1$.
    Then $\PCSP(\rel H,\rel G)$ is \NP-hard.
  \end{restatable}

  We give two specific classes of graphs $\rel G$ satisfying the assumptions of Theorem \ref{thm:main-s1}: certain circular cliques and all square-free graphs.

  For positive integers $p,q$ such that $p\ge 2q$, the \emph{circular clique} $\rel K_{p/q}$ is the graph that has the same vertex set as the cycle $\rel C_p$ and two vertices in it are connected by an edge if and only if they are at distance at least $q$ in $\rel C_p$ (see Fig.~\ref{fig:circular-cliques}). It well known that $\rel K_{n/1}$ is isomorphic to $\rel K_n$, $\rel K_{(2n+1)/n}$ is isomorphic to $\rel C_{2n+1}$,
  and that $\rel K_{p/q}\to \rel K_{p'/q'}$ if and only if $p/q\le p'/q'$ (see, e.g., Theorem~6.3 in~\cite{HN04}), thus circular cliques refine the homomorphism order~\eqref{eq:chain} on odd cycles and cliques described above.
  The \emph{circular chromatic number} of $\rel G$, $\chi_c(\rel G)$, is defined as $\inf\{p/q \mid \rel G\to \rel K_{p/q}\}$. Note that we always have $\chi(\rel G) = \lceil \chi_c(\rel G)\rceil$ and also $\chi_c(\rel G)\le 2+\frac{1}{n}$ if and only if $\rel G\to \rel C_{2n+1}$.

\begin{figure}
  \[\begin{array}{c@{\qquad}c@{\qquad}c}
    \begin{tikzpicture}[scale = 1.3, baseline={([yshift=-.5ex]current bounding box.center)}]
      \draw [gray] (0,0) circle (1cm);
      \foreach \i/\c in {0/0,72/1,144/2,-144/3,-72/4}
        \node (\c) [circle,fill,inner sep=1.5,label={\i:$\c$}] at (\i:1) {};
      \foreach \i/\c in {0/0,72/1,144/2,-144/3,-72/4} {
        \draw [thick] (\i:1) -- (72+\i:1);
        \draw [thick] (\i:1) -- (144+\i:1);
      }
    \end{tikzpicture}
    &
    \begin{tikzpicture}[scale = 1.3, baseline={([yshift=-.5ex]current bounding box.center)}]
      \draw [gray] (0,0) circle (1cm);
      \foreach \i/\c in {0/0,72/1,144/2,-144/3,-72/4}
        \node (\c) [circle,fill,inner sep=1.5,label={\i:$\c$}] at (\i:1) {};
      \foreach \i/\c in {0/0,72/1,144/2,-144/3,-72/4} {
        \draw [thick] (\i:1) -- (144+\i:1);
      }
    \end{tikzpicture}
    &
    \begin{tikzpicture}[scale = 1.3, baseline={([yshift=-.5ex]current bounding box.center)}]
      \draw [gray] (0,0) circle (1cm);
      \foreach \i/\c in {0/0,51.42/1,102.85/2,154.28/3,205.71/4,257.14/5,308.57/6}
        \node (\c) [circle,fill,inner sep=1.5,label={\i:$\c$}] at (\i:1) {};
      \foreach \i/\j in {0/2,1/3,2/4,3/5,4/6,5/0,6/1,0/3,1/4,2/5,3/6,4/0,5/1,6/2}
        \draw [thick] (\i) -- (\j);
    \end{tikzpicture}
    \\
    \rel K_{5/1} \simeq \rel K_5 &
    \rel K_{5/2} \simeq \rel C_5 &
    \rel K_{7/2}
  \end{array}\]
  \caption{Examples of circular cliques.}
  \label{fig:circular-cliques}
\end{figure}

  The fact that circular cliques $\rel K_{p/q}$ with $2<p/q<4$ satisfy the topological condition of Theorem \ref{thm:main-s1} is folklore, though we prove it later for completeness.
  The theorem in this case can be viewed as \NP-hardness of colouring $(2+\eps)$-colourable graphs with $4-\eps$ colours:

  \begin{corollary} \label{thm:circular-cliques}
    $\PCSP(\rel K_{p/q},\rel K_{p'/q'})$ is \NP-hard for all $2<p/q\le p'/q'<4$.
  \end{corollary}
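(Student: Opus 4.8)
The plan is to obtain this as an immediate consequence of Theorem~\ref{thm:main-s1}, applied with $\rel H = \rel K_{p/q}$ and $\rel G = \rel K_{p'/q'}$; the whole proof then amounts to verifying the three hypotheses of that theorem for this pair of graphs.

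First I would dispose of the elementary conditions. Both $\rel K_{p/q}$ and $\rel K_{p'/q'}$ are loopless, since no vertex of a circular clique is at circular distance $\geq q\geq 1$ from itself. Both are non-bipartite: applying the quoted characterisation of homomorphisms between circular cliques to the identity map gives $\chi_c(\rel K_{p/q}) = p/q$, hence $\chi(\rel K_{p/q}) = \lceil p/q\rceil \geq 3$ because $p/q > 2$, and likewise for $\rel K_{p'/q'}$. Finally $\rel K_{p/q}\to\rel K_{p'/q'}$ holds precisely because $p/q \leq p'/q'$, again by that characterisation. (Since PCSP complexity is unchanged under passing to homomorphically equivalent graphs, there is no harm in assuming $p/q$ and $p'/q'$ are in lowest terms, though nothing below needs this.)

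The one substantive point is the topological hypothesis: the existence of a \equivariant-map from $\gBip{\rel K_{p'/q'}}$ to the circle $\Sphere^1$. For this I would invoke the fact stated just above the corollary --- and proved in full later in the paper --- that every circular clique $\rel K_{p/q}$ with $2 < p/q < 4$ satisfies exactly this condition; applying it to $\rel K_{p'/q'}$, which has $2 < p'/q' < 4$, completes the verification, and Theorem~\ref{thm:main-s1} then yields that $\PCSP(\rel K_{p/q},\rel K_{p'/q'})$ is \NP-hard. For a self-contained treatment one would analyse the complex $\Hom(\rel K_2,\rel K_{p/q})$ directly: its faces are the pairs $(A,B)$ of vertex subsets inducing a complete bipartite subgraph, and the inequalities $2q < p < 4q$ force such $A,B$ to lie in ``opposite'' arcs of the circle $\ZZ_p$, so that the complex $\ZZ_2$-deformation-retracts onto a circle carrying a free involution (the swap of the two vertices of $\rel K_2$), giving even a \equivariant-homotopy equivalence with $\Sphere^1$.

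I do not expect any genuine obstacle in the corollary proper: granted Theorem~\ref{thm:main-s1} together with the folklore computation of $\gBip{\rel K_{p/q}}$, it is a routine check. The only place where care is required is in that folklore computation itself, specifically in confirming that the involution on the retracted circle is of the right type so that a \equivariant-map to $\Sphere^1$ actually exists. This is exactly where the constraint $p/q < 4$ is used: it keeps $\gBip{\rel K_{p/q}}$ one-dimensional up to homotopy, whereas for $p/q \geq 4$ one already finds that $\gBip{\rel K_4}$ is homotopy equivalent to $\Sphere^2$ and Borsuk--Ulam would rule out any such map.
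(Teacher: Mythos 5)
Your proposal is correct and follows essentially the same route as the paper: apply Theorem~\ref{thm:main-s1} with $\rel H = \rel K_{p/q}$, $\rel G = \rel K_{p'/q'}$, verify looplessness, non-bipartiteness and the homomorphism $\rel K_{p/q}\to\rel K_{p'/q'}$ via the standard characterisation, and obtain the \equivariant-map $\gBip{\rel K_{p'/q'}}\to\Sphere^1$ from the folklore fact proved in Lemma~\ref{lem:graphs-and-s1}(1). Your sketched geometric argument for that lemma (vertices on a regular $p$-gon, the bound $p/q<4$ forcing the two sides of a complete bipartite subgraph into separable arcs) is also the paper's argument, though the paper only produces a \equivariant-map into $\RR^2\setminus\{0\}$ followed by radial projection rather than the stronger \equivariant-deformation retraction you mention.
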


  A graph is said to be \emph{square-free} if it does not contain the 4-cycle $\rel C_4$ as a subgraph. This includes all graphs of girth at least $5$ and thus graphs of arbitrarily high chromatic number.
  Again, it will be a simple observation that square-free graphs satisfy the condition of Theorem~\ref{thm:main-s1}.
  Therefore, we confirm Conjecture \ref{conj:bg} for square-free graphs $\rel G$.

  \begin{corollary}\label{thm:square-free}
    $\PCSP(\rel H,\rel G)$ is \NP-hard for all non-bipartite loopless graphs $\rel H, \rel G$ such that $\rel H\rightarrow \rel G$ and $\rel G$ is square-free.
  \end{corollary}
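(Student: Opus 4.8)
The plan is to reduce this to Theorem~\ref{thm:main-s1}, so the entire task is to verify its hypothesis: namely, that for every non-bipartite loopless square-free graph $\rel G$ there is a \equivariant-map from $\gBip{\rel G}$ to $\Sphere^1$. Since $\rel H \to \rel G$ and both are non-bipartite and loopless by assumption, everything else in the hypothesis of Theorem~\ref{thm:main-s1} is already given, and the conclusion is exactly the statement of the corollary.

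First I would recall (or set up) the $\ZZ_2$-action on $\gBip{\rel G} = \geom{\Hom(\rel K_2,\rel G)}$: the two vertices of $\rel K_2$ can be swapped, and this induces a free simplicial involution on $\Hom(\rel K_2,\rel G)$, hence a free $\ZZ_2$-action on the geometric realisation (freeness uses that $\rel G$ is loopless, so no homomorphism $\rel K_2\to\rel G$ is fixed by the swap). The target $\Sphere^1$ carries the antipodal action. A \equivariant-map is then a continuous map commuting with these involutions. The standard way to produce such a map into $\Sphere^1$ is to show that the source space, up to \equivariant-homotopy, is one-dimensional — more precisely, that its quotient by the $\ZZ_2$-action has a reason to admit a map to $\Sphere^1/\ZZ_2 = \Sphere^1$, or equivalently (by the classification of $\ZZ_2$-spaces mapping to $\Sphere^1$) that the $\ZZ_2$-space $\gBip{\rel G}$ admits a $\ZZ_2$-map to $\Sphere^1$ iff a suitable cohomological obstruction in $H^1$ vanishes. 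The cleanest route, and the one I expect the paper to take, is combinatorial: exhibit an explicit \equivariant-map on the level of a subdivision of $\Hom(\rel K_2,\rel G)$.

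The key combinatorial input is square-freeness. A face of $\Hom(\rel K_2,\rel G)$ is (after unpacking the definition) a pair $(A,B)$ of nonempty subsets of $V(\rel G)$ with every vertex of $A$ adjacent to every vertex of $B$ — i.e. a complete bipartite subgraph — and the involution swaps $(A,B)\leftrightarrow(B,A)$. If $\rel G$ is square-free, then no such $A,B$ can both have size $\ge 2$; hence every face has $A$ or $B$ a singleton, which forces the complex $\Hom(\rel K_2,\rel G)$ to be (at most) one-dimensional — it is essentially a graph. A connected free $\ZZ_2$-simplicial complex of dimension $\le 1$ that is not $\ZZ_2$-contractible is, up to $\ZZ_2$-homotopy equivalence, the circle with the antipodal action (a free $\ZZ_2$-action on a connected graph makes the quotient a graph whose fundamental group surjects onto $\ZZ_2$; collapsing a spanning tree of the quotient exhibits a $\ZZ_2$-map to $\Sphere^1$). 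So I would argue: the $1$-dimensionality from square-freeness gives a $\ZZ_2$-map $\gBip{\rel G}\to\Sphere^1$ directly, handling each connected component, using non-bipartiteness of $\rel G$ to guarantee $\Hom(\rel K_2,\rel G)$ is connected and has an odd cycle in it (this is the standard fact that $\rel G\to\rel C_{2k+1}$-type obstructions live in $H^1$).

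The main obstacle I anticipate is not the topology once dimension $\le 1$ is established, but rather pinning down the precise structure of the one-dimensional complex $\Hom(\rel K_2,\rel G)$ and proving that a $\ZZ_2$-map to $\Sphere^1$ exists on it in general — in particular ruling out that some component is $\ZZ_2$-contractible in a way that still blocks a global equivariant map, or handling disconnected $\Hom(\rel K_2,\rel G)$. The resolution is that we only need one component (or: we may restrict $\rel G$ to a non-bipartite connected component, since $\PCSP(\rel H,\rel G)$ with $\rel G$ disconnected reduces to the connected-component case, and the odd-cycle component is where the relevant homomorphisms land), and on a connected graph with a free $\ZZ_2$-action the desired map always exists by the spanning-tree collapse argument above. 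I would therefore present the proof as: (i) reduce to $\rel G$ connected non-bipartite; (ii) observe square-free $\Rightarrow$ $\Hom(\rel K_2,\rel G)$ is $1$-dimensional; (iii) build the $\ZZ_2$-map to $\Sphere^1$ by collapsing a spanning tree in the quotient graph; (iv) invoke Theorem~\ref{thm:main-s1}.
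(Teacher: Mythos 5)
Your overall strategy matches the paper's: the corollary follows from Theorem~\ref{thm:main-s1} once one shows that every square-free (loopless, non-bipartite) $\rel G$ admits a \equivariant-map $\gBip{\rel G}\to\Sphere^1$, and the key combinatorial input is that square-freeness forbids a $\rel K_{2,2}$, so every face $\sigma\subseteq A\times B$ of $\Bip{\rel G}$ has $A$ or $B$ a singleton. However, your step (ii) contains a genuine error: this does \emph{not} make the simplicial complex $\Bip{\rel G}$ one-dimensional. A face of the form $\{u\}\times N(u)$ is a single simplex on $\deg(u)$ vertices, i.e.\ a simplex of dimension $\deg(u)-1$; for instance, for the Petersen graph (square-free, non-bipartite) the complex contains $2$-dimensional simplices. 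The complex is only ``essentially'' one-dimensional in the sense of being (\equivariant-)homotopy equivalent to a graph — its maximal faces are these stars, pairwise meeting in at most one vertex — and that equivalence is itself something that needs an argument. Since your construction of the \equivariant-map (collapse a spanning tree of the quotient graph) is carried out on a literal graph, it does not apply to $\Bip{\rel G}$ as it stands: one would still have to extend the map over all the higher-dimensional star-simplices, checking that the boundary of each maps null-homotopically into $\Sphere^1$, and that step is exactly what is missing.

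The paper avoids this issue by writing the map down explicitly: fix an arbitrary orientation of $E(G)$, send positively oriented edges to $(1,0)$ and negatively oriented ones to $(-1,0)$, and then map each maximal simplex $\{u\}\times N(u)$ (resp.\ $N(u)\times\{u\}$) affinely onto the upper (resp.\ lower) semicircular arc via the formula $x=\sum_{e\in E^+}\lambda_e-\sum_{e\in E^-}\lambda_e$, $y=\pm\sqrt{1-x^2}$. Square-freeness guarantees every face lies in one such star, so the map is well defined, continuous, and \equivariant. If you want to keep your more conceptual route, you would need to first prove that $\gBip{\rel G}$ \equivariant-deformation-retracts onto a $1$-dimensional subcomplex (or invoke the cited fact that a free \equivariant-space of dimension $1$ maps to $\Sphere^1$ only \emph{after} establishing such a reduction); also note that neither connectivity of $\Bip{\rel G}$ nor non-bipartiteness of $\rel G$ is needed for the existence of the map, so your reduction step (i) and the worry about \equivariant-contractible components can be dropped.
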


  Since the key assumption of Theorem~\ref{thm:main-s1} is topological, this raises a question whether topology is in some sense necessary to settle Conjecture~\ref{conj:bg}. Using adjointness, we argue that it is indeed the case, proving the following (see Theorem~\ref{thm:topoOnly} for a~formal statement).

  \begin{theorem}[\normalfont{informal}]\label{thm:topoOnlyInformal}
    For any graph $\rel G$, the property that $\PCSP(\rel H, \rel G)$ is
    \NP-hard for all non-bipartite $\rel G$-colourable graphs $\rel H$ depends
    only on the topology (and \equivariant-action) of $\geom{\Hom(\rel K_2,\rel G)}$.
  \end{theorem}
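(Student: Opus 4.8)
The strategy is to pass from the property in the statement to an equivalent one about odd cycles, and then to transfer \NP-hardness between the two targets by reductions that are forced to exist by the topological hypothesis, using adjunction.

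First I would reduce to cycles. Since every odd cycle $\rel C_k$ with $\rel C_k\to\rel G$ is a non-bipartite $\rel G$-colourable graph, the assertion ``$\PCSP(\rel H,\rel G)$ is \NP-hard for all non-bipartite $\rel G$-colourable~$\rel H$'' implies ``$\PCSP(\rel C_k,\rel G)$ is \NP-hard for every odd $k$ with $\rel C_k\to\rel G$''; conversely, any non-bipartite $\rel G$-colourable $\rel H$ admits $\rel C_\ell\to\rel H$ where $\ell$ is its odd girth, and the identity is a polynomial reduction $\PCSP(\rel C_\ell,\rel G)\to\PCSP(\rel H,\rel G)$ --- precisely the trivial reduction used in~\eqref{eq:chain}. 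Since moreover $\rel C_{k'}\to\rel C_k$ for odd $k'\ge k$, hardness propagates to all longer cycles, so the property is in fact equivalent to \NP-hardness of the single problem $\PCSP(\rel C_{g},\rel G)$, with $g$ the odd girth of $\rel G$.

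Now assume there are \equivariant-maps between $\gBip{\rel G_1}$ and $\gBip{\rel G_2}$ in both directions. By symmetry it suffices to deduce the cycle-property for $\rel G_1$ from that for $\rel G_2$: given an odd $k_1$, I would exhibit an odd $k_2$ above the odd girth of $\rel G_2$ together with a polynomial reduction $\PCSP(\rel C_{k_2},\rel G_2)\to\PCSP(\rel C_{k_1},\rel G_1)$. By the recipe described in the excerpt, a function $\Lambda$ with right adjoint $\Gamma$ gives a reduction $\PCSP(\rel A,\rel B)\to\PCSP(\Lambda\rel A,\rel D)$ whenever $\Gamma\rel D\to\rel B$; taking $\rel A=\rel C_{k_2}$, $\rel B=\rel G_2$, $\rel D=\rel G_1$ and arranging $\Lambda\rel C_{k_2}$ to be homomorphically equivalent to $\rel C_{k_1}$ --- which the subdivision/power adjunctions $\Lambda_m\dashv\Gamma_m$ can do on the cycle side, since $\Lambda_m\rel C_n=\rel C_{mn}$ and $\Gamma_m$ sends long cycles onto circular cliques that map onto shorter cycles --- the whole task reduces to producing a homomorphism $\Gamma\rel G_1\to\rel G_2$. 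This is where the topological hypothesis must enter; note that $\Gamma$ (hence $k_2$) is necessarily allowed to depend on $k_1$, since no single finite graph can be a common target for all odd cycles.

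The heart of the argument, and the step I expect to be the main obstacle, is this passage from a \equivariant-map $\gBip{\rel G_1}\to\gBip{\rel G_2}$ to an actual graph homomorphism: a continuous (or simplicial) \equivariant-map between geometric realisations of $\Hom(\rel K_2,-)$-complexes is far less rigid than a homomorphism. I would attack it by combining (i) simplicial approximation, applied after replacing $\rel G_1$ by a sufficiently fine ``combinatorial refinement'' whose $\Hom(\rel K_2,-)$-complex realises a barycentric subdivision of $\gBip{\rel G_1}$ as a free \equivariant-complex, with (ii) a functorial way back from simplicial \equivariant-maps of box complexes to homomorphisms --- an adjunction between finite graphs and free \equivariant-simplicial complexes whose right-hand part is the box-complex functor. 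Applying the left adjoint of the box complex to the simplicially approximated map and then its counit at $\rel G_2$ yields a homomorphism from a refinement of $\rel G_1$ into $\rel G_2$, which, after rewriting the refinement via the units and counits of $\Lambda_m\dashv\Gamma_m$, takes the required form $\Gamma\rel G_1\to\rel G_2$. The finitely many ``short'' cases --- odd cycles $\rel C_\ell$ whose length lies below the odd girth of the other graph, so that no problem on that side is available to reduce from --- are handled separately via the Hell--Ne\v{s}et\v{r}il theorem (\NP-hardness of $\CSP(\rel C_\ell)$) and monotonicity. Making this precise --- that the refinement preserves the relevant \equivariant-homotopy data, that the reconstruction of a homomorphism from a simplicial \equivariant-map of box complexes respects the antipodal action, and that the homomorphic-equivalence bookkeeping lands exactly on $\rel G_2$ and on odd cycles of the prescribed length --- is where the real work lies.
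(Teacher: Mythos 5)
The central step of your reduction cannot work as described: once you fix $\Lambda=\Lambda_m$ (subdivision) as the reduction map, Theorem~\ref{thm:adj1} forces the soundness condition to be $\Gamma_m\rel G_1\to\rel G_2$, and this is strictly stronger than the topological hypothesis --- no amount of simplicial approximation will recover it from a \equivariant-map $\gBip{\rel G_1}\to\gBip{\rel G_2}$. Indeed, for odd $m$ the identity is a homomorphism $\rel G_1\to\Gamma_m\rel G_1$ (an edge gives a walk of every odd length), so $\Gamma_m\rel G_1\to\rel G_2$ already forces $\rel G_1\to\rel G_2$. Concretely, take $\rel G_1=\rel K_3$ and $\rel G_2=\rel C_5$: both box complexes are \equivariant-homeomorphic to $\Sphere^1$, yet $\rel K_3\not\to\rel C_5$, and $\Gamma_m\rel K_3$ has loops for every odd $m\ge 3$, so $\Gamma_m\rel K_3\to\rel C_5$ fails for all odd $m$ and your scheme cannot even transfer hardness between $\rel K_3$ and $\rel C_5$. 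The object your ``refinement'' heuristic is pointing at is not $\Gamma_m\rel G_1$ but $\Omega_k\rel G_1$, coming from the \emph{other} adjunction: for odd $k$ the power functor $\Gamma_k$ itself admits a right adjoint $\Omega_k$ (so $\Gamma_k\rel H\to\rel G$ iff $\rel H\to\Omega_k\rel G$), and by a theorem of Wrochna (Theorem~\ref{thm:approx}) a \equivariant-map $\gBip{\rel G_1}\to\gBip{\rel G_2}$ exists if and only if $\Omega_k\rel G_1\to\rel G_2$ for some odd $k$ --- this is exactly the discretisation statement you flag as the heart of the matter, and the paper uses it as a black box. The reduction is then run with $\Gamma_k$, not $\Lambda_m$, in the role of the left adjoint: by Corollary~\ref{cor:adj1}(2), $\Gamma_k$ reduces $\PCSP(\Omega_k\rel H,\Omega_k\rel G_1)$ to $\PCSP(\rel H,\rel G_1)$; the source problem is \NP-hard by the hypothesis on $\rel G_2$ combined with the homomorphic relaxation coming from $\Omega_k\rel G_1\to\rel G_2$; and non-bipartiteness of $\Omega_k\rel H$ follows from $\Lambda_k\rel H\to\Omega_k\rel H$.

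Two smaller points. Your claim that the property is equivalent to \NP-hardness of the single problem $\PCSP(\rel C_g,\rel G)$, with $g$ the odd girth, reverses the direction of the trivial reduction: since $\rel C_{k'}\to\rel C_k$ for odd $k'\ge k$, hardness propagates from longer cycles to shorter ones, not the other way, so the property is equivalent to hardness of $\PCSP(\rel C_k,\rel G)$ for arbitrarily large odd $k$ and the quantification over $k$ is essential. (Your later argument does quantify over all $k_1$, so this slip is not load-bearing, but the sentence as written is false.) Finally, with the correct reduction the left-hand side of the source problem is $\Omega_k\rel H$ rather than an odd cycle, which is one reason the hypothesis must be the full property for $\rel G_2$ and not just its instances on cycles.
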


  \bigskip

  Returning to the right half of the chain~\eqref{eq:chain} (the classical colouring problem), we first show that,
  to prove \NP-hardness of $c$-colouring $k$-colourable graphs for all constants $c\geq k\geq 3$,
  it is enough to prove it for {\em any fixed} $k$ (and all $c\ge k$).

  \begin{theorem} \label{thm:conditional}
    Suppose there is an integer $k$ such that $\PCSP(\rel K_k,\rel K_c)$ is \NP-hard for all $c \geq k$.
    Then $\PCSP(\rel K_3,\rel K_c)$ is \NP-hard for all $c \geq 3$.
  \end{theorem}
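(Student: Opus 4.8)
Overall plan. For each fixed $c' \geq 3$ I would prove $\PCSP(\rel K_3, \rel K_{c'})$ \NP-hard by exhibiting an integer $c \geq k$ and a log-space reduction $\PCSP(\rel K_k, \rel K_c) \to \PCSP(\rel K_3, \rel K_{c'})$; since $c \geq k$, the hypothesis supplies \NP-hardness of the source, which then transfers. Such reductions are built from adjunctions: if $\Lambda$ is a left adjoint with right adjoint $\Gamma$ and $\Lambda \rel A \to \rel A'$, then $\rel I \mapsto \Lambda \rel I$ is a valid reduction from $\PCSP(\rel A, \Gamma \rel B)$ to $\PCSP(\rel A', \rel B)$ --- the ``yes'' direction being monotonicity of $\Lambda$ followed by $\Lambda \rel A \to \rel A'$, and the ``no'' direction being exactly $\Lambda \rel I \to \rel B \iff \rel I \to \Gamma \rel B$. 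Chaining such steps lets one steadily shrink the left template (the ``number of colours in the promise'') while keeping control over the right one, so the task reduces to finding the right efficiently computable adjoint pairs and checking that their right adjoints stay well-behaved (loopless and of bounded chromatic number, so the source PCSPs remain nontrivial).

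The workhorse. The engine is an arc-graph-type left adjoint $\Lambda$, computable in log-space, whose right adjoint $\Gamma$ sends $\rel K_m$ to a finite loopless graph hom-equivalent to $\rel K_{\B{m}}$ (the Sperner / middle-binomial bound; the macro $\B{\cdot}$ is fixed in the preamble). Feeding $\rel A = \rel K_k$, $\rel B = \rel K_{c'}$ and $\rel A' = \rel K_m$ with $m = \chi(\Lambda\rel K_k) \approx \log_2 k$ into the scheme above yields a reduction $\PCSP(\rel K_k, \rel K_{\B{c'}}) \to \PCSP(\rel K_m, \rel K_{c'})$: hardness at left size $k$ and right size $\B{c'}$ gives hardness at left size $\approx \log_2 k$ and right size $c'$. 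Iterating $O(\log^{*} k)$ times (composites of adjoints are adjoints) brings the left template down, and, writing $B(m) = \B{m}$, one obtains $\PCSP(\rel K_4, \rel K_{c'})$ \NP-hard for every $c' \geq 4$: for each such $c'$ pick $t$ large enough that $t$ iterations drive the left template to $\rel K_4$ and that $B^{(t)}(c') \geq k$ --- possible because $B$ is expanding on $[4,\infty)$ --- so that the hypothesis instance needed is $\PCSP(\rel K_k, \rel K_{B^{(t)}(c')})$, which is covered.

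The main obstacle. The step I expect to be hardest is the final descent from $\rel K_4$ to $\rel K_3$: the arc-graph engine stabilises at $\rel K_4$ because $\B{3} = 3 < 4$, so one needs a separate reduction $\PCSP(\rel K_4, \rel K_c) \to \PCSP(\rel K_3, \rel K_{c'})$ that genuinely distinguishes $\rel K_4$ from $\rel K_5$ at the ``three-versus-four colours'' scale. Naive choices fail: plain subdivision has no right adjoint among finite loopless graphs once the target is a clique (its walk-power ``right adjoint'' acquires loops on every $\rel K_{c'}$ with $c' \geq 3$, cf.\ Example~\ref{ex:walk-power}), and tensoring the instance with a fixed graph leaves the colour-gap unchanged. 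I would instead use $\rel K_3 = \rel C_3$ and the odd-cycle / circular-clique refinement of the chain~\eqref{eq:chain}: it suffices to reduce to $\PCSP(\rel C_{2n+1}, \rel K_{c'})$ for a single suitable $n$ (since $\rel C_{2n+1} \to \rel K_3$), and to find an adjunction tailored to that target, again tracking looplessness and chromatic bounds of every right adjoint involved. Granting this last step, the ``yes'' direction (that the composite sends $\rel K_k$-colourable instances to $\rel K_3$-colourable ones) and log-space computability of the composite are routine, as is the trivial case $c' = 3$, where $\PCSP(\rel K_3, \rel K_3)$ is just $3$-colouring.
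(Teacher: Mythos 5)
Your setup---adjunction-based reductions, the arc-graph left adjoint, Poljak--R\"odl giving $\sub\delta_R\rel K_n \simeq \rel K_{\B{n}}$, and the iteration driving the left template down to $\rel K_4$---matches the paper's machinery, and your diagnosis that the descent from $\rel K_4$ to $\rel K_3$ is the crux is exactly right. But your proposed resolution of that crux is a genuine gap, not just a missing detail. You propose to reduce to $\PCSP(\rel C_{2n+1},\rel K_{c'})$ for some odd cycle and then relax to $\rel K_3$. That target is \emph{strictly stronger} than $\PCSP(\rel K_3,\rel K_{c'})$ (the promise is strengthened along $\rel C_{2n+1}\to\rel K_3$), and there is no reason the hypothesis of the theorem---hardness of $\PCSP(\rel K_k,\rel K_c)$ for cliques only---should yield it; indeed such hardness for $c'\ge 4$ is open and is precisely the part of the Brakensiek--Guruswami conjecture this paper does \emph{not} resolve. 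The paper even remarks that $\delta$-iteration can never reach odd cycles longer than $3$, because $\delta^{(i)}\rel K_3$ always contains a directed triangle. ``Find an adjunction tailored to that target'' is where the proof would have to happen, and nothing is supplied.

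The idea you are missing is to \emph{stop symmetrising}. The paper iterates the directed arc-digraph $\delta$ itself (Lemma~\ref{lem:right-hard-digraph}), so the intermediate problems are $\PCSP(\delta^{(i)}\rel H,\rel G)$ with a loopless \emph{digraph} on the left; correspondingly the hypothesis is first upgraded (via trivial reductions $\rel G\to\rel K_{\size{V(G)}}$) to hardness of $\PCSP(\rel K_k,\rel G)$ for all loopless digraphs $\rel G$, which is needed because the right adjoint $\delta_R\rel G$ appearing in the soundness argument is a general digraph, not a clique. The punchline is the Zhu--Poljak--Schmerl fact (Lemma~\ref{lem:deltadelta}) that $\delta(\delta\rel K_4)\to\rel K_3$, even though $\delta(\sym\delta\rel K_4)$ is \emph{not} $3$-colourable---this is exactly why your undirected engine stalls at $\rel K_4$ and the directed one does not. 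Combining this with $\delta^{(j)}\rel K_c\to\rel K_4$ gives $\delta^{(i)}\rel H\to\rel K_3$ for some $i$ (Lemma~\ref{lem:delta^i}), and the final step is then just the trivial reduction from $\PCSP(\delta^{(i)}\rel H,\rel G)$ to $\PCSP(\rel K_3,\rel G)$; no new adjunction toward odd cycles is required.
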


  Following the reasoning in~\cite{GS19}, the above theorem implies \NP-hardness of all problems $\PCSP(\rel K_k,\rel K_c)$ with $c \geq k\ge 3$ if the $d$-to-1 conjecture of Khot holds for any fixed $d\ge 2$.
  (The paper~\cite{GS19} used an earlier version of Theorem~\ref{thm:conditional} with 4 in place of 3).

  Furthermore, we strengthen the best known asymptotic hardness: Huang~\cite{Hua13} showed that $\PCSP(\rel K_k, \rel K_c)$ is NP-hard for all sufficiently large $k$ and $c=2^{\Omega(k^{1/3})}$.  We improve this in two ways, using Huang's result as a black-box.  First, we improve the asymptotics from sub-exponential $c=2^{\Omega(k^{1/3})}$ to single-exponential $c = \B{k} \in \Theta ( 2^k / \sqrt k )$.  Second, we show the claim holds for $k$ starting as low as $4$.

  \begin{restatable}{theorem}{maintheoremasymptotics}\label{thm:asymp}
    For all $k \geq 4$ and $c = \B{k} - 1$, $\PCSP(\rel K_k, \rel K_c)$ is \NP-hard.
  \end{restatable}

  In comparison, the previous best result relevant for all integers $k$ was obtained in~\cite{BBKO19} where \NP-hardness of $\PCSP(\rel K_k, \rel K_{2k-1})$ is proved for all $k\geq 3$.
  For $k = 3, 4$ we obtain no new results and for $k = 5$ the two bounds coincide: $\B{k}-1 = 9 = 2k-1$.
  However, already for $k = 6$ we improve the bound from $2k-1 = 11$ to $\B{k}-1 = 19$, and, for larger $k$, the improvement is even more dramatic.

\subsubsection*{The organisation of the paper}
Section~\ref{sec:preliminaries} briefly describes the algebraic framework of \cite{BBKO19}: minions (sets of polymorphisms of a PCSP), minion homomorphism (which provide log-space reductions between corresponding problems), and a condition on minions that guarantees \NP-hardness.
Section~\ref{sec:topology} details the topological method and its application: Theorem~\ref{thm:main-s1}.
The bulk of its content is devoted to expounding standard definitions with examples and then proving these definitions behave well when identifying variables of polymorphisms.
Section~\ref{sec:adjunction} introduces adjunction in a wider context, in particular relating it to gadget reductions and minion homomorphisms.
Adjoint functions that give reductions for approximate graph colouring are presented in Section~\ref{sec:righthard}.
Finally Section~\ref{subsec:secondMainProof} uses another adjoint function to prove
Theorem~\ref{thm:topoOnlyInformal}: that whether a graph $\rel G$ satisfies the Brakensiek-Guruswami conjecture for all $\rel H$ depends only on the topology of $\rel G$.

\section{Preliminaries}\label{sec:preliminaries}

\subsection{Promise graph homomorphism problems}

The approximate graph colouring problem and promise graph homomorphism problem are special cases of the PCSP, and we use the theory of PCSPs. However, we will not need the general definitions, so we define everything only for digraphs. For general definitions, see, e.g.\ \cite{BBKO19}.

A \emph{digraph} $\rel H$ is a pair $\rel H=(V(H),E(H))$, where $V(H)$ is a set of vertices and $E(H) \subseteq \{(u,v) \mid u,v\in V(H)\}$ is a set of
(directed) edges. Unless stated otherwise, our digraphs are finite and can have loops. We view undirected graphs as digraphs where each (non-loop) edge is presented in both directions.

\begin{definition}
A \emph{homomorphism} from a~digraph $\rel H=(V(H),E(H))$ to another digraph $\rel G=(V(G),E(G))$ is a~map $h\colon V(H)\to V(G)$ such that $(h(u),h(v))\in E(G)$ for every $(u,v)\in E(H)$.  In this case we write $h\colon \rel H\to \rel G$, and simply $\rel H\to \rel G$ to indicate that a~homomorphism exists.
\end{definition}

We now define formally the promise (di)graph homomorphism problem.

\begin{definition}
Fix two digraphs $\rel H$ and $\rel G$ such that $\rel H \rightarrow \rel G$.
\begin{itemize}
  \item The \emph{search} variant of $\PCSP(\rel H,\rel G)$ is, given an~input digraph $\rel I$ that maps homomorphically to $\rel H$, \emph{find} a~homomorphism $h\colon \rel I\to \rel G$.
  \item The \emph{decision} variant of $\PCSP(\rel H,\rel G)$ requires, given an input digraph $\rel I$ such that either $\rel I\to \rel H$ or $\rel I\not\to\rel G$, to output \yes{} in the former case, and \no{} in the latter case.
\end{itemize}
\end{definition}

We remark that the (decision) problem $\PCSP(\rel H,\rel H)$ is nothing else but the constraint satisfaction problem $\CSP(\rel H)$, also known as $\rel H$-colouring.

There is an obvious reduction from the decision variant of each PCSP to the search variant, but it is not known whether the two variants are equivalent for each PCSP. The hardness results in this paper hold for the decision (and hence also for the search) version of $\PCSP(\rel H,\rel G)$.

It is obvious that if at least one of $\rel H, \rel G$ is undirected and bipartite then the problem can be solved in polynomial time by using an algorithm for 2-colouring. If one of the graphs contains a loop, the problem is trivial.  Recall that Brakensiek and Guruswami conjectured (see Conjecture \ref{conj:main}) that, for undirected graphs, the problem is \NP-hard in all the other cases.

All applications in this paper concern undirected graphs, even though some proofs use digraphs.
We remark that, as shown in Theorem F.3 of the arXiv version of~\cite{BG18-structure}
(generalising the corresponding result for CSPs~\cite{FV98}), a complexity classification of all problems $\PCSP(\rel H,\rel G)$ for digraphs is equivalent to such a classification for all PCSPs (for arbitrary relational structures).

Two (di)graphs $\rel H_1$ and $\rel H_2$ are called {\em homomorphically equivalent} if both $\rel H_1\rightarrow \rel H_2$ and
$\rel H_2\rightarrow \rel H_1$.
The binary relation $\rel H_1\rightarrow \rel H_2$ defines a preorder on the class of all digraphs (or all graphs), called the {\em homomorphism preorder}. We will use this preorder in Section~\ref{sec:adjunction}.

We also define digraph powers, which are essential for the notion of polymorphisms.

\begin{definition}\label{def:nth-power}
The \emph{$n$-th direct (or tensor) power} of a~digraph $\rel H$ is the digraph $\rel H^n$ whose vertices are all $n$-tuples of vertices of $\rel H$ (i.e., $V(H^n) = V(H)^n$), and whose edges are defined as follows: we have an edge from $(u_1,\dots,u_n)$ to $(v_1,\dots,v_n)$ in $\rel H^n$ if and only if $(u_i,v_i)$ is an edge of $\rel H$ for all $i \in \{1,\dots,n\}$.
\end{definition}

\subsection{Polymorphisms}
We use the notions of polymorphisms \cite{AGH17,BG18-structure}, minions and minion homomorphisms \cite{BKO19,BBKO19}. We introduce these notions in the special case of digraphs below. General definitions and more insights can be found in \cite{BBKO19,BKW17}.

\begin{definition}
An $n$-ary \emph{polymorphism} from a~digraph $\rel H$ to a~digraph $\rel G$ is a~homomorphism from $\rel H^n$ to $\rel G$.
To spell this out, it is a~mapping $f\colon V(H)^n \to V(G)$ such that, for all tuples $(u_1,v_1)$, \dots, $(u_n,v_n)$ of edges of $\rel H$, we have
\[
  (f( u_1,\dots,u_n ), f( v_1,\dots,v_n )) \in E(G)
.\]
We denote the set of all polymorphisms from $\rel H$ to $\rel G$ by $\Pol(\rel H,\rel G)$.
\end{definition}

\begin{example} The $n$-ary polymorphisms from a~digraph $\rel H$ to the $k$-clique $\rel K_k$ are the $k$-colourings of $\rel H^n$.
\end{example}

The set of all polymorphisms between any two digraphs has a certain algebraic structure, which we now describe.
We denote by $[n]$ the set $\{1,2,\ldots,n\}$.

\begin{definition}\label{def:minor} An~$n$-ary function $f\colon A^n\to B$ is called a~\emph{minor} of an $m$-ary function $g\colon A^m \to B$ if there is a~map $\pi \colon [m] \to [n]$ such that
\[
  f(x_1,\dots,x_n) = g(x_{\pi(1)},\dots,x_{\pi(m)})
\]
for all $x_1,\dots,x_n \in A$. In this case, we write $f=g^{\pi}$.
\end{definition}

Alternatively, one can say that $f$ is a~minor of $g$ if it is obtained from $g$ by identifying variables, permuting variables, and introducing inessential variables.

\begin{definition}
  For sets $A,B$, let $\clo O(A,B) = \{f\colon A^n\rightarrow B\mid n\ge 1\}$. A~\emph{(function) minion} $\clo M$ on a pair of sets $(A,B)$ is a~non-empty subset of $\clo O(A,B)$ that is closed under taking minors. For fixed $n\ge 1$, let $\clo M^{(n)}$ denote the set of $n$-ary functions from $\clo M$.
\end{definition}

It is easy to see that $\Pol(\rel H,\rel G)$ is a minion whenever $\rel H\rightarrow\rel G$.

An important notion in our analysis of polymorphisms is that of an essential coordinate.

\begin{definition}
  \label{def:essential}
A coordinate $i$ of a function $f\colon A^n\rightarrow B$ is called \emph{essential} if $f$ depends on it, that is,
if there exist $a_1,\ldots,a_n$ and $b_i$ in $A$ such that
\[
  f(a_1,\ldots,a_{i-1},a_i,a_{i+1},\ldots, a_n)\ne f(a_1,\ldots,a_{i-1},b_i,a_{i+1},\ldots, a_n).
\]
A coordinate of $f$ that is not essential is called \emph{inessential}.
\end{definition}

\begin{definition}
A~minion $\clo M$ is said to have \emph{essential arity at most $k$}, if each function $f\in \clo M$ has at most $k$ essential variables. It is said to have \emph{bounded essential arity} if it
has essential arity at most $k$ for some $k$.
\end{definition}

\begin{example}
It is well known (see, e.g., \cite{GL74}), and not hard to check, that the minion $\Pol(\rel K_3,\rel K_3)$ has essential arity at most 1.
However for any odd $k>3$, the minion $\Pol(\rel C_k,\rel K_3)$ does not have bounded essential arity. Indeed, fix a~homomorphism $h\colon \rel C_k\to\rel K_3$ such that $h(0)=h(2)=0$ and $h(1)=1$ and define the following function from $\rel C_k^n$ to $\rel K_3$:
\[
  f(x_1,\ldots,x_n) =
    \begin{cases}
      2 & \text{if }x_1=\ldots=x_n=1,\\
      h(x_1) & \text{otherwise.}
    \end{cases}
\]
It is easy to check that $f\in \Pol(\rel C_k,\rel K_3)$.
By using Definition~\ref{def:essential} with $a_1=\ldots=a_n=1$ and $b_i=0$, one can verify that every coordinate $i$ of $f$ is essential.
\end{example}

\begin{definition}\label{def:minion-homomorphism}
  Let $\clo M$ and $\clo N$ be two minions (not necessarily on the same pairs of sets). A~mapping $\xi\colon \clo M \to \clo N$ is called a~\emph{minion homomorphism} if
  \begin{enumerate}
    \item it preserves arities, i.e., maps $n$-ary functions to $n$-ary functions for all $n$, and
\item it preserves taking minors, i.e., for each $\pi\colon \{1,\dots,m\} \to \{1,\dots,n\}$ and each $g\in \clo M^{(m)}$ we have $\xi(g)^{\pi}=\xi(g^{\pi})$, i.e.,
    \[
      \xi (g)(x_{\pi(1)},\dots,x_{\pi(m)}) = \xi (g(x_{\pi(1)},\dots,x_{\pi(m)}))
    .\]
  \end{enumerate}
\end{definition}

We refer to \cite[Example 2.22]{BBKO19} for examples of minion homomorphisms.

Our proof of Theorem~\ref{thm:main-s1} is based on the following result. It is a special case of a result in \cite{BBKO19} (that generalised \cite[Theorem~4.7]{AGH17}). We remark that the proof of this theorem is by a reduction from Gap Label Cover, which is a~common source of inapproximability results.

\begin{theorem}[{\cite[Proposition 5.15]{BBKO19}}] \label{thm:bounded-arity}
  Let $\rel H, \rel G$ be digraphs such that $\rel H \to \rel G$.
  Assume that there exists a minion homomorphism $\xi\colon\Pol(\rel H,\rel G) \rightarrow \clo M$ for some minion $\clo M$ on a pair of (possibly infinite) sets such that $\clo M$ has bounded essential arity and does not contain a~constant function (i.e., a~function without essential variables). Then $\PCSP(\rel H,\rel G)$ is \NP-hard.
\end{theorem}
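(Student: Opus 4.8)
The plan is to reduce from Gap Label Cover, along the lines of \cite[Theorem~4.7]{AGH17}. Fix $k$ so that $\clo M$ has essential arity at most $k$ and pick a rational $\eps$ with $0<\eps<1/k$. I would use the standard fact --- following from the PCP theorem together with parallel repetition --- that it is \NP-hard to distinguish Label Cover instances admitting a labelling that satisfies all constraints from those in which no labelling satisfies more than an $\eps$-fraction of the constraints; moreover such instances may be taken to have both label sets of size bounded by a constant depending only on $\eps$, and all projections surjective.

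Given such an instance $\Phi$ --- with left variables $X$, right variables $Y$, label sets $[L]$ for $X$ and $[R]$ for $Y$, and a projection $\pi_e\colon[L]\to[R]$ for every constraint edge $e=(x,y)$ --- I would construct a digraph $\rel I$ as follows. Each variable $v$ contributes a block $B_v$, a fresh disjoint copy of the digraph $\rel H^{L}$ if $v\in X$ and of $\rel H^{R}$ if $v\in Y$; these blocks supply all the edges of $\rel I$. Then, for each edge $e=(x,y)$ of $\Phi$ and each tuple $\tup z\in V(H)^{R}$, I identify the vertex $\tup z$ of $B_y$ with the vertex $(z_{\pi_e(1)},\dots,z_{\pi_e(L)})$ of $B_x$. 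As the label sets have constant size, $\rel I$ has polynomial size; a routine check (using surjectivity of the $\pi_e$) shows $\rel I$ is loopless whenever $\rel H$ is. The construction is tailored so that, for any digraph $\rel A$ with $\rel H\to\rel A$ and any homomorphism $h\colon\rel I\to\rel A$, the restriction of $h$ to each block $B_v$ is a polymorphism $f_v\in\Pol(\rel H,\rel A)$ of arity $n_v\in\{L,R\}$, and the identifications force $f_y=(f_x)^{\pi_e}$ for every edge $e=(x,y)$ of $\Phi$.

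For completeness, suppose $\Phi$ admits a labelling $\sigma$ satisfying all constraints, so $\pi_e(\sigma(x))=\sigma(y)$ for every $e=(x,y)$. Then the map sending $\tup a\in B_v$ to $a_{\sigma(v)}$ is well defined on $\rel I$ --- the two identified copies of a vertex receive $z_{\sigma(y)}$ and $z_{\pi_e(\sigma(x))}$, which coincide --- and is a homomorphism $\rel I\to\rel H$, since each block edge projects coordinatewise to an edge of $\rel H$; so $\rel I\to\rel H\to\rel G$ and $\rel I$ is a \yes-instance of $\PCSP(\rel H,\rel G)$. For soundness, suppose instead $\rel I\to\rel G$ via some $h$, and put $g_v:=\xi(f_v)\in\clo M^{(n_v)}$; since $\xi$ preserves minors and $f_y=(f_x)^{\pi_e}$, we get $g_y=(g_x)^{\pi_e}$ for every edge. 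Let $E_v\subseteq[n_v]$ be the set of essential coordinates of $g_v$; the hypotheses on $\clo M$ give $1\le|E_v|\le k$. The crucial observation is that for any $m$-ary function $g$ with set of essential coordinates $E$ and any $\pi\colon[m]\to[n]$, every essential coordinate of $g^{\pi}$ lies in $\pi(E)$: altering a coordinate of $g^{\pi}$ outside $\pi(E)$ changes the input of $g$ only in inessential coordinates, and doing so one at a time leaves $g$'s value unchanged. Hence $E_y\subseteq\pi_e(E_x)$ for every edge $e=(x,y)$. Now draw $\sigma(v)$ uniformly at random from $E_v$, independently for each $v$: given any value of $\sigma(y)\in E_y\subseteq\pi_e(E_x)$, some $i\in E_x$ satisfies $\pi_e(i)=\sigma(y)$, and $\sigma(x)=i$ with probability at least $1/|E_x|\ge 1/k$, so the constraint $e$ is satisfied with probability at least $1/k$. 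By averaging, some labelling satisfies at least a $1/k>\eps$ fraction of the constraints.

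So $\Phi\mapsto\rel I$ would be a polynomial-time reduction mapping fully satisfiable instances to \yes-instances and instances of value at most $\eps$ to \no-instances of the decision problem $\PCSP(\rel H,\rel G)$, yielding \NP-hardness. I expect the soundness step to be the main obstacle: one must set up $\rel I$ so that a $\rel G$-homomorphism is genuinely forced to split into polymorphisms that cohere along the projections, and this is exactly where both hypotheses on $\clo M$ are used --- \emph{bounded} essential arity keeps the decoded label sets $E_v$ small (giving the $1/k$ bound), and the \emph{absence of a constant function} keeps them nonempty (so that a label can be decoded at all). The remaining points --- that the block-and-identify construction yields a legitimate, loopless instance, and the probabilistic bookkeeping --- are routine.
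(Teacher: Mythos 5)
Your proof is correct and follows exactly the route the paper attributes to this result (which it cites from [BBKO19, Proposition~5.15] without reproving): a reduction from Gap Label Cover using the block construction $\rel H^{L},\rel H^{R}$ with identifications along the projections, completeness via coordinate projection, and soundness by decoding each variable to the (nonempty, size-$\le k$) set of essential coordinates of $\xi(f_v)$. The key steps --- that $f_y=(f_x)^{\pi_e}$ is forced, that essential coordinates of a minor $g^{\pi}$ lie in $\pi(E)$, and the $1/k$ probabilistic decoding --- are all handled correctly.
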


To prove Theorem~\ref{thm:main-s1}, we will use Theorem~\ref{thm:bounded-arity} with the minion $\clo M=\clo Z_{\leq N}$, for some constant $N>0$.
The set $\clo Z_{\leq N}$ is defined to consist of all linear functions $f$ on $\mathbb Z$ of the form
\[
  f(x_1,\dots,x_n) = \sum_{i=1}^n c_i x_i
\]
for some $c_i\in \mathbb Z$ such that $\sum_{i=1}^n c_i$ is odd and $\sum_{i=1}^n \size{c_i} \leq N$.
It easy to see that $\clo Z_{\leq N}$ is indeed a minion and that all functions in it have between 1 and $N$ non-zero coefficients, meaning that it has bounded essential arity and contains no constant function.

\section{Topology}
  \label{sec:topology}

All graphs in this section are assumed to be undirected and loopless.

\subsection{Simplicial complexes}

An \emph{(abstract) simplicial complex} is a~family of non-empty sets $\cox K$ that is downwards closed, i.e., if $\sigma_1 \in \cox K$, $\sigma_2\neq \emptyset$ and $\sigma_2 \subseteq \sigma_1$, then $\sigma_2 \in \cox K$. Each $\sigma \in \cox K$ is called a \emph{face}. The elements in these sets are \emph{vertices} of $\cox K$. We denote the set of all vertices of $\cox K$ by $V(\cox K)$, i.e., $V(\cox K) \defeq \Union_{\sigma \in \cox K} \sigma$.  A~\emph{simplicial map} between complexes $\cox K$ and $\cox K'$ is a function $f \colon V(\cox K) \to V(\cox K')$ that preserves faces, i.e., if $\sigma \in \cox K$ then $f(\sigma) \defeq \{f(v) \mid v\in \sigma\} \in \cox K'$.
Two simplicial complexes $\cox K$ and $\cox K'$ are \emph{isomorphic} if there are simplicial maps $\alpha\colon \cox K \to \cox K'$ and $\beta\colon \cox K' \to \cox K$ such that both $\alpha\beta$ and $\beta\alpha$ are identity maps.

We will use the following notion of a~product of simplicial complexes (see also \cite[Section 2.2]{Matsushita17} and \cite[Definition 4.25]{Koz08-book}).

\begin{definition}
  Let $\cox K_1,\dots,\cox K_n$ be simplicial complexes. We define the product $\cox K_1\times \dots \times \cox K_n$ to be the simplicial complex with vertices
  \[
    V( \cox K_1 \times \dots \times \cox K_n ) = V(\cox K_1) \times \dots \times V(\cox K_n),
  \]
  so that $\sigma \subseteq V(\cox K_1\times \dots\times \cox K_n)$ is a~face if there are faces $\sigma_1 \in \cox K_1$, \dots, $\sigma_n \in \cox K_n$ such that $\sigma \subseteq \sigma_1\times \dots \times \sigma_n$.
\end{definition}

\subsubsection{From graphs to simplicial complexes.}

As mentioned before, there are several ways to assign a simplicial complex to a graph. For our use, the most convenient is the \emph{homomorphism complex}. Our definition of this complex is slightly different from that in~\cite{BK06,Koz08-book}, but the difference is superficial (as we explain in Appendix~\ref{app:hom-and-box}). The vertices of such a complex are homomorphisms, while faces are determined by multihomomorphisms defined below.

\begin{definition} \label{def:multimorphism}
  A~\emph{multihomomorphism} from $\rel K$ to $\rel G$ is a~mapping $f\colon V(K) \to 2^{V(G)}$ such that, for each $(u,v) \in E(K)$, we have $f(u)\times f(v) \subseteq E(G)$.
\end{definition}

\begin{definition} \label{def:homcomplex}
  Let $\rel K$ and $\rel G$ be two graphs. We define a~simplicial complex $\Hom(\rel K,\rel G)$ as follows. Its vertices are homomorphisms from $\rel K$ to $\rel G$, and $\sigma = \{ f_1,\dots,f_\ell \}$ is a~face if the mapping $u\mapsto \{ f_1(u), \dots, f_\ell(u) \}$ is a~multihomomorphism from $\rel K$ to $\rel G$.
\end{definition}

We work almost exclusively with complexes $\Bip{\rel G}$, where $\rel K_2$ is the two-element clique.
Such complexes (with our definition) appeared before, e.g.\ in~\cite{Mat17-box}, where they are called box complexes
(which is not the traditional use of this name) and in~\cite{MZ04}, where these complexes appear under the name $\mathsf{B_{edge}}(\rel G)$.
The complex $\Bip{\rel G}$ can be also described in the following way.
The vertices of $\Bip{\rel G}$ are all (oriented) edges of $\rel G$.
The faces are directed bipartite subgraphs that can be extended to a~complete directed bipartite subgraph of $\rel G$ (with all edges directed from one part to the other); more precisely, $\sigma$ is a~face if there are $U,V\subseteq V(G)$ such that $\sigma \subseteq U\times V\subseteq E(G)$.
The complexes $\Bip{\rel G}$ have an additional structure obtained from the automorphism of $\rel K_2$ that switches the two vertices.
The group $\mathbb Z_2$ then acts on the vertices of $\Bip{\rel G}$ by reversing the direction of edges, i.e., $-(a,b) = (b,a)$.

\begin{figure}
  \raisebox{-.5\height}{\begin{tikzpicture} [scale = 1.3, every node/.style = {font={\small}}]
    \draw (0:1) \foreach \i in {36,72,...,360} { -- (\i:1) };
    \foreach \i/\c/\d in {0/(0/4),36/(0/1),72/(2/1),108/(2/3),144/(4/3),180/(4/0),216/(1/0),252/(1/2),288/(3/2),324/(3/4)}
      \node [circle,fill,inner sep=1,label={\i:\c,\d}] at (\i:1) {};
  \end{tikzpicture}}
  \qquad
  \raisebox{-.5\height}{\begin{tikzpicture} [scale = 1]
    \draw (0:1) \foreach \i in {60,120,...,360} { -- (\i:1) };
    \node [circle,fill,inner sep=1,label={left:(0,5)}] at (0:1) {};
    \node [circle,fill,inner sep=1,label={60:(0,1)}] at (60:1) {};
    \node [circle,fill,inner sep=1,label={120:(2,1)}] at (120:1) {};
    \node [circle,fill,inner sep=1,label={180:(2,3)}] at (180:1) {};
    \node [circle,fill,inner sep=1,label={240:(4,3)}] at (240:1) {};
    \node [circle,fill,inner sep=1,label={300:(4,5)}] at (300:1) {};

    \begin{scope}[shift={(3,0)},xscale=-1, every node/.style = {font={\small}}]
      \draw (0:1) \foreach \i in {60,120,...,360} { -- (\i:1) };
      \node [circle,fill,inner sep=1,label={right:(5,0)}] at (0:1) {};
      \node [circle,fill,inner sep=1,label={120:(1,0)}] at (60:1) {};
      \node [circle,fill,inner sep=1,label={60:(1,2)}] at (120:1) {};
      \node [circle,fill,inner sep=1,label={0:(3,2)}] at (180:1) {};
      \node [circle,fill,inner sep=1,label={300:(3,4)}] at (240:1) {};
      \node [circle,fill,inner sep=1,label={240:(5,4)}] at (300:1) {};
    \end{scope}
  \end{tikzpicture}}
  \caption{Representations of $\Bip{\rel C_5}$ and $\Bip{\rel C_6}$.}
    \label{fig:box-k3}
\end{figure}

\begin{example}\label{ex:cycle}
  Let us consider the complex $\Bip{\rel C_k}$. Its vertices are all oriented edges of the $k$-cycle which means pairs of the form $(i,i+1)$ and $(i+1,i)$ where the addition is considered modulo $k$. It is not hard to see that the only directed complete bipartite subgraphs of $\rel C_k$ are either two outgoing edges from a single vertex, or two incoming edges to a single vertex.  The only non-trivial faces of $\Bip{\rel C_k}$ are therefore of the form $\{ (i-1,i),(i+1,i) \}$ or $\{ (i,i-1),(i,i+1) \}$. The resulting complex can be drawn as a graph (see Fig.~\ref{fig:box-k3} for such a drawing of $\Bip{\rel C_5}$). The exact structure depends on the parity of $k$. If $k$ is odd, the complex is a single $2k$-cycle where $(i,j)$ is opposite to $(j,i)$. The \equivariant-action acts as the central reflection. If $k$ is even, the complex consists of two disjoint $k$-cycles such that one contains all edges of the form $(2i,2i\pm 1)$ and the other all edges of the from $(2i\pm 1,2i)$. The \equivariant-action in this case switches the two parts.
\end{example}

\begin{example}
  A slightly more complicated example is $\Bip{\rel K_4}$. See Fig.~\ref{fig:box-k4} for graphical representations of this complex. There are two types of maximal directed complete bipartite subgraphs of $\rel K_4$: either all three in/outgoing edges of a single vertex, or 4 directed edges from a two-element subset of $\rel K_4$ to its complement.
These, and all their non-empty subsets, are the faces of $\Bip{\rel K_4}$. In the pictures, the in/outgoing edges correspond to the triangular faces, and the faces containing 4 edges correspond to tetrahedrons that are represented as tetragons. Naturally, all subsets of these tetragons are also faces, nevertheless they are omitted from the picture for better readability. Also note that the outer face of the left diagram forms such a tetrahedron (corresponding to the bipartite subgraph $\{1,3\} \times \{0,2\}$).
  The \equivariant-symmetry of this complex is given by reversing edges; this corresponds to the antipodality on the spherical representation.
\end{example}

  \begin{figure}[t]
    \centering
    \raisebox{-.5\height}{\begin{tikzpicture} [scale = 1, label distance = -1mm,
                                               every node/.style = {font={\small}}]
      \draw [fill=gray!20!white] (2,2) -- (2,-2) -- (-2,-2) -- (-2,2) -- cycle;

      \node [circle,fill,inner sep=1,label={0:(3,1)}] (east) at (0:1.366) {};
      \node [circle,fill,inner sep=1,label={90:(0,2)}] (north) at (90:1.366) {};
      \node [circle,fill,inner sep=1,label={180:(1,3)}] (west) at (180:1.366) {};
      \node [circle,fill,inner sep=1,label={270:(2,0)}] (south) at (270:1.366) {};
      \node [circle,fill,inner sep=1,label={45:(0,1)}] at (.5,.5) {};
      \node [circle,fill,inner sep=1,label={-45:(2,1)}] at (.5,-.5) {};
      \node [circle,fill,inner sep=1,label={-135:(2,3)}] at (-.5,-.5) {};
      \node [circle,fill,inner sep=1,label={135:(0,3)}] at (-.5,.5) {};
      \node [circle,fill,inner sep=1,label={45:(3,2)}] at (2,2) {};
      \node [circle,fill,inner sep=1,label={-45:(3,0)}] at (2,-2) {};
      \node [circle,fill,inner sep=1,label={-135:(1,0)}] at (-2,-2) {};
      \node [circle,fill,inner sep=1,label={135:(1,2)}] at (-2,2) {};

      \draw (.5,.5) -- (.5,-.5) -- (-.5,-.5) -- (-.5,.5) -- cycle;
      \draw (.5,-.5) -- (east) -- (.5,.5) -- (north) -- (-.5,.5) -- (west) -- (-.5,-.5) -- (south) -- (.5,-.5);
      \draw (2,-2) -- (east) -- (2,2) -- (north) -- (-2,2) -- (west) -- (-2,-2) -- (south) -- (2,-2);

    \end{tikzpicture}}
    \qquad
    \raisebox{-.5\height}{
      \includegraphics[width=.4\textwidth]{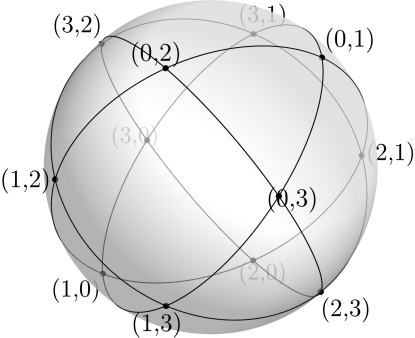}
    }
    \caption{Two representation of $\Bip{\rel K_4}$. The tetragons in both pictures represent tetrahedrons with the same vertices.}
      \label{fig:box-k4}
  \end{figure}

\begin{definition}
  A~\emph{\equivariant-(simplicial) complex} is a~simplicial complex $\cox K$ with a simplicial map ${-}\colon \cox K\to \cox K$ such that $-(-v) = v$ for each $v\in V(\cox K)$.
  We also write $-\sigma$ for the face $\{ -v \mid v\in \sigma \}$.
  We say that a~simplicial map $f$ from one \equivariant-complex $\cox K$ to another \equivariant-complex $\cox K'$ is a~\emph{\equivariant-simplicial map}, if $f(-v) = -f(v)$ for each $v\in V(\cox K)$ (note that the first $-$ is taken in $\cox K$, while the second is taken in $\cox K'$).
\end{definition}

The map $v\mapsto -v$ can be also viewed as an action of the group $\mathbb Z_2$ on $\cox K$ by simplicial maps. We remark that a~product $\cox K_1\times \dots \times \cox K_n$ of \equivariant-complexes is also \equivariant-complex with the action defined component-wise, as $-(v_1,\dots,v_n) = (-v_1,\dots,-v_n)$.
Every graph homomorphism $f\colon \rel H\to \rel G$ induces a~\equivariant-simplicial map $f'\colon \Bip{\rel H} \to \Bip{\rel G}$ defined by $f'((a,b)) = (f(a),f(b))$.\footnote{This is an instance of a more general fact that $f$ induces a simplical map $f'\colon \Hom(\rel K,\rel H) \to \Hom(\rel K,\rel G)$ for each $\rel K$. This map is defined by $f'(g)\colon x \mapsto f(g(x))$.}

\subsection{Topological spaces}

The spaces assigned to \equivariant-complexes inherit the $\mathbb Z_2$ symmetry.

\begin{definition} \label{def:z2-map}
  A~\emph{\equivariant-space} is a~topological space $\top X$ with a~distinguished continuous function ${-}\colon \top X\to \top X$ such that $-(-x) = x$ for each $x$.
  A~\emph{$\ZZ_2$-map} between two \equivariant-spaces $\top X$ and $\top Y$ is a~continuous function $f\colon \top X\to \top Y$ which preserves the action of $\mathbb Z_2$, i.e., $f(-x) = -f(x)$ for each $x\in \top X$ (note that the first $-$ is taken in $\top X$, while the second is taken in $\top Y$).
\end{definition}

As is the case for \equivariant-complexes, \equivariant-spaces are topological spaces with an action of the group $\mathbb Z_2$ by continuous functions.

\begin{example}
  Prime examples of \equivariant-spaces are spheres: We define $\Sphere^n$ as a~subspace of $\mathbb R^{n+1}$ consisting of all unit vectors, i.e., $\Sphere^n = \{ (x_1,\dots,x_{n+1})\in \mathbb R^{n+1} \mid x_1^2 + \dots + x_{n+1}^2 = 1 \}$,
  with \emph{antipodality} as the chosen \equivariant-action, i.e., $-(x_1,\dots,x_{n+1}) = (-x_1,\dots,-x_{n+1})$.
  Other common \equivariant-spaces are toruses. An~$n$-torus $\Torus^n$ is defined as the $n$-th power $\Sphere^1 \times \dots \times \Sphere^1$, and is therefore naturally equipped with a \equivariant-action defined to act coordinatewise.
\end{example}

A~\emph{\equivariant-complex} $\cox K$ is \emph{free} if $-\sigma \neq \sigma$ for each $\sigma \in \cox K$ (equivalently, $\{-v,v\} \not \in \cox K$ for all vertices $v$ of $\cox K$).
Note that, for a loopless undirected graph $\rel G$, the complex $\Bip{\rel G}$ is always a~free \equivariant-complex.
To ease a technical annoyance in the proofs below, we rephrase the definition of a~geometric realisation (see also \cite[Definition 2.27]{Koz08-book}) of a free \equivariant-complex.

\begin{definition} \label{def:geom}
  Let $\cox K$ be a~free \equivariant-simplicial complex. Let $v_1,-v_1,\dots,v_n,-v_n$ be all vertices of $\cox K$.
  We define $\geom{\cox K}$, a \emph{geometric realisation} of $\cox K$, as a subspace of $\mathbb R^n$. First, we identify the canonical unit vectors with $v_1,\dots,v_n$, so that $v_1 = (1,0,\dots,0)$, etc., and $-v_1,\dots,-v_n$ with their opposites, so $-v_1 = (-1,0\dots,0)$, etc.
  Second, for each face $\sigma\subseteq V(\cox K)$, we define $\Delta^\sigma \subseteq \mathbb R^n$ to be the convex hull of $\sigma$, i.e.,
  \(
    \Delta^\sigma = \{ \sum_{v\in \sigma} \lambda_v v \mid \sum_{v\in \sigma} \lambda_v = 1, \lambda_v \geq 0 \}.
  \)
  Finally, we set
  \[
    \geom{\cox K} = \bigcup_{\sigma\in \cox K} \Delta^\sigma
      = \{ \sum_{v\in \sigma} \lambda_v v \mid
        \sigma \in \cox K,
        \sum_{v\in \sigma} \lambda_v = 1,
        \lambda_v \geq 0
      \}.
  \]
  The action of $\mathbb Z_2$ on $\geom{\cox K}$ maps a~point $\sum_{v\in \sigma} \lambda_v v$ to the point $\sum_{{-v}\in {-\sigma}} \lambda_v (-v)$ which can be equivalently described as reversing the sign of a~vector, i.e., as $-(x_1,\dots,x_n) = (-x_1,\dots,-x_n)$.
\end{definition}

With the above definition, we can view $V(\cox K)$ as a subset of $\geom{\cox K}$ --- this will be useful in the technical proofs below. Also note that $-v$ has two meanings that result in the same object: either it is a \equivariant-counterpart of $v \in V(\cox K)$, or the opposite vector to $v\in \geom{\cox K}$.
Note that the geometric realisation of a free \equivariant-complex is a \emph{free \equivariant-space}, i.e., a \equivariant-space $\top X$ such that $-x \neq x$ for all $x\in \top X$.

To express abstractly what it means for two \equivariant-spaces to be the same, we use the notion of \emph{\equivariant-homeomorphism} which is an analogue of the notion of homeomorphism. We remark that
this is a strong notion of equivalence of topological spaces, akin to isomorphism, and that we will also use weaker notions of topological equivalence (see also Appendix~\ref{app:hom-and-box}).

\begin{definition}
Two \equivariant-topological spaces $\top X$ and $\top Y$ are \emph{\equivariant-homeomorphic} if there are \equivariant-maps $f\colon \top X\to \top Y$ and $g\colon \top Y\to \top X$ such that $fg$ is the identity on $\top Y$ and $gf$ is the identity on $\top X$.
\end{definition}

\begin{example} \label{ex:hom-k3}
  It is not hard to see that the geometric representation $\gBip{\rel C_k}$ of the homomorphism complex of an odd cycle $\rel C_k$ is \equivariant-homeomorphic to $\Sphere^1$ (see Fig.~\ref{fig:box-k3sq} on page~\pageref{fig:box-k3sq}). Let us define one such \equivariant-homeomorphism $f\colon \gBip{\rel C_k} \to \Sphere^1$. Choose $k$ points on the circle in a~regular pattern. Let us denote these vectors $x_1,\dots,x_k$. Note that since $k$ is odd, $-x_i \notin \{x_1,\dots,x_k\}$ for all $i$. We first define a~map $f_0\colon \gBip{\rel C_k} \to \mathbb R^2$ as follows:
  \(
    f_0(v) = x_b - x_a
  \)
  for $v\in V(\cox K)$, $v = (a,b)$, and extend it linearly. Note that the image of $\gBip{\rel C_k}$ forms a regular $2k$-gon centred in the origin. We project the polygon onto $\Sphere^1$ by putting $f(x) = f_0(x)/\size{f_0(x)}$. It is clear that $f$ is continuous and $f(-x) = -f(x)$, and therefore it is a \equivariant-map. It is also not hard to see that it is 1-to-1 and therefore invertible, and that the inverse is a \equivariant-map.
\end{example}

\begin{remark}
  While there is always a~continuous function between two topological spaces $\top X$ and $\top Y$ (simply map everything to a~single point), there might not be a \equivariant-map between two \equivariant-spaces. In particular, the Borsuk-Ulam theorem \cite{Bor33} (see also~\cite{Mat03}) states that there is no \equivariant-map from a~sphere $\Sphere^m$ to a~sphere $\Sphere^n$ of smaller dimension (i.e., if $m > n$).
\end{remark}

Every \equivariant-simplicial map $f\colon \cox K \to \cox K'$ induces a \equivariant-map $\geom f\colon \geom{\cox K} \to \geom{\cox K'}$ defined as a piece-wise linear extension of~$f$:
\[
  \geom f\colon \sum_{v\in \sigma} \lambda_v v \mapsto \sum_{v\in \sigma} \lambda_v f(v).
\]
(Here, we use that $v\in V(\cox K)$ is also a point in $\geom {\cox K}$, and therefore a vector in $\mathbb R^n$.)
Consequently, every graph homomorphism $\rel H \to \rel G$ induces a \equivariant-map from $\geom{\Bip{\rel H}}$ to $\geom{\Bip{\rel G}}$.

\subsubsection{The fundamental group}

We briefly recall the definition of the fundamental group assigned to a topological space $\top X$, denoted $\Fg {\top X}$. For more details, see \cite[Chapter 1]{Hat01}.
The elements of the group are \emph{homotopy classes} of maps $f\colon \Sphere^1 \to \top X$ defined as follows.
Intuitively, two maps are \emph{homotopic} if one can be continuously transformed into the other.

\begin{definition}
We say that two continuous maps $f,g\colon \top X\to \top Y$ are \emph{homotopic} if there is a~continuous map $h\colon \top X \times [0,1] \to \top Y$ such that $h(x,0) = f(x)$ and $h(x,1) = g(x)$ for each $x\in \top X$. Any such map $h$ is called a~\emph{homotopy}. The \emph{homotopy class of $f\colon \top X\to \top Y$} is the set of all continuous maps $g\colon \top X\to \top Y$ that are homotopic to $f$. We denote such a~class by $[f]$.\footnote{We use notation $[*]$ both for sets $\{1,\ldots,n\}$ and for homotopy classes; the meaning will always be clear from the context.}
\end{definition}

Formally, the fundamental group is defined relative to a point $x_0 \in \top X$, but the choice of the point is irrelevant if the space $\top X$ is path connected (see \cite[Proposition 1.5]{Hat01}), i.e., if any two points in $\top X$ are connected by a path. Fix one such choice $x_0\in \top X$. The elements of $\Fg{\top X}$ are all homotopy classes of maps $\ell\colon \Sphere^1 \to \top X$ such that $\ell((1,0)) = x_0$. The group operation is given by so-called \emph{loop composition}: seeing maps $\ell_1,\ell_2\colon \Sphere^1 \to {\top X}$ as closed walks originating in $x_0$, the product $\ell_1\cdot \ell_2$ is the closed walk that follows first $\ell_1$ and then $\ell_2$. While this product is not a group operation as is, it induces a group operation on the homotopy classes defined as $[\ell_1]\cdot[\ell_2] = [\ell_1\cdot \ell_2]$ (see \cite[Proposition 1.3]{Hat01} for a proof).

Any map $f\colon {\top X}\to \top Y$ induces a~group homomorphism $f_*\colon \Fg {\top X}\to \Fg {\top Y}$ defined by $f_*([\ell]) = [f\ell]$ for each $\ell\colon \Sphere^1 \to \top X$.

The fundamental groups of many spaces are described in the literature. For example:

\begin{lemma}[{\cite[Theorem 1.7]{Hat01}}] \label{lem:fg-s1}
  $\Fg{\Sphere^1}$ is isomorphic to $\mathbb Z$.
\end{lemma}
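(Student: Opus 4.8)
The statement to prove is Lemma~\ref{lem:fg-s1}: $\Fg{\Sphere^1} \iso \mathbb Z$. Since this is quoted as a standard result from Hatcher, the plan is to reconstruct the classical covering-space argument rather than invent anything new.

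\medskip

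\textbf{Approach.} The plan is to use the exponential covering map $p\colon \RR \to \Sphere^1$ given by $p(t) = (\cos 2\pi t, \sin 2\pi t)$, take basepoint $x_0 = (1,0)$ with $p^{-1}(x_0) = \mathbb Z$, and set up the group isomorphism via the \emph{degree} (winding number) of a loop. Concretely, I would define $\Phi\colon \Fg{\Sphere^1} \to \mathbb Z$ by sending $[\ell]$ to $\tilde\ell(1) - \tilde\ell(0)$, where $\tilde\ell\colon [0,1] \to \RR$ is a lift of $\ell$ (viewed as a path $[0,1]\to\Sphere^1$ with $\ell(0)=\ell(1)=x_0$) through $p$ starting at $\tilde\ell(0)=0$; note $p\tilde\ell(1) = \ell(1) = x_0$ forces $\tilde\ell(1)\in\mathbb Z$.

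\medskip

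\textbf{Key steps, in order.} First I would establish the two lifting properties of $p$: (i) the \emph{path lifting property} --- every path $\ell\colon[0,1]\to\Sphere^1$ with a chosen lift of its starting point has a unique lift $\tilde\ell$ to $\RR$; and (ii) the \emph{homotopy lifting property} --- every homotopy $h\colon[0,1]\times[0,1]\to\Sphere^1$ with a chosen lift of $h(\cdot,0)$ lifts uniquely to $\tilde h\colon[0,1]\times[0,1]\to\RR$. Both are proved by the same Lebesgue-number subdivision argument: cover $\Sphere^1$ by two open arcs each of which $p$ maps homeomorphically onto from each sheet, subdivide the domain finely enough that each small piece lands in one arc, and lift piece by piece, using uniqueness to glue. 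Second, I would check $\Phi$ is well-defined: if $\ell_0 \simeq \ell_1$ rel basepoint via $h$, lift $h$ to $\tilde h$ starting at $0$; then $t\mapsto \tilde h(0,t)$ and $t\mapsto\tilde h(1,t)$ are paths in the discrete fibre $\mathbb Z$, hence constant, so $\tilde\ell_0(1) = \tilde h(1,0) = \tilde h(1,1) = \tilde\ell_1(1)$. Third, $\Phi$ is a homomorphism: a lift of the concatenation $\ell_1\cdot\ell_2$ starting at $0$ is obtained by following a lift of $\ell_1$ from $0$ to some $n\in\mathbb Z$, then a lift of $\ell_2$ shifted up by $n$ (using that $t\mapsto\tilde\ell_2(t)+n$ is also a lift, by $\mathbb Z$-periodicity of $p$), ending at $n + \Phi([\ell_2])$. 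Fourth, surjectivity: the loop $\ell_n(s) = p(ns) = (\cos 2\pi n s, \sin 2\pi n s)$ lifts to $s\mapsto ns$, so $\Phi([\ell_n]) = n$. Fifth, injectivity: if $\Phi([\ell]) = 0$ then $\tilde\ell$ is a path in $\RR$ from $0$ to $0$; since $\RR$ is convex, the straight-line homotopy $H(s,t) = (1-t)\tilde\ell(s)$ contracts $\tilde\ell$ to the constant path rel endpoints, and $p\circ H$ is a basepoint-preserving homotopy from $\ell$ to the constant loop, so $[\ell]$ is trivial.

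\medskip

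\textbf{Main obstacle.} The genuinely substantive part is the pair of lifting lemmas; everything after that is formal bookkeeping. The delicate point in the lifting argument is the gluing: one must invoke uniqueness of lifts of connected pieces (which in turn rests on the fibre $p^{-1}(x)$ being discrete and lifts of a connected set being determined by one value) to ensure the piecewise-defined lift is consistent and continuous. For the homotopy lifting property one also has to be slightly careful to lift the whole square coherently --- lift along $\{0\}\times[0,1]$ first, then extend row by row (or use a single Lebesgue number for a grid on $[0,1]^2$), again matching up overlaps by uniqueness. I would flag that, modulo these standard covering-space facts, the isomorphism $\Fg{\Sphere^1}\iso\mathbb Z$ is immediate, which is exactly why the paper simply cites \cite[Theorem 1.7]{Hat01}.
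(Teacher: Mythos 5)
Your proof is correct and is exactly the classical covering-space argument of \cite[Theorem 1.7]{Hat01}, which is all the paper does here — it states the lemma with that citation and gives no proof of its own. Nothing further to compare.
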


We also define \emph{\equivariant-homotopy} which is a strengthening of homotopy, restricting it to \equivariant-maps.

\begin{definition}\label{def:z2-homotopy}
Let $f,g\colon \top X \to \top Y$ be \equivariant-maps. A homotopy $h$ from $f$ to $g$ is a \emph{\equivariant-homotopy} if the map $h_t\colon x\mapsto h(x,t)$ is a \equivariant-map for each $t\in [0,1]$. We say that $f$ and $g$ are \emph{\equivariant-homotopic}, if there is a \equivariant-homotopy between them.
\end{definition}

\subsection{Polymorphisms of complexes, spaces, and groups}

A~polymorphism from one graph to another is defined as a homomorphism from a~power. In the same way, we can define polymorphisms of any objects as long as we have a notion of a homomorphism and of a power.

\begin{definition}
  \begin{enumerate}
    \item Let $\cox K,\cox K'$ be two \equivariant-simplicial complexes. An $n$-ary polymorphism from $\cox K$ to $\cox K'$ is a \equivariant-simplicial map from the $n$-th power of $\cox K$ to $\cox K'$, i.e.,
      $f\colon V(\cox K)^n \to V(\cox K')$ such that $f(-v_1,\dots,-v_n) = -f(v_1,\dots,v_n)$ for all $v_i\in V(\cox K)$ and
      \[
        f(\sigma_1 \times \dots \times \sigma_n) = \{ f( a_1, \dots, a_n ) \colon a_i \in \sigma_i \} \text{ is in } \cox K'
      \]
      for all $\sigma_1,\dots,\sigma_n \in \cox K$. We denote by $\Pol(\cox K,\cox K')$ the set of all polymorphisms from $\cox K$ to $\cox K'$.
    \item Let $\top X,\top Y$ be two \equivariant-spaces. An $n$-ary polymorphism from $\top X$ to $\top Y$ is a~\equivariant-map from $\top X^n$ to $\top Y$, i.e., a~continuous map $f\colon \top X^n \to \top Y$ such that $f(-x_1,\dots,-x_n) = -f(x_1,\dots,x_n)$ for all $x_i\in X$. Again, $\Pol(\top X,\top Y)$ denotes the set of all polymorphisms from $\top X$ to $\top Y$.
    \item Let $\gr H$, $\gr G$ be two groups. An $n$-ary polymorphism from $\gr H$ to $\gr G$ is a group homomorphism from $\gr H^n$ to $\gr G$, i.e., a mapping $f\colon H^n \to G$ such that
      \[
        f( g_1\cdot h_1, \dots, g_n\cdot h_n ) = f( g_1,\dots,g_n) \cdot f(h_1,\dots,h_n)
      \]
      for all $g_i,h_i \in H$. We denote the set of all polymorphisms from $\gr H$ to $\gr G$ by $\Pol(\gr H,\gr G)$.
  \end{enumerate}
\end{definition}

In all the cases above, it is easy to see that polymorphisms are closed under taking minors, and therefore $\Pol({-},{-})$ is always a~minion. This allows us to talk about minion homomorphisms between minions of polymorphisms of different objects (graphs, simplicial complexes, topological spaces, or groups).

\begin{example}
  By definition, $\Pol(\mathbb Z,\mathbb Z)$ consists of all group homomorphisms from $\mathbb Z^n$ to $\mathbb Z$ for all $n > 0$.
  It is straightforward to check that such an $n$-ary polymorphism in $\Pol(\mathbb Z,\mathbb Z)$ is a linear function, i.e., of the form $(x_1,\dots,x_n) \mapsto \sum_{i=1}^n c_i x_i$ for some $c_1,\dots,c_n \in \mathbb Z$, and conversely, any such function is a~group homomorphism from $\mathbb Z^n$ to $\mathbb Z$.
\end{example}

\subsection{Proofs of Theorems~\ref{thm:K3} and~\ref{thm:main-s1}}

We recall the statement of Theorem~\ref{thm:main-s1}.

\maintheoremtopology*

Theorem \ref{thm:K3} is a~direct corollary of the above and Example~\ref{ex:hom-k3}.
In the rest of the section we prove Theorem~\ref{thm:main-s1} by using Theorem~\ref{thm:bounded-arity}. We show that there is a~minion homomorphism from $\Pol(\rel H,\rel G)$ to the minion $\clo Z_{\leq N}$ for some $N$. Recall that
 $\clo Z_{\leq N}\subset \Pol(\mathbb Z,\mathbb Z)$ is defined to consist of all linear functions $f$ on $\mathbb Z$ of the form
\[
  f(x_1,\dots,x_n) = \sum_{i=1}^n c_i x_i
\]
for some $c_i\in \mathbb Z$ such that $\sum_{i=1}^N c_i$ is odd and $\sum_{i=1}^N \size{c_i} \leq N$.
This is achieved in two steps.

In the first step, we provide a minion homomorphism from $\Pol(\rel H,\rel G)$ to $\Pol(\mathbb Z,\mathbb Z)$. This is achieved by following the constructions described above, i.e., the transformations
\[
  \text{graph} \buildrel {\Bip\ast}\over\longrightarrow
  \text{\equivariant-complex} \buildrel {\geom\ast}\over\longrightarrow
  \text{\equivariant-space} \buildrel {\Fg\ast}\over\longrightarrow
  \text{group},
\]
and showing that pushing a~polymorphism through this sequence of constructions preserves minors. This essentially follows from the well-known facts that these constructions behave well with respect to products. A detailed proof is presented in Section~\ref{sec:minion-homomorphism}.

When we push a polymorphism $f\in \Pol(\rel H,\rel G)$ through these constructions, we first obtain a $\ZZ_2$-simplicial map $f'\in \Pol(\Bip{\rel H},\Bip{\rel G})$, which in turn induces a $\ZZ_2$-map $\geom{f'} \in \Pol(\gBip{\rel H},\gBip{\rel G})$. Then, by composing with the assumed $\ZZ_2$-map
\(
  s \colon \gBip{\rel G} \to \Sphere^1
\)
(and assuming without loss of generality that $\rel H$ is an odd cycle), we obtain a polymorphism $g$ of $\Sphere^1$, and then finally a polymorphism $g_*$ of the group $\pi_1(\Sphere^1) \simeq \mathbb Z$.
As discussed before, $g_*$ can be described more concretely as a linear function whose coefficients $c_i$ are the winding numbers of maps $\Sphere^1 \to \Sphere^1$ defined by $t \mapsto s \circ g(x_0,\dots,t,\dots,x_0)$ where $x_0\in \Sphere^1$ is an arbitrary (but fixed) point.

In the second step, we use the discrete structure of the graphs $\rel H$ and $\rel G$, as well as the action of $\ZZ_2$, to show that  the image of $\Pol(\rel H,\rel G)$ under the constructed minion homomorphism is contained in $\clo Z_{\leq N}$ for some $N$. This is described in Section~\ref{sec:bounding-arity}.

We note that there are several ways to present the proof of Theorem~\ref{thm:K3}.
These presentations would look different on the surface, but in fact they use the same underlying topological concepts, just hidden to various extent.  For example, the proof that appeared in the conference version \cite{KO19} hides topology in a more direct combinatorial approach. Yet another version of the proof can be given in the language of \emph{recolourings}: the required minion homomorphism would map two polymorphisms $f$ and $f'$ from $\Pol(\rel H,\rel G)$ to the same function if and only if $f$ can be recoloured to $f'$ by changing one output value at a time, that is, if there is a sequence of polymorphisms $f_0,\ldots,f_n\in \Pol(\rel H,\rel G)$ such that $f_0=f, f_n=f'$ and, for each $i\in \{0,\dots,n-1\}$, there is a unique tuple $\bar{t}_i$ with $f_i(\bar{t}_i)\ne f_{i+1}(\bar{t}_i)$.
However, as shown in \cite{Wrochna15}, recolourability is inherently a topological notion, so the resulting proof would have the same essence as the one presented here.
We chose the current presentation because we believe that it reflects the `true essence' of the proof.

\subsubsection{A minion homomorphism}
  \label{sec:minion-homomorphism}

\begin{figure}
  \centering
  \raisebox{-.5\height}{\begin{tikzpicture} [scale = 1.3]
    \draw (0,0) circle (1cm);
    \foreach \i/\c/\d in {0/(0/4),36/(0/1),72/(2/1),108/(2/3),144/(4/3),180/(4/0),216/(1/0),252/(1/2),288/(3/2),324/(3/4)}
      \node [circle,fill,inner sep=1,label={\i:\c,\d}] at (\i:1) {};
  \end{tikzpicture}}
  \qquad
  \raisebox{-.5\height}{\includegraphics[width=.4\textwidth]{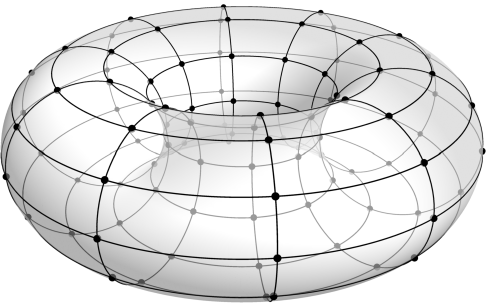}
  }
  \caption{A representation of $\Bip{\rel C_5}$ as $\Sphere^1$ and $\Bip {\rel C_5^2}$ as $\Torus^2$.}
    \label{fig:box-k3sq}
\end{figure}

As mentioned above, the required minion homomorphism is obtained as a composition of three mappings.
The first one is a minion homomorphism from polymorphisms of graphs to polymorphisms of simplicial complexes. We implicitly use that for any graphs $\rel H_1$ and $\rel H_2$, there is a natural isomorphism of \equivariant-simplicial complexes
\[
  \Hom(\rel K_2, \rel H_1) \times \Hom(\rel K_2, \rel H_2) \simeq
  \Hom(\rel K_2, \rel H_1 \times \rel H_2)
\]
given by the \equivariant-simplicial map:
\(
  ((a,b),(a',b')) \mapsto ((a,a'),(b,b')).
\)
See Fig.~\ref{fig:box-k3sq} for an example.

\begin{lemma} \label{lem:mu_1}
  For graphs $\rel H$, $\rel G$, the mapping
  \[
    \mu_1\colon \Pol(\rel H,\rel G) \to \Pol(\Hom (\rel K_2,\rel H),\Hom(\rel K_2,\rel G))
  \]
  defined as
  \[
    \mu_1(f) ( (u_1,v_1),\dots,(u_n,v_n) ) := ( f(u_1,\dots,u_n), f(v_1,\dots,v_n) )
  \]
  is a~minion homomorphism.
\end{lemma}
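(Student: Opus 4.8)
The plan is to verify the two defining conditions of a minion homomorphism from Definition~\ref{def:minion-homomorphism}: arity preservation, and compatibility with taking minors. Arity preservation is immediate from the formula, since $\mu_1(f)$ takes $n$ pairs as input whenever $f$ is $n$-ary. So the substance is the minor condition, together with the (implicit, but necessary) check that $\mu_1$ actually lands in the claimed codomain, i.e.\ that $\mu_1(f)$ really is a \equivariant-simplicial map $\Hom(\rel K_2,\rel H)^n \to \Hom(\rel K_2,\rel G)$ when $f\in\Pol(\rel H,\rel G)$.

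First I would check well-definedness. Given an $n$-ary polymorphism $f\colon \rel H^n\to \rel G$, recall that $f$ induces a \equivariant-simplicial map $f'\colon \Bip{\rel H}\to\Bip{\rel G}$ with $f'((a,b))=(f(a),f(b))$ (this is stated in the excerpt, with the footnote explaining the more general $\Hom$ version). Using the natural isomorphism $\Bip{\rel H}^n \simeq \Hom(\rel K_2,\rel H^n)$ recalled just before the lemma --- namely $((a_1,b_1),\dots,(a_n,b_n))\mapsto((a_1,\dots,a_n),(b_1,\dots,b_n))$ --- the map $\mu_1(f)$ is precisely the composite $\Bip{\rel H}^n \xrightarrow{\ \simeq\ } \Hom(\rel K_2,\rel H^n) \xrightarrow{\ f'\ } \Bip{\rel G}$. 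Hence $\mu_1(f)$ is a polymorphism of \equivariant-complexes: it is \equivariant\ because reversing every edge $(u_i,v_i)\mapsto(v_i,u_i)$ is carried by the formula to reversing $(f(\bar u),f(\bar v))\mapsto(f(\bar v),f(\bar u))$, and it is simplicial because faces of $\Bip{\rel H}^n$ are contained in products $\sigma_1\times\cdots\times\sigma_n$ of faces, each $\sigma_i$ a directed complete bipartite subgraph $U_i\times V_i\subseteq E(H)$, and $f$ maps $(\prod U_i)\times(\prod V_i)$ into $E(G)$ since $f$ is a homomorphism $\rel H^n\to\rel G$; so the image is contained in a face of $\Bip{\rel G}$.

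Next I would verify the minor condition. Fix $\pi\colon[m]\to[n]$ and $g\in\Pol(\rel H,\rel G)^{(m)}$. By Definition~\ref{def:minor}, $g^\pi(x_1,\dots,x_n)=g(x_{\pi(1)},\dots,x_{\pi(m)})$. I would then simply expand both sides of $\mu_1(g)^\pi = \mu_1(g^\pi)$ on an arbitrary input $((u_1,v_1),\dots,(u_n,v_n))$:
\[
  \mu_1(g)^\pi\bigl((u_1,v_1),\dots,(u_n,v_n)\bigr)
  = \mu_1(g)\bigl((u_{\pi(1)},v_{\pi(1)}),\dots,(u_{\pi(m)},v_{\pi(m)})\bigr)
  = \bigl(g(u_{\pi(1)},\dots,u_{\pi(m)}),\, g(v_{\pi(1)},\dots,v_{\pi(m)})\bigr),
\]
while
\[
  \mu_1(g^\pi)\bigl((u_1,v_1),\dots,(u_n,v_n)\bigr)
  = \bigl(g^\pi(u_1,\dots,u_n),\, g^\pi(v_1,\dots,v_n)\bigr)
  = \bigl(g(u_{\pi(1)},\dots,u_{\pi(m)}),\, g(v_{\pi(1)},\dots,v_{\pi(m)})\bigr).
\]
These agree, which is exactly condition (2) of Definition~\ref{def:minion-homomorphism}.

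There is no real obstacle here: the lemma is essentially bookkeeping, and the only thing one must be a little careful about is making explicit that $\mu_1(f)$ lands in $\Pol(\Hom(\rel K_2,\rel H),\Hom(\rel K_2,\rel G))$ rather than merely in some set of functions --- i.e.\ the simplicial and \equivariant\ conditions --- before checking the minor identity. The mildly delicate point is the use of the product isomorphism $\Bip{\rel H}^n\simeq\Hom(\rel K_2,\rel H^n)$; once that is in place (and it is quoted from the discussion preceding the lemma), everything reduces to unwinding definitions. I expect the write-up to be short, with most of the words spent on the well-definedness paragraph.
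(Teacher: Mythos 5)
Your proposal is correct and follows essentially the same route as the paper: the substantive step is the same check that $\mu_1(f)$ carries a product of faces $\sigma_i = U_i\times V_i$ to the face $f(U_1\times\cdots\times U_n)\times f(V_1\times\cdots\times V_n)$ of $\Bip{\rel G}$, using that $f$ is a homomorphism $\rel H^n\to\rel G$. The paper dismisses the \equivariant- and minor-preservation conditions as straightforward, which you instead write out explicitly — a harmless difference in level of detail.
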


\begin{proof}
  Let us first check that $\mu_1(f)$ is indeed a simplicial map. Assume that $\sigma_1,\dots,\sigma_n$ are faces in $\Hom(\rel K_2,\rel H)$, i.e., $\sigma_i$ is a subset of edges of some complete directed bipartite subgraph of $\rel H$. We may assume without loss of generality that $\sigma_i$ is the set of all edges of a complete directed bipartite subgraph of $\rel H$ which gives $\sigma_i = U_i\times V_i$ for some $U_i,V_i \subseteq V(G)$, which form a bipartition of some complete bipartite subgraph of $\rel G$. By definition,
  \[
    \mu_1(f)(\sigma_1 \times \dots \times \sigma_n) = \{ (f(u_1,\dots,u_n),f(v_1,\dots,v_n)) : u_i\in U_i, v_i\in V_i \} = f(U)\times f(V),
  \]
  where $U = U_1\times \cdots \times U_n$ and $V = V_1\times \cdots\times V_n$. By the definition of graph product, all edges between $U$ and $V$ are present in $\rel H^n$. Consequently, $f(U)\times f(V)$ is a complete directed bipartite subgraph of $\rel G$,  since $f$ is a~polymorphism. This implies that $\mu_1(f)(\sigma_1 \times \dots \times \sigma_n)$ is a face of $\Hom(\rel K_2,\rel G)$.

  It is straightforward that the mapping $\mu_1$ preserves both the \equivariant-action and taking minors.
\end{proof}

The next step is from \equivariant-simplicial complexes to \equivariant-spaces. The map that we construct will not be a minion homomorphism, it will satisfy a weaker condition that will be sufficient later.

\begin{definition}
Assume that $\top X,\top Y$ are \equivariant-spaces and let $\clo M$ be a~minion. We say that a~mapping $\xi\colon \clo M \to \Pol(\top X,\top Y)$ \emph{preserves minors up to \equivariant-homotopy} if for all $n,m>0$, $f\in \clo M^{(n)}$, and $\pi\colon [n]\to [m]$, we have that $\xi(f^\pi)$ is \equivariant-homotopic to $\xi(f)^\pi$.
\end{definition}

We recall that the points in the geometric representation of $\cox K$ can be viewed as convex combinations of vertices of $\cox K$, more precisely
\(
  \geom{\cox K} = \{ \sum_{v\in \sigma} \lambda_v v \mid \sigma \in \cox K, \lambda_v \geq 0, \sum_{v\in \sigma} \lambda_v = 1 \}
\) and the representation of each point is unique (as we identified $v \in V(K)$ with affinely independent vectors in $\RR^{V(K)}$, namely the basis vectors).
This is used in the following lemma.

\begin{lemma} \label{lem:mu_2}
  Let $\cox K$, $\cox K'$ be two \equivariant-simplicial complexes. Let $\mu_2\colon \Pol(\cox K,\cox K') \to \Pol(\geom {\cox K},\geom {\cox K'})$ be the linear extension, i.e., $\mu_2$ takes $f \in  \Pol(\cox K,\cox K')$ to
  \[
    \mu_2(f)\colon ( \sum_{v\in \sigma_1} \lambda_{1,v} v, \dots, \sum_{v\in \sigma_n} \lambda_{n,v} v )
    \mapsto \sum_{v_1 \in \sigma_1,\dots,v_n\in \sigma_n} \lambda_{1,v_1}\cdots\lambda_{n,v_n} f(v_1,\dots v_n)
  \]
  for a~point in $\Delta^{\sigma_1}\times\dots\times \Delta^{\sigma_n} \subseteq \geom{\cox K}^n$.
  Then $\mu_2$ preserves minors up to \equivariant-homotopy.
\end{lemma}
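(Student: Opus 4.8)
The plan is to verify two things: first, that $\mu_2(f)$ is a genuine polymorphism of the geometric realisations (i.e., a $\ZZ_2$-map $\geom{\cox K}^n \to \geom{\cox K'}$), and second, that $\mu_2$ preserves minors up to $\ZZ_2$-homotopy. For the first point, I would note that $\mu_2(f)$ is, by construction, the piecewise-multilinear extension of $f$: on each cell $\Delta^{\sigma_1} \times \dots \times \Delta^{\sigma_n}$ it is a polynomial (hence continuous) map, these agree on overlaps because barycentric coordinates are uniquely determined, and $f(\sigma_1 \times \dots \times \sigma_n)$ being a face of $\cox K'$ guarantees the image lands in $\Delta^{f(\sigma_1 \times \dots \times \sigma_n)} \subseteq \geom{\cox K'}$. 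Equivariance $\mu_2(f)(-x_1,\dots,-x_n) = -\mu_2(f)(x_1,\dots,x_n)$ follows because negation on $\geom{\cox K}$ is the linear map $v \mapsto -v$ on basis vectors, and $f$ is $\ZZ_2$-simplicial, so both sides expand to the same linear combination with signs pulled out. Here one should be slightly careful that $f(\sigma_1\times\dots\times\sigma_n)$ may have fewer vertices than $\sigma_1\times\dots\times\sigma_n$ (different tuples $(v_1,\dots,v_n)$ can map to the same vertex of $\cox K'$), so the sum defining $\mu_2(f)$ should be understood as first summing over tuples and then collecting coefficients of equal vertices; the resulting coefficients are still nonnegative and sum to $1$.

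For the minor-preservation claim, fix $f \in \Pol(\cox K,\cox K')^{(n)}$ and $\pi\colon [n] \to [m]$. The two maps $\mu_2(f^\pi)$ and $\mu_2(f)^\pi$ are both continuous $\ZZ_2$-maps $\geom{\cox K}^m \to \geom{\cox K'}$. On the \emph{vertices} of the product complex — i.e., on points of $\geom{\cox K}^m$ all of whose coordinates are vertices of $\cox K$ — they agree exactly, since $\mu_2$ restricted to vertex tuples just evaluates the underlying function and $(f^\pi)(v_{1},\dots,v_m) = f(v_{\pi(1)},\dots,v_{\pi(m)})$ by definition of a minor. The issue is that $\mu_2(f^\pi)$ is multilinear in the $m$ original coordinates $x_1,\dots,x_m$, whereas $\mu_2(f)^\pi$ first duplicates coordinates according to $\pi$ and then takes the multilinear extension in the $n$ coordinates; when $\pi$ is not injective, a coordinate of $\geom{\cox K}$ gets used in several slots, and $\mu_2(f)^\pi$ will contain cross-terms like $\lambda_u \lambda_v$ with $u \neq v$ both drawn from the same $\Delta^{\sigma_i}$, which $\mu_2(f^\pi)$ does not. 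So the two maps genuinely differ as functions, and we only get a homotopy.

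The homotopy I would use is the straight-line homotopy $h(x,t) = (1-t)\,\mu_2(f^\pi)(x) + t\,\mu_2(f)^\pi(x)$, computed inside the ambient $\RR^{V(\cox K')}$. This is continuous and $\ZZ_2$-equivariant for every $t$ (both endpoints are equivariant and equivariance is preserved by convex combinations since negation is linear). The one real thing to check — and this is the main obstacle — is that $h(x,t)$ actually lands in $\geom{\cox K'}$ for all $t$, not merely in the ambient vector space. For this I would argue that for each point $x = (\sum_{v\in\sigma_1}\lambda_{1,v}v, \dots, \sum_{v\in\sigma_m}\lambda_{m,v}v)$, both $\mu_2(f^\pi)(x)$ and $\mu_2(f)^\pi(x)$ are convex combinations of vertices lying in the \emph{single} face $\tau := f(\sigma_{\pi(1)} \times \dots \times \sigma_{\pi(m)}) \in \cox K'$: indeed $\mu_2(f^\pi)(x) \in \Delta^{f^\pi(\sigma_1\times\dots\times\sigma_m)}$ and $f^\pi(\sigma_1\times\dots\times\sigma_m) = f(\sigma_{\pi(1)}\times\dots\times\sigma_{\pi(m)}) = \tau$, while $\mu_2(f)^\pi(x) = \mu_2(f)(x_{\pi(1)},\dots,x_{\pi(m)})$ lies in $\Delta^{f(\sigma_{\pi(1)}\times\dots\times\sigma_{\pi(m)})} = \Delta^\tau$ as well. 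Since $\Delta^\tau$ is convex, every point on the segment between them stays in $\Delta^\tau \subseteq \geom{\cox K'}$. Hence $h$ is a well-defined $\ZZ_2$-homotopy from $\mu_2(f^\pi)$ to $\mu_2(f)^\pi$, which is exactly the assertion that $\mu_2$ preserves minors up to $\ZZ_2$-homotopy. A final bookkeeping remark: one must make sure the faces $\sigma_i$ (chosen minimally so that $x_i \in \Delta^{\sigma_i}$, i.e., $\sigma_i = \{v : \lambda_{i,v} > 0\}$) vary consistently, but since we only ever use the fixed face $\tau$ containing both images and $\Delta^\tau$ is convex, no gluing argument across cells is needed — the containment $h(x,t)\in\geom{\cox K'}$ is checked pointwise.
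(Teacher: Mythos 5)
Your proof is correct and follows essentially the same route as the paper's: both identify the common face $\tau=f(\sigma_{\pi(1)}\times\dots\times\sigma_{\pi(n)})$ containing the images of $\mu_2(f^\pi)$ and $\mu_2(f)^\pi$, and use the straight-line \equivariant-homotopy inside the convex set $\Delta^\tau$. (The only blemish is a harmless index slip: since $f$ is $n$-ary and $\pi\colon[n]\to[m]$, the minor is $f^\pi(x_1,\dots,x_m)=f(x_{\pi(1)},\dots,x_{\pi(n)})$, so your occurrences of $\pi(m)$ should read $\pi(n)$.)
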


\begin{proof}
  Let $f \colon \cox K^n \to \cox K'$ be a~simplicial map and pick any $\pi\colon [n] \to [m]$.
  Then
  \begin{align*}
    \mu_2(f^\pi)(\sum_{v\in \sigma_1} \lambda_{1,v} v, \dots, \sum_{v\in \sigma_m} \lambda_{m,v} v)
      &= \sum_{v_1 \in \sigma_1,\dots,v_m\in \sigma_m} \lambda_{1,v_1}\cdots\lambda_{m,v_m} f^\pi(v_1,\dots v_m)\\
      &= \sum_{v_1 \in \sigma_1,\dots,v_m\in \sigma_m} \lambda_{1,v_1}\cdots\lambda_{m,v_m} f(v_{\pi(1)},\dots v_{\pi(n)}).
  \end{align*}
  On the other hand, if we take the induced map first and then the minor, we obtain:
  \begin{multline*}
    \mu_2(f)(\sum_{v\in \sigma_{\pi(1)}} \lambda_{\pi(1),v} v, \dots, \sum_{v\in \sigma_{\pi(n)}} \lambda_{\pi(n),v} v) \\
      = \sum_{v_1 \in \sigma_{\pi(1)},\dots,v_n\in \sigma_{\pi(n)}}
      \lambda_{\pi(1),v_1}\cdots\lambda_{\pi(n),v_n} f(v_1,\dots v_n).
  \end{multline*}
  Both points lie in $\Delta^\sigma \subseteq \geom{\cox K'}$ for $\sigma = \{f(v_1,\dots,v_n) \mid v_i\in \sigma_{\pi(i)}\} \in \cox K'$. We can thus continuously move from one to the other.
  Formally, we define a~homotopy $h\colon \geom {\cox K}^m \times [0,1] \to \geom {\cox K'}$ by
  \[
    \textstyle h(x_1,\dots,x_m,t) =
      t \mu_2(f)(x_{\pi(1)},\dots,x_{\pi(n)}) + (1-t) \mu_2(f^\pi)(x_1,\dots,x_m).
  \]
  It is clear that $h_t\colon \tup x \mapsto h(\tup x,t)$ is a~well-defined \equivariant-map, and therefore $h$ is the required $\ZZ_2$-homotopy.
  (As a side note, observe that the homotopy is constant on vertices: for any vertices $v_1,\dots,v_m$ of $\cox K$ and each $t\in [0,1]$, $h_t(v_1,\dots,v_m)$ is equal to $f^\pi(v_1,\dots,v_m) = f(v_{\pi(1)},\dots,v_{\pi(n)})$.)
\end{proof}

\begin{lemma} \label{lem:r}
  Let $\rel H$ be a non-bipartite graph. Then there is a~\equivariant-map $r\colon \Sphere^1 \to \gBip {\rel H}$.
\end{lemma}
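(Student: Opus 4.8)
The goal is to construct a $\mathbb Z_2$-map $r\colon \Sphere^1 \to \gBip{\rel H}$ for any non-bipartite loopless graph $\rel H$. Since $\rel H$ is non-bipartite, it contains an odd cycle, say $\rel C_k \subseteq \rel H$ for some odd $k\ge 3$. The inclusion $\rel C_k \hookrightarrow \rel H$ is a graph homomorphism, and by the functoriality already recorded in the excerpt it induces a $\mathbb Z_2$-simplicial map $\Bip{\rel C_k}\to\Bip{\rel H}$ and hence a $\mathbb Z_2$-map $\iota\colon \gBip{\rel C_k}\to \gBip{\rel H}$. So it suffices to produce a $\mathbb Z_2$-map $\Sphere^1 \to \gBip{\rel C_k}$, and then set $r = \iota \circ (\text{that map})$.

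The plan for the remaining piece is to invoke Example~\ref{ex:hom-k3}: for odd $k$ the space $\gBip{\rel C_k}$ is $\mathbb Z_2$-homeomorphic to $\Sphere^1$, via the explicit $\mathbb Z_2$-homeomorphism $f\colon \gBip{\rel C_k}\to\Sphere^1$ constructed there (placing $k$ regularly spaced points $x_1,\dots,x_k$ on the circle, sending the vertex $(a,b)$ to $x_b-x_a$, extending linearly, and radially projecting). Taking its inverse $f^{-1}\colon \Sphere^1 \to \gBip{\rel C_k}$, which is also a $\mathbb Z_2$-map, and composing with $\iota$ gives the desired $r\colon \Sphere^1 \to \gBip{\rel H}$. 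One should just double-check that $\gBip{\rel C_k}$ is indeed a free $\mathbb Z_2$-complex (which holds because $\rel C_k$ is loopless), so that the geometric realisation of Definition~\ref{def:geom} and Example~\ref{ex:hom-k3} apply, and that a composition of $\mathbb Z_2$-maps is a $\mathbb Z_2$-map, which is immediate from the definition $f(-x)=-f(x)$.

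I do not expect a genuine obstacle here; the statement is essentially a bookkeeping consequence of material already assembled in the section. The only mild subtlety is making sure the chosen odd cycle sits inside $\rel H$ as an actual subgraph (so that the homomorphism, and thus the induced simplicial and topological maps, exist on the nose), which is exactly the definition of non-bipartiteness for a finite graph. An alternative, slightly more hands-on route — should one wish to avoid citing Example~\ref{ex:hom-k3} — is to define $r$ directly: walk once around $\Sphere^1$ in $2k$ equal arcs, sending the $2k$ breakpoints to the $2k$ vertices $(0,1),(2,1),(2,3),\dots,(k-1,0)$ of the $2k$-cycle $\gBip{\rel C_k}$ in cyclic order (composed with the inclusion into $\gBip{\rel H}$), and interpolating linearly along edges; the antipodal point is sent to the vertex $k$ steps further along this $2k$-cycle, which is exactly its $\mathbb Z_2$-image, so $r(-x)=-r(x)$. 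Either way the map is continuous and $\mathbb Z_2$-equivariant, which is all that is required.
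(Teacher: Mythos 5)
Your proof is correct and is essentially identical to the paper's: both take an odd cycle mapping into $\rel H$, use functoriality to get a \equivariant-map $\gBip{\rel C_k}\to\gBip{\rel H}$, and compose with the inverse of the \equivariant-homeomorphism from Example~\ref{ex:hom-k3}.
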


\begin{proof}
  Since the graph $\rel H$ is not bipartite, there is a homomorphism $h\colon \rel C_k \to \rel H$ for some odd $k$. This induces a \equivariant-map $\geom{h'}\colon \gBip{\rel C_k} \to \gBip{\rel H}$, and since $\gBip{\rel C_k}$ is \equivariant-homeomorphic to $\Sphere^1$ (see Example \ref{ex:hom-k3}), the claim follows.
\end{proof}

\begin{lemma} \label{lem:mu}
  Let $\rel H,\rel G$ be two graphs such that $\rel H$ is non-bipartite, $\rel H \to \rel G$, and there is a \equivariant-map $s\colon \gBip{\rel G} \to \Sphere^1$.
  Then $\mu\colon \Pol({\rel H}, {\rel G}) \to \Pol( \Sphere^1, \Sphere^1 )$ defined as
  \[
    \mu(f)(x_1,\dots,x_n) := s(\mu_2\mu_1(f)(r(x_1),\dots,r(x_n)))
  ,\]
  where $\mu_1$, $\mu_2$, and $r$ are from Lemmas~\ref{lem:mu_1}, \ref{lem:mu_2}, and \ref{lem:r}, respectively, preserves minors up to \equivariant-homotopy.
\end{lemma}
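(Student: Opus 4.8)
The plan is to unwind the definition of $\mu$ and reduce the claim to the three facts already in hand: $\mu_1$ is a genuine minion homomorphism (Lemma~\ref{lem:mu_1}), $\mu_2$ preserves minors up to $\ZZ_2$-homotopy (Lemma~\ref{lem:mu_2}), and $r,s$ are \emph{fixed} $\ZZ_2$-maps that do not depend on the polymorphism $f$. First I would check that $\mu$ is well defined, i.e.\ that $\mu(f)$ really lands in $\Pol(\Sphere^1,\Sphere^1)$: the map $\mu_2\mu_1(f)$ is a $\ZZ_2$-map $\gBip{\rel H}^n\to\gBip{\rel G}$, and precomposing each of its $n$ inputs with $r\colon\Sphere^1\to\gBip{\rel H}$ and postcomposing with $s\colon\gBip{\rel G}\to\Sphere^1$ gives a continuous map $(\Sphere^1)^n\to\Sphere^1$ that commutes with the antipodal actions.

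Next, fix $f\in\Pol(\rel H,\rel G)^{(n)}$ and $\pi\colon[n]\to[m]$. Because $\mu_1$ is a minion homomorphism we may replace $\mu_1(f^\pi)$ by $\mu_1(f)^\pi$, so that
\[
  \mu(f^\pi)(x_1,\dots,x_m)=s\bigl(\mu_2(\mu_1(f)^\pi)(r(x_1),\dots,r(x_m))\bigr).
\]
Now apply Lemma~\ref{lem:mu_2} to the simplicial map $\mu_1(f)$ (with $\cox K=\Bip{\rel H}$, $\cox K'=\Bip{\rel G}$): it supplies a $\ZZ_2$-homotopy $h\colon\gBip{\rel H}^m\times[0,1]\to\gBip{\rel G}$ from $\mu_2(\mu_1(f)^\pi)$ to $\mu_2(\mu_1(f))^\pi$. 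The idea is then to transport $h$ through the fixed maps: set $G(x_1,\dots,x_m,t):=s\bigl(h(r(x_1),\dots,r(x_m),t)\bigr)$. This is continuous, and for each $t$ the slice $x\mapsto G(x,t)$ is a composition of $\ZZ_2$-maps, hence a $\ZZ_2$-map; so $G$ is a $\ZZ_2$-homotopy.

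Finally I would identify the two ends of $G$. At $t=0$ the slice is exactly $\mu(f^\pi)$, by the display above. At $t=1$ the slice is $x\mapsto s\bigl(\mu_2(\mu_1(f))^\pi(r(x_1),\dots,r(x_m))\bigr)$; expanding the $\pi$-minor and using that $r$ is applied coordinatewise (so precomposition by $r$ commutes with the reindexing by $\pi$), this equals $x\mapsto s\bigl(\mu_2\mu_1(f)(r(x_{\pi(1)}),\dots,r(x_{\pi(n)}))\bigr)=\mu(f)(x_{\pi(1)},\dots,x_{\pi(n)})=\mu(f)^\pi(x_1,\dots,x_m)$. Hence $G$ witnesses that $\mu(f^\pi)$ is $\ZZ_2$-homotopic to $\mu(f)^\pi$, which is the assertion.

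I do not expect a genuine obstacle: the real content sits in Lemmas~\ref{lem:mu_1} and~\ref{lem:mu_2}. The only points needing care are the arity bookkeeping (which maps are $n$-ary versus $m$-ary), the commutation of precomposition by $r$ with passage to the $\pi$-minor, and the standard fact that pre- and post-composition with a fixed $\ZZ_2$-map carries $\ZZ_2$-homotopies to $\ZZ_2$-homotopies; the last of these I would either invoke as folklore or verify in the single line above.
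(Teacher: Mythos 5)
Your proposal is correct and follows essentially the same route as the paper's proof: identify $\mu_1(f^\pi)$ with $\mu_1(f)^\pi$ via Lemma~\ref{lem:mu_1}, take the \equivariant-homotopy supplied by Lemma~\ref{lem:mu_2}, and conjugate it by the fixed \equivariant-maps $r$ and $s$ to get the required homotopy between $\mu(f^\pi)$ and $\mu(f)^\pi$. The extra care you take with the commutation of coordinatewise precomposition by $r$ with the $\pi$-reindexing is a point the paper passes over silently, but both arguments are the same in substance.
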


\begin{proof}
  Assume that $f$ is a~polymorphism from $\rel H$ to $\rel G$ of arity $n$ and let $\pi\colon [n]\to [m]$. We want to prove that $\mu(f)^\pi$ is \equivariant-homotopic to $\mu(f^\pi)$.
  From Lemmas~\ref{lem:mu_1} and \ref{lem:mu_2}, we have that $\mu_2\mu_1(f)^\pi$ and $\mu_2\mu_1(f^\pi)$ are \equivariant-homotopic; let $h'$ be a~\equivariant-homotopy that witnesses this fact.
  We define a~\equivariant-homotopy $h\colon \Torus^m \times [0,1] \to \Sphere^1$ by
  \[
    h(x_1,\dots,x_m,t) = sh'(r(x_1),\dots,r(x_m),t)
  .\]
  Indeed, for $t=0$,
  \[
    h(x_1,\dots,x_m,0) = s\left(\mu_2\mu_1(f)\left(r(x_{\pi(1)}), \dots, r(x_{\pi(n)})\right)\right) = \mu(f)^\pi(x_1,\dots,x_m)
  \]
  while for $t=1$,
  \[
    h(x_1,\dots,x_m,1) = s\big(\mu_2\mu_1(f^\pi)\big(r(x_1),\dots,r(x_m)\big)\big) = \mu(f^\pi)(x_1,\dots,x_m).
  \]
  This concludes the proof.
\end{proof}

The final step is from \equivariant-spaces to the fundamental groups. Recall that, for a~continuous function $f:\top X\to\top Y$, we have a group homomorphism $f_*\colon \Fg{\top X}\to \Fg{\top Y}$ defined as $f_*([\ell]) = [f\ell]$.
We will also need a group homomorphism from $\Fg{\top X}^n$ to $\Fg{\top X^n}$, that is guaranteed to exist for any path connected space $\top X$ by {\cite[Proposition 1.12]{Hat01}}. One such homomorphism is the mapping $e_n\colon \Fg {\top X}^n \to \Fg{\top X^n}$ defined as
\[
  e_n( [\ell_1],\dots,[\ell_n] ) := [ t \mapsto (\ell_1(t),\dots,\ell_n(t)) ]
\]
for $\ell_1,\dots,\ell_n\colon \Sphere^1 \to \top X$.

\begin{lemma} \label{lem:nu}
Let $\top X,\top Y$ be two path connected \equivariant-spaces. Then the mapping
\[
  \nu(f) := f_* e_n
\]
is a minion homomorphism from $\Pol(\top X,\top Y)$ to $\Pol(\Fg {\top X},\Fg
  {\top Y})$. Moreover, if $f$ and $g$ are homotopic then $\nu(f) = \nu(g)$.
\end{lemma}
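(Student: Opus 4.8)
The plan is to verify directly that $\nu$ satisfies the two conditions of Definition~\ref{def:minion-homomorphism}, together with the homotopy-invariance clause, by unwinding the definitions of the induced homomorphism $f_*$, of the explicit homomorphism $e_n$, and of minors. For well-definedness and preservation of arities, fix an $m$-ary polymorphism $f\colon\top X^m\to\top Y$; then $\nu(f)=f_*e_m$ is the composite of the group homomorphism $e_m\colon\Fg{\top X}^m\to\Fg{\top X^m}$ (whose existence is exactly what path connectedness of $\top X$ buys us, by \cite[Proposition 1.12]{Hat01}) with the group homomorphism $f_*\colon\Fg{\top X^m}\to\Fg{\top Y}$, hence is itself a group homomorphism $\Fg{\top X}^m\to\Fg{\top Y}$, i.e. an $m$-ary element of $\Pol(\Fg{\top X},\Fg{\top Y})$. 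This step fixes basepoints $x_0\in\top X$, $(x_0,\dots,x_0)\in\top X^m$, $y_0\in\top Y$, and a path from $y_0$ to each point of $\top Y$; this is the only place where basepoints intervene, and it is harmless for the rest of the argument, because every minor of $f$ takes the same value on the constant tuple $(x_0,\dots,x_0)$, so the same change-of-basepoint data is used for $f$ and for each of its minors (and in the applications $\Fg{\top Y}\cong\mathbb Z$ is abelian, where the change of basepoint is canonical anyway).

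Next I would check preservation of minors. Take $f$ of arity $m$ and $\pi\colon[m]\to[n]$ as in Definition~\ref{def:minor}, so $f^\pi(x_1,\dots,x_n)=f(x_{\pi(1)},\dots,x_{\pi(m)})$, and evaluate both $\nu(f)^\pi$ and $\nu(f^\pi)$ on an arbitrary tuple $([\ell_1],\dots,[\ell_n])$ of homotopy classes of loops $\ell_i\colon\Sphere^1\to\top X$ based at $x_0$. Using the definition of minors in $\Pol(\Fg{\top X},\Fg{\top Y})$ and the explicit formulas for $e_m$ and $f_*$,
\[
  \nu(f)^\pi([\ell_1],\dots,[\ell_n]) = f_*e_m([\ell_{\pi(1)}],\dots,[\ell_{\pi(m)}]) = \bigl[\,t\mapsto f\bigl(\ell_{\pi(1)}(t),\dots,\ell_{\pi(m)}(t)\bigr)\,\bigr],
\]
while, using the formulas for $e_n$, $f_*$, and $f^\pi$,
\[
  \nu(f^\pi)([\ell_1],\dots,[\ell_n]) = (f^\pi)_*e_n([\ell_1],\dots,[\ell_n]) = \bigl[\,t\mapsto f^\pi\bigl(\ell_1(t),\dots,\ell_n(t)\bigr)\,\bigr] = \bigl[\,t\mapsto f\bigl(\ell_{\pi(1)}(t),\dots,\ell_{\pi(m)}(t)\bigr)\,\bigr].
\]
The two representative loops are literally the same map $\Sphere^1\to\top Y$, so their homotopy classes agree and $\nu(f)^\pi=\nu(f^\pi)$; hence $\nu$ is a minion homomorphism.

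For the final clause, if $f,g\colon\top X^m\to\top Y$ are homotopic (not necessarily by a \equivariant-homotopy), then $f_*=g_*$ as homomorphisms $\Fg{\top X^m}\to\Fg{\top Y}$, since homotopic maps induce maps on $\pi_1$ that differ by a change-of-basepoint isomorphism, which under the convention fixed above --- or simply because $\Fg{\top Y}$ is abelian in the cases where the lemma is applied --- is the identity; consequently $\nu(f)=f_*e_m=g_*e_m=\nu(g)$. I do not expect a genuine obstacle here: the statement is a bookkeeping lemma whose proof is a chase through the definitions, and the explicit form of $e_n$ turns the key identity into an equality of representative loops rather than merely of homotopy classes. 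The one delicate point is the treatment of basepoints in $f_*$, but it is innocuous here precisely because taking minors does not move the image of the distinguished constant tuple, and because the spaces to which the lemma is ultimately applied have abelian fundamental group.
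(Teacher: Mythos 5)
Your proof is correct and follows essentially the same route as the paper's: unwind the definitions of $e_n$ and $f_*$, observe that the representative loops for $\nu(f)^\pi$ and $\nu(f^\pi)$ are literally the same map $\Sphere^1\to\top Y$, and use homotopy-invariance of the induced map for the last clause. Your extra discussion of basepoints addresses a genuine subtlety that the paper's proof passes over in silence, and your resolution (consistent change-of-basepoint data, plus abelianness of $\Fg{\top Y}$ in the intended application) is sound.
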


\begin{proof}
   Assume that $f\in \Pol(\top X,\top Y)$ is of arity $n$, $\pi\colon [n] \to [m]$, and $\ell_1,\dots,\ell_m\colon \Sphere^1 \to \top X$.
   To simplify notation, let $\ell(t) = (\ell_1(t),\dots,\ell_m(t))$ go from $\Sphere^1$ to $\top X^n$ and $\ell_\pi(t) = (\ell_{\pi(1)}(t),\dots,\ell_{\pi(n)}(t))$ go from $\Sphere^1$ to $\top X^m$.
   Using the definitions of $f_*$ and $e_n$, we get
   \[
      \nu(f^\pi) (\ell_1,\dots,\ell_m)
      = [ f^\pi\ell ]
      = [ f\ell_\pi ]
      = \nu(f)^\pi (\ell_1,\dots,\ell_m)
   .\]
   Finally, since $f_* = g_*$ if $f$ and $g$ are homotopic, we also get that $\nu(f) = \nu(g)$.
\end{proof}

We recall that $\Fg{\Sphere^1} \simeq \mathbb Z$ (Lemma~\ref{lem:fg-s1}). In the following statement we identify the two isomorphic groups to obtain a minion homomorphism to $\Pol(\mathbb Z,\mathbb Z)$, the minion of all linear functions on $\mathbb Z$.

\begin{corollary}
  Let $\rel H,\rel G$ be two graphs such that $\rel H$ is non-bipartite, $\rel H \to \rel G$, and there is a \equivariant-map $s\colon \gBip{\rel G} \to \Sphere^1$.
  The mapping
  \(
    \nu\mu
  \)
  is a~minion homomorphism from $\Pol(\rel H,\rel G)$ to $\Pol(\mathbb Z,\mathbb Z)$ assuming $\mu$ is as in Lemma~\ref{lem:mu} and $\nu$ is as in Lemma~\ref{lem:nu}.
\end{corollary}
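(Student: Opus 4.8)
The plan is to obtain the statement as a purely formal composition of the three ingredients already assembled: Lemma~\ref{lem:mu} (the map $\mu$ preserves minors up to \equivariant-homotopy), Lemma~\ref{lem:nu} (the map $\nu$ is a genuine minion homomorphism that, moreover, sends homotopic maps to the same group homomorphism), and Lemma~\ref{lem:fg-s1} (the identification $\Fg{\Sphere^1}\simeq\ZZ$). First I would check the hypotheses line up: $\Sphere^1$ is path connected, so Lemma~\ref{lem:nu} applies with $\top X=\top Y=\Sphere^1$ and yields a minion homomorphism $\nu\colon\Pol(\Sphere^1,\Sphere^1)\to\Pol(\Fg{\Sphere^1},\Fg{\Sphere^1})$; using the group isomorphism $\Fg{\Sphere^1}\simeq\ZZ$ of Lemma~\ref{lem:fg-s1}, which induces a minor-compatible isomorphism of the corresponding polymorphism minions, the codomain becomes $\Pol(\ZZ,\ZZ)$. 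Since $\mu$ takes $n$-ary polymorphisms to $n$-ary polymorphisms (Lemma~\ref{lem:mu}) and $\nu$ likewise preserves arities, the composite $\nu\mu$ preserves arities.

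Next I would verify minor preservation. Fix $f\in\Pol(\rel H,\rel G)$ of arity $n$ and a map $\pi\colon[n]\to[m]$. By Lemma~\ref{lem:mu}, $\mu(f^\pi)$ and $\mu(f)^\pi$ are \equivariant-homotopic maps $\Torus^m\to\Sphere^1$; forgetting the equivariance, they are in particular homotopic. Hence by the ``moreover'' clause of Lemma~\ref{lem:nu}, $\nu(\mu(f^\pi))=\nu(\mu(f)^\pi)$. Finally, because $\nu$ is itself a minion homomorphism, $\nu(\mu(f)^\pi)=\nu(\mu(f))^\pi$. Chaining these two equalities gives $\nu\mu(f^\pi)=(\nu\mu(f))^\pi$, which is exactly condition~(2) of Definition~\ref{def:minion-homomorphism}; together with arity preservation this shows $\nu\mu$ is a minion homomorphism $\Pol(\rel H,\rel G)\to\Pol(\ZZ,\ZZ)$.

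I do not expect a real obstacle here; the corollary is bookkeeping on top of Lemmas~\ref{lem:mu}, \ref{lem:nu} and~\ref{lem:fg-s1}. The one point that deserves a word of care is that $\mu$ is \emph{not} claimed to preserve minors on the nose, only up to \equivariant-homotopy, so it is essential to use that $\nu$ collapses homotopic (a fortiori \equivariant-homotopic) maps to a single group homomorphism --- without that, the composition argument would not close. I would also spell out explicitly that the passage from $\Pol(\Fg{\Sphere^1},\Fg{\Sphere^1})$ to $\Pol(\ZZ,\ZZ)$ is along the isomorphism of Lemma~\ref{lem:fg-s1}, so that the conclusion is literally a minion homomorphism into $\Pol(\ZZ,\ZZ)$ and not merely one up to isomorphism; this is what lets us combine it in the next step with the description of $\clo Z_{\le N}$ as a subminion of $\Pol(\ZZ,\ZZ)$.
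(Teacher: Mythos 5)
Your proposal is correct and follows essentially the same route as the paper: both arguments combine the fact that $\mu$ preserves minors up to \equivariant-homotopy (Lemma~\ref{lem:mu}) with the fact that $\nu$ is a minion homomorphism constant on homotopy classes (Lemma~\ref{lem:nu}), chaining $\nu(\mu(f^\pi))=\nu(\mu(f)^\pi)=\nu(\mu(f))^\pi$. Your extra remark about making the identification $\Fg{\Sphere^1}\simeq\ZZ$ explicit matches the sentence the paper places just before the corollary.
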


\begin{proof}
  Clearly, $\nu\mu\colon \Pol(\rel H,\rel G) \to \Pol(\mathbb Z,\mathbb Z)$ is a well-defined mapping that preserves arities. We need to show that it also preserves minors. This follows from the facts that $\mu$ preserves minors up to \equivariant-homotopy (Lemma \ref{lem:mu}) and that $\nu$ is a~minion homomorphism that is constant on \equivariant-homotopy classes (Lemma \ref{lem:nu}). More precisely, assume $f\in \Pol(\rel H,\rel G)$ is of arity $n$ and $\pi\colon [n] \to [m]$. Then $\mu(f^\pi)$ and $\mu(f)^\pi$ are \equivariant-homotopic, and therefore
  \[
    \nu(\mu(f^\pi)) = \nu(\mu(f)^\pi) = \nu(\mu(f))^\pi
  \]
  where the second equality follows from minor preservation by $\nu$.
\end{proof}

We remark that, for any ($n$-ary) function $f\in \Pol(\rel H,\rel G)$, the coefficients of the linear function $\nu\mu(f)=\sum_{i=1}^n {c_ix_i}$ can be naturally thought of as the degrees of $f$ at the corresponding coordinates.  Such degrees can be defined in a combinatorial way (see~\cite{KO19}) --- the intuitions in that approach are still topological, but the technical proofs become somewhat ad-hoc.

\subsubsection{Bounding essential arity}
  \label{sec:bounding-arity}

To finish the analysis of polymorphisms from $\rel H$ to $\rel G$ necessary for applying Theorem~\ref{thm:bounded-arity}, we need to bound the essential arity of functions in the image of $\nu\mu$ (defined above) and show that none of these functions is a constant function.
We achieve this by proving that the image of $\nu\mu$ is contained in the~minion $\clo Z_{\leq N}$ for some $N$. Recall that this minion is defined to be the set of all functions $f\colon \mathbb Z^n \to \mathbb Z$ of the form
\(
  f(x_1,\dots,x_n) = c_1x_1 +\dots + c_nx_n
\)
for some $c_1$, \dots, $c_n\in \mathbb Z$ with $\sum_{i=1}^n \lvert c_i\rvert \leq N$ and $\sum_{i=1}^n c_i$ odd.

The oddness of the sum of coefficients follows from a well-known fact about \equivariant-maps on~$\Sphere^1$. We recall that \emph{the degree} of a map $f\colon \Sphere^1 \to \Sphere^1$ is the integer $d_f$ such that the induced map $f_*$ on $\Fg{\Sphere^1} = \mathbb Z$ is $x\mapsto d_f \cdot x$.

\begin{lemma}[{\cite[Proposition 2B.6]{Hat01}}]
  \label{lem:odd}
  The degree of any \equivariant-map $f\colon \Sphere^1\to \Sphere^1$ is odd.
\end{lemma}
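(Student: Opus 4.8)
\emph{Proof proposal.} The plan is to run the usual covering-space argument for the universal cover $q\colon \RR \to \Sphere^1$, $q(t) = (\cos 2\pi t,\sin 2\pi t)$, observing that under $q$ the antipodal action $x\mapsto -x$ on $\Sphere^1$ corresponds to the translation $t\mapsto t+\tfrac12$ on $\RR$, since $q(t+\tfrac12) = -q(t)$. First I would recall how the degree $d_f$ (defined in the paper via $f_*$ on $\Fg{\Sphere^1}\simeq\ZZ$) is read off from a lift. As $\RR$ is simply connected, the lifting criterion provides a continuous $\tilde f\colon \RR\to\RR$ with $q\tilde f = fq$. Then $t\mapsto \tilde f(t+1)-\tilde f(t)$ is continuous and $\ZZ$-valued, because $q(\tilde f(t+1)) = f(q(t+1)) = f(q(t)) = q(\tilde f(t))$; since $\RR$ is connected it is a constant $m$. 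Evaluating $f_*$ on the generator $[\,t\mapsto q(t)\,]$ of $\Fg{\Sphere^1}$, its lift through $\tilde f$ runs from $\tilde f(0)$ to $\tilde f(1)=\tilde f(0)+m$, so this class corresponds to $m$ under $\Fg{\Sphere^1}\simeq\ZZ$; hence $m = d_f$, i.e.\ $\tilde f(t+1) = \tilde f(t) + d_f$ for all $t$.

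Next I would bring in $\ZZ_2$-equivariance. Because $f(-x) = -f(x)$ and antipodality downstairs lifts to translation by $\tfrac12$ upstairs, we get $q(\tilde f(t+\tfrac12)) = f(q(t+\tfrac12)) = f(-q(t)) = -f(q(t)) = q(\tilde f(t)+\tfrac12)$, so $t\mapsto \tilde f(t+\tfrac12)-\tilde f(t)-\tfrac12$ is continuous and $\ZZ$-valued, hence equals a fixed integer $k$; that is, $\tilde f(t+\tfrac12) = \tilde f(t) + \tfrac12 + k$ for all $t$. Iterating this identity once gives
\[
  \tilde f(t+1) = \tilde f\bigl((t+\tfrac12)+\tfrac12\bigr) = \tilde f(t+\tfrac12) + \tfrac12 + k = \tilde f(t) + 1 + 2k ,
\]
and comparing with $\tilde f(t+1) = \tilde f(t) + d_f$ yields $d_f = 2k+1$, which is odd.

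I do not expect a genuine obstacle here: the argument is just two applications of ``a continuous integer-valued function on a connected space is constant'', plus the routine identification of the covering-theoretic degree with the $\Fg{\Sphere^1}$-theoretic one used in the statement. (Equivalently, one may argue diagrammatically without lifts: a $\ZZ_2$-map descends to a map $\bar f$ on the orbit space $\Sphere^1/\ZZ_2 \simeq \Sphere^1$ fitting into a commutative square with the double cover $p\colon\Sphere^1\to\Sphere^1$, and chasing $\Fg{\Sphere^1}\simeq\ZZ$ around this square --- where $p_*$ is multiplication by $2$ --- forces $d_f$ to be odd; the lift computation above is exactly the unwinding of this diagram, and this is the route of \cite[Proposition 2B.6]{Hat01}.)
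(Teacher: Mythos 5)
Your argument is correct and complete: the lift exists because $\RR$ is simply connected, both auxiliary functions $t\mapsto \tilde f(t+1)-\tilde f(t)$ and $t\mapsto \tilde f(t+\tfrac12)-\tilde f(t)-\tfrac12$ are indeed continuous and $\ZZ$-valued (hence constant), the identification of the constant $m$ with $d_f$ via the generator of $\Fg{\Sphere^1}$ is the standard one, and the final iteration $d_f = 2k+1$ is airtight. The comparison with the paper is a bit degenerate, however: the paper does not prove this lemma at all — it cites it as \cite[Proposition 2B.6]{Hat01}, which is the general statement that an odd map $\Sphere^n\to\Sphere^n$ has odd degree, proved there by a transfer argument in homology with $\ZZ_2$ coefficients. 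What your approach buys is a self-contained, elementary proof of exactly the one-dimensional case the paper needs, using nothing beyond the covering $\RR\to\Sphere^1$ already implicit in the paper's identification $\Fg{\Sphere^1}\simeq\ZZ$. One small bibliographic quibble: your closing remark that the double-cover diagram chase ``is the route of'' Hatcher's Proposition 2B.6 is not quite accurate — his proof is the homological one for all $n$ — but this does not affect the validity of your argument.
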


\begin{lemma} \label{lem:odd-2}
  Let $\rel H,\rel G$ be two graphs such that $\rel H$ is non-bipartite, $\rel H
  \to \rel G$, and there is a \equivariant-map $s\colon \gBip{\rel G} \to
  \Sphere^1$. Let $\mu$ be as in Lemma~\ref{lem:mu}, $\nu$ as in Lemma~\ref{lem:nu}, and let $\clo H = \Pol(\rel H,\rel G)$.
  If $f \in \Pol(\rel H,\rel G)$ and
  \(
    \nu\mu(f)\colon (x_1,\dots,x_n) \mapsto \sum_{i=1}^n c_ix_i,
  \)
  then $\sum_{i=1}^n c_i$ is odd.
\end{lemma}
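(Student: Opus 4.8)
The plan is to pull the statement back to the unary case via the diagonal minor and then invoke Lemma~\ref{lem:odd}. Let $f$ be the given $n$-ary polymorphism and let $\pi\colon[n]\to[1]$ be the unique (constant) map, so that $f^\pi$ is the unary function $x\mapsto f(x,\dots,x)$; this is again a polymorphism from $\rel H$ to $\rel G$ because $\Pol(\rel H,\rel G)$ is a minion. Applying minor-preservation of the minion homomorphism $\nu\mu$ (established in the corollary above) to this $\pi$ gives $\nu\mu(f^\pi)=\nu\mu(f)^\pi$; since $\nu\mu(f)=\sum_{i=1}^n c_i x_i$, the right-hand side is the unary linear map $x\mapsto\bigl(\sum_{i=1}^n c_i\bigr)x$ on $\Fg{\Sphere^1}\simeq\mathbb Z$, i.e.\ multiplication by $\sum_{i=1}^n c_i$.

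Next I would evaluate the left-hand side directly. By definition $\nu\mu(f^\pi)=\nu(\mu(f^\pi))$, and $\mu(f^\pi)$ is a $1$-ary polymorphism of $\Sphere^1$, which is nothing but a \equivariant-map $g\colon\Sphere^1\to\Sphere^1$. For such a $g$ one has $\nu(g)=g_*\,e_1=g_*$, since $e_1$ is the identity on $\Fg{\Sphere^1}$; and by the definition of degree, $g_*$ is multiplication by the integer $d_g$. Comparing the two expressions for $\nu\mu(f^\pi)$ gives $\sum_{i=1}^n c_i=d_g$. Finally, Lemma~\ref{lem:odd} says that the degree of a \equivariant-map $\Sphere^1\to\Sphere^1$ is odd, so $d_g=\sum_{i=1}^n c_i$ is odd, as claimed.

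There is no genuine obstacle in this argument: all the substance was already packed into the construction of the minion homomorphism $\nu\mu$ and into Lemma~\ref{lem:odd} (the Borsuk--Ulam-flavoured input). The only points needing care are bookkeeping ones --- that the diagonal $f^\pi$ remains inside $\Pol(\rel H,\rel G)$, that minor-preservation is applied with $\pi$ running from arity $n$ down to arity $1$, and that a unary polymorphism of $\Sphere^1$ is literally a single \equivariant-self-map of the circle, so that the notion of degree and Lemma~\ref{lem:odd} apply to it verbatim.
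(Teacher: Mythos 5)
Your argument is correct and is essentially the paper's own proof: both pass to the unique unary (diagonal) minor, use minor-preservation of $\nu\mu$ to identify $\sum_{i=1}^n c_i$ with the degree of the induced \equivariant-self-map of $\Sphere^1$, and conclude oddness from Lemma~\ref{lem:odd}. The extra bookkeeping you supply (that $e_1$ is the identity in the unary case, and that a unary polymorphism of $\Sphere^1$ is literally a \equivariant-map $\Sphere^1\to\Sphere^1$) is accurate and harmless.
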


\begin{proof}
  Consider the (unique) unary minor $h(x) := f(x,\dots,x)$ of $f$. Since $\mu(h)$ is a~\equivariant-map, by Lemma~\ref{lem:odd} it has an odd degree, i.e., $\nu\mu(h)\colon x \mapsto d_hx$ for some odd $d_h$. Finally, since $\nu\mu$ preserves minors, we get that $d_h = \sum_{i\in [n]} c_i$ which we wanted to show to be odd.
\end{proof}

The bound on the sum of absolute values of coefficients is given by the discrete structure of the involved graphs. The key here is that there are only finitely many polymorphisms of a~fixed arity between two given finite graphs.

\begin{lemma} \label{lem:bound}
  Let $\clo M$ be a minion on finite sets $A,B$. Assume that $\xi\colon \clo M \to \Pol(\mathbb Z,\mathbb Z)$ is a minion homomorphism. Then there exists $N$ such that for all $f\in \clo M$,
  if $\xi(f)\colon (x_1,\dots,x_n) \mapsto \sum_{i=1}^n c_ix_i$,
  then $\sum_{i=1}^n \lvert c_i \rvert \leq N$.
\end{lemma}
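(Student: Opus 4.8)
The plan is to reduce the claimed uniform bound to the single arity $n=2$, where the finiteness of $\clo M$ does all the work. First I would note that, because $A$ and $B$ are finite, the set $\clo M^{(2)}$ of binary functions in $\clo M$ is finite (there are at most $\size B^{\size A^2}$ functions $A^2\to B$ in total), and it is non-empty: $\clo M$ is non-empty by definition of a minion, and collapsing any of its members to a binary minor (via any $\pi$ onto $[2]$) produces an element of $\clo M^{(2)}$. Hence $\xi(\clo M^{(2)})$ is a finite, non-empty set of binary linear functions on $\mathbb Z$, and I would simply \emph{define}
\[
  N := \max\left\{\, \size a + \size b \mid g\in\clo M^{(2)},\ \xi(g)(x_1,x_2)=ax_1+bx_2 \,\right\}.
\]

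Then, given an arbitrary $f\in\clo M$ of some arity $n$ with $\xi(f)(x_1,\dots,x_n)=\sum_{i=1}^n c_ix_i$, I would split the coordinates by the sign of their coefficient: let $\pi\colon[n]\to[2]$ be defined by $\pi(i)=1$ when $c_i\ge 0$ and $\pi(i)=2$ when $c_i<0$, and put $g:=f^\pi\in\clo M^{(2)}$. Since $\xi$ is a minion homomorphism, $\xi(g)=\xi(f)^\pi$, so
\[
  \xi(g)(x_1,x_2)=\sum_{i=1}^n c_i x_{\pi(i)}=\left(\sum_{i:\,c_i\ge 0} c_i\right)x_1+\left(\sum_{i:\,c_i<0} c_i\right)x_2 .
\]
Writing $a=\sum_{i:\,c_i\ge 0}c_i\ge 0$ and $b=\sum_{i:\,c_i<0}c_i\le 0$, the definition of $N$ gives $\sum_{i=1}^n\size{c_i}=a-b=\size a+\size b\le N$, as required.

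I do not expect any genuine obstacle here: the single idea is that a binary minor of a linear function records exactly the sum of its positive coefficients and the sum of its negative coefficients, so controlling the (finitely many) binary minors controls $\sum_i\size{c_i}$ for every arity at once. The only points to handle carefully are the convention for the minor operation (that $f^\pi(x_1,x_2)=f(x_{\pi(1)},\dots,x_{\pi(n)})$ and that $\xi$ preserves it in the direction of Definition~\ref{def:minion-homomorphism}) and the sign bookkeeping. Note the argument uses nothing about topology, nor the internal structure of $\clo Z_{\leq N}$ --- only that $A$ and $B$ are finite; applied to $\clo M=\Pol(\rel H,\rel G)$ and $\xi=\nu\mu$, and combined with Lemma~\ref{lem:odd-2}, it shows the image of $\nu\mu$ lies in $\clo Z_{\leq N}$, which is precisely what Theorem~\ref{thm:bounded-arity} needs.
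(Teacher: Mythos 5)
Your proof is correct and is essentially the paper's own argument: bound the coefficients over the finitely many binary members of $\clo M$, then for an arbitrary $f$ identify variables according to the sign of $c_i$ and use minor preservation to see that the resulting binary minor records $\sum_{c_i>0}c_i$ and $\sum_{c_i<0}c_i$. The only (immaterial) difference is which of the two classes absorbs the zero coefficients.
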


\begin{proof}
  We first consider binary functions. There are only finitely many functions $f\in \clo M^{(2)}$, so clearly the sum of the absolute values of the coefficients of $\xi(f)$ is bounded by some $N$. We argue that the same $N$ provides a bound for all other arities as well. Let $f\in \clo M$ and
  \(
    \xi(f)(x_1,\dots,x_n) = c_1x_1 +\dots + c_nx_n
  \).
  Let $\sigma\colon [n] \to \{0,1\}$ be defined as
  \[
    \sigma(i) = \begin{cases}
      0 & \text{if $c_i \leq 0$, and} \\
      1 & \text{if $c_i > 0$.}
    \end{cases}
  \]
  I.e., $\sigma^{-1}(1)$ is the set of all coordinates of $\xi(f)$ with positive coefficients, and $\sigma^{-1}(0)$ of those with negative or zero coefficients. Now, let $g = f^\sigma$, that is, $g$ is the minor of $f$ defined by
  \[
    g(x_0,x_1) = f(x_{\sigma(1)},\dots,x_{\sigma(n)}),
  \]
  so $g$ is obtained by identifying all variables of $f$ that induce positive coefficients on $\xi(f)$, and also all those that induce negative coefficients. We let $\xi(g)(x_0,x_1) = c^+ x_1 + c^- x_0$. Since $\xi$ preserves minors, we get that
  $c^+ = \sum_{c_i > 0} c_i$ and $c^- = \sum_{c_i < 0} c_i$. Note that $c^+ \geq 0$ and $c^- \leq 0$. Finally,
  \[
    \sum_{i\in [n]} \lvert c_i \rvert = \sum_{c_i > 0} c_i - \sum_{c_i < 0} c_i = c^+ - c^- = \lvert c^+ \rvert + \lvert c^- \rvert \leq N
  \]
  where the last inequality follows from the definition of $N$. \end{proof}

We can now finish the proof of Theorem~\ref{thm:main-s1}.

\begin{proof}[Proof of Theorem~\ref{thm:main-s1}]
  We assume that $\rel H,\rel G$ are two graphs such that $\rel H$ is non-bipartite, $\rel H
  \to \rel G$, and there is a \equivariant-map $s\colon \gBip{\rel G} \to
  \Sphere^1$. Let $\mu$ be as in Lemma~\ref{lem:mu}, $\nu$ as in Lemma~\ref{lem:nu}, and $N$ be the bound obtained from Lemma~\ref{lem:bound} for $\clo M = \Pol(\rel H,\rel G)$ and $\xi = \nu\mu$. We claim that $\nu\mu(f) \in \clo Z_{\leq N}$ for each $f\in \Pol(\rel H,\rel G)$. Assume $\nu\mu(f)\colon (x_1,\dots,x_n) \mapsto \sum_{i=1}^n c_ix_i$. We have that $\sum_{i=1}^n c_i$ is odd from Lemma~\ref{lem:odd-2}, and $\sum_{i=1}^n \lvert c_i \rvert \leq N$ by the choice of $N$. This concludes that there is a minion homomorphism from $\Pol(\rel H,\rel G)$ to a minion of bounded essential arity, namely $\clo Z_{\leq N}$, and thus Theorem~\ref{thm:main-s1} follows from Theorem~\ref{thm:bounded-arity}.
\end{proof}

\subsection{Proofs of Corollaries~\ref{thm:circular-cliques} and~\ref{thm:square-free}}

 To show that Theorem \ref{thm:main-s1} implies Corollaries \ref{thm:circular-cliques} and \ref{thm:square-free}, we need the following facts about the structure of $\gBip{\rel G}$ for the relevant graphs $\rel G$. These facts seem to be folklore, but we include proofs for completeness.

\begin{figure}
  \centering
  \begin{tikzpicture}[scale = 1.2, baseline={([yshift=-.5ex]current bounding box.center)}]
    \draw [gray] (0,0) circle (1cm);

    \foreach \i/\c in {0/0,51.42/1,102.85/2,205.71/4,257.14/5}
      \node (\c) [circle,fill=red,inner sep=1.5,label={\i:$\c$}] at (\i:1) {};
    \foreach \i/\c in {154.28/3,308.57/6}
      \node (\c) [circle,draw,fill=white,inner sep=1.5,label={\i:$\c$}] at (\i:1) {};

    \foreach \i/\j in {1/3,3/5,4/6,6/1,0/3,3/6,6/2}
      \draw (\i) -- (\j);
    \foreach \i/\j in {4/2,5/0,4/1,5/2,4/0,5/1}
      \draw [very thick,->,red] (\i) -- (\j);
  \end{tikzpicture}
  \quad
  \begin{tikzpicture}[scale = 1.2, baseline={([yshift=-.5ex]current bounding box.center)}]
      \fill [gray!20!white] (-1.22,0.97) -- (-1.9,0.43) -- (-1.9,-0.43) -- (-1.22,-0.97) -- cycle;
\fill [gray!20!white] (1.22,-0.97) -- (1.9,-0.43) -- (1.9,0.43) -- (1.22,0.97) -- cycle;
\fill [gray!20!white] (-1.52,-0.35) -- (-1.52,-1.22) -- (-0.85,-1.76) -- (0.0,-1.56) -- cycle;
\fill [gray!20!white] (1.52,0.35) -- (1.52,1.22) -- (0.85,1.76) -- (0.0,1.56) -- cycle;
\fill [gray!20!white] (-0.68,-1.41) -- (0.0,-1.95) -- (0.85,-1.76) -- (1.22,-0.97) -- cycle;
\fill [gray!20!white] (0.68,1.41) -- (0.0,1.95) -- (-0.85,1.76) -- (-1.22,0.97) -- cycle;
\fill [gray!20!white] (0.68,-1.41) -- (1.52,-1.22) -- (1.9,-0.43) -- (1.52,0.35) -- cycle;
\fill [gray!20!white] (-0.68,1.41) -- (-1.52,1.22) -- (-1.9,0.43) -- (-1.52,-0.35) -- cycle;
\fill [gray!20!white] (1.52,-0.35) -- (1.9,0.43) -- (1.52,1.22) -- (0.68,1.41) -- cycle;
\fill [gray!20!white] (-1.52,0.35) -- (-1.9,-0.43) -- (-1.52,-1.22) -- (-0.68,-1.41) -- cycle;
\fill [gray!20!white] (1.22,0.97) -- (0.85,1.76) -- (0.0,1.95) -- (-0.68,1.41) -- cycle;
\fill [gray!20!white] (-1.22,-0.97) -- (-0.85,-1.76) -- (0.0,-1.95) -- (0.68,-1.41) -- cycle;
\fill [gray!20!white] (0.0,1.56) -- (-0.85,1.76) -- (-1.52,1.22) -- (-1.52,0.35) -- cycle;
\fill [gray!20!white] (0.0,-1.56) -- (0.85,-1.76) -- (1.52,-1.22) -- (1.52,-0.35) -- cycle;

\fill [red!50!white] (1.9,0.43) -- (1.52,1.22) -- (0.85,1.76) -- (0.0,1.95) -- (0.68,1.41) -- (1.22,0.97) -- cycle;

\draw [thin,gray] (-1.22,0.97) -- (-1.22,0.97);
\draw [thin,gray] (-1.22,0.97) -- (-1.9,0.43);
\draw [thin,gray] (-1.22,0.97) -- (-1.9,-0.43);
\draw [thin,gray] (-1.22,0.97) -- (-1.22,-0.97);
\draw [thin,gray] (-1.22,0.97) -- (-0.85,1.76);
\draw [thin,gray] (-1.9,0.43) -- (-1.22,0.97);
\draw [thin,gray] (-1.9,0.43) -- (-1.9,0.43);
\draw [thin,gray] (-1.9,0.43) -- (-1.9,-0.43);
\draw [thin,gray] (-1.9,0.43) -- (-1.22,-0.97);
\draw [thin,gray] (-1.9,0.43) -- (-1.52,-0.35);
\draw [thin,gray] (-1.9,0.43) -- (-0.68,1.41);
\draw [thin,gray] (-1.9,0.43) -- (-1.52,1.22);
\draw [thin,gray] (-1.9,-0.43) -- (-1.22,0.97);
\draw [thin,gray] (-1.9,-0.43) -- (-1.9,0.43);
\draw [thin,gray] (-1.9,-0.43) -- (-1.9,-0.43);
\draw [thin,gray] (-1.9,-0.43) -- (-1.22,-0.97);
\draw [thin,gray] (-1.9,-0.43) -- (-1.52,-1.22);
\draw [thin,gray] (-1.9,-0.43) -- (-0.68,-1.41);
\draw [thin,gray] (-1.9,-0.43) -- (-1.52,0.35);
\draw [thin,gray] (-1.22,-0.97) -- (-1.22,0.97);
\draw [thin,gray] (-1.22,-0.97) -- (-1.9,0.43);
\draw [thin,gray] (-1.22,-0.97) -- (-1.9,-0.43);
\draw [thin,gray] (-1.22,-0.97) -- (-1.22,-0.97);
\draw [thin,gray] (-1.22,-0.97) -- (-0.85,-1.76);
\draw [thin,gray] (-1.22,-0.97) -- (0.0,-1.95);
\draw [thin,gray] (-1.22,-0.97) -- (0.68,-1.41);
\draw [thin,gray] (-1.52,-0.35) -- (-1.9,0.43);
\draw [thin,gray] (-1.52,-0.35) -- (-1.52,-0.35);
\draw [thin,gray] (-1.52,-0.35) -- (-1.52,-1.22);
\draw [thin,gray] (-1.52,-0.35) -- (-0.85,-1.76);
\draw [thin,gray] (-1.52,-0.35) -- (0.0,-1.56);
\draw [thin,gray] (-1.52,-0.35) -- (-0.68,1.41);
\draw [thin,gray] (-1.52,-0.35) -- (-1.52,1.22);
\draw [thin,gray] (-1.52,-1.22) -- (-1.9,-0.43);
\draw [thin,gray] (-1.52,-1.22) -- (-1.52,-0.35);
\draw [thin,gray] (-1.52,-1.22) -- (-1.52,-1.22);
\draw [thin,gray] (-1.52,-1.22) -- (-0.85,-1.76);
\draw [thin,gray] (-1.52,-1.22) -- (0.0,-1.56);
\draw [thin,gray] (-1.52,-1.22) -- (-0.68,-1.41);
\draw [thin,gray] (-1.52,-1.22) -- (-1.52,0.35);
\draw [thin,gray] (-0.85,-1.76) -- (-1.22,-0.97);
\draw [thin,gray] (-0.85,-1.76) -- (-1.52,-0.35);
\draw [thin,gray] (-0.85,-1.76) -- (-1.52,-1.22);
\draw [thin,gray] (-0.85,-1.76) -- (-0.85,-1.76);
\draw [thin,gray] (-0.85,-1.76) -- (0.0,-1.56);
\draw [thin,gray] (-0.85,-1.76) -- (0.0,-1.95);
\draw [thin,gray] (-0.85,-1.76) -- (0.68,-1.41);
\draw [thin,gray] (0.0,-1.56) -- (-1.52,-0.35);
\draw [thin,gray] (0.0,-1.56) -- (-1.52,-1.22);
\draw [thin,gray] (0.0,-1.56) -- (-0.85,-1.76);
\draw [thin,gray] (0.0,-1.56) -- (0.0,-1.56);
\draw [thin,gray] (0.0,-1.56) -- (0.85,-1.76);
\draw [thin,gray] (0.0,-1.56) -- (1.52,-1.22);
\draw [thin,gray] (0.0,-1.56) -- (1.52,-0.35);
\draw [thin,gray] (1.22,-0.97) -- (1.22,-0.97);
\draw [thin,gray] (1.22,-0.97) -- (-0.68,-1.41);
\draw [thin,gray] (1.22,-0.97) -- (0.0,-1.95);
\draw [thin,gray] (1.22,-0.97) -- (0.85,-1.76);
\draw [thin,gray] (1.22,-0.97) -- (1.9,-0.43);
\draw [thin,gray] (-0.68,-1.41) -- (-1.9,-0.43);
\draw [thin,gray] (-0.68,-1.41) -- (-1.52,-1.22);
\draw [thin,gray] (-0.68,-1.41) -- (1.22,-0.97);
\draw [thin,gray] (-0.68,-1.41) -- (-0.68,-1.41);
\draw [thin,gray] (-0.68,-1.41) -- (0.0,-1.95);
\draw [thin,gray] (-0.68,-1.41) -- (0.85,-1.76);
\draw [thin,gray] (-0.68,-1.41) -- (-1.52,0.35);
\draw [thin,gray] (0.0,-1.95) -- (-1.22,-0.97);
\draw [thin,gray] (0.0,-1.95) -- (-0.85,-1.76);
\draw [thin,gray] (0.0,-1.95) -- (1.22,-0.97);
\draw [thin,gray] (0.0,-1.95) -- (-0.68,-1.41);
\draw [thin,gray] (0.0,-1.95) -- (0.0,-1.95);
\draw [thin,gray] (0.0,-1.95) -- (0.85,-1.76);
\draw [thin,gray] (0.0,-1.95) -- (0.68,-1.41);
\draw [thin,gray] (0.85,-1.76) -- (0.0,-1.56);
\draw [thin,gray] (0.85,-1.76) -- (1.22,-0.97);
\draw [thin,gray] (0.85,-1.76) -- (-0.68,-1.41);
\draw [thin,gray] (0.85,-1.76) -- (0.0,-1.95);
\draw [thin,gray] (0.85,-1.76) -- (0.85,-1.76);
\draw [thin,gray] (0.85,-1.76) -- (1.52,-1.22);
\draw [thin,gray] (0.85,-1.76) -- (1.52,-0.35);
\draw [thin,gray] (1.9,-0.43) -- (1.22,-0.97);
\draw [thin,gray] (1.9,-0.43) -- (1.9,-0.43);
\draw [thin,gray] (1.9,-0.43) -- (1.52,0.35);
\draw [thin,gray] (1.9,-0.43) -- (0.68,-1.41);
\draw [thin,gray] (1.9,-0.43) -- (1.52,-1.22);
\draw [thin,gray] (1.52,0.35) -- (1.9,-0.43);
\draw [thin,gray] (1.52,0.35) -- (1.52,0.35);
\draw [thin,gray] (1.52,0.35) -- (0.68,-1.41);
\draw [thin,gray] (1.52,0.35) -- (1.52,-1.22);
\draw [thin,gray] (1.52,0.35) -- (0.0,1.56);
\draw [thin,gray] (0.68,-1.41) -- (-1.22,-0.97);
\draw [thin,gray] (0.68,-1.41) -- (-0.85,-1.76);
\draw [thin,gray] (0.68,-1.41) -- (0.0,-1.95);
\draw [thin,gray] (0.68,-1.41) -- (1.9,-0.43);
\draw [thin,gray] (0.68,-1.41) -- (1.52,0.35);
\draw [thin,gray] (0.68,-1.41) -- (0.68,-1.41);
\draw [thin,gray] (0.68,-1.41) -- (1.52,-1.22);
\draw [thin,gray] (1.52,-1.22) -- (0.0,-1.56);
\draw [thin,gray] (1.52,-1.22) -- (0.85,-1.76);
\draw [thin,gray] (1.52,-1.22) -- (1.9,-0.43);
\draw [thin,gray] (1.52,-1.22) -- (1.52,0.35);
\draw [thin,gray] (1.52,-1.22) -- (0.68,-1.41);
\draw [thin,gray] (1.52,-1.22) -- (1.52,-1.22);
\draw [thin,gray] (1.52,-1.22) -- (1.52,-0.35);
\draw [thin,gray] (1.9,0.43) -- (1.22,-0.97);
\draw [thin,gray] (1.9,0.43) -- (1.9,-0.43);
\draw [thin,gray] (1.9,0.43) -- (1.52,-0.35);
\draw [thin,gray] (1.52,1.22) -- (1.52,0.35);
\draw [thin,gray] (1.52,1.22) -- (1.52,-0.35);
\draw [thin,gray] (1.52,1.22) -- (0.0,1.56);
\draw [thin,gray] (0.68,1.41) -- (-1.22,0.97);
\draw [thin,gray] (0.68,1.41) -- (1.52,-0.35);
\draw [thin,gray] (0.68,1.41) -- (-0.85,1.76);
\draw [thin,gray] (1.52,-0.35) -- (0.0,-1.56);
\draw [thin,gray] (1.52,-0.35) -- (0.85,-1.76);
\draw [thin,gray] (1.52,-0.35) -- (1.52,-1.22);
\draw [thin,gray] (1.52,-0.35) -- (1.52,-0.35);
\draw [thin,gray] (1.22,0.97) -- (1.22,-0.97);
\draw [thin,gray] (1.22,0.97) -- (1.9,-0.43);
\draw [thin,gray] (1.22,0.97) -- (-0.68,1.41);
\draw [thin,gray] (0.85,1.76) -- (1.52,0.35);
\draw [thin,gray] (0.85,1.76) -- (-0.68,1.41);
\draw [thin,gray] (0.85,1.76) -- (0.0,1.56);
\draw [thin,gray] (0.0,1.95) -- (-1.22,0.97);
\draw [thin,gray] (0.0,1.95) -- (-0.68,1.41);
\draw [thin,gray] (0.0,1.95) -- (-0.85,1.76);
\draw [thin,gray] (-0.68,1.41) -- (-1.9,0.43);
\draw [thin,gray] (-0.68,1.41) -- (-1.52,-0.35);
\draw [thin,gray] (-0.68,1.41) -- (-0.68,1.41);
\draw [thin,gray] (-0.68,1.41) -- (-1.52,1.22);
\draw [thin,gray] (0.0,1.56) -- (1.52,0.35);
\draw [thin,gray] (0.0,1.56) -- (0.0,1.56);
\draw [thin,gray] (0.0,1.56) -- (-0.85,1.76);
\draw [thin,gray] (0.0,1.56) -- (-1.52,1.22);
\draw [thin,gray] (0.0,1.56) -- (-1.52,0.35);
\draw [thin,gray] (-0.85,1.76) -- (-1.22,0.97);
\draw [thin,gray] (-0.85,1.76) -- (0.0,1.56);
\draw [thin,gray] (-0.85,1.76) -- (-0.85,1.76);
\draw [thin,gray] (-0.85,1.76) -- (-1.52,1.22);
\draw [thin,gray] (-0.85,1.76) -- (-1.52,0.35);
\draw [thin,gray] (-1.52,1.22) -- (-1.9,0.43);
\draw [thin,gray] (-1.52,1.22) -- (-1.52,-0.35);
\draw [thin,gray] (-1.52,1.22) -- (-0.68,1.41);
\draw [thin,gray] (-1.52,1.22) -- (0.0,1.56);
\draw [thin,gray] (-1.52,1.22) -- (-0.85,1.76);
\draw [thin,gray] (-1.52,1.22) -- (-1.52,1.22);
\draw [thin,gray] (-1.52,1.22) -- (-1.52,0.35);
\draw [thin,gray] (-1.52,0.35) -- (-1.9,-0.43);
\draw [thin,gray] (-1.52,0.35) -- (-1.52,-1.22);
\draw [thin,gray] (-1.52,0.35) -- (-0.68,-1.41);
\draw [thin,gray] (-1.52,0.35) -- (0.0,1.56);
\draw [thin,gray] (-1.52,0.35) -- (-0.85,1.76);
\draw [thin,gray] (-1.52,0.35) -- (-1.52,1.22);
\draw [thin,gray] (-1.52,0.35) -- (-1.52,0.35);
\draw [thin,red] (1.9,0.43) -- (1.9,0.43);
\draw [thin,red] (1.9,0.43) -- (1.52,1.22);
\draw [thin,red] (1.9,0.43) -- (0.68,1.41);
\draw [thin,red] (1.9,0.43) -- (1.22,0.97);
\draw [thin,red] (1.52,1.22) -- (1.9,0.43);
\draw [thin,red] (1.52,1.22) -- (1.52,1.22);
\draw [thin,red] (1.52,1.22) -- (0.85,1.76);
\draw [thin,red] (1.52,1.22) -- (0.68,1.41);
\draw [thin,red] (0.85,1.76) -- (1.52,1.22);
\draw [thin,red] (0.85,1.76) -- (0.85,1.76);
\draw [thin,red] (0.85,1.76) -- (0.0,1.95);
\draw [thin,red] (0.85,1.76) -- (1.22,0.97);
\draw [thin,red] (0.0,1.95) -- (0.85,1.76);
\draw [thin,red] (0.0,1.95) -- (0.0,1.95);
\draw [thin,red] (0.0,1.95) -- (0.68,1.41);
\draw [thin,red] (0.0,1.95) -- (1.22,0.97);
\draw [thin,red] (0.68,1.41) -- (1.9,0.43);
\draw [thin,red] (0.68,1.41) -- (1.52,1.22);
\draw [thin,red] (0.68,1.41) -- (0.0,1.95);
\draw [thin,red] (0.68,1.41) -- (0.68,1.41);
\draw [thin,red] (1.22,0.97) -- (1.9,0.43);
\draw [thin,red] (1.22,0.97) -- (0.85,1.76);
\draw [thin,red] (1.22,0.97) -- (0.0,1.95);
\draw [thin,red] (1.22,0.97) -- (1.22,0.97);
\node [circle,fill,inner sep=1] (02) at (-1.22,0.97) {};
\node [circle,fill,inner sep=1] (03) at (-1.9,0.43) {};
\node [circle,fill,inner sep=1] (04) at (-1.9,-0.43) {};
\node [circle,fill,inner sep=1] (05) at (-1.22,-0.97) {};
\node [circle,fill,inner sep=1] (13) at (-1.52,-0.35) {};
\node [circle,fill,inner sep=1] (14) at (-1.52,-1.22) {};
\node [circle,fill,inner sep=1] (15) at (-0.85,-1.76) {};
\node [circle,fill,inner sep=1] (16) at (0.0,-1.56) {};
\node [circle,fill,inner sep=1] (20) at (1.22,-0.97) {};
\node [circle,fill,inner sep=1] (24) at (-0.68,-1.41) {};
\node [circle,fill,inner sep=1] (25) at (0.0,-1.95) {};
\node [circle,fill,inner sep=1] (26) at (0.85,-1.76) {};
\node [circle,fill,inner sep=1] (30) at (1.9,-0.43) {};
\node [circle,fill,inner sep=1] (31) at (1.52,0.35) {};
\node [circle,fill,inner sep=1] (35) at (0.68,-1.41) {};
\node [circle,fill,inner sep=1] (36) at (1.52,-1.22) {};
\node [circle,fill,inner sep=1] (46) at (1.52,-0.35) {};
\node [circle,fill,inner sep=1] (53) at (-0.68,1.41) {};
\node [circle,fill,inner sep=1] (61) at (0.0,1.56) {};
\node [circle,fill,inner sep=1] (62) at (-0.85,1.76) {};
\node [circle,fill,inner sep=1] (63) at (-1.52,1.22) {};
\node [circle,fill,inner sep=1] (64) at (-1.52,0.35) {};
\node [circle,fill,red,inner sep=1,label={[red,anchor=west,rotate=13]13:$(4,0)$}] (40) at (1.9,0.43) {};
\draw [very thick,->,red] (0,0) -- (40);
\node [circle,fill,red,inner sep=1,label={[red,anchor=west,rotate=39]39:$(4,1)$}] (41) at (1.52,1.22) {};
\draw [very thick,->,red] (0,0) -- (41);
\node [circle,fill,red,inner sep=1,label={[red,anchor=west,rotate=64]64:$(5,1)$}] (51) at (0.85,1.76) {};
\draw [very thick,->,red] (0,0) -- (51);
\node [circle,fill,red,inner sep=1,label={[red,anchor=west,rotate=90]90:$(5,2)$}] (52) at (0.0,1.95) {};
\draw [very thick,->,red] (0,0) -- (52);
\node [circle,fill,red,inner sep=1,label={[red,anchor=north east,rotate=64]-116:$(4,2)$}] (42) at (0.68,1.41) {};
\draw [very thick,->,red] (0,0) -- (42);
\node [circle,fill,red,inner sep=1,label={[red,anchor=north east,rotate=38]-142:$(5,0)$}] (50) at (1.22,0.97) {};
\draw [very thick,->,red] (0,0) -- (50);

   \end{tikzpicture}
  \caption{$\rel K_{7/2}$ with a complete bipartite subgraph (on the left) and the corresponding face of $\gBip{\rel K_{7/2}}$ after mapping to $\mathbb R^2$ (on the right).}
    \label{fig:faces-of-circular-cliques}
\end{figure}

\begin{lemma} \label{lem:graphs-and-s1}
  For any $2<p/q<4$ and any square-free non-bipartite graph $\rel G$, there exist \equivariant-maps
  \begin{enumerate}
    \item \( s_1\colon \gBip{\rel K_{p/q}} \to \Sphere^1 \), and \item \( s_2\colon \gBip{\rel G} \to \Sphere^1 \). \end{enumerate}
\end{lemma}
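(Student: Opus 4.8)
For the lemma, the plan is to treat the two parts uniformly. In each part we place the vertices of the graph on the unit circle $\Sphere^1\subseteq\RR^2$ and use this placement to assign to every oriented edge of the graph --- i.e.\ to every vertex of $\gBip{\rel G}$ --- a vector $\phi(a,b)\in\RR^2$ with $\phi(b,a)=-\phi(a,b)$, arranged so that on every \emph{maximal} face $\sigma$ of $\gBip{\rel G}$ the vectors $\{\phi(v):v\in\sigma\}$ all lie in one open half-plane through the origin. Let $f_0\colon\gBip{\rel G}\to\RR^2$ be the piecewise-linear extension $f_0(\sum_{v\in\sigma}\lambda_v v)=\sum_{v\in\sigma}\lambda_v\phi(v)$. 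Since every face of $\gBip{\rel G}$ is contained in a maximal one (a complete directed bipartite subgraph $U\times V\subseteq E(\rel G)$), and $f_0$ maps each simplex $\Delta^\sigma$ into the convex hull of the corresponding $\phi$-values, $f_0$ omits $0$; hence $s\colon x\mapsto f_0(x)/\size{f_0(x)}$ is a well-defined continuous map $\gBip{\rel G}\to\Sphere^1$, and it is a $\ZZ_2$-map because the $\ZZ_2$-action on $\geom{\cdot}$ is coordinatewise sign reversal and $\phi(b,a)=-\phi(a,b)$. It thus suffices in each part to construct such a $\phi$.

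For part~(1) take $\phi(a,b)=x_b-x_a$, where $x_j=(\cos(2\pi j/p),\sin(2\pi j/p))$ are the $p$-th roots of unity; this is the map of Example~\ref{ex:hom-k3} applied to $\rel K_{p/q}$ (see Figure~\ref{fig:faces-of-circular-cliques}). The crucial use of $p/q<4$ is that a complete bipartite subgraph $U\times V\subseteq E(\rel K_{p/q})$ cannot ``interleave'' on the $p$-cycle, i.e.\ there are no $u_1,u_2\in U$, $v_1,v_2\in V$ occurring in cyclic order $u_1,v_1,u_2,v_2$: the four consecutive pairs $u_1v_1,v_1u_2,u_2v_2,v_2u_1$ would all be edges of $\rel K_{p/q}$ while the directed arcs $u_1{\to}v_1{\to}u_2{\to}v_2{\to}u_1$ partition the cycle, and a short case check on the four arc lengths (which sum to $p$, with each realising a cyclic distance $\geq q$) forces $p\geq 4q$, a contradiction. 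Hence $U$ and $V$ lie in disjoint cyclic arcs, of lengths $a$ and $b$, separated by two gaps of length $\geq 1$, so $a+b\leq p-2$. A direct computation then shows that for $u$ in the $U$-arc and $v$ in the $V$-arc the chord $x_v-x_u$ is a positive multiple of the unit vector at angle $\tfrac{\pi(u+v)}{p}+\tfrac{\pi}{2}$, and over $U\times V$ this angle ranges in an interval of length $\tfrac{\pi(a+b)}{p}<\pi$; so all these chords lie in an open half-plane, as needed.

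For part~(2) the square-freeness of $\rel G$ says exactly that every complete bipartite subgraph $U\times V\subseteq E(\rel G)$ has $\size{U}=1$ or $\size{V}=1$ (otherwise four distinct vertices would span a $\rel C_4\subseteq\rel G$), so every maximal face of $\gBip{\rel G}$ is a ``star'' $\{u\}\times N(u)$ or $N(u)\times\{u\}$. Fix any $z\colon V(\rel G)\to\Sphere^1$ with $z_u\neq z_v$ whenever $u\sim v$ (possible since $\Sphere^1$ is infinite), and let $\phi(u,v)$ be the midpoint of the shorter arc of $\Sphere^1$ from $z_u$ to $-z_v$; this is well defined since $z_u$ and $-z_v$ are not antipodal, and clearly $\phi(v,u)=-\phi(u,v)$. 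As that shorter arc has length $<\pi$, its midpoint $\phi(u,v)$ is at distance $<\pi/2$ from $z_u$; hence $\{\phi(u,v):v\in N(u)\}$ lies in the open half-circle centred at $z_u$, and $\{\phi(v,u):v\in N(u)\}$ in the one centred at $-z_u$, so $\phi$ has the required half-plane property on every star. Feeding these choices of $\phi$ into the construction of the first paragraph yields $s_1$ and $s_2$.

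The routine parts are the trigonometric estimate in~(1), the arc-length estimate in~(2), and the verification that $f_0/\size{f_0}$ is a well-defined continuous $\ZZ_2$-map. The one genuinely load-bearing point is the no-interleaving dichotomy in part~(1): this is where $p/q<4$ is used, and it is exactly what fails for $\rel K_4=\rel K_{4/1}$, for which $\{0,2\}\times\{1,3\}$ interleaves and the corresponding chords surround the origin, so the construction --- correctly --- breaks down.
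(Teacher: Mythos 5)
Your proof is correct. Part (1) is essentially the paper's argument: the same chord map $(a,b)\mapsto x_b-x_a$ on a regular $p$-gon, the same use of $p/q<4$ to rule out interleaving of the two sides of a complete bipartite subgraph of $\rel K_{p/q}$ (the paper phrases it as four pairwise distances each exceeding $\pi/2$ summing to more than $2\pi$, you phrase it as four arc lengths each at least $q$ summing to $p<4q$ --- the same count), and the same conclusion that each face's image has convex hull avoiding the origin, so one may normalise. For part (2) you rely on the same key observation --- square-freeness forces every face of $\Bip{\rel G}$ to sit inside an out-star $\{u\}\times N(u)$ or an in-star $N(u)\times\{u\}$ --- but you then give a different construction. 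The paper fixes an arbitrary orientation $E^+\cup E^-$ of the edges, sends all forward edges to $(1,0)$ and all backward edges to $(-1,0)$, and maps out-stars and in-stars onto the upper and lower semicircles via an explicit $(x,\pm\sqrt{1-x^2})$ parametrisation. You instead pick a labelling $z\colon V(G)\to\Sphere^1$ with $z_u\neq z_v$ on edges, send the oriented edge $(u,v)$ to the midpoint of the shorter arc from $z_u$ to $-z_v$, and reuse the linearise-and-normalise template from part (1); the half-plane condition on each star then follows from the midpoint being within angular distance $\pi/2$ of $z_u$. Your version has the advantage of treating both parts with one uniform template and avoiding the piecewise square-root formulas; the paper's version is more canonical (it needs only an orientation, not a circle-valued vertex labelling) and makes visible the general fact it cites, that every free essentially $1$-dimensional $\ZZ_2$-complex admits a $\ZZ_2$-map to $\Sphere^1$. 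Both arguments are complete.
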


\begin{proof}
  (1)
  We first define a \equivariant-map $g\colon \geom{\Bip{\rel K_{p/q}}} \to \mathbb R^2$. We will show that the origin $0$ is not in the image of $g$, which then implies that the map $x \mapsto g(x)/\size {g(x)}$ is a \equivariant-map to $\Sphere^1$.

  First, we define $g$ on vertices of the complex, i.e., oriented edges of $\rel K_{p/q}$: Place $p$ points $x_0,\dots,x_{p-1}$ on $\Sphere^1$ in a regular $p$-gon pattern. Map the edge $(a,b)$ to the point $x_b - x_a$. Then extend $g$ linearly to the interior points of faces. Clearly, $g$ is a \equivariant-map. See Fig.~\ref{fig:faces-of-circular-cliques} for a visualisation of $g$ for $\rel K_{7/2}$.

  Let $\sigma$ be a face. That is, $\sigma \subseteq A \times B\subseteq E(K_{p/q})$ for some non-empty sets of vertices $A,B$.
  For an edge $(a,b)$, the distance between $x_a,x_b$ on the circle (the length of the shorter arc between them) is at least $2\pi\cdot q/p$.
  Since $p/q < 4$, this is greater than $\pi/2$.
  Hence there are no $a,a' \in A, b,b' \in B$ such that $x_a,x_b,x_{a'},x_{b'}$ occur in this order on the circle, as the distances would add up to more than $2\pi$.
  Therefore, there is a line in $\RR^2$ that strictly separates $\{x_a \mid a \in A\}$ from $\{x_b \mid b \in B\}$ (indeed, scanning the circle clockwise, there is exactly one interval from $A$ to $B$ and exactly one from $B$ to $A$, both of length at least $\pi/2$; any line crossing these intervals will do).
  This implies that the convex hull of vectors $x_a - x_b$ cannot contain 0 (since each such vector has a positive dot product with the normal vector of the line).

  (2)
  This statement follows from observing that $\gBip{\rel G}$ is essentially 1-dimensional which loosely follows from the facts that there are no copies of the complete bipartite graph $\rel K_{2,2}$ and that every free \equivariant-space of dimension 1 maps to $\Sphere^1$ (see \cite[Proposition 5.3.2(v)]{Mat03}). We present a compressed version of the two arguments.

  Let $E^+ \cup E^- = V(\Bip{\rel G})$ be an arbitrary partition into two sets that are swapped by reversing the edges, i.e., $-E^+ = E^-$ and $-E^- = E^+$. This means that we choose an orientation for each edge of $\rel G$, and denote by $E^+$ the set of all edges of $\rel G$ oriented the chosen way, while $E^-$ is the set of all edges oriented the opposite way. We define a mapping $h\colon \gBip{\rel G} \to \Sphere^1$ on the vertices of $\gBip{\rel G}$ by setting $h(e) = (1,0)$ if $e\in E^+$ and $h(e) = (-1,0)$ if $e\in E^-$.

  We extend this mapping to inner points of faces. First, observe that for every face $\sigma \in \Bip{\rel G}$ with at least two elements, there is a vertex $u\in V(G)$ such that either $\sigma \subseteq \{ (u,v) \in E(G) \}$ or $\sigma \subseteq \{ (v,u) \in E(G) \}$, as otherwise we can find a copy of $\rel C_4$ in $\rel G$.
  We map $\geom\sigma$ for faces of the first form to the arc connecting $(1,0)$ and $(-1,0)$ with positive $y$ coordinates, and $\geom\sigma$ for faces of the second form to the arc with negative $y$ coordinates. More precisely, if $\sigma = \{ (u,v_1),\dots,(u,v_n) \}$, we let
  \(
    a = \sum_{i=1}^n \lambda_i \ (u,v_i)
  \)
  be a point of $\geom\sigma$. Put
  \[
    x = \sum_{(u,v_i)\in E^+} \lambda_i - \sum_{(u,v_i) \in E^-} \lambda_i
  \]
  and $y = \sqrt{1-x^2}$ (note that $\size x \leq 1$, so $y$ is well-defined), and define $h(a) = (x,y)$.
  Now to preserve the \equivariant-action, we map the geometric representations of the faces of the second form to the arc with negative $y$ coordinates analogously putting $y = -\sqrt{1-x^2}$. Clearly, the mapping $h$ defined this way is continuous and it is easy to check that indeed $h(-a) = -h(a)$ for each $a\in \gBip{\rel G}$.
\end{proof}
 
\section{Adjunction} \label{sec:adjunction}

In this section we will use both graphs and digraphs, which by default are allowed to have loops.
We will work with certain (di)graph constructions that can be seen as functions from the set of all finite (di)graphs to itself. On one occasion in this section (Subsection~\ref{subsubsec:all-adj}), we will allow the image of a finite digraph to be an infinite digraph; this will be specified.  We denote the set of all finite graphs and digraphs by $\graf$ and $\dgraf$, respectively. The class of all (finite and infinite) digraphs is denoted by $\dgra$.

In this section, we explain what \emph{adjunction} is and how it can be used to obtain reductions between PCSPs. The notion of adjointness we present is a special case of the more general notion of adjoint functors in category theory. We restrict our attention to an order-theoretic version thereof (i.e., to posetal or thin categories), which only considers the existence of homomorphisms; this is also known as a (monotone) Galois connection.
Generally, a monotone Galois connection between two preordered sets $P_1$ and $P_2$ is pair of maps
$\lambda\colon P_1\rightarrow P_2$ and $\gamma\colon P_2\rightarrow P_1$ such that, for all $a\in P_1$ and $b\in P_2$,
\begin{equation} \label{eq:Galois}
    \lambda (a)\le b
      \quad\text{if and only if}\quad
    a\le \gamma(b).
  \end{equation}
For us, the preorder $\le$ will always be the homomorphism preorder $\rightarrow$, and the sets $P_1$ and $P_2$ will be either $\dgraf$ or $\graf$.
In this case, $\Lambda$ and $\Gamma$ are adjoint if, for all (di)graphs $\rel H$ and $\rel G$, we have
\begin{equation} \label{eq:adjunction}
    \Lambda \rel H \to \rel G
          \quad\text{if and only if}\quad
   \rel H \to \Gamma \rel G.
  \end{equation}
In this case $\Lambda$ is a~\emph{left adjoint} and $\Gamma$ is a~\emph{right adjoint}. If, for some $\Lambda$, there exists such $\Gamma$ we also say that $\Lambda$ \emph{has (or admits) a right adjoint}. Similarly, we say that $\Gamma$ \emph{has a~left adjoint} if there exists such $\Lambda$.

Adjunction is an abstraction of a few concepts that are already present in the theory of (P)CSPs: notably, the $\Inv$-$\Pol$ Galois correspondences of Geiger, Bodnarchuk, Kaluzhnin, Kotov, and Romov \cite{Gei68,BKKR69,BKKR69a} and Pippenger \cite{Pip02} between sets of functions and sets of relations can also been seen as adjunctions where, in~(\ref{eq:Galois}), the preorder on one side is the inclusion and the preorder on the other side is the inverse inclusion.
We remark that many constructions described in~\cite[Sections 3 and 4]{BBKO19} (see e.g.\ Lemma~4.4 there)
form pairs of adjoint functions. We also remark that condition~(\ref{eq:adjunction}) makes perfect sense when $\Lambda$ and $\Gamma$ are maps between the sets of relational structures of different signatures (say, between the set of all finite digraphs and the set of all finite 3-uniform hypergraphs), and all results in Subsection~\ref{subsec:gen-adj} hold in this more general setting.

This section is organised as follows. In Subsection~\ref{subsec:adj-csp}, we show that the standard gadget reductions from the algebraic approach to the CSP can be seen as a special case of adjunction. In Subsection~\ref{subsec:gen-adj}, we give general results about adjunctions and reductions between PCSPs. In Subsection~\ref{sec:righthard}, we apply specific cases of adjunction to prove our results about the hardness of approximate graph colouring and demonstrate that the reductions between PCSPs obtained there cannot be captured by the algebraic theory from~\cite{BKO19,BBKO19}. Finally, in Subsection~\ref{subsec:secondMainProof}, we use another specific adjunction to prove that, in a precise technical sense, the complexity of promise graph homomorphism problem depends only on the topological properties of graphs.

To emphasise that many of our proofs in this section do not assume computability of reductions, we will use the following definition.

\begin{definition}
Let $\Lambda$ be a function from $\dgraf$ to $\dgraf$ or from $\graf$ to $\graf$.
We say that $\Lambda$ is
\begin{itemize}
  \item \emph{a reduction} from $\PCSP(\rel H_1,\rel G_1)$ to $\PCSP(\rel H_2,\rel G_2)$ if it preserves the \yes- and \no-answers of the two problems, i.e., for any $\rel I$, $\rel I \to \rel H_1$ implies $\Lambda \rel I \to \rel H_2$ and $\rel I \not\to \rel G_1$ implies $\Lambda \rel I \not\to \rel G_2$. Preserving \yes-answers is also called \emph{completeness} and preserving \no-answers \emph{soundness};
  \item \emph{log-space/polynomial-time computable} if there is a log-space/polynomial-time algorithm that on input $\rel I$ outputs $\Lambda \rel I$;
\end{itemize}
\end{definition}

\subsection{Adjunction in CSPs}
  \label{subsec:adj-csp}

  The algebraic approach to the CSP studies certain constructions on templates of CSPs called \emph{pp-powers}, and it asserts that if $\rel A$ is a pp-power of $\rel B$ then there is a log-space reduction from $\CSP(\rel A)$ to $\CSP(\rel B)$. This reduction is a function from instances of $\CSP(\rel A)$ to $\CSP(\rel B)$ computable in log-space that we call a \emph{gadget replacement}.
  In fact, any such reduction is a left adjoint to the corresponding pp-power construction.
  We present the notions simplified for digraphs and refer to \cite[Section 3.1]{BKW17} for more background. We will use the constructions from Example~\ref{ex:walk-power} as a running example in this subsection.

  Both functions are parameterised by a \emph{gadget} or a~\emph{primitive positive formula} (a \emph{pp-formula}), thus giving a reduction for each gadget. A~\emph{digraph gadget formula} of arity $n$ is a logical formula $\phi(x_1,\dots,x_n,y_1,\dots,y_n)$ of the form
  \[
    \exists {z_1,\dots,z_m}\;.\;(u_1,v_1) \in E\ \wedge\ \dots\ \wedge\ (u_k,v_k) \in E\ \wedge\ (u'_1=v'_1)\ \wedge\ \dots\ \wedge\ (u'_\ell=v'_\ell).
  \]
  where $u_i,v_i,u'_i,v'_i \in \{x_1,\dots,x_n,y_1,\dots,y_n,z_1,\dots,z_m\}$. Such a formula can also be represented by a \emph{gadget digraph} $\rel J_\phi$, which is a digraph with distinguished vertices labelled $x_1,\dots,x_n,y_1,\dots,y_n$, obtained from vertices $\{x_1,\dots,x_n,y_1,\dots,y_n,z_1,\dots,z_m\}$ and edges $\{ (u_i,v_i) \mid i\in [k] \}$ by identifying some of the vertices (according to the equalities in $\phi$).

  If we want a gadget to transform an undirected graph to an undirected graph, we require that the gadget is \emph{symmetric}, i.e., that the formula $\phi(x_1,\dots,x_n,y_1,\dots,y_n)$ is logically equivalent to $\phi(y_1,\dots,y_n,x_1,\dots,x_n)$ for all graphs, or that the gadget graph has an automorphism switching $x_i$ and $y_i$ for each $i$.

  The subdivision from Example~\ref{ex:walk-power} is defined by the following digraph gadget formula of arity $1$:
  \[
    \phi_k(x, y) = \exists z_1, \dots, z_{k-1} \;.\;
      (x, z_1) \in E \wedge (z_1, z_2) \in E \wedge \dots \wedge (z_{k-1}, y) \in E
  \]
  which is symmetric if we only consider undirected graphs. The corresponding gadget would be an (unoriented) path of length $k$ connecting the two distinguished vertices $x$ and~$y$.

  \begin{definition}
  The \emph{gadget replacement} $\Lambda_\phi$ assigned to a~digraph gadget $\phi$ is then defined by applying the following construction. Starting with a~digraph $\rel H$,
  \begin{enumerate}
    \item for each vertex $v \in V(H)$, introduce new vertices $v_1,\dots,v_n \in V(\Lambda_\phi H)$,
    \item for each edge $(u,v) \in E(H)$, introduce a fresh copy of the gadget digraph $\rel J_\phi$ while identifying $x_1,\dots,x_n$ with $u_1,\dots,u_n$ and $y_1,\dots,y_n$ with $v_1,\dots, v_n$; we denote the remaining vertices of this copy of $\rel J_\phi$ by $z_{u,v}$ for $z\in \{z_1,\dots,z_m\}$.
  \end{enumerate}
  Note that some of the vertices $u_i,v_j$ above might get identified, which can also result in long chains of identifications.
  Nevertheless, $\Lambda_\phi$ is log-space computable.
  \end{definition}

  \begin{figure}
    \[\begin{matrix}
      \begin{tikzpicture}[scale=.9, baseline={([yshift=-.5ex]current bounding box.center)}]
        \foreach \i/\c in {0/0,120/1,240/2} {
          \node [circle,fill,inner sep=1.5,label={\i:$\c$}] at (\i:1) {};
          \draw [thick] (\i:1) -- (120+\i:1);
        }
      \end{tikzpicture} & \!\!\stackrel{\Lambda_{\phi_3}}\mapsto\!\! &
      \begin{tikzpicture}[scale=.9, baseline={([yshift=-.5ex]current bounding box.center)}]
        \foreach \i/\c in {0/0,40/\relax,80/\relax,120/1,160/\relax,200/\relax,240/2,280/\relax,320/\relax} {
          \node [circle,fill,inner sep=1.5,label={\i:$\c$}] at (\i:1) {};
          \draw [thick] (\i:1) -- (40+\i:1);
        }
      \end{tikzpicture} & &
      \begin{tikzpicture}[scale=.9, baseline={([yshift=-.5ex]current bounding box.center)}]
        \foreach \i/\c in {0/0,40/1,80/2,120/3,160/4,200/5,240/6,280/7,320/8} {
          \node [circle,fill,inner sep=1.5,label={\i:$\c$}] at (\i:1) {};
          \draw [thick] (\i:1) -- (40+\i:1);
          \draw [thick] (\i:1) -- (120+\i:1);
        }
      \end{tikzpicture} & \!\!\stackrel{\Gamma_{\phi_3}}\mapsfrom\!\! &
      \begin{tikzpicture}[scale=.9, baseline={([yshift=-.5ex]current bounding box.center)}]
        \foreach \i/\c in {0/0,40/1,80/2,120/3,160/4,200/5,240/6,280/7,320/8} {
          \node [circle,fill,inner sep=1.5,label={\i:$\c$}] at (\i:1) {};
          \draw [thick] (\i:1) -- (40+\i:1);
        }
      \end{tikzpicture}
    \end{matrix}\]
    \caption{Example of gadget replacement and a pp-power.}
    \label{ex:4.1}
  \end{figure}

  The subdivision $\Lambda_k$ of Example~\ref{ex:walk-power} is the same as the gadget replacement $\Lambda_{\phi_k}$. We show an example of application in Figure~\ref{ex:4.1}.

  \begin{definition}
  The \emph{pp-power} $\Gamma_\phi \rel G$ of a~digraph $\rel G$ defined by $\phi$ is obtained by the following construction.
  \begin{enumerate}
    \item $V(\Gamma_\phi G) = V(G)^n$, and
    \item $((u_1,\dots,u_n),(v_1,\dots,v_n)) \in E(\Gamma_\phi G)$ if $\phi(u_1,\dots,u_n,v_1,\dots,v_n)$ is true in $\rel G$, in other words, there exists a homomorphism $e_{u,v}$ from the gadget digraph $\rel J_\phi$ to $\rel G$ such that $e_{u,v}(x_i) = u_i$ and $e_{u,v}(y_i) = v_i$ for all $i$.
  \end{enumerate}
  \end{definition}

  Again, it is not hard to check that $\Gamma_k\colon\dgraf\to\dgraf$ from Example~\ref{ex:walk-power} is the same as the pp-power $\Gamma_{\phi_k}$, i.e., it is the graph on the same vertex set where two vertices are connected by an edge iff they are connected by a path of length $k$ in the original graph. Again, see Figure~\ref{ex:4.1} for an application.

  \medskip

  The standard reductions used in the algebraic approach are of the form $\Lambda_\phi$: it is well-known (see \cite[Theorem 13]{BKW17}) that $\Lambda_\phi$ is a~reduction from $\CSP(\Gamma_\phi \rel G)$ to $\CSP(\rel G)$ for any digraph $\rel G$.
  For example, it is not hard to see that $\Lambda_{\phi_3}$ from our running example is really a reduction from $\CSP(\rel K_3)$ to $\CSP(\rel C_9)$ since $\rel K_3$ is homomorphically equivalent to $\Gamma_{\phi_3}\rel C_9$, and hence it has the same CSP.
  This observation also follows immediately from the fact that $\Lambda_\phi$ and $\Gamma_\phi$ are adjoint, which we show here directly.

  \begin{observation}
    Let $\phi$ be a pp-formula. Then $\Lambda_\phi$ and $\Gamma_\phi$ are adjoint.
  \end{observation}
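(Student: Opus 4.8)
The plan is to prove the equivalence $\Lambda_\phi \rel H \to \rel G$ iff $\rel H \to \Gamma_\phi \rel G$ by unwinding the two constructions and translating a homomorphism in one direction into a homomorphism in the other. It is convenient to view $\Lambda_\phi \rel H$ concretely as the digraph obtained from the disjoint union of the isolated vertices $\{v_i : v\in V(H),\ i\in[n]\}$ together with one fresh copy $\rel J^{(u,v)}$ of the gadget digraph $\rel J_\phi$ for each edge $(u,v)\in E(H)$, after identifying the distinguished vertices $x_i$ and $y_i$ of $\rel J^{(u,v)}$ with $u_i$ and $v_i$ respectively. Write $q$ for the associated quotient map (a homomorphism) and $q_{u,v}\colon \rel J_\phi \to \Lambda_\phi \rel H$ for its restriction to the copy $\rel J^{(u,v)}$; then $q_{u,v}(x_i)=u_i$, $q_{u,v}(y_i)=v_i$, and every edge of $\Lambda_\phi \rel H$ is of the form $(q_{u,v}(a),q_{u,v}(b))$ for some edge $(a,b)$ of $\rel J_\phi$ and some $(u,v)\in E(H)$, since the isolated vertices carry no edges.

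For the direction from left to right, I would take a homomorphism $g\colon \Lambda_\phi \rel H \to \rel G$ and set $h(v):=(g(v_1),\dots,g(v_n))$ for $v\in V(H)$. Given any edge $(u,v)\in E(H)$, the composite $g\circ q_{u,v}\colon \rel J_\phi \to \rel G$ is a homomorphism sending $x_i\mapsto g(u_i)$ and $y_i\mapsto g(v_i)$, which is precisely a witness that $\phi\bigl(g(u_1),\dots,g(u_n),g(v_1),\dots,g(v_n)\bigr)$ holds in $\rel G$; hence $(h(u),h(v))\in E(\Gamma_\phi G)$, so $h\colon \rel H \to \Gamma_\phi \rel G$ is a homomorphism.

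For the converse, starting from $h\colon \rel H\to \Gamma_\phi \rel G$, write $h(v)=(h_1(v),\dots,h_n(v))$. For each $(u,v)\in E(H)$, membership $(h(u),h(v))\in E(\Gamma_\phi G)$ supplies a homomorphism $e_{u,v}\colon \rel J_\phi\to\rel G$ with $e_{u,v}(x_i)=h_i(u)$ and $e_{u,v}(y_i)=h_i(v)$. I would define a map on the above disjoint union by $v_i\mapsto h_i(v)$ and by $e_{u,v}$ on each copy $\rel J^{(u,v)}$, then argue it factors through $q$: the identifications defining $\Lambda_\phi\rel H$ are generated by the pairs $(x_i \text{ of } \rel J^{(u,v)},\, u_i)$ and $(y_i \text{ of } \rel J^{(u,v)},\, v_i)$, on which the two definitions agree (both give $h_i(u)$, resp.\ $h_i(v)$); a map constant on a generating set of an equivalence relation is constant on the whole relation, so the map descends to $g\colon \Lambda_\phi\rel H\to\rel G$ with $g\circ q_{u,v}=e_{u,v}$. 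Finally, every edge of $\Lambda_\phi \rel H$ lies in the image of some $q_{u,v}$, on which $g$ coincides with the homomorphism $e_{u,v}$, so $g$ preserves all edges.

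I expect the only genuinely delicate point to be the well-definedness of $g$ in the last part: the construction of $\Lambda_\phi\rel H$ may collapse many vertices (the ``long chains of identifications'' noted in the text, including internal gadget vertices merged with distinguished ones by equalities of $\phi$). This is handled uniformly by the generating-set argument above, together with the fact that $e_{u,v}$, being a genuine function on $\rel J_\phi$, already respects all equalities imposed by $\phi$; so no extra consistency check between different copies is required. Everything else is a routine check of edge-preservation.
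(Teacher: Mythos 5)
Your proof is correct and follows essentially the same route as the paper's: translate a homomorphism $\rel H \to \Gamma_\phi \rel G$ into the witnessing gadget homomorphisms $e_{u,v}$ and glue them into a map on $\Lambda_\phi \rel H$, and conversely read off $g'(u)=(g(u_1),\dots,g(u_n))$. Your explicit generating-set argument for why the glued map descends through the identifications is a point the paper leaves implicit, but it is the same construction.
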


  \begin{proof}
  To prove that indeed $\Lambda_\phi$ and $\Gamma_\phi$ are adjoint, first assume $h\colon \rel H \to \Gamma_\phi \rel G$ is a homomorphism.
  Such a~homomorphism is a~map $h\colon V(H) \to V(G)^n$ which can be seen as an $n$-tuple of mappings $h_1,\dots,h_n\colon V(H) \to V(G)$. Further, since $h$ preserves edges, we have that for each $(u,v) \in E(H)$, $\phi(h_1(u),\dots,h_n(u),h_1(v),\dots,h_n(v))$ is true in $\rel G$, which gives a homomorphism $e_{u,v}\colon \rel J_\phi \to \rel G$ such that $e_{u,v} (x_i) = h_i(u)$ and $e_{u,v} (y_i) = h_i(v)$. We use these $e_{u,v}$'s to define a homomorphism $h'\colon \Lambda_\phi \rel H \to \rel G$:
  \begin{enumerate}
    \item put $h'(u_i) = h_i(u)$ for each $u\in \rel H$ and $i$;
    \item extend $h'$ to new vertices introduced by the second step of gadget replacement of the edge $(u,v) \in E(H)$ by putting $h'(z) = e_{u,v}(z_{u,v})$ for all $z\in \{ z_1,\dots,z_m \}$.
  \end{enumerate}
  Clearly, $h'$ is a homomorphism since each $e_{u,v}$ is and there are no edges in $\Lambda_\phi \rel H$ that are not included in some copy of $\rel J_\phi$.
  For the other implication, assume $g\colon \Lambda_\phi \rel H \to \rel G$. We define $g'\colon \rel H \to \Gamma_\phi \rel G$ as
  \(
    g'(u) = (g(u_1),\dots,g(u_n))
  \)
  for each $u \in V(H)$. It is straightforward to check that $g'$ is indeed a homomorphism. This concludes the proof.
  \end{proof}

  One of the main strengths of the algebraic approach lies in a description of when such reductions apply, by means of polymorphisms and minion homomorphisms; see \cite[Theorem 38]{BKW17}
  (originally appeared in \cite{BOP18}) and~\cite[Theorem 4.12]{BBKO19} for the promise setting. We return to this later in this section (Example~\ref{ex:gadgetVsHom}).

\subsection{General results about adjunction for PCSPs}
\label{subsec:gen-adj}

In the following lemma, we give a few basic and useful properties of adjoint functors that are well-known in category theory. We provide proofs of these facts for completeness.
We say that a function $\Lambda\colon \dgraf \to \dgraf$ is \emph{monotone} if $\Lambda \rel H \to \Lambda \rel G$ for all $\rel H,\rel G$ such that $\rel H \to \rel G$; and it \emph{preserves disjoint unions} if $\Lambda(\rel H_1 + \rel H_2)$ and $\Lambda \rel H_1 + \Lambda \rel H_2$ are homomorphically equivalent for all digraphs $\rel H_1,\rel H_2$ (we denote disjoint union with $+$).

\begin{lemma} \label{lem:adj-technical}
  Let $\Lambda,\Gamma\colon \dgraf \to \dgraf$ be adjoint.
  Then
  \begin{enumerate}
    \item $\rel G \to \Gamma\Lambda \rel G$ for all digraphs $\rel G$;
    \item $\Lambda\Gamma \rel H \to \rel H$ for all digraphs $\rel H$;
    \item both $\Lambda$ and $\Gamma$ are monotone; and
    \item $\Lambda$ preserves disjoint unions.
  \end{enumerate}
\end{lemma}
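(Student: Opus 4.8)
The plan is to derive all four items purely formally from the adjunction equivalence~\eqref{eq:adjunction}, exactly as one obtains the unit, counit, and functoriality of a Galois connection. For~(1), I would instantiate~\eqref{eq:adjunction} with the digraph $\rel G$ in the role of ``$\rel H$'' and with $\Lambda\rel G$ in the role of ``$\rel G$'': since the identity map is a homomorphism $\Lambda\rel G\to\Lambda\rel G$, the left-to-right direction of~\eqref{eq:adjunction} gives $\rel G\to\Gamma\Lambda\rel G$. Item~(2) is dual: instantiate~\eqref{eq:adjunction} with $\Gamma\rel H$ in the role of ``$\rel H$'' and $\rel H$ in the role of ``$\rel G$''; the identity $\Gamma\rel H\to\Gamma\rel H$ then yields, via the right-to-left direction, $\Lambda\Gamma\rel H\to\rel H$.

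For~(3), suppose $\rel H\to\rel G$. Composing this with the ``unit'' from~(1) gives a homomorphism $\rel H\to\rel G\to\Gamma\Lambda\rel G$, and applying~\eqref{eq:adjunction} produces $\Lambda\rel H\to\Lambda\rel G$, so $\Lambda$ is monotone. The monotonicity of $\Gamma$ is obtained symmetrically: from $\rel H\to\rel G$ and the ``counit'' $\Lambda\Gamma\rel H\to\rel H$ of~(2) we get $\Lambda\Gamma\rel H\to\rel G$, and~\eqref{eq:adjunction} then gives $\Gamma\rel H\to\Gamma\rel G$.

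For~(4), I would verify homomorphic equivalence in both directions. One direction uses monotonicity: the inclusion homomorphisms $\rel H_i\to\rel H_1+\rel H_2$ give $\Lambda\rel H_i\to\Lambda(\rel H_1+\rel H_2)$ for $i=1,2$ by~(3), and since a homomorphism out of a disjoint union is precisely a pair of homomorphisms out of the two summands, these combine to a homomorphism $\Lambda\rel H_1+\Lambda\rel H_2\to\Lambda(\rel H_1+\rel H_2)$. For the reverse direction, I would use the adjunction twice: a homomorphism $\Lambda(\rel H_1+\rel H_2)\to\Lambda\rel H_1+\Lambda\rel H_2$ is, by~\eqref{eq:adjunction}, the same thing as a homomorphism $\rel H_1+\rel H_2\to\Gamma(\Lambda\rel H_1+\Lambda\rel H_2)$, which is in turn a pair of homomorphisms $\rel H_i\to\Gamma(\Lambda\rel H_1+\Lambda\rel H_2)$; applying~\eqref{eq:adjunction} once more, each of these corresponds to a homomorphism $\Lambda\rel H_i\to\Lambda\rel H_1+\Lambda\rel H_2$, and these are just the inclusions. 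Hence $\Lambda(\rel H_1+\rel H_2)$ and $\Lambda\rel H_1+\Lambda\rel H_2$ map to each other.

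The whole argument is formal bookkeeping with~\eqref{eq:adjunction}, so there is no real obstacle. The only points needing a moment's care are in~(4): one must remember that $\Lambda$ is only required to preserve $+$ up to homomorphic equivalence, so both directions must genuinely be checked, and one repeatedly invokes the universal property of disjoint union (a homomorphism from $\rel H_1+\rel H_2$ to any digraph $\rel X$ is exactly a pair of homomorphisms $\rel H_1\to\rel X$ and $\rel H_2\to\rel X$).
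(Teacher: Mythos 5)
Your proof is correct and follows essentially the same formal bookkeeping as the paper's: (1) and (2) from the identity via the adjunction, (3) by composing with the unit/counit, and (4) by combining monotonicity with the adjunction applied to the inclusions $\Lambda\rel H_i\to\Lambda\rel H_1+\Lambda\rel H_2$ (the paper phrases this last step as $\rel H_i\to\Gamma\Lambda\rel H_i\to\Gamma(\Lambda\rel H_1+\Lambda\rel H_2)$, which is the same argument).
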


\begin{proof}
  We start by proving (1): $\Lambda \rel G\to \Lambda \rel G$ implies that $\rel G \to \Gamma\Lambda \rel G$ by adjunction.
  Similarly for (2), observe that $\Gamma \rel H\to \Gamma \rel H$ implies $\Lambda\Gamma \rel H \to \rel H$ by adjunction.

  For (3) assume $\rel H \to \rel G$. Then by (1), we have $\rel H \to \rel G \to \Gamma\Lambda \rel G$, and therefore by adjunction $\Lambda \rel H \to \Lambda \rel G$. This concludes that $\Lambda$ is monotone.
  Similarly from (2), we have $\Lambda\Gamma \rel H \to \rel H \to \rel G$ and hence by adjunction $\Gamma \rel H \to \Gamma \rel G$, so $\Gamma$ is monotone.

  For (4), consider the disjoint union of digraphs $\rel H_1,\rel H_2$.
  Note that $\rel H_i \to \rel H_1 + \rel H_2$ for $i=1,2$ implies $\Lambda \rel H_1 + \Lambda \rel H_2 \to \Lambda(\rel H_1 + \rel H_2)$ by monotonicity.
  To show the other direction, observe that $\rel H_i \to \Gamma \Lambda \rel H_i \to \Gamma(\Lambda \rel H_1 + \Lambda \rel H_2)$ for $i=1,2$ by (1) and monotonicity of $\Gamma$, hence $\rel H_1 + \rel H_2 \to \Gamma(\Lambda \rel H_1 + \Lambda \rel H_2)$ and therefore $\Lambda(\rel H_1 + \rel H_2) \to \Lambda \rel H_1 + \Lambda \rel H_2$ by adjunction.
\end{proof}

The next result is the main theorem of this subsection. It describes when an adjunction provides a reduction between two PCSPs. This theorem will be applied in the following two subsections to provide new reductions between promise digraph homomorphism problems of the sort that has not been described before.

\begin{theorem}\label{thm:adj1}
  Let $\Lambda,\Gamma\colon \dgraf \to \dgraf$ be adjoint.
  Let $\rel H_1,\rel G_1,\rel H_2,\rel G_2$ be digraphs such that $\rel H_i \to \rel G_i$ for $i=1,2$.
  Then $\Lambda$ is a reduction from $\PCSP(\rel H_1, \rel G_1)$ to $\PCSP(\rel
  H_2, \rel G_2)$ if and only if $\ \rel H_1 \to \Gamma \rel H_2$ and $\Gamma \rel G_2 \to \rel G_1$.
\end{theorem}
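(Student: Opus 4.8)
The plan is to unwind the definitions of ``reduction'' and ``adjoint'' and verify the biconditional directly; no topology or combinatorics is needed beyond Lemma~\ref{lem:adj-technical}. Recall that $\Lambda$ being a reduction from $\PCSP(\rel H_1,\rel G_1)$ to $\PCSP(\rel H_2,\rel G_2)$ means: for every digraph $\rel I$, \emph{(completeness)} $\rel I \to \rel H_1$ implies $\Lambda \rel I \to \rel H_2$, and \emph{(soundness)} $\rel I \not\to \rel G_1$ implies $\Lambda \rel I \not\to \rel G_2$. I will use soundness in its contrapositive form: $\Lambda \rel I \to \rel G_2$ implies $\rel I \to \rel G_1$.

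For the ``if'' direction, assume $\rel H_1 \to \Gamma \rel H_2$ and $\Gamma \rel G_2 \to \rel G_1$. Given an instance $\rel I$ with $\rel I \to \rel H_1$, composing with $\rel H_1 \to \Gamma \rel H_2$ gives $\rel I \to \Gamma \rel H_2$, and then the adjunction equivalence~\eqref{eq:adjunction} yields $\Lambda \rel I \to \rel H_2$; this is completeness. Given $\rel I$ with $\Lambda \rel I \to \rel G_2$, the adjunction equivalence yields $\rel I \to \Gamma \rel G_2$, and composing with $\Gamma \rel G_2 \to \rel G_1$ gives $\rel I \to \rel G_1$; this is (contrapositive) soundness. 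Hence $\Lambda$ is a reduction.

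For the ``only if'' direction, assume $\Lambda$ is a reduction. Testing completeness on the instance $\rel I = \rel H_1$: the identity homomorphism $\rel H_1 \to \rel H_1$ yields $\Lambda \rel H_1 \to \rel H_2$, and applying the adjunction equivalence~\eqref{eq:adjunction} turns this into $\rel H_1 \to \Gamma \rel H_2$. Testing (contrapositive) soundness on the instance $\rel I = \Gamma \rel G_2$: Lemma~\ref{lem:adj-technical}(2) supplies the homomorphism $\Lambda\Gamma \rel G_2 \to \rel G_2$, so soundness forces $\Gamma \rel G_2 \to \rel G_1$. This establishes both required homomorphisms and completes the proof.

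I do not expect a genuine obstacle here; the argument is a direct application of the adjunction equivalence once in each direction. The only points needing a moment's thought are choosing the correct test instances ($\rel I = \rel H_1$ for the completeness direction, $\rel I = \Gamma \rel G_2$ for soundness) and recalling the counit inequality $\Lambda\Gamma \rel H \to \rel H$ from Lemma~\ref{lem:adj-technical}(2); everything else is composition of homomorphisms.
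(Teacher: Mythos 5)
Your proof is correct and follows exactly the paper's own argument: the same two applications of the adjunction equivalence for the ``if'' direction, and the same test instances $\rel I = \rel H_1$ and $\rel I = \Gamma\rel G_2$ (together with $\Lambda\Gamma\rel G_2 \to \rel G_2$ from Lemma~\ref{lem:adj-technical}(2)) for the ``only if'' direction. Nothing to add.
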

\begin{proof}
  Assume first that $\rel H_1 \to \Gamma \rel H_2$ and $\Gamma \rel G_2 \to \rel G_1$. Then $\Lambda$ preserves \yes-instances because $\rel I \to \rel H_1$ implies $\rel I \to \Gamma \rel H_2$ (since $\rel H_1 \to \Gamma \rel H_2$ by assumption) and then $\Lambda \rel I \to \rel H_2$ by adjointness.  It also preserves \no-instances because $\Lambda \rel I \to \rel G_2$ implies $\rel I \to \Gamma \rel G_2$ by adjointness and then $\rel I \to \rel G_1$ because $\Gamma \rel G_2 \to \rel G_1$ (by assumption). Hence $\Lambda$ is a reduction, as claimed.

  For the converse, preserving \yes-instances means that for $\rel I \in \dgraf$,
  $\rel I \to \rel H_1$ implies $\Lambda \rel I  \to \rel H_2$. Using this with $\rel I=\rel H_1$, we get that
  $\Lambda \rel H_1  \to \rel H_2$ and thus $\rel H_1 \to \Gamma \rel H_2$ by adjointness.
  Preserving \no-instances means that $\Lambda \rel I \to \rel G_2$ implies $\rel I \to \rel G_1$. Take $\rel I= \Gamma \rel G_2$. Since $\Lambda\Gamma \rel G_2 \to \rel G_2$ by Lemma~\ref{lem:adj-technical}(2), we have $\Gamma \rel G_2 \to \rel G_1$.
\end{proof}

Naturally, we use the above theorem in the case that $\Lambda$ is log-space computable to obtain a log-space reduction between the two PCSPs. In the same way, it can also be applied if $\Lambda$ is polynomial-time computable, if the goal is to get a~polynomial-time reduction, etc. Note that, in such applications, $\Gamma$ need not be computable to guarantee the correctness of the reduction.

\begin{remark}
  We note (a well-known categorical fact) that any two right adjoints $\Gamma_1$ and $\Gamma_2$ of $\Lambda$ are homomorphically equivalent in the following sense: for all $\rel G$, $\Gamma_1 \rel G$ and $\Gamma_2 \rel G$ are homomorphically equivalent. This follows, for example, from the above theorem: we have that $\Lambda$ is a reduction from $\PCSP(\Gamma_1 \rel G,\Gamma_1 \rel G)$ to $\PCSP(\rel G,\rel G)$ since $\Gamma_1$ is a right adjoint to $\Lambda$, and consequently, $\Gamma_1 \rel G \leftrightarrow \Gamma_2 \rel G$ since $\Gamma_2$ is a right adjoint.
\end{remark}

\begin{corollary}\label{cor:adj1}
  Let $\Lambda,\Gamma\colon \dgraf \to \dgraf$ be adjoint. Then
  \begin{enumerate}
  \item  $\Lambda$ is a reduction from $\PCSP(\rel H, \Gamma \rel G)$ to $\PCSP(\Lambda \rel H, \rel G)$, for all graphs $\rel H, \rel G$ such that $\rel H\to \Gamma \rel G$ (or equivalently, $\Lambda \rel H \to \rel G$);
  \item $\Lambda$ is a reduction from $\PCSP(\Gamma \rel H, \Gamma \rel G)$ to $\PCSP(\rel H, \rel G)$, for all graphs $\rel H, \rel G$ such that $\rel H\to \rel G$.
  \end{enumerate}
\end{corollary}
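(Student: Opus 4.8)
The plan is to derive both parts directly from Theorem~\ref{thm:adj1}, by choosing the four digraphs in its statement appropriately and checking that the resulting conditions are exactly the easy facts collected in Lemma~\ref{lem:adj-technical}.

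For part~(1), I would apply Theorem~\ref{thm:adj1} with $(\rel H_1,\rel G_1,\rel H_2,\rel G_2) = (\rel H,\ \Gamma\rel G,\ \Lambda\rel H,\ \rel G)$. First one must verify the standing hypotheses $\rel H_i\to\rel G_i$: the case $i=1$ is the assumption $\rel H\to\Gamma\rel G$, and the case $i=2$, namely $\Lambda\rel H\to\rel G$, is equivalent to it by the definition of adjunction~\eqref{eq:adjunction}. Theorem~\ref{thm:adj1} then says that $\Lambda$ is a reduction from $\PCSP(\rel H,\Gamma\rel G)$ to $\PCSP(\Lambda\rel H,\rel G)$ if and only if $\rel H_1\to\Gamma\rel H_2$ and $\Gamma\rel G_2\to\rel G_1$, i.e.\ if and only if $\rel H\to\Gamma\Lambda\rel H$ and $\Gamma\rel G\to\Gamma\rel G$. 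The first holds by Lemma~\ref{lem:adj-technical}(1) and the second is trivial, so the reduction exists.

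For part~(2), I would instead take $(\rel H_1,\rel G_1,\rel H_2,\rel G_2) = (\Gamma\rel H,\ \Gamma\rel G,\ \rel H,\ \rel G)$. The standing hypotheses are $\Gamma\rel H\to\Gamma\rel G$, which follows from $\rel H\to\rel G$ by monotonicity of $\Gamma$ (Lemma~\ref{lem:adj-technical}(3)), and $\rel H\to\rel G$, which is assumed. The two conditions produced by Theorem~\ref{thm:adj1} now read $\Gamma\rel H\to\Gamma\rel H$ and $\Gamma\rel G\to\Gamma\rel G$, both of which are trivial, so $\Lambda$ is a reduction from $\PCSP(\Gamma\rel H,\Gamma\rel G)$ to $\PCSP(\rel H,\rel G)$.

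There is no real obstacle here: the only thing to be careful about is the bookkeeping of which of the four structures plays which role, and the observation that the apparent asymmetry in part~(1) between the hypotheses $\rel H\to\Gamma\rel G$ and $\Lambda\rel H\to\rel G$ is only apparent, since these are interchangeable via~\eqref{eq:adjunction}. (Also, although Theorem~\ref{thm:adj1} is stated for $\dgraf$, it applies verbatim to the graph case, since the relevant functions and homomorphism conditions are the same.)
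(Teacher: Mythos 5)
Your proposal is correct and follows exactly the paper's own argument: both parts are instances of Theorem~\ref{thm:adj1} with the same choices of $(\rel H_1,\rel G_1,\rel H_2,\rel G_2)$, with the first condition for part~(1) discharged by Lemma~\ref{lem:adj-technical}(1) and the remaining conditions trivial. You are somewhat more explicit than the paper in verifying the standing hypotheses $\rel H_i\to\rel G_i$, but the substance is identical.
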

\begin{proof}
  For (1), the first condition of Theorem~\ref{thm:adj1} is equivalent to $\rel H \to \Gamma \Lambda \rel H$, which holds by adjunction (see Lemma~\ref{lem:adj-technical}(1)); the second condition is trivial: $\Gamma \rel G \to \Gamma \rel G$.
  For (2), both conditions are trivial.
\end{proof}

We remark that all reductions described in Theorem~\ref{thm:adj1} can be deduced from the special case in Corollary~\ref{cor:adj1}(1) by composing it with trivial reductions (that map every instance to itself).
Recall that there is a~trivial reduction from $\PCSP(\rel H',\rel G')$ to $\PCSP(\rel H,\rel G)$ if (and only if) $\rel H' \to \rel H$ and $\rel G \to \rel G'$; this is referred to as a~\emph{homomorphic relaxation} \cite[Definition 4.6]{BBKO19}.
If digraphs $\rel H_1, \rel G_1, \rel H_2, \rel G_2$ satisfy the conditions of Theorem~\ref{thm:adj1},
then we have the following sequence of reductions:
\[
  \PCSP(\rel H_1, \rel G_1) \xrightarrow{\text{triv.}}
  \PCSP(\rel H_1, \Gamma \rel G_2) \xrightarrow{\text{Cor.~\ref{cor:adj1}(1)}}
  \PCSP(\Lambda \rel H_1, \rel G_2) \xrightarrow{\text{triv.}}
  \PCSP(\rel H_2, \rel G_2).
\]
Similarly, Corollary~\ref{cor:adj1}(2) implies all reductions in Theorem~\ref{thm:adj1}:
\[
  \PCSP(\rel H_1, \rel G_1) \xrightarrow{\text{triv.}}
  \PCSP(\Gamma \rel H_2, \Gamma \rel G_2) \xrightarrow{\text{Cor.~\ref{cor:adj1}(2)}}
  \PCSP(\rel H_2, \rel G_2).
\]

\begin{example}\label{ex:gadgetVsHom}
  What we described in Subsection~\ref{subsec:adj-csp} in the context of CSPs, can be generalised to PCSPs as follows.
  The following are equivalent:
  \begin{enumerate}
   	\item there is a pp-formula $\phi$ such that $\Lambda_\phi$ is a log-space reduction from $\PCSP(\rel H_1, \rel G_1)$ to $\PCSP(\rel H_2, \rel G_2)$ (i.e., there exists some gadget reduction between the two);
  	\item there is a pp-formula $\phi$ such that $\rel H_1 \to \Gamma_\phi \rel H_2$ and $\Gamma_\phi \rel G_2 \to \rel G_1$ (i.e., $(\rel H_1, \rel G_1)$ is a homomorphic relaxation of a pp-power of $(\rel H_2, \rel G_2)$);   	
  	\item there is a minion homomorphism $\Pol(\rel H_2, \rel G_2) \to \Pol(\rel H_1, \rel G_1)$.
  \end{enumerate}

  The equivalence of (2) and (3) is by \cite[Theorem~4.12]{BBKO19}. The equivalence of the last two items and (1) is implicit in \cite{BBKO19} (see e.g.\ Lemma 4.11 there), but the equivalence of (1) and (2) follows directly from Theorem~\ref{thm:adj1} above.

  For example, all \NP-hard (non-promise) CSPs are reducible to one another in this way.  The understanding that one can get simple reductions between CSPs by relating their sets of polymorphisms goes back at least as far as~\cite{Jeavons97:jacm}.  The use of pp-formulas and minion homomorphisms was initiated in \cite{Bulatov05:classifying} and~\cite{BOP18}, respectively.
\end{example}

\subsubsection{Are all reductions given by adjunctions?}\label{subsubsec:all-adj}

Theorem~\ref{thm:adj1} raises a question whether all reductions between PCSPs are given by adjunctions, in the sense that every reduction is a left adjoint from some adjoint pair. By Lemma~\ref{lem:adj-technical}(3--4), we have to restrict this question to reductions that are monotone and preserve disjoint unions.
We will show that the answer to this question is positive, with a small technical caveat that the right adjoint might produce infinite digraphs on a finite input. Note that this caveat is not an issue, since the right adjoint does not need to be computable.

This suggests that looking at classes of adjoints that generalise the simple gadget constructions $\Lambda_\phi$ could lead to understanding an essential part of all reductions between PCSPs. In particular, we hope that the use of the PCP theorem in proving \NP-hardness of PCSPs (see~\cite[Section 5]{BBKO19}) can be superseded this way.
We remark that, e.g., the reduction in Dinur's proof of the PCP theorem \cite{Din07} is not necessarily monotone: this is due to the fact that the number of repetitions of the gap amplification depends on the size of the input.

The core of the argument in the proof of the following theorem is a well-known categorical statement (the adjoint functor theorem). Again, we provide a full proof for completeness.

\begin{theorem}\label{thm:all-red}
  Let $\rel H_1,\rel G_1,\rel H_2,\rel G_2$ be finite digraphs such that $\rel H_i \to \rel G_i$ for $i = 1, 2$, and let $\Lambda\colon \dgraf \to \dgraf$ be a reduction from $\PCSP(\rel H_1,\rel G_1)$ to $\PCSP(\rel H_2,\rel G_2)$. Assume additionally that $\Lambda$ is monotone and preserves disjoint unions. Then there is a~function $\Gamma\colon \dgraf \to \dgra$ with possibly infinite images such that, for all finite digraphs $\rel H$ and $\rel G$, we have \( \rel H \to \Gamma \rel G \) if and only if \( \Lambda \rel H \to \rel G \).
  Moreover, we have $\rel H_1 \to \Gamma \rel H_2$ and $\Gamma \rel G_2 \to \rel G_1$ for any such $\Gamma$.
\end{theorem}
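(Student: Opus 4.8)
The plan is to construct a right adjoint $\Gamma$ directly, by means of disjoint unions, and then read the ``moreover'' clause off the reduction hypotheses. For a finite digraph $\rel G$, set
\[
  \Gamma \rel G \defeq \textstyle\bigsqcup\ \{\, \rel X \mid \rel X \text{ a finite digraph with } \Lambda \rel X \to \rel G \,\},
\]
where the disjoint union is taken over one representative of each isomorphism type of such $\rel X$. Since there are only countably many isomorphism types of finite digraphs, $\Gamma \rel G$ is a well-defined (countable, possibly infinite) digraph, so $\Gamma$ indeed maps $\dgraf$ into $\dgra$. Note that $\Lambda$ is only ever applied to finite digraphs in this definition, so one does not need to make sense of $\Lambda$ on infinite digraphs.

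Next I would verify the adjunction identity $\rel H \to \Gamma \rel G$ if and only if $\Lambda \rel H \to \rel G$, for all finite digraphs $\rel H,\rel G$. The implication ``$\Leftarrow$'' is immediate: if $\Lambda \rel H \to \rel G$ then $\rel H$ occurs (up to isomorphism) as a summand of $\Gamma \rel G$, hence $\rel H \to \Gamma \rel G$. For ``$\Rightarrow$'', suppose $\rel H \to \Gamma \rel G$ and write $\rel H = \rel C_1 + \dots + \rel C_k$ as the disjoint union of its (finitely many) connected components. Each $\rel C_j$ is connected, so its image under the given homomorphism is connected and therefore lies inside a single summand $\rel X_{i_j}$ of $\Gamma \rel G$; thus $\rel C_j \to \rel X_{i_j}$ and $\Lambda \rel X_{i_j} \to \rel G$ by the defining property of the summands. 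Monotonicity of $\Lambda$ gives $\Lambda \rel C_j \to \Lambda \rel X_{i_j} \to \rel G$ for every $j$. Iterating the hypothesis that $\Lambda$ preserves disjoint unions, $\Lambda \rel H \to \Lambda \rel C_1 + \dots + \Lambda \rel C_k$, and the right-hand side maps to $\rel G$ since each $\Lambda \rel C_j$ does; hence $\Lambda \rel H \to \rel G$. (If the empty digraph is allowed, the nullary case of preservation of disjoint unions gives $\Lambda\emptyset \leftrightarrow \emptyset$, which handles $\rel H=\emptyset$.) This ``$\Rightarrow$'' step, the only place the two structural hypotheses on $\Lambda$ are used, is the heart of the argument; everything else is bookkeeping.

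It remains to prove the ``moreover'' clause for an arbitrary $\Gamma$ satisfying the displayed equivalence. Completeness of the reduction $\Lambda$, applied to the instance $\rel I = \rel H_1$, yields $\Lambda \rel H_1 \to \rel H_2$, and the equivalence (with $\rel H = \rel H_1$, $\rel G = \rel H_2$) then gives $\rel H_1 \to \Gamma \rel H_2$. For the other part, let $\rel F$ be any finite subdigraph of $\Gamma \rel G_2$. Then $\rel F \to \Gamma \rel G_2$, so $\Lambda \rel F \to \rel G_2$ by the equivalence, and soundness of $\Lambda$ (in its contrapositive form ``$\Lambda \rel I \to \rel G_2$ implies $\rel I \to \rel G_1$'') gives $\rel F \to \rel G_1$. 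Thus every finite subdigraph of $\Gamma \rel G_2$ admits a homomorphism to the finite digraph $\rel G_1$, and by the standard compactness theorem for homomorphisms to a finite template it follows that $\Gamma \rel G_2 \to \rel G_1$. (For the specific $\Gamma$ constructed above one can avoid compactness by simply assembling the homomorphisms on the individual summands, but compactness handles every admissible $\Gamma$ uniformly, which also re-proves the uniqueness of right adjoints up to homomorphic equivalence in this infinite setting.)

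I do not expect a real obstacle here; the one point requiring care is the interaction of the disjoint-union construction with disconnected inputs $\rel H$, which is exactly what forces the hypothesis that $\Lambda$ preserves disjoint unions — and correspondingly explains why one should not expect right adjoints for reductions that are monotone but do not respect disjoint unions, such as the gap-amplification step in Dinur's proof of the PCP theorem.
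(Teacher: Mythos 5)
Your proposal is correct and follows essentially the same route as the paper: the same construction of $\Gamma \rel G$ as the disjoint union of all finite $\rel X$ with $\Lambda \rel X \to \rel G$, the same component-wise argument using monotonicity and preservation of disjoint unions for the nontrivial direction of the adjunction, and the same compactness argument for $\Gamma \rel G_2 \to \rel G_1$. The extra care about isomorphism types and the empty digraph is harmless bookkeeping that the paper leaves implicit.
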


\begin{proof}
  We define $\Gamma \rel G$ to be the disjoint union of all finite digraphs $\rel I$ such that $\Lambda \rel I \to \rel G$.  Assuming that $\rel H$ and $\rel G$ are finite digraphs, we immediately get that $\Lambda \rel H \to \rel G$ implies $\rel H \to \Gamma \rel G$. We first prove the other implication for connected $\rel H$: assuming that $\rel H \to \Gamma \rel G$, we get that $\rel H$ maps to some connected component of $\Gamma \rel G$ and thus $\rel H \to \rel I$ for some finite $\rel I$ such that $\Lambda \rel I \to \rel G$. This gives that $\Lambda \rel H \to \Lambda \rel I \to \rel G$ since $\Lambda$ is monotone. For disconnected $\rel H$, we use that $\Lambda$ preserves disjoint unions, so we may repeat the above argument for each component separately.

  The ``moreover'' claim is proved similarly to Theorem~\ref{thm:adj1}. In particular, the proof that if $\Lambda$ is a reduction then $\rel H_1 \to \Gamma \rel H_2$ is identical to the one in Theorem~\ref{thm:adj1}. To prove that $\Gamma \rel G_2 \to \rel G_1$, we cannot simply use preservation of \no-instances on the possibly infinite $\Gamma \rel G_2$. Instead, we get that for every $\rel I$ finite, $\rel I \to \Gamma \rel G_2$ implies $\Lambda \rel I \to \rel G_2$, which implies $\rel I \to \rel G_1$ (because $\Lambda$ is a reduction). A homomorphism from the possibly infinite $\Gamma \rel G_2$ to the finite $\rel G_1$ is then given by compactness.
\end{proof}

While monotonicity is a key assumption in Theorem~\ref{thm:all-red}, preservation of disjoint unions can always be enforced on any reduction by first precomputing connected components of the input (which can be done in log-space due to \cite{Rei08}), and then applying the original reduction on each of the components separately.

We note that all the proofs in this section reduce between decision problems; they can be adapted for search problems. For that we need to additionally assume that there is an efficient way to find a~homomorphism $\rel I\to \Gamma \rel G$ given a~homomorphism $\Lambda \rel I \to \rel G$ on input (note that $\rel G$ is fixed here). All the adjoint pairs that we use in the following subsections indeed have this property.

From now on, we return to considering only finite digraphs.

\subsubsection{Reductions that have both a left and a right adjoint}

In the two applications below, we use reductions that are a left adjoint from one adjoint pair and, at the same time, the right adjoint from another adjoint pair. (In fact, these reductions will be of the form a~pp-power $\Gamma_\phi$, as described in Subsection~\ref{subsec:adj-csp}, for special gadgets $\phi$). The property of being both left and right adjoint has the following consequence.

\begin{theorem}\label{thm:adj2}
  Let $\Gamma$ be a log-space computable function that has a right adjoint $\Omega$, and a log-space computable left adjoint $\Lambda$.
  Then $\PCSP(\Gamma \rel H, \rel G)$ and $\PCSP(\rel H, \Omega \rel G)$ are log-space equivalent for all digraphs $\rel H,\rel G$ such that $\Gamma \rel H \to \rel G$.
\end{theorem}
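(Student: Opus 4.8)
The plan is to establish the two directions of the log-space equivalence separately, using Corollary~\ref{cor:adj1} applied to each of the two adjoint pairs we have available. For the direction $\PCSP(\rel H, \Omega \rel G) \to \PCSP(\Gamma \rel H, \rel G)$, I would use that $(\Gamma, \Omega)$ is an adjoint pair with $\Gamma$ the left adjoint and $\Omega$ the right adjoint. Applying Corollary~\ref{cor:adj1}(1) with $\Lambda := \Gamma$, $\Gamma := \Omega$ (matching the notation there), $\rel H := \rel H$ and $\rel G := \rel G$, we obtain that $\Gamma$ is a reduction from $\PCSP(\rel H, \Omega \rel G)$ to $\PCSP(\Gamma \rel H, \rel G)$, provided the side condition $\rel H \to \Omega \rel G$ holds; but that is equivalent to $\Gamma \rel H \to \rel G$ by adjointness, which is exactly our standing hypothesis. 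Since $\Gamma$ is log-space computable by assumption, this is a log-space reduction.

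For the reverse direction $\PCSP(\Gamma \rel H, \rel G) \to \PCSP(\rel H, \Omega \rel G)$, I would use the other adjoint pair, namely that $\Lambda$ is a left adjoint to $\Gamma$ (so $(\Lambda, \Gamma)$ is the adjoint pair here). Applying Corollary~\ref{cor:adj1}(2) with the left adjoint being $\Lambda$ and the right adjoint being $\Gamma$, $\rel H := \rel H$ and $\rel G := \Omega \rel G$: this tells us $\Lambda$ is a reduction from $\PCSP(\Gamma \rel H, \Gamma \Omega \rel G)$ to $\PCSP(\rel H, \Omega \rel G)$, provided the side condition $\rel H \to \Omega \rel G$, which again holds. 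To conclude, I precompose with a trivial (homomorphic relaxation) reduction from $\PCSP(\Gamma \rel H, \rel G)$ to $\PCSP(\Gamma \rel H, \Gamma \Omega \rel G)$: such a trivial reduction exists iff $\Gamma \rel H \to \Gamma \rel H$ and $\Gamma \Omega \rel G \to \rel G$, the former being trivial and the latter holding by Lemma~\ref{lem:adj-technical}(2) applied to the adjoint pair $(\Lambda, \Gamma)$ (since $\Lambda \Gamma \to \mathrm{id}$ is the wrong shape — let me instead note directly that $\Gamma \Omega \rel G \to \rel G$ follows from $\Gamma$ being a \emph{left} adjoint to $\Omega$: indeed $\Omega \rel G \to \Omega \rel G$ gives $\Gamma \Omega \rel G \to \rel G$ by adjointness of $(\Gamma, \Omega)$). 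Composing the log-space-computable $\Lambda$ with the identity-like trivial reduction yields a log-space reduction in this direction as well.

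One should double-check the bookkeeping: the trivial reduction here is literally the identity function on instances, so composing it with $\Lambda$ is just $\Lambda$ itself, and the point is only that the \yes/\no promise of $\PCSP(\Gamma \rel H, \rel G)$ implies that of $\PCSP(\Gamma \rel H, \Gamma \Omega \rel G)$ — completeness because $\Gamma \rel H \to \Gamma \rel H$, soundness because $\Gamma \Omega \rel G \to \rel G$ means any input not mapping to $\rel G$ a fortiori does not map to $\Gamma \Omega \rel G$. The only genuinely delicate point is making sure the side conditions of the two invocations of Corollary~\ref{cor:adj1} are met; both reduce to the single hypothesis $\Gamma \rel H \to \rel G$ via adjointness, so there is no real obstacle — the proof is essentially a matter of instantiating Corollary~\ref{cor:adj1} twice (once with each adjoint pair) and gluing with a trivial reduction. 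I expect the hardest part to be purely notational: the symbol $\Gamma$ plays the role of ``left adjoint'' in Corollary~\ref{cor:adj1}'s first pair and ``right adjoint'' in its second pair, so one must be careful not to conflate the two roles when citing the corollary.
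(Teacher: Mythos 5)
Your proposal is correct and follows essentially the same route as the paper: the first direction is exactly Corollary~\ref{cor:adj1}(1) for the pair $(\Gamma,\Omega)$, and for the second direction the paper invokes Theorem~\ref{thm:adj1} for the pair $(\Lambda,\Gamma)$ directly, verifying the same two conditions $\Gamma\rel H\to\Gamma\rel H$ and $\Gamma\Omega\rel G\to\rel G$ (the latter from adjointness of $(\Gamma,\Omega)$, i.e.\ Lemma~\ref{lem:adj-technical}(2)) that your factorisation through Corollary~\ref{cor:adj1}(2) plus a homomorphic relaxation checks. Your self-correction about which adjoint pair yields $\Gamma\Omega\rel G\to\rel G$ lands on the right answer, so the two write-ups differ only in bookkeeping.
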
	
\begin{proof}
  Corollary~\ref{cor:adj1}(1) applied for $\Gamma$ and $\Omega$ gives that $\Gamma$ is a reduction from $\PCSP(\rel H,\Omega \rel G)$ to $\PCSP(\Gamma \rel H,\rel G)$. We claim that $\Lambda$ is a~reduction from $\PCSP(\Gamma \rel H,\rel G)$ to $\PCSP(\rel H,\Omega \rel G)$. This follows from Theorem~\ref{thm:adj1} applied to $\Lambda, \Gamma$: We need to check that $\Gamma \rel H \to \Gamma \rel H$, which holds trivially, and that $\Gamma \Omega \rel G \to \rel G$, which follows by Lemma~\ref{lem:adj-technical}(2), since $\Gamma$ and $\Omega$ are adjoint.
\end{proof}

\subsection{The arc digraph construction}
  \label{sec:righthard}

Let $\rel D$ be a digraph. The \emph{arc digraph} (or \emph{line digraph}) of $\rel D$, denoted $\delta \rel D$ , is the digraph whose vertices are arcs (directed edges) of $\rel D$ and whose arcs are pairs of the form $((u,v),(v,w))$.
In other words, $\delta\colon \dgraf\to\dgraf$ is the pp-power $\Gamma_\phi$ corresponding to ($n=2$ and) the following gadget digraph:
\[\begin{tikzpicture}
	\node[circle,fill,inner sep=1,label={left:$x_1$}] (a) at (0,0) {};
	\node[circle,fill,inner sep=1,label={above:$x_2=y_1$}] (b) at (1,0.5) {};	
	\node[circle,fill,inner sep=1,label={right:$y_2$}] (c) at (2,0) {};
	\draw[->] (a)--(b); \draw[->] (b)--(c);
\end{tikzpicture}\]
or to the pp-formula $\phi = (x_1,x_2)\in E \wedge (y_1,y_2)\in E \wedge x_2 = y_1$.  It thus has a left adjoint $\delta_L = \Lambda_\phi$, though we will not need it.  More surprisingly, $\delta$ has a right adjoint $\delta_R\colon \dgraf\to\dgraf$.

\begin{definition} \label{def:delta_R}
  For a digraph $\rel D$, let $\delta_R \rel D$ be the digraph that has a vertex for each pair $S,T \subseteq V(D)$, where $S$ or $T$ can be empty, such that $S \times T \subseteq E(D)$, and an arc from $(S,T)$ to $(S',T')$ if and only if $T \cap S' \neq \emptyset$.
\end{definition}

We give a proof of the adjunction below for completeness.
While $\delta$ will be the reduction we use, $\delta_R$ will be useful for understanding the best reduction we can get from $\delta$.

\begin{lemma}[\cite{FoniokT15}]
	$\delta$ and $\delta_R$ are adjoint.
\end{lemma}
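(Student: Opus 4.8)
The plan is to verify directly that, for all finite digraphs $\rel H$ and $\rel D$, a homomorphism $\delta \rel H \to \rel D$ exists if and only if a homomorphism $\rel H \to \delta_R \rel D$ exists. Since adjunction is an ``if and only if'' statement, I would prove the two directions separately, each by an explicit construction of the required homomorphism.

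For the forward direction, suppose $f\colon \delta \rel H \to \rel D$ is a homomorphism. Recall that the vertices of $\delta \rel H$ are the arcs $(u,v) \in E(H)$, and $((u,v),(v,w))$ is an arc of $\delta \rel H$. I would define $g\colon V(H) \to V(\delta_R \rel D)$ by sending each vertex $u \in V(H)$ to the pair $(S_u, T_u)$ where $S_u = \{ f(t,u) \mid (t,u)\in E(H) \}$ is the set of $f$-images of arcs coming into $u$, and $T_u = \{ f(u,w) \mid (u,w)\in E(H) \}$ is the set of $f$-images of arcs going out of $u$. To see that $(S_u, T_u)$ is a legitimate vertex of $\delta_R \rel D$, I need $S_u \times T_u \subseteq E(D)$: given $(t,u), (u,w) \in E(H)$, the pair $((t,u),(u,w))$ is an arc of $\delta\rel H$, so $(f(t,u), f(u,w)) \in E(D)$ because $f$ is a homomorphism. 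To see that $g$ is a homomorphism, take an arc $(u,v) \in E(H)$; then $f(u,v) \in T_u$ (it is the image of an out-arc of $u$) and $f(u,v) \in S_v$ (it is the image of an in-arc of $v$), so $T_u \cap S_v \ni f(u,v)$ is non-empty, which by Definition~\ref{def:delta_R} is exactly the condition for $(g(u), g(v)) = ((S_u,T_u),(S_v,T_v))$ to be an arc of $\delta_R \rel D$.

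For the backward direction, suppose $g\colon \rel H \to \delta_R \rel D$ is a homomorphism, writing $g(u) = (S_u, T_u)$. I need to produce $f\colon \delta\rel H \to \rel D$, i.e.\ to assign to each arc $(u,v) \in E(H)$ a vertex $f(u,v) \in V(D)$ such that whenever $(u,v),(v,w)\in E(H)$ we have $(f(u,v),f(v,w)) \in E(D)$. Since $g$ is a homomorphism, for each arc $(u,v)\in E(H)$ the sets satisfy $T_u \cap S_v \neq \emptyset$; I would use the axiom of choice (or just a finite choice, as everything is finite) to pick an element $f(u,v) \in T_u \cap S_v$. Then for $(u,v),(v,w)\in E(H)$ we have $f(u,v) \in S_v$ and $f(v,w) \in T_v$, and since $(S_v,T_v)$ is a vertex of $\delta_R\rel D$ we have $S_v \times T_v \subseteq E(D)$, hence $(f(u,v),f(v,w)) \in E(D)$, as required. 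This establishes both directions and hence the adjunction.

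I do not expect any serious obstacle here: the whole argument is a matter of unwinding the definitions of $\delta$ and $\delta_R$ and checking the two constructions are well-defined homomorphisms. The one point that deserves a moment's care is the role of the empty sets permitted in Definition~\ref{def:delta_R} (a vertex $u$ of $\rel H$ that is a source has $S_u = \emptyset$, a sink has $T_u=\emptyset$), but this causes no trouble: the condition $S_u\times T_u\subseteq E(D)$ is vacuous when either set is empty, and the arc condition $T_u\cap S_v\neq\emptyset$ is only ever invoked when there genuinely is an arc $(u,v)$, at which point $T_u$ and $S_v$ are both non-empty by construction in the forward direction and non-empty by the homomorphism hypothesis in the backward direction.
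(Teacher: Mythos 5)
Your proof is correct and follows essentially the same route as the paper's: the forward direction maps $u$ to the pair of in-arc images and out-arc images, and the backward direction picks an arbitrary element of $T_u \cap S_v$ for each arc. The remark about empty sets at sources and sinks is a nice extra check that the paper leaves implicit.
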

\begin{proof}
	Let $\rel H, \rel G$ be digraphs and let $h \colon \delta \rel H \to \rel G$ be a homomorphism.
	That is, $h(u,v)$ is a vertex of $\rel G$ for each arc $(u,v)$ of $\rel H$,
	and for every pair of arcs $(u,v),(v,w)$ in $\rel H$,
	there is an arc from $h(u,v)$ to $h(v,w)$ in $\rel G$.
	We can define a homomorphism $\rel H \to \delta_R \rel G$ as $v \mapsto \left(s(v),t(v)\right)$, where $s(v) := \{h(u,v) \mid (u,v) \in E(H)\}$ and $t(v) := \{h(v,w) \mid (v,w) \in E(H)\}$.
	Then $s(v) \times t(v) \subseteq E(G)$, so $(s(v),t(v))$ is indeed a vertex of $\delta_R \rel G$.
	Moreover, for every arc $(u,v)$ of $\rel H$, $t(u) \cap s(v)$ is non-empty, as it contains $(u,v)$; hence $(s,t)$ is a homomorphism to $\delta_R \rel G$.
	
	Conversely, let $(s,t)$ define a homomorphism $\rel H \to \delta_R \rel G$.
	That is, $s(v),t(v)$ are subsets of $V(G)$ such that $s(v) \times t(v) \subseteq E(G)$ and for every arc $(u,v)$ of $\rel H$, $t(u) \cap s(v) \neq \emptyset$.
	We define a homomorphism $h\colon \delta\rel H \to \rel G$ as follows:
	choose $h(u,v)$ to be an arbitrary vertex in $t(u) \cap s(v)$.
	For any two arcs $(u,v),(v,w)$ in $\rel H$, we have that $h(u,v)$ is a vertex in $s(v)$ and $h(v,w)$ is a vertex in $t(v)$, hence $(h(u,v),h(v,w))$ is an arc of $\rel G$.
	Thus $h$ is indeed a homomorphism $\delta\rel H \to \rel G$.
\end{proof}

By Corollary~\ref{cor:adj1}(2), $\delta$ is a reduction from $\PCSP(\delta_R \rel H, \delta_R \rel G)$ to $\PCSP(\rel H, \rel G)$, for all digraphs $\rel H,\rel G$.  Let us see what this gives for classical colourings, i.e., when $\rel H$ and $\rel G$ are cliques.  Let us denote $b(n)\defeq \B{n}$.

\begin{observation}\label{obs:approxPoljakRodl} \label{obs:cliques-and-delta}
  For all $n\in\NN$, there are homomorphisms
  \[
    \rel K_{b(n)} \to \rel \delta_R \rel K_n \to \rel K_{2^n}
  .\]
\end{observation}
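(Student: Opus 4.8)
The plan is to unwind the definition of $\delta_R \rel K_n$ and then exhibit both homomorphisms by hand. By Definition~\ref{def:delta_R}, a vertex of $\delta_R \rel K_n$ is a pair $(S,T)$ of subsets of $[n]$ with $S\times T\subseteq E(K_n)$; since $E(K_n)=\{(u,v)\mid u\neq v\}$, this says exactly that $S$ and $T$ are disjoint (either may be empty), and there is an arc $(S,T)\to(S',T')$ if and only if $T\cap S'\neq\emptyset$. In particular $\delta_R \rel K_n$ is loopless, as $(S,T)\to(S,T)$ would force $S\cap T\neq\emptyset$.

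For the first homomorphism, I would use the family of all $\lfloor n/2\rfloor$-element subsets of $[n]$; there are exactly $b(n)=\B{n}$ of them, and any two distinct ones are incomparable under inclusion (this family is a largest antichain in the Boolean lattice by Sperner's theorem, but we only need that distinct equal-sized sets are incomparable). Send each such subset $A$ to the vertex $(A,[n]\setminus A)$ of $\delta_R \rel K_n$, which is valid since $A$ and $[n]\setminus A$ are disjoint. For distinct $A\neq B$ of size $\lfloor n/2\rfloor$ we have $B\setminus A\neq\emptyset$ and $A\setminus B\neq\emptyset$, hence $([n]\setminus A)\cap B\neq\emptyset$ and $([n]\setminus B)\cap A\neq\emptyset$, so there are arcs in both directions between the images of $A$ and $B$. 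Since $\rel K_{b(n)}$, viewed as a symmetric digraph, has a pair of opposite arcs between every two distinct vertices and no loops, this (injective) assignment is a homomorphism $\rel K_{b(n)}\to\delta_R\rel K_n$.

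For the second homomorphism, I would identify $V(\rel K_{2^n})$ with the power set of $[n]$ and take the projection $(S,T)\mapsto S$. If $(S,T)\to(S',T')$ is an arc of $\delta_R \rel K_n$, choose $x\in T\cap S'$; then $x\in S'$ while $x\notin S$ (since $S$ and $T$ are disjoint), so $S\neq S'$, i.e.\ $(S,S')\in E(K_{2^n})$. Hence the projection is a homomorphism $\delta_R \rel K_n\to\rel K_{2^n}$, which completes the chain.

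The argument has no serious obstacle; the one point worth isolating is the first step, where recognising that $b(n)=\binom{n}{\lfloor n/2\rfloor}$ is precisely the size of a largest antichain of subsets of $[n]$ is what allows the clique $\rel K_{b(n)}$ — and not merely a smaller one — to map into $\delta_R\rel K_n$. Everything else is a direct check against Definition~\ref{def:delta_R}.
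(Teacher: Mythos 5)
Your proof is correct and matches the paper's argument essentially verbatim: the clique $\rel K_{b(n)}$ is realised on the vertices $(A,[n]\setminus A)$ for $\lfloor n/2\rfloor$-element sets $A$, and the map to $\rel K_{2^n}$ is the projection onto the first coordinate (the paper phrases it as retracting onto the sub-digraph of vertices $(S,[n]\setminus S)$, which is the same thing). Your explicit check that adjacent vertices have $S\neq S'$ is the small detail the paper leaves implicit; nothing further is needed.
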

\begin{proof}
  Consider vertices of the form $(S,V(K_n)\setminus S)$ in $\delta_R \rel K_n$, for subsets $S$ of $V(K_n)$ of size exactly $\lfloor n/2 \rfloor$.
  Clearly for any two such different $S,S'$, the set $S'$ intersects $V(K_n) \setminus S$, so these vertices from a clique of size $b(n)$ in $\delta_R \rel K_n$.
  For the other bound, note that mapping a vertex $(S,T)$ to $(S, V(K_n) \setminus S)$ gives a homomorphism from $\rel \delta_R \rel K_n$ to its subgraph of size at most $2^n$, and therefore to the clique~$\rel K_{2^n}$.
\end{proof}

In other words,  if $\chi(\delta \rel G) \leq n$ (i.e., if $\delta \rel G \to \rel K_n$) then $\rel G \to \delta_R \rel K_n \to \rel K_{2^n}$, hence $\chi(\rel G) \leq 2^n$.
Similarly, if $\chi(\rel G) \leq b(n)$, then $\chi(\delta \rel G) \leq n$.
Therefore, $\delta$ has the remarkable property of decreasing the chromatic number roughly logarithmically (even though it is computable in log-space!). This was first proved by Harner and Entringer in \cite{HarnerE72}.

Observation~\ref{obs:approxPoljakRodl} can be made tight
if we use another, somewhat trivial adjunction between digraphs and graphs:
Let $\sym \rel D$ be the symmetric closure of a digraph $\rel D$ and let $\sub \rel D$ be the maximal symmetric subgraph of $\rel D$; so $\sub \rel D \to \rel D \to \sym \rel D$ by the identity maps. Observe that $\sym$ and $\sub$ are adjoint: $\sym \rel D \to \rel D'$ if and only if $\rel D \to \sub \rel D'$ for all digraphs $\rel D,\rel D'$.\footnote{This is in fact the composition of two adjoint pairs: taking $\sym$ and $\sub$ as functions from digraphs to graphs and the inclusion function $\iota$ from graphs to digraphs, we have $\sym \rel D \to \rel G$ if and only if $\rel D \to \iota \rel G$ and $\iota \rel G \to \rel D$ if and only if $\rel G \to \sub \rel D$ for all graphs $\rel G$ and digraphs $\rel D$.}
Composing the two adjunctions, we get that $\delta \sym$ is adjoint to $\sub \delta_R$.
Therefore, for any digraphs $\rel H, \rel G$ with $\rel H\to\rel G$, $\delta\sym$ is a reduction from $\PCSP(\sub\delta_R \rel H, \sub\delta_R \rel G)$ to $\PCSP(\rel H, \rel G)$ by Corollary~\ref{cor:adj1}(2).
For cliques, Poljak and R\"odl~\cite{PoljakR81} showed the following.

\begin{lemma}[\cite{PoljakR81}]\label{lem:poljakRoedl} \label{lem:PR}
	For all $n \in \NN$, $\sub \delta_R \rel K_n$ is homomorphically equivalent to $\rel K_{b(n)}$.
\end{lemma}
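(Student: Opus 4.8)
I would prove Lemma~\ref{lem:PR} by exhibiting homomorphisms in both directions between $\sub\delta_R\rel K_n$ and $\rel K_{b(n)}$, where $b(n) = \binom{n}{\lfloor n/2\rfloor}$. One direction is already essentially in Observation~\ref{obs:approxPoljakRodl}, so the real content is the construction of a homomorphism $\sub\delta_R\rel K_n \to \rel K_{b(n)}$, i.e.\ a proper $b(n)$-colouring of the undirected graph $\sub\delta_R\rel K_n$.

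First I would unpack what $\sub\delta_R\rel K_n$ looks like. A vertex of $\delta_R\rel K_n$ is a pair $(S,T)$ with $S,T\subseteq [n]$ and $S\times T\subseteq E(K_n)$, which (since $K_n$ has all edges except loops) means exactly $S\cap T = \emptyset$ (allowing $S$ or $T$ empty); there is an arc $(S,T)\to(S',T')$ iff $T\cap S'\neq\emptyset$. Passing to $\sub$, we keep a vertex $(S,T)$ only if it has a loop removed appropriately — no, rather: $\sub$ takes the maximal symmetric subgraph, so $(S,T)$ and $(S',T')$ are adjacent in $\sub\delta_R\rel K_n$ iff \emph{both} $T\cap S'\neq\emptyset$ and $T'\cap S\neq\emptyset$; and we discard any vertex $(S,T)$ with a self-loop, i.e.\ with $T\cap S\neq\emptyset$, which is automatic since $S\cap T=\emptyset$. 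So $\sub\delta_R\rel K_n$ has vertex set $\{(S,T): S,T\subseteq[n],\ S\cap T=\emptyset\}$ with $(S,T)\sim(S',T')$ iff $S\cap T'\neq\emptyset\neq S'\cap T$.

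Now for the colouring. The key idea (this is the Poljak--R\"odl trick) is to colour $(S,T)$ by a set of size $\lfloor n/2\rfloor$ chosen ``between'' $S$ and $[n]\setminus T$. Concretely: since $S\cap T=\emptyset$, we have $S\subseteq [n]\setminus T$; pick any $M$ with $S\subseteq M\subseteq [n]\setminus T$ and $|M|=\lfloor n/2\rfloor$ — such an $M$ exists precisely when $|S|\le\lfloor n/2\rfloor\le n-|T|$, which need \emph{not} hold for all vertices, so first I would check that every vertex of $\sub\delta_R\rel K_n$ is homomorphically equivalent to (retracts onto) one satisfying this size constraint, or alternatively restrict attention via a retraction to the induced subgraph on such ``balanced'' vertices. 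Then the colour of $(S,T)$ is this $M\in\binom{[n]}{\lfloor n/2\rfloor}$, so there are $b(n)$ colours. To see it is proper: if $(S,T)\sim(S',T')$ with colours $M,M'$, then $S\cap T'\neq\emptyset$ gives an element in $M$ (as $S\subseteq M$) that is in $T'$ hence \emph{not} in $M'$ (as $M'\subseteq[n]\setminus T'$), so $M\neq M'$. That handles one direction; the other direction, $\rel K_{b(n)}\to\sub\delta_R\rel K_n$, follows by observing (as in the proof of Observation~\ref{obs:approxPoljakRodl}) that the vertices $(S,[n]\setminus S)$ for $S\in\binom{[n]}{\lfloor n/2\rfloor}$ are pairwise adjacent in $\sub\delta_R\rel K_n$ (for $S\neq S'$ of equal size, $S\cap([n]\setminus S')\neq\emptyset$ and symmetrically), forming a clique $\rel K_{b(n)}$; and these vertices indeed survive in $\sub\delta_R\rel K_n$.

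**Main obstacle.** The delicate point is the balancedness/retraction step: not every vertex $(S,T)$ admits an intermediate set $M$ of size exactly $\lfloor n/2\rfloor$, so one must either argue that $\sub\delta_R\rel K_n$ retracts onto its ``balanced core'' (by mapping $(S,T)\mapsto(S\cup X, T\cup Y)$ for suitable padding sets, and checking this is a homomorphism preserving adjacency), or set up the colouring more carefully so that unbalanced vertices are coloured consistently with their neighbours. Verifying that such a padding map is edge-preserving — i.e.\ doesn't create a pair $(S,T)$ with $S\cap T\neq\emptyset$ and doesn't destroy the witnessing intersections along edges — is the one spot where a genuine (if short) argument is needed rather than a routine check. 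Everything else is bookkeeping with finite sets.
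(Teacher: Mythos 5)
Your unpacking of $\sub\delta_R\rel K_n$ is correct (vertices are disjoint pairs $(S,T)$, adjacency means $S\cap T'\neq\emptyset\neq S'\cap T$), and the clique direction $\rel K_{b(n)}\to\sub\delta_R\rel K_n$ is fine. The properness check for your colouring is also valid \emph{on the vertices where the colouring is defined}. But the step you flag as the ``main obstacle'' is a genuine gap, not a routine verification, and the specific repair you propose does not work: the padding map $(S,T)\mapsto(S\cup X,T\cup Y)$ can only enlarge $S$, whereas the problematic vertices are exactly those with $\size S>\floor{n/2}$ (e.g.\ $(\{1,2,3\},\{4\})$ for $n=4$), which need $S$ to shrink; and shrinking $S$ is not edge-preserving, since the witness $x\in S\cap T'$ for an edge may be discarded. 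Your alternative fix (``colour unbalanced vertices consistently'') is where the entire difficulty lives: for $\size S>\floor{n/2}$ any middle-layer colour $M$ would satisfy $M\subseteq S$ rather than $S\subseteq M$, and then your element-chasing argument ($x\in S\cap T'$ forces $x\in M\setminus M'$) breaks down. So as written the proposal does not yield a $b(n)$-colouring of the whole graph.

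The paper's proof sidesteps balancedness entirely. It first applies the homomorphism $(S,T)\mapsto(S,[n]\setminus S)$ — well defined and edge-preserving for \emph{every} vertex, since $T\subseteq[n]\setminus S$ — which reduces the problem to the subgraph on pairs $(S,[n]\setminus S)$, where adjacency is exactly incomparability of $S$ and $S'$ in the subset lattice. Cliques are then antichains, so Sperner's theorem bounds the clique number by $b(n)$, and independent sets are chains, so Dilworth's theorem (equivalently, a symmetric chain decomposition of $2^{[n]}$) partitions the vertices into $b(n)$ independent sets, i.e.\ gives the $b(n)$-colouring. If you want to salvage your colouring scheme, the symmetric chain decomposition is the ``more careful set-up'' you were looking for: colour $S$ by the unique $\floor{n/2}$-element set on its symmetric chain; this agrees with your rule when $\size S\le\floor{n/2}$, but properness must then be argued via comparability along chains rather than by exhibiting an element of $M\setminus M'$.
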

\begin{proof}
	As before, mapping a vertex $(S,T)$ of $\sub \delta_R \rel K_n$ to $(S, V(K_n)\setminus S)$ gives a homomorphism to the subgraph induced by vertices of the form $(S, V(K_n)\setminus S)$,
	so we can restrict our attention to it.
	A (bidirected) clique corresponds exactly to an antichain of sets $S$ (in the subset lattice),
	so by Sperner's theorem (on the maximal size of such antichains) the largest clique has size $b(n)$.
	Independent sets in this subgraph correspond exactly to chains of sets $S$,
	thus by Dilworth's theorem (on poset width) the subgraph can be covered with $b(n)$ independent sets, giving a $b(n)$-colouring.
\end{proof}

Lemma~\ref{lem:poljakRoedl} is equivalent to the statement that for a undirected graph $\rel G$, $\delta \rel G = \delta \sym \rel G \to \rel K_n$ if and only if $\rel G \to \rel K_{b(n)}$.
This, in particular, means that the chromatic number of $\delta \rel G$ is determined by $\chi(\rel G)$, namely $\chi(\delta \rel G) = \min\{n \mid \chi(\rel G) \leq b(n)\}$.
This together with Corollary~\ref{cor:adj1}(2) implies that $\delta \sym$ gives the following reduction for approximate colouring:

\begin{lemma}\label{lem:red}
  $\PCSP(\rel K_{b(k')},\rel K_{c'})$ log-space reduces to $\PCSP(\rel K_{k},\rel K_{c})$, for all $k' > 1$, $c' \geq b(k') > 1$, $c \geq k$ such that $b(c)\leq c'$ and $k' \leq k$.
\end{lemma}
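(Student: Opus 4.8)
The plan is to obtain the reduction as the composition of a trivial homomorphic relaxation and the reduction coming from the arc-digraph adjunction between $\delta\sym$ and $\sub\delta_R$.

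Recall from the discussion preceding Lemma~\ref{lem:PR} that, for any digraphs $\rel H,\rel G$ with $\rel H\to\rel G$, the map $\delta\sym$ is a log-space reduction from $\PCSP(\sub\delta_R \rel H, \sub\delta_R \rel G)$ to $\PCSP(\rel H, \rel G)$ (an instance of Corollary~\ref{cor:adj1}(2)); note that $\sub\delta_R$ itself need not be computable, which is harmless since the right adjoint only serves to identify the source problem. Applying this with $\rel H = \rel K_k$ and $\rel G = \rel K_c$ (legitimate since $k\le c$ gives $\rel K_k\to\rel K_c$) and invoking Lemma~\ref{lem:PR}, which says $\sub\delta_R \rel K_n \leftrightarrow \rel K_{b(n)}$ for every $n$, it remains only to produce a log-space reduction from $\PCSP(\rel K_{b(k')},\rel K_{c'})$ to $\PCSP(\sub\delta_R \rel K_k, \sub\delta_R \rel K_c)$.

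For this I would use a homomorphic relaxation (the identity reduction), so it suffices to verify $\rel K_{b(k')}\to \sub\delta_R\rel K_k$ and $\sub\delta_R\rel K_c\to \rel K_{c'}$. For the first, $\sub\delta_R\rel K_k\leftrightarrow \rel K_{b(k)}$ by Lemma~\ref{lem:PR}, and $b(k')\le b(k)$ because $k'\le k$ and $n\mapsto b(n)=\B{n}$ is non-decreasing (immediate from Pascal's rule), so $\rel K_{b(k')}\to \rel K_{b(k)}\to \sub\delta_R\rel K_k$; for the second, $\sub\delta_R\rel K_c\leftrightarrow \rel K_{b(c)}$ and $b(c)\le c'$ by hypothesis, so $\sub\delta_R\rel K_c\to \rel K_{b(c)}\to \rel K_{c'}$. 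One should also record that every $\PCSP$ occurring is well defined, i.e.\ its two structures admit a homomorphism from the first to the second: $b(k')\le c'$ and $k\le c$ are hypotheses, while $\sub\delta_R\rel K_k\to \sub\delta_R\rel K_c$ follows from $\rel K_k\to \rel K_c$ by monotonicity of the right adjoint $\sub\delta_R$ (Lemma~\ref{lem:adj-technical}(3)). Composing the two log-space reductions proves the lemma.

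There is no genuine obstacle here; the content is purely the bookkeeping of homomorphisms between cliques. The only points worth flagging explicitly are the monotonicity of $b$ (needed to turn $k'\le k$ into $\rel K_{b(k')}\to \rel K_{b(k)}$) and the fact that $\delta\sym$ — but not $\sub\delta_R$ — is log-space computable, which is precisely the leverage that working with adjunctions provides.
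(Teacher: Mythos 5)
Your proof is correct and essentially the same as the paper's: it uses the adjoint pair $(\delta\sym,\sub\delta_R)$ together with Lemma~\ref{lem:PR} and verifies exactly the two conditions $\rel K_{b(k')}\to\sub\delta_R\rel K_k$ and $\sub\delta_R\rel K_c\to\rel K_{c'}$. The only cosmetic difference is that you factor the argument through Corollary~\ref{cor:adj1}(2) plus a homomorphic relaxation, whereas the paper invokes Theorem~\ref{thm:adj1} directly --- a decomposition the paper itself notes is equivalent.
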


\begin{proof}
  By Theorem~\ref{thm:adj1}, $\delta\sym$ is a reduction between the two problems if $\rel K_{b(k')} \to \sub\delta_R \rel K_{k}$ and $\sub\delta_R \rel K_{c} \to \rel K_{c'}$. Lemma~\ref{lem:PR} then implies that the first condition is satisfied if $b(k') \leq b(k)$, which is implied by $k'\leq k$, and the second is satisfied if $b(c) \leq c'$.
\end{proof}

\begin{remark}
We remark that the reduction in Lemma~\ref{lem:red} cannot be obtained by using a standard gadget reduction captured by the algebraic approach~\cite{BBKO19} (see Example~\ref{ex:gadgetVsHom}).
In detail, since $b(4)=6$, $\PCSP(\rel K_6,\rel K_{b(c)})$ log-space reduces to $\PCSP(\rel K_4,\rel K_c)$ for all $c \geq 4$. This contrasts with~\cite[Proposition~10.3]{BBKO19} which says that there exists a $c$ such that $\Pol(\rel K_4,\rel K_c)$ admits no minion homomorphism to any $\Pol(\rel K_{k'},\rel K_{c'})$ for $c' \geq k' > 4$.
Therefore, constructions like $\delta$ change the set of polymorphisms in an essential way and we believe that understanding the relation between $\Pol(\rel H,\delta_R \rel G)$ and $\Pol(\delta \rel H,\rel G)$ is an important question for future work.
\end{remark}

\subsubsection{Proof of Theorem \ref{thm:asymp}}

One consequence we derive from Lemma~\ref{lem:red} is a strengthening of Huang's result:

\begin{theorem}[Huang~\cite{Hua13}]\label{thm:Huang}
	For all sufficiently large $k$ and $c = 2^{\Omega(k^{1/3})}$, $\PCSP(\rel K_k, \rel K_c)$ is \NP-hard.
\end{theorem}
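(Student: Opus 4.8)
We do not reprove Theorem~\ref{thm:Huang} from scratch in this paper --- it is Huang's result \cite{Hua13} --- but since it is the combinatorial engine behind our asymptotic bound, it is worth recording the plan of its proof, which follows the PCP-based template common to the classical approximate-colouring hardness results. First I would start from the NP-hardness of a \emph{multilayered} (or \emph{smooth}) Label Cover problem, obtained from the PCP theorem (as in~\cite{Din07}) together with parallel repetition: it is NP-hard to distinguish a fully satisfiable instance $\Phi$ from one in which every labelling satisfies only an $\epsilon$-fraction of the constraints, and one may additionally arrange a weak expansion/smoothness property linking the layers of $\Phi$. From such a $\Phi$ I would build a graph $\mathcal G_\Phi$ by replacing each variable $v$ (with label set $[L_v]$, where $L_v$ is a constant depending only on the target $k$) by a copy of a \emph{long-code gadget}: a graph on (a biased slice of) $\{0,1\}^{L_v}$, suitably \emph{folded} so that the ``honest'' colourings are exactly the dictators, whose internal edges form a Kneser/Frankl--R\"odl-type graph (two points joined when their supports are almost disjoint), and whose inter-copy edges across a constraint of $\Phi$ encode inconsistency with respect to the projection. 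This reduction is polynomial-time computable.

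For \textbf{completeness}, given a satisfying labelling $\sigma$ of $\Phi$ I would colour each vertex $x$ of the copy at $v$ according to the value read by the dictator $\sigma(v)$ (bucketed into the colour classes of the chosen gadget); consistency of $\sigma$ across constraints rules out monochromatic inter-copy edges, so $\mathcal G_\Phi$ is properly coloured with $K$ colours, where $K$ is the chromatic number of the gadget, which we set equal to~$k$. For \textbf{soundness}, assume $\mathcal G_\Phi$ is properly $c$-coloured. Each colour class is independent, hence on each copy it induces an independent set in the gadget, so by averaging a constant fraction of variables $v$ admit some colour class of density at least $1/c$ on their copy. The heart of the argument is a \emph{stability} statement of Frankl--R\"odl type: an independent set of density at least $\mu$ in the gadget must be essentially concentrated on a bounded number of coordinates, as long as $\mu$ exceeds a threshold $\mu_0$. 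One then decodes a short list of candidate labels for $v$ from such a concentrated class, and the smoothness/multilayer structure of $\Phi$ lets one stitch these local lists into a global labelling satisfying more than an $\epsilon$-fraction of the constraints --- contradicting Label Cover hardness, provided $c<1/\mu_0$.

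The remaining step is the parameter optimisation. Tracking how $\mu_0$ depends on $K$ through the Frankl--R\"odl bound, and tuning the free parameters of the gadget --- the dimension and bias of the slice, the disjointness threshold, and, in Huang's refinement, an additional product/graph-power construction that sharpens the trade-off between chromatic number and independence ratio --- drives the threshold down to $\mu_0 = 2^{-\Omega(K^{1/3})}$. This is exactly the point where \cite{Hua13} improves on Khot's earlier $K^{\Omega(\log K)}$ bound, and it yields NP-hardness of $\PCSP(\rel K_k,\rel K_c)$ for $c = 2^{\Omega(k^{1/3})}$ with $k$ sufficiently large.

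I expect the main obstacle to be the soundness analysis: proving a Frankl--R\"odl-type stability theorem robust enough to constrain independent sets of density as small as $2^{-\Omega(k^{1/3})}$, and converting a proper $c$-colouring into a Label Cover labelling by list-decoding coherently across the multilayered instance. Pinning the exponent at the cube-root level --- rather than at a logarithmic or linear exponent --- is precisely what requires both the sharper combinatorial bound and the product trick, and this quantitative trade-off is the technical core of \cite{Hua13}; for the purposes of the present paper we use the statement as a black box.
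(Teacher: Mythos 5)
This statement is imported verbatim from Huang~\cite{Hua13}: the paper gives no proof of it and invokes it purely as a black box (as the starting point for Theorem~\ref{thm:asymp}), which is exactly how you treat it, so your proposal matches the paper. One caveat on your expository sketch of Huang's argument: the actual proof runs the Label-Cover-plus-gadget reduction over the \emph{low-degree long code} (the ``short code'' of Barak et al.), and the cube-root exponent comes from the analysis of independent sets in that short-code graph, rather than from a biased-slice Kneser gadget with a Frankl--R\"odl-type stability theorem as you describe --- but since both you and the paper use the theorem only as a black box, this discrepancy has no effect on anything downstream.
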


We improve the asymptotics from a~sub-exponential $c$ to single-exponential: from $c = \Theta(2^{k^{1/3}})$ to $c = b(k) - 1 = \Theta(2^k / \sqrt k)$ while at the same time relaxing the condition from ``sufficiently large $k$'' to $k\geq 4$.

\maintheoremasymptotics*

This theorem is proved by starting from Theorem~\ref{thm:Huang} and repeatedly using the reduction $\delta \sym$. Roughly speaking, each step improves the asymptotics a little. After a few steps, this results in a single-exponential function, and with slightly more precision, this results in exactly $b(k)-1$.
Moreover, one can notice that the requirements on ``sufficiently large $k$'' gets relaxed with every step. This allows us after sufficiently many steps to arrive at any $k \geq 4$.

We note it would \emph{not} be sufficient to start from a quasi-polynomial $c=k^{\Theta(\log k)}$ in Khot's \cite{Khot01} earlier result in place of Huang's Theorem~\ref{thm:Huang}.

\begin{proof}[Proof of Theorem~\ref{thm:asymp}]
	We start with Theorem~\ref{thm:Huang} which asserts a constant $C>0$ such that
  \begin{equation} \label{eq:Hua}
  \PCSP(\rel K_{k}, \rel K_{2^{\lfloor C \cdot k^{1/3}\rfloor}})
    \text{ is \NP-hard, for sufficiently large $k$.}
  \end{equation}

  After one reduction using Lemma~\ref{lem:red}, we obtain the following.

  \begin{claim}
    $\PCSP(\rel K_{k}, \rel K_{\floor{2^{k/4}}})$ is \NP-hard, for sufficiently large $k$.
  \end{claim}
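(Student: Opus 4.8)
The plan is to derive the claim from Huang's Theorem~\ref{thm:Huang} by a single application of the reduction of Lemma~\ref{lem:red}, preceded by a trivial homomorphic relaxation on the solution side. Fix a large integer $k$, put $c:=\floor{2^{k/4}}$, and aim to show that $\PCSP(\rel K_k,\rel K_c)$ is \NP-hard. In Lemma~\ref{lem:red} I will take $c':=b(c)$, so that the hypothesis $b(c)\le c'$ holds with equality; what remains is to choose a suitable $k'$, check the remaining numeric hypotheses of Lemma~\ref{lem:red}, and exhibit an \NP-hard source problem of the form $\PCSP(\rel K_{b(k')},\rel K_{c'})$.

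For the source, let $C>0$ be the constant from~\eqref{eq:Hua} and let $k'$ be the least positive integer with $b(k')\ge (c/C)^3$. Since $c\to\infty$ with $k$, so do $b(k')$ and hence $k'$; thus for $k$ large enough $b(k')$ is ``sufficiently large'' in the sense of~\eqref{eq:Hua}, which gives that $\PCSP\bigl(\rel K_{b(k')},\rel K_{2^{\floor{C\,b(k')^{1/3}}}}\bigr)$ is \NP-hard. By the choice of $k'$ we have $C\,b(k')^{1/3}\ge c$, and since $c$ is an integer this yields $\floor{C\,b(k')^{1/3}}\ge c$, so $2^{\floor{C\,b(k')^{1/3}}}\ge 2^c\ge \B{c}=b(c)=c'$. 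Therefore $\rel K_{c'}\to \rel K_{2^{\floor{C\,b(k')^{1/3}}}}$, and relaxing the solution requirement (a trivial reduction) shows that $\PCSP(\rel K_{b(k')},\rel K_{c'})$ is \NP-hard.

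It then remains to verify the hypotheses of Lemma~\ref{lem:red} for $k',c',k,c$: namely $k'>1$, $c'\ge b(k')>1$, $c\ge k$, $b(c)\le c'$, and $k'\le k$. Here $c\ge k$ (i.e.\ $\floor{2^{k/4}}\ge k$) holds for $k$ large, $b(c)\le c'$ holds with equality, and $k'>1$ holds for $k$ large since $b(k')\ge(c/C)^3\to\infty$ forces $k'\ge 2$. The inequality $k'\le k$ is the only slightly delicate point: from minimality of $k'$ we get $b(k'-1)<(c/C)^3$, and using the elementary bound $b(n+1)\le 2b(n)$ this gives $b(k')<2(c/C)^3\le 2\cdot 2^{3k/4}/C^3$; on the other hand $b(k)=\binom{k}{\floor{k/2}}\ge 2^k/(k+1)$, which exceeds $2\cdot 2^{3k/4}/C^3$ for all large $k$, so $b(k)>b(k')$, and since $b$ is strictly increasing we conclude $k'\le k$. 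This also gives $c=\floor{2^{k/4}}\ge k\ge k'$, hence $c'=b(c)\ge b(k')$ (again as $b$ is increasing), so indeed $c'\ge b(k')>1$. With all hypotheses in hand, Lemma~\ref{lem:red} provides a log-space reduction from the \NP-hard problem $\PCSP(\rel K_{b(k')},\rel K_{c'})$ to $\PCSP(\rel K_k,\rel K_c)=\PCSP(\rel K_k,\rel K_{\floor{2^{k/4}}})$, which is therefore \NP-hard.

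I expect the only part requiring real care to be the elementary arithmetic of $b$ — that $b$ is strictly increasing, that $b(n+1)\le 2b(n)$, and that $2^n/(n+1)\le b(n)\le 2^n$ — which underlies the bound $k'\le k$; intuitively, the cube root in Huang's exponent is exactly what is absorbed by the slack between $3k/4$ and $k$. Everything else is bookkeeping of the hypotheses of Lemma~\ref{lem:red} and of the meaning of ``for sufficiently large $k$''.
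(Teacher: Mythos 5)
Your proof is correct and takes essentially the same route as the paper: a single application of Lemma~\ref{lem:red} with Huang's Theorem~\ref{thm:Huang} as the source, using the bounds $2^n/(n+1)\le b(n)\le 2^n$. The paper simply instantiates the source as $\PCSP(\rel K_{b(k)},\rel K_{2^{\floor{C\cdot b(k)^{1/3}}}})$, i.e.\ takes $k'=k$, which makes the condition $k'\le k$ automatic and replaces your minimal-$k'$ bookkeeping by the single inequality $b(\floor{2^{k/4}})\le 2^{\floor{2^{k/4}}}\le 2^{\floor{C\cdot b(k)^{1/3}}}$.
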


  \begin{proof}
    Substituting $b(k)$ for $k$ in (\ref{eq:Hua}), we get that $\PCSP(\rel K_{b(k)}, \rel K_{2^{\lfloor C \cdot b(k)^{1/3}\rfloor }})$ is \NP-hard, for sufficiently large $k$. We apply Lemma~\ref{lem:red}; to show that it implies the claim, we need $b(\floor{2^{k/4}}) \leq 2^{\floor{C \cdot b(k)^{1/3}}}$. This follows since $b(m) \leq 2^m$ for all $m$ and $2^{k/4} \leq (2^k/k)^{1/3} \leq \floor{C\cdot b(k)^{1/3}}$ for sufficiently large $k$, and therefore
    \[
      b(\floor{2^{k/4}}) \leq 2^{\floor{2^{k/4}}} \leq 2^{\floor{C \cdot b(k)^{1/3}}}
    \]
    as we wanted to show.
  \end{proof} 

  The second reduction gives:
  \begin{claim}
    $\PCSP(\rel K_{k}, \rel K_{\floor{2^k/4k}})$ is \NP-hard, for sufficiently large $k$.
  \end{claim}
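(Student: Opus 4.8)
The plan is to apply Lemma~\ref{lem:red} one more time, now starting from the previous claim instead of from Huang's Theorem~\ref{thm:Huang}, in exact analogy with the proof of the previous claim. First I would substitute $b(k)$ for $k$ in the previous claim; this is legitimate because $b(k)\to\infty$ as $k\to\infty$, so the phrase ``sufficiently large $k$'' is preserved under this substitution, and it gives that $\PCSP(\rel K_{b(k)},\rel K_{\floor{2^{b(k)/4}}})$ is \NP-hard for sufficiently large $k$.

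Next I would invoke Lemma~\ref{lem:red} with source arity $k'=k$ and with $c'=\floor{2^{b(k)/4}}$, reducing $\PCSP(\rel K_{b(k')},\rel K_{c'})$ to $\PCSP(\rel K_k,\rel K_c)$ for $c=\floor{2^k/4k}$. For sufficiently large $k$ every hypothesis of Lemma~\ref{lem:red} is immediate except one: $k'=k>1$; $c'=\floor{2^{b(k)/4}}\ge b(k)=b(k')$, since the exponent $b(k)/4$ is already enormous for large $k$ and so $2^{b(k)/4}$ dwarfs $b(k)$; $c=\floor{2^k/4k}\ge k$; and $k'=k\le k$ holds trivially.

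The only condition that calls for a (routine) estimate is $b(c)\le c'$, that is, $b(\floor{2^k/4k})\le\floor{2^{b(k)/4}}$. I would verify it from the elementary bounds $b(m)\le 2^m$ (valid for all $m$) and $b(k)=\Theta(2^k/\sqrt{k})$. The former gives $b(\floor{2^k/4k})\le 2^{\floor{2^k/4k}}\le 2^{2^k/(4k)}$. The latter gives, for large $k$, $b(k)/4\ge 2\cdot 2^k/(4k)$, hence $2^{b(k)/4}\ge(2^{2^k/(4k)})^2\ge 2^{2^k/(4k)}+1$ (using $x^2\ge x+1$ for $x\ge 2$), so $\floor{2^{b(k)/4}}\ge 2^{2^k/(4k)}\ge b(\floor{2^k/4k})$, as required. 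Since Lemma~\ref{lem:red} yields a log-space reduction, \NP-hardness transfers and the claim follows. I foresee no real difficulty: this step is a carbon copy of the previous claim's proof, and its entire content is the arithmetic check above, which is comfortably satisfied once $k$ is large enough.
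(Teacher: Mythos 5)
Your proposal is correct and follows essentially the same route as the paper: substitute $b(k)$ for $k$ in the previous claim and apply Lemma~\ref{lem:red}, with the only content being the verification of $b(\floor{2^k/4k})\le\floor{2^{b(k)/4}}$. The paper verifies this inequality slightly differently (via $2^k/4k\le b(k)/4$, monotonicity of $b$, and $b(m)\le 2^m$, i.e.\ $b(\floor{2^k/4k})\le b(\floor{b(k)/4})\le 2^{\floor{b(k)/4}}\le\floor{2^{b(k)/4}}$), but your factor-of-two exponent comparison is an equally valid routine estimate.
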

  \begin{proof}
    Substitute $b(k)$ for $k$ in the first claim to get that $\PCSP(\rel K_{b(k)},\rel K_{\floor{2^{b(k)/4}}})$ is \NP-hard. Observe that
    \(
      2^k/4k \leq b(k)/4
    \)
    for sufficiently large $k$, and therefore
    \[
      b(\floor{2^k/4k}) \leq b(\floor{b(k)/4}) \leq 2^{\floor{b(k)/4}} \leq \floor{2^{b(k)/4}}.
    \]
    Hence by Lemma~\ref{lem:red}, $\PCSP(\rel K_{b(k)},\rel K_{\floor{2^{b(k)/4}}})$
    reduces to $\PCSP(\rel K_{k}, \rel K_{\floor{2^k/4k}})$.
  \end{proof} 

  After the third reduction, we get the following.
  \begin{claim} \label{cl:3}
    $\PCSP(\rel K_{k}, \rel K_{b(k-1)})$ is \NP-hard for sufficiently large $k$.
  \end{claim}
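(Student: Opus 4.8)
The plan is to carry the iteration of the two preceding claims one step further. Substituting $b(k)$ for $k$ in the claim that $\PCSP(\rel K_k,\rel K_{\floor{2^k/4k}})$ is \NP-hard for sufficiently large $k$ (which is legitimate because $b(k)\to\infty$ as $k\to\infty$), we obtain that $\PCSP(\rel K_{b(k)},\rel K_{\floor{2^{b(k)}/4b(k)}})$ is \NP-hard for sufficiently large $k$. I would then apply Lemma~\ref{lem:red} with parameters $k'=k$, $c'=\floor{2^{b(k)}/4b(k)}$ and $c=b(k-1)$: it produces a log-space reduction from $\PCSP(\rel K_{b(k)},\rel K_{c'})$ to $\PCSP(\rel K_k,\rel K_{b(k-1)})$, and hence the \NP-hardness of the latter, provided the hypotheses of the lemma are satisfied for all large $k$.

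Checking those hypotheses is mostly routine: $k'=k\le k$ and $k'=k>1$, $b(k')=b(k)>1$ hold for large $k$; $c=b(k-1)\ge k$ for large $k$, since $b$ grows exponentially in its argument; and $c'=\floor{2^{b(k)}/4b(k)}\ge b(k)=b(k')$ for large $k$, since $2^{b(k)}/4b(k)\ge b(k)$ there. The only condition requiring a computation is $b(c)\le c'$, that is,
\[
  b(b(k-1))\ \le\ \floor{2^{b(k)}/4b(k)}.
\]
Using the same crude estimate $b(m)\le 2^m$ that was used in the earlier claims, it is enough to prove $2^{b(k-1)}\le\floor{2^{b(k)}/4b(k)}$, and since the left-hand side is an integer this is equivalent to $4b(k)\cdot 2^{b(k-1)}\le 2^{b(k)}$, i.e.\ to
\[
  b(k)-b(k-1)\ \ge\ \log_2\!\bigl(4b(k)\bigr).
\]

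This last inequality is where the (mild) content of the proof lies, and it is the step I expect to be the main obstacle, although it is entirely elementary. By Pascal's rule, $b(k)=2b(k-1)$ when $k$ is even, while $b(k)=\tfrac{2m+1}{m+1}\,b(k-1)$ when $k=2m+1$ is odd; hence in both cases $b(k)-b(k-1)\ge\tfrac12 b(k-1)$, which is of order $\Theta(2^{k}/\sqrt{k})$. On the other hand $\log_2(4b(k))\le k+2$, since $b(k)\le 2^{k}$. Therefore $b(k)-b(k-1)$ eventually exceeds $\log_2(4b(k))$, so the displayed inequality, and with it $b(c)\le c'$, holds for all sufficiently large $k$; plugging this back into Lemma~\ref{lem:red} completes the proof.
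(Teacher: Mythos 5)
Your proof is correct and follows essentially the same route as the paper: substitute $b(k)$ into the previous claim and apply Lemma~\ref{lem:red}, with the only real work being the verification of $b(b(k-1))\le \floor{2^{b(k)}/4b(k)}$ via Pascal's rule and the bound $b(m)\le 2^m$. The paper bounds $b(b(k-1))\le b(\tfrac23 b(k))\le 2^{\frac23 b(k)}$ whereas you bound it by $2^{b(k-1)}$ and compare exponents, but these are interchangeable variants of the same estimate.
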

  \begin{proof}
    Again, substitute $b(k)$ for $k$ in the second claim to obtain that $\PCSP(\rel K_{b(k)}$, $\rel K_{\floor{2^{b(k)}/4b(k)}})$ is \NP-hard.
    Observe that $b(k) \geq \frac{3}{2} b(k-1)$ for all $k\geq 1$, since
    \begin{align*}
      b(2k) &\textstyle
        = \binom{2k}{k} = \binom{2k-1}{k-1} \frac{2k}{k} = 2 \cdot b(2k-1) \geq \frac{3}{2} b(2k-1), \text{ and} \\
      b(2k+1) &\textstyle
        = \binom{2k+1}{k} = \binom{2k}{k} \frac{2k+1}{k+1} = b(2k) (2-\frac{1}{k+1})\geq \frac{3}{2} b(2k).
    \end{align*}
    Therefore,
    \[\textstyle
      b(b(k-1)) \leq b(\frac23 b(k)) \leq 2^{\frac23 b(k)} \leq 2^{b(k)}/4b(k).
    \]
    for sufficiently large $k$,
    and consequently, by Lemma~\ref{lem:red}, $\PCSP(\rel K_{b(k)},\rel K_{\floor{2^{b(k)}/4b(k)}})$ reduces to $\PCSP(\rel K_{k}, \rel K_{b(k-1)})$.
  \end{proof}

  Finally after the fourth reduction:
  \begin{claim} \label{cl:4}
    $\PCSP(\rel K_{k}, \rel K_{b(k)-1})$ is \NP-hard for sufficiently large $k$.
  \end{claim}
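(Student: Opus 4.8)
The plan is to derive Claim~\ref{cl:4} from Claim~\ref{cl:3} by one further application of Lemma~\ref{lem:red}, following the exact template used for the first three claims. First I would substitute $b(k)$ for $k$ in Claim~\ref{cl:3}: this yields that $\PCSP(\rel K_{b(k)}, \rel K_{b(b(k)-1)})$ is \NP-hard for sufficiently large $k$.

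Then I would apply Lemma~\ref{lem:red} with $k' = k$ and $c' = b(b(k)-1)$ on the ``source'' side and $c = b(k)-1$ on the ``target'' side, to conclude that $\PCSP(\rel K_{b(k)}, \rel K_{b(b(k)-1)})$ log-space reduces to $\PCSP(\rel K_k, \rel K_{b(k)-1})$; composed with the previous step, this gives the claim. The five hypotheses of Lemma~\ref{lem:red} have to be checked: $k' > 1$, $c' \geq b(k') > 1$, $c \geq k$, $b(c) \leq c'$, and $k' \leq k$. Of these, $k' = k > 1$ (for large $k$), $c' = b(b(k)-1) > 1$, and $k' = k \leq k$ are immediate, and $b(c) \leq c'$ holds with equality since $b(c) = b(b(k)-1) = c'$. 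The only substantive points are $c = b(k) - 1 \geq k$ and $c' = b(b(k)-1) \geq b(k) = b(k')$; the first is the elementary fact that $b(k) = \binom{k}{\lfloor k/2 \rfloor}$ is at least $k+1$ for all sufficiently large $k$ (indeed $b$ grows exponentially), and the second then follows from the first together with monotonicity of $b$.

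I expect essentially no real obstacle here: the step is pure bookkeeping, and the mild cleverness is simply that the substitution is arranged so that the hypothesis $b(c) \leq c'$ of Lemma~\ref{lem:red} is met on the nose. This completes the fourth and final ``boosting'' reduction; what remains afterwards --- turning ``sufficiently large $k$'' into the explicit bound $k \geq 4$ of Theorem~\ref{thm:asymp} --- is a separate matter, handled by iterating the reduction a bounded number of additional times and tracking how the threshold on $k$ shrinks at each stage.
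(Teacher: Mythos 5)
Your proposal is correct and is exactly the paper's argument: substitute $b(k)$ for $k$ in Claim~\ref{cl:3} to obtain \NP-hardness of $\PCSP(\rel K_{b(k)},\rel K_{b(b(k)-1)})$, then apply Lemma~\ref{lem:red} (with $b(c)=b(b(k)-1)=c'$ holding on the nose). The paper states this in two sentences without spelling out the hypothesis checks, which you verify correctly.
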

  \begin{proof}
    Substitute $b(k)$ for $k$ in the third claim to get that $\PCSP(\rel K_{b(k)},\rel K_{b(b(k)-1)})$ is \NP-hard.
    By Lemma~\ref{lem:red} this reduces to $\PCSP(\rel K_{k}, \rel K_{b(k)-1})$.
  \end{proof}

  This concludes the improvement in asymptotics.
  To relax the requirements for $k$, we repeat the reduction enough times. Each step is given by the following claim.
  \begin{claim} \label{cl:5}
    $\PCSP(\rel K_{b(k)}, \rel K_{b(b(k))-1})$ log-space reduces to
    $\PCSP(\rel K_{k}, \rel K_{b(k)-1})$ for all $k \geq 4$.
  \end{claim}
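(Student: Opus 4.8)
The plan is to obtain Claim~\ref{cl:5} as a single application of Lemma~\ref{lem:red}, exactly in the spirit of the previous four reductions. Recall that Lemma~\ref{lem:red} asserts that $\PCSP(\rel K_{b(k')},\rel K_{c'})$ log-space reduces to $\PCSP(\rel K_k,\rel K_c)$ whenever $k'>1$, $c'\ge b(k')>1$, $c\ge k$, $b(c)\le c'$, and $k'\le k$. I would apply it with its parameter $k'$ set equal to our $k$, its parameter $c'$ set to $b(b(k))-1$, its target clique index (also called $k$ in the lemma) taken to be our $k$ as well, and $c:=b(k)-1$. Then the source problem $\PCSP(\rel K_{b(k')},\rel K_{c'})$ is exactly $\PCSP(\rel K_{b(k)},\rel K_{b(b(k))-1})$ and the target problem $\PCSP(\rel K_k,\rel K_c)$ is exactly $\PCSP(\rel K_k,\rel K_{b(k)-1})$, as demanded by Claim~\ref{cl:5}.

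It then remains only to verify the five numerical hypotheses for every $k\ge 4$, and all of them reduce to two elementary facts about $b$: that $b$ is strictly increasing on the positive integers, and that $b(k)\ge k+1$ for $k\ge 4$. Both follow from the recurrences $b(2k)=2\,b(2k-1)$ and $b(2k+1)=b(2k)\bigl(2-\tfrac1{k+1}\bigr)$ established in the proof of Claim~\ref{cl:3} (these in particular give $b(k)\ge\tfrac32\,b(k-1)$ for all $k\ge 1$, so from $b(4)=6$ the values grow at least geometrically). Granting these, the checks are: $k'=k>1$ and $k'=k\le k$ are immediate; $c=b(k)-1\ge k$ is precisely $b(k)\ge k+1$; for $c'\ge b(k')$ observe that $b(k)\ge 6\ge 4$, so $b(b(k))>b(k)$ and hence $b(b(k))-1\ge b(k)=b(k')>1$; and $b(c)\le c'$ reads $b(b(k)-1)\le b(b(k))-1$, which holds because $b(k)-1<b(k)$ forces $b(b(k)-1)<b(b(k))$ by strict monotonicity, with both sides integers.

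Once these inequalities are in place, Lemma~\ref{lem:red} delivers the claimed log-space reduction and the proof of Claim~\ref{cl:5} is complete. I do not expect any genuine obstacle: all the content --- the reduction $\delta\sym$ and the Poljak--R\"odl identity $\sub\delta_R\rel K_n\leftrightarrow\rel K_{b(n)}$ --- has already been absorbed into Lemma~\ref{lem:red}, and what is left is purely the arithmetic of pushing the parameters through one more time. The only place warranting care is the smallest case $k=4$, where $b(k)=6$, $b(b(k))=b(6)=20$, $c'=19$ and $c=5$; there I would check explicitly that $19\ge b(4)=6$, that $5\ge 4$, and that $b(5)=10\le 19$, to be confident the general argument has no hidden off-by-one.
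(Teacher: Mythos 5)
Your proof is correct and follows exactly the paper's route: a single application of Lemma~\ref{lem:red} with $k'=k$, $c=b(k)-1$, $c'=b(b(k))-1$, with the hypotheses verified via the strict monotonicity of $b$ and $b(k)>k$ for $k\ge 4$, yielding $b(b(k)-1)\le b(b(k))-1$. The paper's own proof is just a terser version of the same verification.
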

  \begin{proof}
    Lemma~\ref{lem:red} gives the reduction since $b(k)$ is strictly increasing and $b(k)>4$ for $k \geq 4$, and hence $b(b(k)-1) \leq b(b(k))-1$.
  \end{proof}

  To finish the proof, assume that $k'\geq 4$ and let $k_0$ be sufficiently large so Claim~\ref{cl:4} is true for all $k\geq k_0$. Since $k'\geq 4$ there is $n$ such that $k_0 \leq b^{(n)}(k')$ (where $b^{(n)}$ denotes the $n$-fold composition of $b$). Applying Claim~\ref{cl:5} $n$ times gives a log-space reduction from $\PCSP(\rel K_{b^{(n)}(k')},\rel K_{b^{(n+1)}(k')-1})$ to $\PCSP(\rel K_{k'},\rel K_{b(k')-1})$, and since the first problem is \NP-hard (Claim~\ref{cl:4}) this concludes that $\PCSP(\rel K_{k'},\rel K_{b(k')-1})$ is \NP-hard for all $k'\geq 4$.
\end{proof}

\subsubsection{Proof of Theorem~\ref{thm:conditional}}

In this section we prove a slightly more general result than Theorem~\ref{thm:conditional}:

\begin{theorem}\label{thm:conditional-ext}
    If, for some loopless digraph $\rel H$ and all loopless digraphs $\rel G$ such that $\rel H\to \rel G$, $\PCSP(\rel H,\rel G)$ is \NP-hard, then $\PCSP(\rel K_3,\rel G)$ is \NP-hard for all loopless digraphs $\rel G$ such that $\rel K_3 \to \rel G$.
\end{theorem}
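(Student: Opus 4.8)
The plan is to use the arc digraph operation $\delta$ together with its right adjoint $\delta_R$ (Definition~\ref{def:delta_R}), iterated a constant number of times. Recall that $\delta\colon\dgraf\to\dgraf$ is log-space computable (it is the pp-power $\Gamma_\phi$ of a fixed gadget), that $\delta$ and $\delta_R$ are adjoint, hence $\delta$ is monotone (Lemma~\ref{lem:adj-technical}(3)), and that $\delta_R$ preserves looplessness: $\delta_R\rel D$ has a loop at $(S,T)$ only when $S\cap T\ne\emptyset$, which produces a loop of $\rel D$. So for every $r$ the iterate $\delta^{(r)}$ is log-space computable, monotone, adjoint to $\delta_R^{(r)}$, and $\delta_R^{(r)}$ preserves looplessness. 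The first key point I would establish is that the hypothesis of the theorem \emph{propagates along $\delta$}: if $\PCSP(\rel H,\rel G)$ is \NP-hard for all loopless $\rel G$ with $\rel H\to\rel G$, then the same holds with $\delta\rel H$ in place of $\rel H$. Indeed, for a loopless $\rel G$ with $\delta\rel H\to\rel G$, adjunction gives $\rel H\to\delta_R\rel G$ with $\delta_R\rel G$ loopless, so $\PCSP(\rel H,\delta_R\rel G)$ is \NP-hard by hypothesis, and Corollary~\ref{cor:adj1}(1) applied to the pair $\delta,\delta_R$ turns this into \NP-hardness of $\PCSP(\delta\rel H,\rel G)$. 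Iterating, the hypothesis holds for $\delta^{(r)}\rel H$ for every $r$. Moreover, if the hypothesis holds for \emph{some} loopless $\rel H^\ast$ with $\rel H^\ast\to\rel K_3$, then the conclusion of the theorem follows: for a loopless $\rel G$ with $\rel K_3\to\rel G$ we have $\rel H^\ast\to\rel K_3\to\rel G$, so $\PCSP(\rel H^\ast,\rel G)$ is \NP-hard, and the homomorphic relaxation from $\PCSP(\rel H^\ast,\rel G)$ to $\PCSP(\rel K_3,\rel G)$ (valid since $\rel H^\ast\to\rel K_3$) gives \NP-hardness of $\PCSP(\rel K_3,\rel G)$. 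Combining the two points, it suffices to prove the combinatorial claim that \emph{for every loopless digraph $\rel H$ there is an $r$ with $\delta^{(r)}\rel H\to\rel K_3$}; all reductions produced along the way are log-space.

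To prove the claim I would first reduce to cliques. Writing $\chi(\rel D)$ for the chromatic number of the underlying undirected graph of $\rel D$, every loopless $\rel H$ satisfies $\rel H\to\rel K_m$ with $m=\chi(\rel H)$, so by monotonicity of $\delta^{(r)}$ it is enough to find, for each $m$, an $r$ with $\delta^{(r)}\rel K_m\to\rel K_3$. Observation~\ref{obs:cliques-and-delta} gives $\rel K_{b(n)}\to\delta_R\rel K_n$, so $\chi(\rel D)\le b(n)$ implies $\rel D\to\rel K_{b(n)}\to\delta_R\rel K_n$, i.e.\ $\delta\rel D\to\rel K_n$; hence $\chi(\delta\rel D)\le\min\{n:\chi(\rel D)\le b(n)\}$. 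As this integer function is strictly decreasing on arguments $\ge5$, finitely many applications of $\delta$ bring us into the regime $\chi(\delta^{(r_0)}\rel K_m)\le4$.

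The remaining and, I expect, most delicate step is the descent from chromatic number $4$ to $3$: one must still produce a further $r$ with $\delta^{(r)}\rel K_m\to\rel K_3$, even though $\delta\rel K_4\not\to\rel K_3$ (by Lemma~\ref{lem:PR} and adjunction, $\delta\rel K_4\to\rel K_n$ iff $4\le b(n)$, and $b(3)=3$). The idea is that the Poljak--R\"odl plateau at chromatic number $4$ is a feature of the \emph{symmetric} construction $\delta\sym$ --- whose right adjoint is $\sub\delta_R$, for which $\rel K_4$ is fixed up to homomorphic equivalence (Lemma~\ref{lem:PR}) --- and not of $\delta$ itself: the intermediate digraphs $\delta^{(r_0)}\rel K_m$ with $r_0\ge1$ are ``thin'', their symmetric part $\sub\delta^{(r_0)}\rel K_m$ being only a matching, and applying the unsymmetrised $\delta$ to such a digraph does lower the chromatic number below $4$. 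Establishing this cleanly --- equivalently, showing that the increasing chain $\rel K_3\to\delta_R\rel K_3\to\delta_R^{(2)}\rel K_3\to\cdots$ eventually dominates every finite clique --- by a structural analysis of iterated arc digraphs is the technical core of the proof. Once the claim holds, the theorem follows, and in particular Theorem~\ref{thm:conditional} is the special case $\rel H=\rel K_k$ (any loopless $\rel G$ with $\rel K_k\to\rel G$ admits $\rel G\to\rel K_{\chi(\rel G)}$ with $\chi(\rel G)\ge k$, so the clique hypothesis of Theorem~\ref{thm:conditional} yields the hypothesis of Theorem~\ref{thm:conditional-ext}).
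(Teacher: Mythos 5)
Your overall architecture is exactly the paper's: the propagation step (your ``first key point'') is Lemma~\ref{lem:right-hard-digraph}, proved the same way via Corollary~\ref{cor:adj1}(1) and the observation that $\delta_R$ preserves looplessness; the reduction to the combinatorial claim that $\delta^{(r)}\rel H\to\rel K_3$ for some $r$, and the descent to chromatic number $4$ via $\rel K_{b(n)}\to\delta_R\rel K_n$, are Lemma~\ref{lem:delta^i}. However, there is a genuine gap at the step you yourself flag as ``the technical core'': you never actually prove that a further iterate of $\delta$ takes $\rel K_4$ below chromatic number $4$. Your heuristic discussion of why the Poljak--R\"odl plateau should be an artifact of the symmetric construction $\delta\sym$ is correct in spirit, but it is not a proof, and without it the whole argument fails to close --- everything upstream only reduces the theorem to this one unestablished claim.

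The paper fills the gap with a single concrete lemma (Lemma~\ref{lem:deltadelta}, due independently to Zhu, Poljak, and Schmerl): $\delta(\delta\rel K_4)\to\rel K_3$. The colouring is explicit and short. Vertices of $\delta^{(2)}\rel K_4$ are triples $(i,j,k)$ over $\{0,1,2,3\}$ with $i\neq j$ and $j\neq k$, and edges go from $(i,j,k)$ to $(j,k,l)$. Set $h(i,j,k)=j$ whenever $j\in\{0,1,2\}$, and $h(i,3,k)=$ any element of $\{0,1,2\}\setminus\{i,k\}$ (non-empty since $i,k\neq 3$). One checks the three cases ($j,k\leq 2$; $j=3$; $k=3$) directly. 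With this in hand, once $\chi(\delta^{(r_0)}\rel K_m)\leq 4$ you get $\delta^{(r_0+2)}\rel K_m\to\delta^{(2)}\rel K_4\to\rel K_3$ by monotonicity, and your proof is complete. So the route is right, but you should supply (or cite) this two-step colouring rather than leave it as a conjecture; it is the one genuinely non-formal ingredient of the whole theorem.
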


Indeed, it is easy to see that the assumption of Theorem~\ref{thm:conditional-ext} is slightly weaker than the assumption of Theorem~\ref{thm:conditional}, while the conclusions of the two theorems are equivalent.

We prove Theorem~\ref{thm:conditional-ext} by iterating the reduction given by $\delta$ in a similar way as in the last paragraph of the proof of Theorem~\ref{thm:asymp}; in fact that part of the proof could be used with only minor changes to prove that $\PCSP(\rel K_4,\rel K_c)$ is \NP-hard for all $c\geq 4$ if $\PCSP(\rel K_k,\rel K_c)$ is \NP-hard for some $k$ and all $c\geq k$. We get to $\rel K_3$ by omitting the intermediate use of $\sym$, i.e., we keep orientation of the edges. One step of the reduction is given by the following lemma.

\begin{lemma}\label{lem:right-hard-digraph}
    Let $\rel H$ be a loopless digraph.
    If, for all loopless digraphs $\rel G'$ such that $\rel H \to \rel G'$, $\PCSP(\rel H, \rel G')$ is \NP-hard, then $\PCSP(\delta\rel H,\rel G)$ is \NP-hard for all loopless digraphs $\rel G$ such that $\delta \rel H \to \rel G$.
\end{lemma}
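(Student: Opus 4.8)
The plan is to deduce this from the adjunction between $\delta$ and $\delta_R$ together with Corollary~\ref{cor:adj1}(1). Fix a loopless digraph $\rel G$ with $\delta\rel H\to\rel G$. The first step is to observe that $\delta_R\rel G$ is again loopless: by Definition~\ref{def:delta_R}, a loop at a vertex $(S,T)$ of $\delta_R\rel G$ would require $T\cap S\neq\emptyset$ together with $S\times T\subseteq E(G)$, so choosing $v\in S\cap T$ would give $(v,v)\in E(G)$, contradicting looplessness of $\rel G$. (Note also that $\delta_R\rel G$ is finite, since its vertices are pairs of subsets of the finite set $V(G)$.)

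The second step uses adjunction. Since $\delta$ and $\delta_R$ are adjoint and $\delta\rel H\to\rel G$, we get $\rel H\to\delta_R\rel G$. Thus $\delta_R\rel G$ is a loopless digraph admitting a homomorphism from $\rel H$, and the hypothesis of the lemma applies to it: $\PCSP(\rel H,\delta_R\rel G)$ is \NP-hard.

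The final step is to transfer this hardness along $\delta$. Applying Corollary~\ref{cor:adj1}(1) to the adjoint pair $(\delta,\delta_R)$ with the digraphs $\rel H$ and $\rel G$ (the required hypothesis $\rel H\to\delta_R\rel G$, equivalently $\delta\rel H\to\rel G$, holds by the previous step), we conclude that $\delta$ is a reduction from $\PCSP(\rel H,\delta_R\rel G)$ to $\PCSP(\delta\rel H,\rel G)$. Moreover, $\delta$ is the pp-power $\Gamma_\phi$ for the length-one arc gadget $\phi=(x_1,x_2)\in E\wedge(y_1,y_2)\in E\wedge x_2=y_1$, hence it is log-space (in particular polynomial-time) computable. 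Composing this reduction with the \NP-hardness of $\PCSP(\rel H,\delta_R\rel G)$ yields that $\PCSP(\delta\rel H,\rel G)$ is \NP-hard, as claimed.

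There is essentially no obstacle here: the entire argument is a short assembly of already-established facts (the $\delta\dashv\delta_R$ adjunction, Corollary~\ref{cor:adj1}(1), and log-space computability of $\delta$), and the only verification needed is the one-line observation that $\delta_R$ preserves looplessness, which is precisely what lets us feed $\delta_R\rel G$ back into the lemma's hypothesis and makes the induction in Theorem~\ref{thm:conditional-ext} go through.
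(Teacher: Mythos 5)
Your proposal is correct and follows essentially the same route as the paper's proof: establish that $\delta_R\rel G$ is loopless, use the adjunction to get $\rel H\to\delta_R\rel G$ so the hypothesis yields \NP-hardness of $\PCSP(\rel H,\delta_R\rel G)$, and transfer this via the log-space reduction from Corollary~\ref{cor:adj1}(1). No gaps.
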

\begin{proof}
    Let $\rel H$ be a digraph that satisfies the premise, and let $\rel G$ be a loopless digraph such  that $\delta\rel H \to \rel G$. We aim to prove that $\PCSP(\delta \rel H,\rel G)$ is \NP-hard.
    Corollary \ref{cor:adj1}(1) gives a log-space reduction from $\PCSP(\rel H, \delta_R \rel G)$ to $\PCSP(\delta \rel H,\rel G)$. We claim that the digraph $\delta_R \rel G$ is loopless, which follows from the construction of $\delta_R$ (see Definition~\ref{def:delta_R}) and the assumption that $\rel G$ is loopless: Indeed, if a vertex $(S,T)$ in $\delta_R\rel G$ has a loop, then $S\cap T \neq \emptyset$. Consequently, $\rel G$ has a loop on any vertex $v\in S\cap T$ since $S\times T \subseteq E(G)$. Clearly also $\rel H\to \delta_R \rel G$ since $\delta$ and $\delta_R$ are adjoint. Therefore, we have that $\PCSP(\rel H,\delta_R \rel G)$ is \NP-hard by the assumption of the lemma, and we conclude that $\PCSP(\delta \rel H,\rel G)$ is \NP-hard as well.
\end{proof}

To finish the proof of Theorem~\ref{thm:conditional-ext}, we will need the following two lemmas. The first one, which is a special case of the second, was independently discovered by Zhu~\cite{zhu1998survey}, Poljak~\cite{Poljak1991}, and Schmerl (unpublished, see~\cite{Tardif08:survey}). For the sake of completeness, we include the proof of Zhu~\cite{zhu1998survey}.

\begin{lemma}\label{lem:deltadelta}
    There is a homomorphism $\delta(\delta\rel K_4) \to \rel K_3$.
\end{lemma}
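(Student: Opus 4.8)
The plan is not to wrestle with $\delta(\delta\rel K_4)$ directly, but to use the adjunction between $\delta$ and $\delta_R$ proved above: since $\delta\rel H \to \rel G$ iff $\rel H \to \delta_R\rel G$, it suffices to exhibit a homomorphism $\delta\rel K_4 \to \delta_R\rel K_3$. Recall that $\delta\rel K_4$ has as vertices the twelve ordered pairs $(i,j)$ with $i\neq j\in\{0,1,2,3\}$, with an arc $(i,j)\to(j,k)$ whenever $k\neq j$; and $\delta_R\rel K_3$ has as vertices the pairs $(S,T)$ of disjoint subsets of $\{0,1,2\}$, with an arc $(S,T)\to(S',T')$ iff $T\cap S'\neq\emptyset$.

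The homomorphism I would write down singles out the vertex $3$ of $\rel K_4$. Define $\phi\colon \delta\rel K_4 \to \delta_R\rel K_3$ by
\[
  \phi(i,j) = \begin{cases}
    (\{i\},\{j\}) & \text{if } i,j\in\{0,1,2\},\\
    (\{i\},\ \{0,1,2\}\setminus\{i\}) & \text{if } j = 3,\\
    (\{0,1,2\}\setminus\{j\},\ \{j\}) & \text{if } i = 3.
  \end{cases}
\]
Each image is a pair of disjoint subsets of $\{0,1,2\}$, so $\phi$ really lands in $V(\delta_R\rel K_3)$. To check that $\phi$ is a homomorphism, for every arc $(i,j)\to(j,k)$ of $\delta\rel K_4$ one verifies that the second coordinate of $\phi(i,j)$ meets the first coordinate of $\phi(j,k)$: if $j\neq 3$, both of these coordinates equal $\{j\}$ (irrespective of whether $i$ or $k$ is $3$), so their intersection is $\{j\}$; if $j = 3$, then $i,k\in\{0,1,2\}$ and the two coordinates are $\{0,1,2\}\setminus\{i\}$ and $\{0,1,2\}\setminus\{k\}$, whose intersection $\{0,1,2\}\setminus\{i,k\}$ is nonempty. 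This completes the verification, and the adjunction then yields $\delta(\delta\rel K_4)\to\rel K_3$.

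It is worth unwinding what this gives concretely. Following the proof of the adjunction (a vertex of $\delta(\delta\rel K_4)$ being a triple $(a,b,c)$ with $a\neq b$, $b\neq c$), the resulting colouring $f\colon V(\delta(\delta\rel K_4))\to\{0,1,2\}$ may be taken to be $f(a,b,c)=b$ when $b\neq 3$, and $f(a,3,c)$ any element of $\{0,1,2\}\setminus\{a,c\}$ when $b=3$. The required condition $f(a,b,c)\neq f(b,c,d)$ is then immediate by cases on whether the ``middle'' coordinates $b$ and $c$ equal $3$. So, if one prefers, the lemma can equally be proved by writing down $f$ directly and bypassing the adjunction entirely.

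The only real content is finding the map $\phi$ (equivalently, the colouring $f$), and the step I expect to be the main obstacle is recognising that no ``symmetric'' construction can work: defining $\phi(i,j)$ by a rule that depends only on $i$ and on $j$ separately, or $f(a,b,c)$ by a rule depending on only two of $a,b,c$, runs into a pigeonhole/Sperner-type obstruction — one would be forced to find, for instance, four pairwise-incomparable subsets of a three-element set — so the symmetry of $\rel K_4$ must be broken by treating one of its four vertices differently from the other three, exactly as above.
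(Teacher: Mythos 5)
Your proof is correct and, once unwound, is essentially the paper's own argument: the paper directly defines the colouring $h(i,j,k)=j$ for $j\in\{0,1,2\}$ and $h(i,3,k)\in\{0,1,2\}\setminus\{i,k\}$, which is exactly the map $f$ you recover at the end. The adjunction detour through $\delta_R\rel K_3$ is a harmless repackaging of the same idea (singling out the vertex $3$ and colouring by the middle coordinate otherwise), and your verification is complete.
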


\begin{proof}
  The vertices of $\delta(\delta \rel K_4)$ are two consecutive pairs of arcs, i.e., they correspond to triples $(i,j,k)$ in $\{0,1,2,3\}$ such that $i\neq j$ and $j\neq k$. Two such triples $(i,j,k)$ and $(i',j',k')$ are adjacent if $j = i'$ and $k = j'$. We define $h\colon \delta(\delta \rel K_4) \to \rel K_3$ so that $h\colon (i,j,k) \mapsto j$ if $j \in \{0,1,2\}$, and $h\colon (i,3,k) \mapsto c$ for some $c \in \{0,1,2\} \setminus \{i,k\}$. It is straightforward to check that such $h$ is a~valid colouring.
\end{proof}

We note that $\delta(\sym\delta\rel K_4)$ is \emph{not} 3-colourable, so it is important here to iterate $\delta$ rather than $\sym\delta$.

The next lemma essentially shows that iterating $\delta$ many times can bring a chromatic number of any finite loopless digraph down to $3$. We use $\delta^{(i)}\rel D$ to denote the digraph obtained from $\rel D$ by applying $\delta$ $i$ times.

\begin{lemma} \label{lem:delta^i}
    For every loopless digraph $\rel D$ there exists $i\geq 0$ such that $\delta^{(i)} \rel D \to \rel K_3$.
\end{lemma}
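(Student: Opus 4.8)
The plan is to reduce the statement to the case of cliques, where it follows from Observation~\ref{obs:approxPoljakRodl} (or rather its digraph analogue), combined with Lemma~\ref{lem:deltadelta}. First I would recall that for a loopless digraph $\rel D$ with $n$ vertices, there is always a homomorphism $\rel D \to \rel K_n$ (colour each vertex by itself; since $\rel D$ is loopless, no edge becomes a loop, and $\rel K_n$ has all non-loop edges). Hence, by monotonicity of $\delta$ (Lemma~\ref{lem:adj-technical}(3), applied to the adjoint pair $\delta, \delta_R$), it suffices to show that for every $n$ there is some $i$ with $\delta^{(i)}\rel K_n \to \rel K_3$: then $\delta^{(i)}\rel D \to \delta^{(i)}\rel K_n \to \rel K_3$.

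Next I would control what one application of $\delta$ does to the chromatic number of a clique. By Observation~\ref{obs:approxPoljakRodl}, there are homomorphisms $\rel K_{b(n)} \to \delta_R \rel K_n \to \rel K_{2^n}$; by the adjunction $\delta \dashv \delta_R$ (i.e.\ $\delta \rel X \to \rel Y$ iff $\rel X \to \delta_R \rel Y$), the homomorphism $\rel K_{b(n)} \to \delta_R \rel K_n$ is equivalent to $\delta \rel K_{b(n)} \to \rel K_n$. In particular, since $b$ is strictly increasing with $b(m) > m$ for $m \geq 3$, for any $n \geq 4$ we have $n < b(n)$, so taking the smallest $m$ with $b(m) \geq n$ gives $m < n$ and $\rel K_n \to \rel K_{b(m)}$, whence $\delta \rel K_n \to \delta \rel K_{b(m)} \to \rel K_m$ by monotonicity of $\delta$ and the adjunction. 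Thus one application of $\delta$ strictly decreases the clique size as long as it is at least $5$; iterating, after finitely many steps we reach $\delta^{(j)}\rel K_n \to \rel K_4$.

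Finally, once we are down to $\rel K_4$, two more applications of $\delta$ suffice: Lemma~\ref{lem:deltadelta} gives $\delta(\delta \rel K_4) \to \rel K_3$. Putting the pieces together, for arbitrary loopless $\rel D$ with $n$ vertices we obtain
\[
  \delta^{(i)}\rel D \;\to\; \delta^{(i)}\rel K_n \;\to\; \rel K_3
\]
for a suitable $i = j + 2$, which is what we wanted. I do not expect a genuine obstacle here; the only points requiring a little care are the bookkeeping of the inequalities $b(m) \geq n$ with $m < n$ (so that the induction on clique size terminates, and in particular handling the base cases $n \leq 4$ where we stop the clique-shrinking phase and invoke Lemma~\ref{lem:deltadelta}), and making sure every use of monotonicity is legitimate — which it is, since $\delta$ is a left adjoint and hence monotone by Lemma~\ref{lem:adj-technical}(3). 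One should also note, as already remarked after Lemma~\ref{lem:deltadelta}, that it is essential to iterate $\delta$ and not $\sym \delta$, since the symmetrised version need not bring the chromatic number down to $3$.
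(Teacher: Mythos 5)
Your proposal is correct and follows essentially the same route as the paper: reduce to cliques via monotonicity of $\delta$, use Observation~\ref{obs:approxPoljakRodl} together with the adjunction $\delta\dashv\delta_R$ to shrink the clique size one application of $\delta$ at a time, and finish with Lemma~\ref{lem:deltadelta} once you reach $\rel K_4$. (The paper organises the descent by iterating $b$ upward from $4$ until $b^{(j)}(4)\geq c$ rather than picking the least $m$ with $b(m)\geq n$ at each step, which is only a bookkeeping difference; note that for $n=4$ that least $m$ equals $n$ rather than being smaller, but this is harmless since you only invoke strict decrease for sizes at least $5$.)
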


\begin{proof}
    If $\rel D$ is a loopless digraph then $\rel D \to \rel K_{\size{V(D)}}$. Therefore, since $\delta$ is monotone, it is enough to prove the statement for $\rel D = \rel K_c$ and all $c > 3$ (the case $c\leq 3$ is trivial).
    Recall that $\rel K_{b(n)} \to \delta_R\rel K_n$ for all $n\geq 1$
    (Observation~\ref{obs:cliques-and-delta}), and consequently $\delta\rel
    K_{b(n)} \to \rel K_n$. Recall that $b(n)$ is strictly increasing. Now since $b(n)>4$ for $n\geq 4$, there is $j$ such that $b^{(j)}(4) \geq c$. Since $\rel K_c \to \rel K_{b^{(j)}(4)}$, the monotonicity of $\delta$ implies $\delta\rel K_c \to \delta\rel K_{b^{(j)}(4)}$, which, with $\delta\rel K_{b^{(j)}(4)}\to \rel K_{b^{(j-1)}(4)}$, implies $\delta\rel K_c \to \rel K_{b^{(j-1)}(4)}$. Similarly, we get $\delta^{(2)}\rel K_c \to \delta\rel K_{b^{(j-1)}(4)}\to \rel K_{b^{(j-2)}(4)}$. Proceeding in the same way, we eventually get $\delta^{(j)} \rel K_c \to \rel K_4$. This together with Lemma~\ref{lem:deltadelta} gives that $\delta^{(j+2)} \rel K_c \to \rel K_3$ which gives the claim for $i = j+2$.
\end{proof}

\begin{proof}[Proof of Theorem~\ref{thm:conditional-ext}]
    Assume that $\rel H$ is a~loopless digraph s.t.\ $\PCSP(\rel H,\rel G)$ is \NP-hard for all loopless digraphs $\rel G$ with $\rel H\to \rel G$. Let $i$ be such that $\delta^{(i)}\rel H \to \rel K_3$ which exists from Lemma~\ref{lem:delta^i}. Now, iterating Lemma~\ref{lem:right-hard-digraph} $i$ times gives that $\PCSP(\delta^{(i)} \rel H, \rel G)$ is \NP-hard for all $\rel G$ such that $\delta^{(i)} \rel H \to \rel G$. Since $\delta^{(i)} \rel H \to \rel K_3$, $\PCSP(\delta^{(i)} \rel H, \rel G)$ trivially reduces to $\PCSP(\rel K_3, \rel G)$.
\end{proof}

We remark that, if one iterates the reduction using $\delta$ further, one cannot improve Theorem~\ref{thm:conditional-ext} to imply the full extent of Conjecture~\ref{conj:bg}. In particular, iterating $\delta$ cannot be used to show $\PCSP(\rel C_5,\rel K_c)$ is \NP-hard for all $c\geq 3$ given that $\PCSP(\rel K_3,\rel K_c)$ is \NP-hard for all $c\geq 3$. This is because $\delta^{(i)} \rel K_3$ contains a directed cycle of length three, for all integers $i$.

\begin{remark}
    Theorems~\ref{thm:conditional} and~\ref{thm:conditional-ext} can be phrased in algebraic terms using so-called \emph{$\rel H$-loop conditions}, which recently gained popularity in universal algebra (see, e.g.~\cite{Ols18}) and featured in~\cite[Section 6]{BBKO19}. Fix a digraph $\rel H$ and let $(a_1,b_1),\ldots,(a_m,b_m)$ be the full list of arcs of $\rel H$.
    The $\rel H$-loop condition is the following identity (i.e., function equation) involving two minors of a function $f$
    \begin{equation}\label{eq:h-loop}
    f(x_{a_1},\ldots,x_{a_m})=f(x_{b_1},\ldots,x_{b_m}).
    \end{equation}
    (This definition is equivalent to the one used in~\cite[Section 6]{BBKO19}).
    One says that a minion $\clo M$ satisfies the $\rel H$-loop condition if it contains a function $f$ satisfying \eqref{eq:h-loop}.

    The following two statements are equivalent for each loopless digraph $\rel H$:
    \begin{itemize}
        \item $\PCSP(\rel H,\rel G)$ is \NP-hard for all loopless digraphs $\rel G$ with $\rel H\to \rel G$; and
        \item $\PCSP(\rel A,\rel B)$ is \NP-hard for all pairs of similar structures $\rel A$, $\rel B$ such that $\Pol(\rel A,\rel B)$ does not satisfy the $\rel H$-loop condition.
    \end{itemize}
    This claim can be proved in a similar way as \cite[Theorems 6.9 and 6.12]{BBKO19}.

    In this interpretation, Lemma~\ref{lem:right-hard-digraph} can be rephrased as follows: If, for all PCSPs, the failure to satisfy the $\rel H$-loop condition implies \NP-hardness, then so does the failure to satisfy the $\delta \rel H$-loop condition.
    Can this perspective be used to bring some ideas from algebra (e.g.\ \cite{Ols18}) to obtain better conditional hardness?
\end{remark}

\subsection{Only topology matters}
  \label{subsec:secondMainProof}

All graphs in this subsection are undirected.

Recall Example~\ref{ex:walk-power}. The functions $\Lambda_k$ and $\Gamma_k$ from this example are adjoint, for all $k$. More surprisingly, for odd $k$, $\Gamma_k$ is itself the \emph{left} adjoint of a certain function~$\Omega_k\colon \graf\to\graf$, i.e., for all graphs $\rel H$ and $\rel G$, $\Gamma_k \rel H \to \rel G$ if and only if $\rel H \to \Omega_k \rel G$.
The graph $\Omega_k \rel G$ for $k = 2\ell+1$ is defined as follows; the vertices of $\Omega_k \rel G$ are tuples $(A_0,\dots,A_\ell)$ of vertex subsets $A_i \subseteq V(G)$ such that $A_0$ contains exactly one vertex.  Two such tuples $(A_0,\dots,A_\ell)$ and $(B_0,\dots,B_\ell)$ are adjacent if $A_i \subseteq B_{i+1}$, $B_i \subseteq A_{i+1}$ for all $i=0\dots \ell-1$, and $A_\ell \times B_\ell \subseteq E(G)$.

If there is a homomorphism $f \colon \Gamma_k \rel H \to \rel G$, then a homomorphism $\rel H \to \Omega_k \rel G$ is obtained by mapping $v$ to $(f(N^0(v)),\dots,f(N^\ell(v)))$, where $N^i$ denotes the set of vertices reachable via a walk of length exactly $i$.
Conversely, if there is a homomorphism $f \colon \rel H \to \Omega_k \rel G$, then a homomorphism $\Gamma_k \rel H \to \rel G$ is obtained by mapping $v$ to the unique vertex in the first set of $f(v)=(A_0,\dots, A_\ell)$.

We note that $\Lambda_k$ and $\Gamma_k$ are log-space computable, for all odd $k$; however, $\Omega_k$ is not: $\Omega_k \rel G$~is exponentially larger than $\rel G$.
Below, we will use the following observation that can be found together with more properties of the functions $\Lambda_k,\Gamma_k$, and $\Omega_k$ in~\cite[Lemma 2.3]{Wrochna17b}.

\begin{lemma} \label{lem:lambda-to-omega}
  For any graph $\rel G$ and all odd $k$, $\Lambda_k \rel G \to \Omega_k \rel G$.
\end{lemma}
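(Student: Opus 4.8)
The plan is to exhibit an explicit homomorphism $\Lambda_k\rel G\to\Omega_k\rel G$ for $k=2\ell+1$. Recall that $\Lambda_k\rel G$ is obtained from $\rel G$ by replacing each edge by a path of length $k$; so its vertices are the original vertices of $\rel G$, together with $\ell+\ell=2\ell$ internal vertices on each subdivided edge $uv$ — say $p^{uv}_1,\dots,p^{uv}_{2\ell}$ along the path from $u$ to $v$. I want to send each such vertex to a tuple $(A_0,\dots,A_\ell)$ of subsets of $V(G)$ with $|A_0|=1$, in a way that respects the adjacency defined for $\Omega_k\rel G$.

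First I would set up the map on original vertices: send $v\in V(G)$ to the constant tuple $(\{v\},\{v\},\dots,\{v\})$. Next, on an internal vertex $p^{uv}_j$ of the path from $u$ to $v$ (so $1\le j\le 2\ell$, with $j$ steps from $u$ and $2\ell+1-j$ steps from $v$), I would use the tuple whose $i$-th coordinate records "which endpoint one is close to at distance $i$": concretely, for the half of the path nearer $u$ (say $j\le\ell$) put $A_i=\{u\}$ for $i<j$ and $A_i=\{v\}$ for $i\ge j$ — and symmetrically, using $\{v\}$ in the low coordinates, for the half nearer $v$. (One must double-check the exact index ranges so that the two descriptions agree in the middle; this is a routine bookkeeping step.) All these tuples are singletons in every coordinate, so in particular $|A_0|=1$ always.

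Then I would verify the edge condition. The edges of $\Lambda_k\rel G$ are the consecutive pairs along each subdivided path: $(v,p^{uv}_1)$, $(p^{uv}_j,p^{uv}_{j+1})$, and $(p^{uv}_{2\ell},u)$ for each $uv\in E(G)$. For a pair of consecutive internal vertices one checks directly from the definition of adjacency in $\Omega_k\rel G$ — $A_i\subseteq B_{i+1}$, $B_i\subseteq A_{i+1}$ for $i<\ell$, and $A_\ell\times B_\ell\subseteq E(G)$ — that the two singleton tuples differ by "one shift", so the containments hold with equality on the relevant coordinates, and the last condition $A_\ell\times B_\ell\subseteq E(G)$ reduces precisely to $uv\in E(G)$, which holds by hypothesis. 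For the two pairs at the ends of the path, where one side is the constant tuple $(\{v\},\dots,\{v\})$ (or $(\{u\},\dots,\{u\})$), the same check goes through; here the $A_\ell\times B_\ell\subseteq E(G)$ condition again amounts to $\{u\}\times\{v\}\subseteq E(G)$ or $\{v\}\times\{u\}\subseteq E(G)$, i.e. to $uv$ being an edge of the (undirected, hence symmetric) graph $\rel G$.

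I expect the only real subtlety to be pinning down the index conventions so that the "shift" structure of the consecutive tuples lines up exactly with the asymmetric adjacency condition of $\Omega_k\rel G$ (it is stated with $A$'s and $B$'s playing different roles), and so that the middle vertex of the path is assigned consistently from both ends. Everything else is a direct unwinding of definitions. Alternatively — and this is perhaps the cleaner route — one can avoid the hand-crafted map entirely: by the adjointness $\Gamma_k\dashv\Omega_k$, proving $\Lambda_k\rel G\to\Omega_k\rel G$ is equivalent to proving $\Gamma_k\Lambda_k\rel G\to\rel G$, and $\Gamma_k\Lambda_k\rel G\to\rel G$ follows because $\Lambda_k\dashv\Gamma_k$ gives $\Lambda_k\Gamma_k\rel H\to\rel H$ for all $\rel H$ (Lemma~\ref{lem:adj-technical}(2)) — wait, that is the wrong composition order. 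The correct shortcut is: since $\Lambda_k$ and $\Gamma_k$ are adjoint, $\rel G\to\Gamma_k\Lambda_k\rel G$ by Lemma~\ref{lem:adj-technical}(1); but what we need is $\Gamma_k\Lambda_k\rel G\to\rel G$, which is exactly the statement that subdividing and then taking the $k$-th power brings you back below $\rel G$, and this is a short direct check (a closed walk of length $k$ in $\Lambda_k\rel G$ between two branch vertices traverses one subdivided edge, giving an edge of $\rel G$). Then $\Gamma_k\Lambda_k\rel G\to\rel G$ is, by the adjunction $\Gamma_k\dashv\Omega_k$, equivalent to $\Lambda_k\rel G\to\Omega_k\rel G$, completing the proof. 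I would present whichever of the two arguments is shorter; the explicit map has the advantage of being self-contained.
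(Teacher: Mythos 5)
Your second, ``alternative'' route is exactly the paper's proof: by the adjunction between $\Gamma_k$ and $\Omega_k$, the claim $\Lambda_k \rel G \to \Omega_k \rel G$ is equivalent to $\Gamma_k\Lambda_k \rel G \to \rel G$, and the latter is witnessed by the map $h$ that fixes $V(G)$ and sends the internal vertices of the subdivided edge $uv$ alternately to $u$ and $v$. One caveat: the verification is not only about walks between branch vertices. You must check that \emph{any} two vertices of $\Lambda_k\rel G$ joined by a walk of length exactly $k$ have adjacent images under $h$; a walk may start and end at internal vertices and may wander back and forth, so a short case/parity analysis on how it meets branch vertices is needed. It does go through (the paper likewise leaves it as ``straightforward''), but your one-line justification does not cover the general case.

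Your primary route --- an explicit map into $\Omega_k\rel G$ using tuples all of whose coordinates are singletons --- is not a bookkeeping issue; it cannot work. If $(A_0,\dots,A_\ell)$ and $(B_0,\dots,B_\ell)$ are adjacent and every coordinate is a singleton, the containments $A_i \subseteq B_{i+1}$ and $B_i \subseteq A_{i+1}$ all force equalities, so the two tuples must be the alternating tuples $(a,b,a,b,\dots)$ and $(b,a,b,a,\dots)$ for some edge $ab$ of $\rel G$. Hence the subgraph of $\Omega_k\rel G$ induced by all-singleton tuples is a disjoint union of single edges, and it admits no homomorphism from $\Lambda_k\rel G$ when $\rel G$ is non-bipartite (since $k$ is odd, $\Lambda_k\rel G$ is then non-bipartite too). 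Concretely, already the first edge of a subdivided path fails under your assignment: with $u \mapsto (\{u\},\dots,\{u\})$ and $p^{uv}_1 \mapsto (\{u\},\{v\},\dots,\{v\})$ the condition $A_0 \subseteq B_1$ reads $\{u\} \subseteq \{v\}$. A correct explicit map must place larger sets in the higher coordinates; the canonical choice, read off from the adjunction applied to $h$, sends $x$ to $\bigl(h(N^0(x)),\dots,h(N^\ell(x))\bigr)$ where $N^i(x)$ is the set of vertices reachable from $x$ by a walk of length exactly $i$ in $\Lambda_k\rel G$ --- for instance an original vertex $v$ goes to $(\{v\}, N(v), \{v\}, N(v), \dots)$ with $N(v)$ the neighbourhood of $v$ in $\rel G$, not to the constant tuple. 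So present the adjunction argument, with the missing verification of $\Gamma_k\Lambda_k\rel G\to\rel G$ spelled out.
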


\begin{proof}
  Since $\Gamma_k$ and $\Omega_k$ are adjoint, the claim is equivalent to $\Gamma_k\Lambda_k \rel G \to \rel G$ (note that $\rel G \to \Gamma_k\Lambda_k\rel G$ since $\Lambda_k$ and $\Gamma_k$ are adjoint, so we will prove that the two graphs are homomorphically equivalent). We describe one such homomorphism $h\colon \Gamma_k\Lambda_k \rel G\to \rel G$. Note that $V(\Gamma_k\Lambda_k G) = V(\Lambda_k G) \supseteq V(G)$. We put $h(v) = v$ for each $v\in V(G)$ and extend this to vertices introduced to $\Lambda_k \rel G$ by replacing an edge $(u,v)\in E(G)$ with a path of length $k$ by alternatively mapping vertices on this path to $u$ and $v$ in such a way that $h$ restricted to the path is a homomorphism from the odd path to $(u,v)$. It is straightforward to check that $h$  maps any two vertices of $\Lambda_k \rel G$ that are connected by a path of length $k$ to an edge of $\rel G$, and therefore, it is a homomorphism from $\Gamma_k \Lambda_k \rel G$ to $\rel G$.
\end{proof}

The next lemma gives the key reduction for the main result of this subsection, Theorem~\ref{thm:topoOnly}.

\begin{lemma}\label{lem:left-hard}
  Let $k$ be odd and $\rel G$ be a graph. If $\PCSP(\rel H',\Omega_k \rel G)$ is \NP-hard for all non-bipartite graphs $\rel H' \to \Omega_k \rel G$, then $\PCSP(\rel H,\rel G)$ is \NP-hard for all non-bipartite graphs $\rel H \to \rel G$.
\end{lemma}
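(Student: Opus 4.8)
The plan is to use $\Gamma_k$ as the reduction, applied not directly to $\rel H$ but to $\rel H'\defeq\Lambda_k\rel H$. Concretely, given a non-bipartite graph $\rel H$ with $\rel H\to\rel G$, I would establish three facts: (a) the hypothesis applies to $\rel H'$, so $\PCSP(\rel H',\Omega_k\rel G)$ is \NP-hard; (b) $\Gamma_k$ is a log-space reduction from $\PCSP(\rel H',\Omega_k\rel G)$ to $\PCSP(\Gamma_k\rel H',\rel G)$; and (c) $\PCSP(\Gamma_k\rel H',\rel G)$ reduces trivially to $\PCSP(\rel H,\rel G)$ because $\Gamma_k\Lambda_k\rel H\to\rel H$. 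Chaining (a)--(c) proves the lemma.

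For (a) I need two properties of $\rel H'=\Lambda_k\rel H$. It is non-bipartite: $\rel H$ contains an odd closed walk, of some length $\ell$, and $\Lambda_k$ subdivides it into a closed walk of length $k\ell$, which is again odd since $k$ is odd. And $\rel H'\to\Omega_k\rel G$: since $\Lambda_k$ is monotone (it is a left adjoint, so Lemma~\ref{lem:adj-technical}(3) applies), $\rel H\to\rel G$ gives $\Lambda_k\rel H\to\Lambda_k\rel G$, while $\Lambda_k\rel G\to\Omega_k\rel G$ by Lemma~\ref{lem:lambda-to-omega}; composing yields $\rel H'\to\Omega_k\rel G$. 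Hence, by the assumption of the lemma, $\PCSP(\rel H',\Omega_k\rel G)$ is \NP-hard.

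For (b), since $k$ is odd, $\Gamma_k$ is the left adjoint of $\Omega_k$, and Corollary~\ref{cor:adj1}(1) applied to the adjoint pair $(\Gamma_k,\Omega_k)$ says exactly that $\Gamma_k$ is a log-space reduction from $\PCSP(\rel H',\Omega_k\rel G)$ to $\PCSP(\Gamma_k\rel H',\rel G)$ (legitimate because $\rel H'\to\Omega_k\rel G$ by (a)). For (c), I would observe that $\Gamma_k\Lambda_k\rel H\to\rel H$: this is Lemma~\ref{lem:lambda-to-omega} with $\rel G$ replaced by $\rel H$, i.e.\ $\Lambda_k\rel H\to\Omega_k\rel H$, rewritten through the adjointness of $\Gamma_k$ and $\Omega_k$. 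Since in addition $\rel G\to\rel G$, the identity map is a (trivial) reduction from $\PCSP(\Gamma_k\Lambda_k\rel H,\rel G)$ to $\PCSP(\rel H,\rel G)$ --- a homomorphic relaxation. Composing the reductions from (b) and (c) with the \NP-hardness from (a) completes the proof. I do not expect a real obstacle here: once Lemma~\ref{lem:lambda-to-omega} is available, everything is bookkeeping with the two adjunctions $(\Lambda_k,\Gamma_k)$ and $(\Gamma_k,\Omega_k)$. The only points needing care are that $\rel H'$ must be $\Lambda_k\rel H$ (so that $\Gamma_k$ brings us down to $\Gamma_k\Lambda_k\rel H\to\rel H$, allowing the relaxation back to $\rel H$), and that $\Lambda_k$ preserves non-bipartiteness, which again relies on $k$ being odd.
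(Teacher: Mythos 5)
Your proof is correct, and it is built from the same ingredients as the paper's --- the adjunction between $\Gamma_k$ and $\Omega_k$, Corollary~\ref{cor:adj1}, and Lemma~\ref{lem:lambda-to-omega} --- but the bookkeeping differs. The paper instantiates the hypothesis at $\rel H'=\Omega_k\rel H$ and applies Corollary~\ref{cor:adj1}(2) once: $\Gamma_k$ reduces $\PCSP(\Omega_k\rel H,\Omega_k\rel G)$ directly to $\PCSP(\rel H,\rel G)$, with Lemma~\ref{lem:lambda-to-omega} used only to certify that $\Omega_k\rel H$ is non-bipartite (via $\Lambda_k\rel H\to\Omega_k\rel H$). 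You instead instantiate the hypothesis at $\rel H'=\Lambda_k\rel H$, apply Corollary~\ref{cor:adj1}(1), and then need the extra homomorphic-relaxation step, justified by $\Gamma_k\Lambda_k\rel H\to\rel H$ --- which, as you correctly note, is exactly Lemma~\ref{lem:lambda-to-omega} transported through the $(\Gamma_k,\Omega_k)$ adjunction (and is in fact what the paper proves first in its proof of that lemma); it does \emph{not} follow from the $(\Lambda_k,\Gamma_k)$ adjunction alone, which only gives the reverse arrow $\rel H\to\Gamma_k\Lambda_k\rel H$, so it is good that you cited the right source. Your choice buys a completely immediate non-bipartiteness argument (an odd subdivision of an odd closed walk is odd), at the cost of one more reduction in the chain; the paper's choice is a one-liner modulo its corollary. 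Both are valid, and as the paper's remark after Corollary~\ref{cor:adj1} explains, the two corollaries are interderivable via trivial reductions, so the two proofs are ultimately the same reduction $\Gamma_k$ presented with different source problems.
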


\begin{proof}
  Let $\rel H, \rel G$ be non-bipartite with $\rel H \to \rel G$.
  By Corollary~\ref{cor:adj1}(2), $\Gamma_k$ is a reduction from $\PCSP(\Omega_k \rel H,\Omega_k \rel G)$ to $\PCSP(\rel H,\rel G)$.
  To conclude that $\PCSP(\rel H, \rel G)$ is \NP-hard it remains to show that $\Omega_k \rel H$ is non-bipartite.  Observe that since $\rel H$ is non-bipartite and $k$ is odd then also $\Lambda_k\rel H$ is non-bipartite. Furthermore, from the above lemma, $\Lambda_k \rel H \to \Omega_k \rel H$ which implies that $\Omega_k \rel H$ is also non-bipartite.
\end{proof}

Thus (by the above lemma and homomorphic relaxation) if we know one graph $\rel G'$ such that $\PCSP(\rel H',\rel G')$ is \NP-hard for all non-bipartite $\rel H'$, then we can conclude the same for all $\rel G$ such that $\Omega_k \rel G \to \rel G'$, for some odd $k$.
When does such a $k$ exists?  The answer, given in~\cite{Wrochna17b}, turns out to be topological.
We remark that the results in~\cite{Wrochna17b} use the so-called box complex of $\rel G$ instead of $\Bip{\rel G}$. However, there exist $\ZZ_2$-maps (in both directions) between the two complexes, see~\cite[Proposition 4(M2,M3,M7)]{MZ04} for explicit maps. This is enough for our purposes, but a stronger claim is true --- the two complexes are $\ZZ_2$-homotopy equivalent (as defined in Appendix~\ref{app:hom-and-box})~\cite{Zivaljevic05,Csorba08}.

Intuitively, while the operation $\Gamma_k$ gives a ``thicker'' graph, the operation $\Omega_k$ gives a ``thinner'' one.  In fact, $\Omega_k$ behaves similarly to barycentric subdivision in topology: it preserves the topology of a graph (formally, $\gBip{\Omega_k \rel G}$ is $\ZZ_2$-homotopy equivalent to $\gBip{\rel G}$~\cite{Wrochna17b}) but refines its geometry.  With increasing $k$, this eventually allows to model any continuous map with a graph homomorphism; in particular we have the following.

\begin{theorem}[\cite{Wrochna17b}] \label{thm:approx}
  There exists a $\ZZ_2$-map $\gBip{\rel H} \to_{\ZZ_2} \gBip{\rel G}$ if and only~if $\Omega_k \rel H \to \rel G$ for some odd $k$.
\end{theorem}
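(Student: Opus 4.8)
My plan is to prove the two directions separately, with the ``only if'' direction being the easy one and the ``if'' direction carrying all the weight.

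For the ``only if'' direction, suppose $\Omega_k \rel H \to \rel G$ for some odd $k$. By Lemma~\ref{lem:lambda-to-omega}, $\Lambda_k \rel H \to \Omega_k \rel H$, so composing we get $\Lambda_k \rel H \to \rel G$. Now I want to observe that subdivision $\Lambda_k$ acts on box complexes (up to $\ZZ_2$-homotopy) as the identity: $\gBip{\Lambda_k \rel H}$ should be $\ZZ_2$-homotopy equivalent to $\gBip{\rel H}$, because subdividing an edge into an odd path does not change the homotopy type of the box complex --- intuitively the extra vertices on a path contribute only contractible pieces attached along the original edge's structure. (This is one of the standard facts about these complexes; it is cited to \cite{Wrochna17b} in the paragraph preceding the theorem.) Given this, the graph homomorphism $\Lambda_k \rel H \to \rel G$ induces a $\ZZ_2$-map $\gBip{\Lambda_k \rel H} \to_{\ZZ_2} \gBip{\rel G}$, and precomposing with a $\ZZ_2$-homotopy equivalence $\gBip{\rel H} \to_{\ZZ_2} \gBip{\Lambda_k \rel H}$ yields the desired $\ZZ_2$-map $\gBip{\rel H} \to_{\ZZ_2} \gBip{\rel G}$.

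For the ``if'' direction --- the substantial one --- suppose there is a $\ZZ_2$-map $f \colon \gBip{\rel H} \to_{\ZZ_2} \gBip{\rel G}$. The idea is that iterating $\Omega_k$ refines the combinatorial geometry of $\rel H$ finely enough that the continuous $\ZZ_2$-map $f$ can be ``snapped'' to an honest graph homomorphism. Concretely, I would use a simplicial-approximation argument in the $\ZZ_2$-equivariant setting: the map $f$, being a $\ZZ_2$-map between geometric realisations of free $\ZZ_2$-complexes, admits a $\ZZ_2$-equivariant simplicial approximation after sufficiently many barycentric subdivisions of $\gBip{\rel H}$. The key geometric input, again from \cite{Wrochna17b}, is that $\gBip{\Omega_k \rel H}$ plays the role of a (high-enough) barycentric subdivision of $\gBip{\rel H}$ --- more precisely, for every $n$ there is an odd $k$ such that $\gBip{\Omega_k \rel H}$ refines $\sd^n \gBip{\rel H}$ in the sense needed (there is a $\ZZ_2$-simplicial map from $\gBip{\Omega_k \rel H}$ that is carried by the identity on $\geom{\cdot}$, together with the homotopy equivalence). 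Once $f$ has a $\ZZ_2$-simplicial approximation $g \colon \gBip{\Omega_k \rel H} \to \gBip{\rel G}$, one reads off a graph homomorphism $\Omega_k \rel H \to \rel G$ from $g$ using the combinatorial description of the vertices and faces of $\Bip{\cdot}$: a vertex of $\Bip{\Omega_k \rel H}$ is an oriented edge of $\Omega_k \rel H$, and a $\ZZ_2$-simplicial map to $\Bip{\rel G}$ precisely encodes an edge-preserving assignment; chasing the definitions (vertices of $\Omega_k \rel G$ as tuples of vertex subsets, adjacency via the walk condition) turns $g$ into the required homomorphism. This is essentially the content of \cite[Theorem~\ref{thm:approx}]{Wrochna17b}, and I would cite it for the technical core rather than reprove the simplicial approximation machinery.

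The main obstacle is making the phrase ``$\Omega_k$ behaves like barycentric subdivision'' precise enough to support equivariant simplicial approximation: one needs both that $\gBip{\Omega_k \rel H}$ is $\ZZ_2$-homotopy equivalent to $\gBip{\rel H}$ \emph{and} that the mesh of the relevant simplicial structure shrinks to zero as $k \to \infty$ along odd values, uniformly enough that a single $k$ suffices for the given $f$. Since these facts are exactly what is established in \cite{Wrochna17b}, the honest route here is to invoke that paper for the hard direction and supply only the short ``only if'' argument and the bookkeeping that connects its statement (phrased there via box complexes) to our $\Bip{\cdot}$ complexes --- the connecting $\ZZ_2$-maps between box complexes and $\Bip{\cdot}$ are recorded in \cite[Proposition 4]{MZ04} and noted just before the theorem statement.
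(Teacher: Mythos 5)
The paper offers no proof of this statement beyond the citation to \cite{Wrochna17b} (plus the preceding remark that the box complexes used there and the complexes $\Hom(\rel K_2,\cdot)$ admit $\ZZ_2$-maps in both directions, via \cite{MZ04}), and your proposal ultimately does the same for the substantial direction while adding a correct short argument for the easy direction via Lemma~\ref{lem:lambda-to-omega} and the $\ZZ_2$-homotopy invariance of the complex under odd subdivision (one could equally precompose with the equivalence $\gBip{\rel H}\to_{\ZZ_2}\gBip{\Omega_k\rel H}$ stated just before the theorem). The only slip is cosmetic: your labels are swapped, since the direction you call ``only if'' (from $\Omega_k\rel H\to\rel G$ to the existence of the $\ZZ_2$-map) is the ``if'' direction of the biconditional as stated.
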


We now conclude the proof that whether the Brakensiek-Guruswami conjecture holds for a graph $\rel G$ (and all relevant $\rel H$) depends only on the topology of $\rel G$  --- this was informally stated earlier in Theorem~\ref{thm:topoOnlyInformal}).
In fact, it only matters which $\ZZ_2$-maps $\geom{\Bip{\rel G}}$ admits.

\begin{theorem} \label{thm:topoOnly}
  Let $\rel G$, $\rel G'$ be graphs such that $\geom{\Bip{\rel G}}$ admits a $\ZZ_{2}$-map to $\geom{\Bip{\rel G'}}$ and suppose $\PCSP(\rel H,\rel G')$ is \NP-hard for all non-bipartite graphs $\rel H$ such that $\rel H \to \rel G'$. Then $\PCSP(\rel H,\rel G)$ is \NP-hard for all non-bipartite graphs $\rel H$ such that $\rel H \to \rel G$.
\end{theorem}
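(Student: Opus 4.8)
The plan is to combine the topological approximation theorem (Theorem~\ref{thm:approx}) with the adjunction-based reduction of Lemma~\ref{lem:left-hard}; no genuinely new argument is needed, only the observation that the $\ZZ_2$-map hypothesis is exactly the combinatorial condition those two results speak about. First I would reformulate the assumption: applying Theorem~\ref{thm:approx} with $\rel H:=\rel G$ and $\rel G:=\rel G'$, the existence of a $\ZZ_2$-map $\gBip{\rel G}\to\gBip{\rel G'}$ is equivalent to the existence of an odd integer $k$ with $\Omega_k\rel G\to\rel G'$. Fix such a $k$.

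The next step is to transfer the NP-hardness assumption from $\rel G'$ to $\Omega_k\rel G$. Let $\rel H'$ be an arbitrary non-bipartite graph with $\rel H'\to\Omega_k\rel G$. Composing with $\Omega_k\rel G\to\rel G'$ gives $\rel H'\to\rel G'$, so by hypothesis $\PCSP(\rel H',\rel G')$ is \NP-hard. Since $\Omega_k\rel G\to\rel G'$ (and trivially $\rel H'\to\rel H'$), the identity map on instances is a homomorphic relaxation, i.e.\ a trivial reduction from $\PCSP(\rel H',\rel G')$ to $\PCSP(\rel H',\Omega_k\rel G)$ (cf.\ the discussion of homomorphic relaxations around Example~\ref{ex:gadgetVsHom}). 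Hence $\PCSP(\rel H',\Omega_k\rel G)$ is \NP-hard. As $\rel H'$ was an arbitrary non-bipartite graph mapping to $\Omega_k\rel G$, this establishes precisely the hypothesis of Lemma~\ref{lem:left-hard} for this $k$ and the graph $\rel G$.

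Finally I would invoke Lemma~\ref{lem:left-hard}, which then yields that $\PCSP(\rel H,\rel G)$ is \NP-hard for all non-bipartite $\rel H$ with $\rel H\to\rel G$, as desired. (If $\rel G$ is bipartite there is nothing to prove, since no such $\rel H$ exists; and one should recall that $\Omega_k\rel G$ inherits non-bipartiteness from $\rel G$ via $\Lambda_k\rel G\to\Omega_k\rel G$ of Lemma~\ref{lem:lambda-to-omega} together with the fact that odd subdivision preserves non-bipartiteness — this is exactly what makes Lemma~\ref{lem:left-hard} applicable, and it is already handled inside its proof.)

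The heavy lifting — the topological content in Theorem~\ref{thm:approx} and the $\Gamma_k$/$\Omega_k$ adjunction reduction in Lemma~\ref{lem:left-hard} — is done elsewhere, so there is no real technical obstacle here; the only point requiring care is conceptual, namely recognising that ``$\gBip{\rel G}$ admits a $\ZZ_2$-map to $\gBip{\rel G'}$'' translates, via Theorem~\ref{thm:approx}, into ``$\Omega_k\rel G\to\rel G'$ for some odd $k$''. A minor subtlety worth double-checking is that the quantifier in the transferred hypothesis ranges over \emph{all} non-bipartite $\rel H'\to\Omega_k\rel G$ rather than only those already known to map to $\rel G'$; but since $\Omega_k\rel G\to\rel G'$, every such $\rel H'$ automatically maps to $\rel G'$, so nothing is lost.
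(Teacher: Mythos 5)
Your proposal is correct and follows exactly the paper's argument: convert the $\ZZ_2$-map hypothesis into $\Omega_k\rel G\to\rel G'$ via Theorem~\ref{thm:approx}, transfer hardness to $\PCSP(\rel H',\Omega_k\rel G)$ by the trivial reduction (homomorphic relaxation), and conclude with Lemma~\ref{lem:left-hard}. The extra care you take with the quantifier over $\rel H'$ and with non-bipartiteness of $\Omega_k\rel G$ is consistent with what the paper relies on (the latter being handled inside the proof of Lemma~\ref{lem:left-hard}).
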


\begin{proof}
  By Theorem~\ref{thm:approx}, $\Omega_k \rel G \to \rel G'$ for some odd $k$. Since $\PCSP(\rel H,\rel G')$ is \NP-hard for all non-bipartite graphs $\rel H$, we also have that $\PCSP(\rel H,\Omega_k \rel G)$ is \NP-hard by a trivial reduction. Now, Lemma~\ref{lem:left-hard} gives the claim.
\end{proof}

In particular, Theorem~\ref{thm:topoOnly} implies that Theorems~\ref{thm:K3} and~\ref{thm:main-s1} are equivalent.

\section{Conclusion}

We presented two new methodologies, based on topology and adjunction, to analyse the complexity of PCSPs and provided some applications
of these methodologies to considerably improve state-of-the-art in the complexity of approximate graph colouring and promise graph homomorphism problems.

As mentioned before, there are many ways in which topology can potentially be applied in the analysis of polymorphisms from $\rel H$ to $\rel G$, for graphs or for general relational structures.
With the approach that we used, we made a few choices for our analysis.  Specifically, we used (a) the graph $\rel K_2$ to construct simplicial complexes $\Bip{\rel H}$ and $\Bip{\rel G}$, (b) $\mathbb{Z}_2$-action on our complexes and topological spaces, and (c) fundamental groups of topological spaces.
One can try to perform similar analysis, but (a) replacing $\rel K_2$ by any other graph $\rel K$ (or, for general PCSPs, by another appropriate structure), (b) using any subgroup of the automorphism group of $\rel K$ to account for symmetry (called ``equivariance'' in the topological literature) of the complexes and topological spaces, and (c) replacing the fundamental group with a different topological invariant of spaces or continuous functions involved.
Some examples of different choices, though not in the context of polymorphisms, can be found, e.g. in~\cite{Koz08-book,Mat03}. These are the obvious first choices of adapting the approach. Naturally, it can be changed in a more fundamental way: the most prominent seems to be directly analyse the topological structure of the simplicial complexes $\Hom(\rel H^n,\rel G)$ (see~\cite[Section~9.2.4]{Koz08-book} for related general suggestions). One advantage of this approach is that the analysis would depend only on the function minion $\Pol(\rel H,\rel G)$, rather than on the specific choice of $\rel H$ and $\rel G$. 

In this paper, we applied topology together with the algebraic theory from~\cite{BBKO19} to prove complexity results about promise graph homomorphism. However, our application can be seen as plugging the topological analysis into an algebraic result. Since topology appears to be naturally present in minions of polymorphisms, it would be interesting to further develop the algebraic theory from~\cite{BBKO19} to properly incorporate topology. 
Similarly, we used adjunction to obtain some reductions for approximate graph colouring problems that are provably cannot be captured by the algebraic theory from~\cite{BBKO19} --- it is natural to ask whether a more general theory can be constructed that incorporates both the current algebraic theory and adjunction.

It would be interesting to find further specific applications of our methodologies, for example, in approximate graph and hypergraph colouring and their variations, or possibly even beyond constraint satisfaction. Naturally, one would want to extend our methodologies to approximate graph colouring problems $\PCSP(\rel K_3,\rel K_c)$ or promise graph homomorphism problems $\PCSP(\rel C_k,\rel K_c)$ for $c\ge 4$. If one applies the same transformation of these graph problems into homomorphism complexes and topological spaces as we use in this paper, one would need to analyse the (\equivariant-)polymorphisms from $\Sphere^1$ to $\Sphere^m$ for $m\ge 2$. (Note that $\pi_1(\Sphere^m)$ is trivial for $m\ge 2$, so the fundamental group is of no use in this case).
These polymorphisms are \equivariant-maps from tori $\Torus^n$, $n\ge 1$, to $\Sphere^m$. This is somewhat related to some well-known hard open questions from algebraic topology, such as classification of maps from one sphere $\Sphere^{m_1}$ to another $\Sphere^{m_2}$. However, to the best of our knowledge, the equivariant version of such questions was never considered.  Moreover, for our purposes, it would suffice to get any classification of $\Pol(\Sphere^1,\Sphere^m)$ that can be connected with the algebraic theory, e.g. with Theorem~\ref{thm:bounded-arity} above, or with~\cite[Theorem~5.22]{BBKO19} or~\cite[Corollary~4.2]{BWZ19}. Of course, it is possible that some modification of our approach will need to be used.  In any case, we believe that topology will play an important part in settling the complexity of approximate graph colouring and the Brakensiek-Guruswami conjecture.

Finally, we remark that the standard reductions from the algebraic approach, i.e., reductions of the form $\Lambda_\phi$ (see Section~\ref{subsec:adj-csp}), can be thought of as replacing individual constraints in an instance with gadgets (possibly consisting of many constraints).  Similarly, certain reductions of the form $\Gamma_\phi$, such as $\delta$ and $\Gamma_k$ presented in Sections~\ref{sec:righthard} and~\ref{subsec:secondMainProof}, can be thought of as replacing gadgets (i.e., certain parts of input) with individual constraints.  The latter turned out to be particularly useful when they themselves admit some right adjoint $\Omega$ (as was the case for $\delta$ and $\Gamma_k$); however, such reductions must have a rather restricted form~\cite[Theorem 2.5]{FoniokT15}.  Thus, a natural extension would be to investigate reductions which replace gadgets with gadgets (i.e., introduce a copy of one gadget for each homomorphism from another gadget).

\appendix
\section{Equivalence of homomorphism complexes}

There is a superficial distinction between the abstract simplicial complex $\Hom(\rel H,\rel G)$ as we defined it and the definition of the homomomorphism complex in~\cite{BK06,Koz08-book}. We will show that the two definitions give topological spaces that are equivalent in the following sense.
\label{app:hom-and-box}

\begin{definition}
  Two $\ZZ_2$-spaces $\top X, \top Y$ are \emph{$\ZZ_2$-homotopy equivalent} if there are $\ZZ_2$-maps $\alpha \colon \top X \to \top Y$ and $\beta \colon \top Y \to \top X$ such that $\beta\alpha$ and $\alpha\beta$ are $\ZZ_2$-homotopic (recall Definition~\ref{def:z2-homotopy}) to the identity maps on $\top X$ and $\top Y$, respectively.
\end{definition}

This notion is coarser than $\ZZ_2$-homeomorphism (which required $fg$ and $gf$ to be equal to identity maps); for example, $\RR^2 \setminus \{(0,0)\}$ is $\ZZ_2$-homotopy equivalent to $\Sphere^1$ but not $\ZZ_2$-homeomorphic to it.
Nevertheless, $\ZZ_2$-homotopy equivalent spaces admit the same $\ZZ_2$-maps and have isomorphic fundamental groups, for example, they are thus indeed equivalent for our purposes.

We remark, that in \cite{Koz08-book} and other topological literature,  very little attention is given to the distinction between abstract (simplicial) complexes and their geometric realisations. In particular, in~\cite{BK06,Koz08-book}, the Hom complex of graphs $\rel H$, $\rel G$ is defined as a topological space and not an abstract simplicial complex (it is in fact a so-called \emph{prodsimplicial complex}; see \cite[p.~28]{Koz08-book}). The following definition is an equivalent formulation of Definition~9.23 in \cite{Koz08-book} using the comments thereafter.

\begin{definition}
  For a set $V$, we denote by $\Delta^V$ the \emph{standard simplex} with vertices $V$ that is defined as a subspace of $\mathbb R^V$, where the canonical unit vectors are identified with elements of $V$, obtained as the convex hull of $V$, i.e.,
  \(
    \Delta^V = \{ \sum_{v\in V} \lambda_v v \mid
      \lambda_v \in [0,1]\text{ for each $v\in V$, and }
      \sum_{v\in V} \lambda_v = 1 \}
  \).

  $\pHom(\rel K_2, \rel G)$ is a subspace of $\Delta^{V(G)} \times \Delta^{V(G)}$.
  Thus a point of this space is described as a pair of formal sums $(\sum_{u\in V(G)} \lambda_u u, \sum_{v\in V(G)} \rho_v v)$ such that $\sum_{u\in V(G)} \lambda_u = \sum_{v\in V(G)} \rho_v = 1$.
  Using this description, $\pHom(\rel K_2, \rel G)$ is defined as the subspace consisting of those points $(\sum_{u\in V(G)} \lambda_u u, \sum_{v\in V(G)} \rho_v v)$ such that
  $\{ u \mid \lambda_u > 0\} \times \{ v \mid \rho_v > 0 \} \subseteq E(G)$ is a complete bipartite graph.
  The action of $\mathbb Z_2$ on this complex is given by switching the two coordinates, i.e.,
  \( \textstyle
    - (\sum_{u\in V(G)} \lambda_u u, \sum_{v\in V(G)} \rho_v v) =
    (\sum_{v\in V(G)} \rho_v v, \sum_{u\in V(G)} \lambda_u u)
  \).
\end{definition}

On the other hand, the geometric realisation of our $\Hom(\rel K_2, \rel G)$ can be described in the following way.
It consists of convex combinations $\sum_{(u,v)\in E(G)} \lambda_{u,v} (u,v)$ (the points $(u,v) \in E(G)$ are identified with certain unit vectors in $\mathbb R^n$ where $n = \size{E(G)} / 2$ so that $-(u,v) = (v,u)$) such that
\(
  \{ u \mid \lambda_{u,v} > 0 \} \times \{ v \mid \lambda_{u,v} > 0 \} \subseteq E(G)
\)
is a complete bipartite subgraph.

\begin{figure}[t]
  \centering
  \raisebox{-.5\height}{
    \includegraphics[width=.4\textwidth]{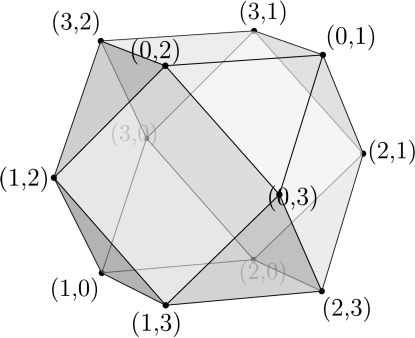}
  }
  \caption{$\pHom(\rel K_2,\rel K_4)$}
    \label{fig:koz-k2-k4}
\end{figure}

Both complexes are therefore defined using complete bipartite subgraphs, and the similarity is apparent. As an example that highlights the small differences, let us note that $\pHom(\rel K_2,\rel K_4)$ is \equivariant-homeomorphic to $\Sphere^2$; the space is depicted in Fig.~\ref{fig:koz-k2-k4}. This is since, unlike in $\gBip{\rel K_4}$, the tetragonal faces on the picture correspond to actual squares in $\pHom(\rel K_2,\rel K_4)$.

\begin{proposition}
  Let $\rel H$, $\rel G$ be graphs.
  Then the \equivariant-spaces $\gHom{\rel K_2,\rel G}$ and $\pHom(\rel K_2,\rel G)$ are \equivariant-homotopy equivalent.
\end{proposition}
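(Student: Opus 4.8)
The plan is to construct explicit $\mathbb Z_2$-maps in both directions and show that one composite equals the identity on the nose while the other is $\mathbb Z_2$-homotopic to the identity through an affine homotopy.

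First I would define the marginal map $\alpha\colon \gHom{\rel K_2,\rel G} \to \pHom(\rel K_2,\rel G)$ sending a point $x = \sum_{(u,v)\in E(G)} \lambda_{u,v}\,(u,v)$ to the pair of its two marginals, $\bigl(\sum_{u}(\sum_{v}\lambda_{u,v})\,u,\ \sum_{v}(\sum_{u}\lambda_{u,v})\,v\bigr)$. Both coordinates are convex combinations summing to $1$, with supports $S=\{u\mid \exists v,\ \lambda_{u,v}>0\}$ and $T=\{v\mid \exists u,\ \lambda_{u,v}>0\}$ respectively. Since the support of $x$ is a face of $\Hom(\rel K_2,\rel G)$ it is contained in some $U\times V\subseteq E(G)$; hence $S\times T\subseteq U\times V\subseteq E(G)$ is a complete bipartite subgraph, which is exactly the membership condition for $\pHom(\rel K_2,\rel G)$. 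The map $\alpha$ is the restriction of a linear map $\mathbb R^{E(G)}\to\mathbb R^{V(G)}\times\mathbb R^{V(G)}$, hence continuous, and reversing an edge of $\rel G$ swaps the two marginals, so $\alpha(-x)=-\alpha(x)$.

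Conversely I would define the product map $\beta\colon \pHom(\rel K_2,\rel G)\to\gHom{\rel K_2,\rel G}$ by $\beta\bigl(\sum_u\lambda_u u,\ \sum_v\rho_v v\bigr)=\sum_{(u,v)\in E(G)}\lambda_u\rho_v\,(u,v)$. Because $\{u\mid\lambda_u>0\}\times\{v\mid\rho_v>0\}\subseteq E(G)$, no positive weight is discarded, so the coefficients sum to $(\sum_u\lambda_u)(\sum_v\rho_v)=1$, and the support of the result is precisely that product, a complete bipartite subgraph and hence a face; thus $\beta$ is well defined, it is the restriction of a bilinear map (hence continuous), and it too is $\mathbb Z_2$-equivariant. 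A direct computation gives that $\alpha\beta$ is the identity map of $\pHom(\rel K_2,\rel G)$: after applying $\beta$, the first marginal at a vertex $u$ equals $\lambda_u\sum_{v:\,(u,v)\in E(G)}\rho_v$, and whenever $\lambda_u>0$ every $v$ with $\rho_v>0$ satisfies $(u,v)\in E(G)$, so the inner sum is $1$; symmetrically for the second marginal.

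It remains to compare $\beta\alpha$ with the identity on $\gHom{\rel K_2,\rel G}$. In general $\beta\alpha$ is not the identity (it replaces the face of $x$ by the full product $S\times T$), so I would use the straight-line homotopy $h(x,t)=(1-t)\,x+t\,\beta\alpha(x)$. The crucial point is that both $x$ and $\beta\alpha(x)$ are supported on $S\times T$ --- the support of $x$ is contained in it, and the support of $\beta\alpha(x)$ is exactly it --- so the whole segment lies in the simplex $\Delta^{S\times T}$, a convex subset of $\gHom{\rel K_2,\rel G}$; hence $h$ indeed takes values in $\gHom{\rel K_2,\rel G}$. Continuity of $h$ is then immediate, and every $h_t$ is a $\mathbb Z_2$-map because the action $x\mapsto -x$ is linear and so commutes with affine combinations of $\mathbb Z_2$-maps. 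Thus $h$ is a $\mathbb Z_2$-homotopy from $\mathrm{id}$ to $\beta\alpha$, and with $\alpha\beta=\mathrm{id}$ this establishes the $\mathbb Z_2$-homotopy equivalence. The only genuinely non-routine ingredient --- the \emph{main obstacle} in an otherwise formal argument --- is the combinatorial fact underlying both the well-definedness of $\alpha$ and the common-simplex step: every face of $\Hom(\rel K_2,\rel G)$ lies inside a complete bipartite subgraph $U\times V\subseteq E(G)$, so that closing up the support of a point under taking marginals yields a set $S\times T$ that is again a complete bipartite subgraph, and hence a face. Everything else is bookkeeping with convex combinations and the linearity of the $\mathbb Z_2$-action, and no delicate point-set topology is needed since all the maps involved are piecewise-affine restrictions of linear or bilinear maps.
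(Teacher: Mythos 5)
Your proof is correct and follows essentially the same route as the paper's: the same marginal map $\alpha$ and product map $\beta$, the identity $\alpha\beta=\id$, and the straight-line homotopy from $\beta\alpha$ to the identity justified by the observation that $x$ and $\beta\alpha(x)$ lie in a common face. Your write-up just spells out the support/complete-bipartite bookkeeping in more detail than the paper does.
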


\begin{proof}
We define continuous maps $\alpha$ and $\beta$ between the two spaces that witness the \equivariant-homotopy equivalence.
\[\begin{array}{rcccl}
& \geom{\Hom(\rel K_2, \rel G)} & & \pHom(\rel K_2, \rel G) & \\	
\alpha \colon
    & \sum_{(u,v) \in E(G)} \lambda_{(u,v)} (u,v) & \mapsto
    & \textstyle \left(
      \sum_{(u,v) \in E(G)} \lambda_{(u,v)} u,
      \sum_{(u,v) \in E(G)} \lambda_{(u,v)} v
    \right) & \\
& \sum_{(u,v)\in E(G)} \lambda_u \rho_v (u,v)
    & \mapsfrom
    & (\sum_{u\in V(G)} \lambda_u u, \sum_{v\in V(G)} \rho_v v)
    & {:\!\beta}
\end{array}\]
It is straightforward to check that $\alpha$ and $\beta$ are \equivariant-maps, that $\alpha\beta = \id$, and that $\beta\alpha$ maps each point to a point in the same face. Thus, a homotopy from $\beta\alpha$ to $\id$ can be defined by linearly interpolating between the two:
\(
  (p, t) \mapsto (1-t)p + t\beta(\alpha(p))
\)
for $p \in \gHom{\rel K_2, \rel G}$ and $t \in [0,1]$.
\end{proof}

We remark that the above proposition and its proof generalizes to arbitrary Hom complexes $\Hom(\rel H,\rel G)$ with the action of $\Aut(\rel H)$, more precisely, the complex $\pHom(\rel H,\rel G)$ (as defined in \cite[Definition 9.23]{Koz08-book}) is $\Aut(\rel H)$-homotopy equivalent to $\gHom{\rel H,\rel G}$ for any two graphs $\rel H,\rel G$.
 
\subsection*{Acknowledgements}

A.\,K.\ and J.\,O.\ would like to thank John Hunton for consultations on algebraic topology and Libor Barto and Antoine Mottet for inspiring discussions.

\bibliographystyle{alphaurl}
\newcommand{\etalchar}[1]{$^{#1}$}

\end{document}